\newcommand{\cmark}{\ding{51}}
\newcommand{\xmark}{\ding{55}}
\definecolor{dullmagenta}{rgb}{0.4,0,0.4}
\definecolor{darkblue}{rgb}{0,0,0.4}
\definecolor{navyblue}{rgb}{0,0,0.5}
\definecolor{ao(english)}{rgb}{0.0, 0.5, 0.0}
\definecolor{coquelicot}{rgb}{0.90, 0.42, 0.72}
\definecolor{burntorange}{rgb}{0.8, 0.33, 0.0}
\newcommand{\green}{\color{ao(english)}}
\newcommand{\orange}{\color{orange}}
\definecolor{burntblue}{RGB}{0, 114, 206}
\newcommand\independent{\protect\mathpalette{\protect\independenT}{\perp}}
\def\independenT#1#2{\mathrel{\rlap{$#1#2$}\mkern2mu{#1#2}}}
\tikzset{box/.style={draw, minimum size=2em, text width=4.5em, text centered},
	bigbox/.style={draw, inner sep=20pt,label={[shift={(-3ex,3ex)}]south east:#1}}
}
\numberwithin{table}{section}
\numberwithin{equation}{section}
\numberwithin{table}{section}
\numberwithin{equation}{section}
\def\bX{\mathbf{X}}
\def\bz{\mathbf{z}}
\def\bZ{\mathbf{Z}}
\def\bU{\mathbf{U}}
\def\bv{\mathbf{v}}
\def\ba{\mathbf{a}}
\def\K{\mathbb{K}}
\def\E{\mathbb{E}}
\def\P{\mathbb{P}}
\def\balpha{\boldsymbol{\alpha}}
\def\bbeta{\boldsymbol{\beta}}
\def\bnabla{\boldsymbol{\nabla}}
\def\etabf{\boldsymbol{\eta}}
\def\etabftil{\widetilde{\etabf}}
\def\Ktil{\widetilde{\K}}
\def\Ctil{\widetilde{\C}}
\def\balphahat{\widehat{\balpha}}
\def\bbetahat{\widehat{\bbeta}}
\def\etabfhat{\widehat{\etabf}}
\def\varepsilonhat{\widehat{\varepsilon}}
\def\zetahat{\widehat{\zeta}}
\def\bzero{\boldsymbol{0}}
\def\independenT#1#2{\mathrel{\rlap{$#1#2$}\mkern4mu{#1#2}}}
\def\Cov{\mbox{Cov}}
\def\bzero{\mathbf{0}}
\def\bS{\mathbf{S}}
\def\bs{\mathbf{s}}
\def\S{\mathbb{S}}
\def\bDelta{\boldsymbol{\Delta}}
\def\bW{\mathbf{W}}
\def\bQ{\mathbf{Q}}
\def\bgamma{\boldsymbol{\gamma}}
\def\bdelta{\boldsymbol{\delta}}
\def\C{\mathbb{C}}
\def\R{\mathbb{R}}
\def\S{\mathbb{S}}
\def\Ytil{\widetilde{Y}}
\def\bbetatil{\widetilde{\bbeta}}
\def\bgammatil{\widetilde{\bgamma}}
\def\bdeltatil{\widetilde{\bdelta}}
\def\bDeltatil{\widetilde{\bDelta}}
\def\balphatil{\widetilde{\balpha}}
\def\zetatil{\widetilde{\zeta}}
\def\varepsilontil{\widetilde{\varepsilon}}
\def \hs2{\hspace{2mm}}
\definecolor{jcolor}{RGB}{041,122,000}
\definecolor{darkred}{RGB}{100,000,000}
\definecolor{purple}{RGB}{200,000,200}
\def\red{\color{red}}
\def\boxit#1{\vbox{\hrule\hbox{\vrule\kern6pt \vbox{\kern6pt#1\kern6pt}\kern6pt\vrule}\hrule}}
\def\be{\mathbf{e}}
\def\bS{\mathbf{S}}
\def\idf{\mathbbm1}
\def\bgammahat{\widehat{\bgamma}}
\def\bdeltahat{\widehat{\bdelta}}
\def\Ytil{\widetilde{Y}}
\def\bomega{\boldsymbol{\omega}}
\def\sigmahat{\widehat{\sigma}}
\def\thetahat{\widehat{\theta}}
\theoremstyle{plain}
\newtheorem{theorem}{Theorem}[section]
\newtheorem{lemma}[theorem]{Lemma}
\theoremstyle{remark}
\newtheorem{assumption}{Assumption}
\newtheorem{remark}{Remark}
\begin{document}

\date{}

\title{\bf Dynamic treatment effects: high-dimensional doubly robust inference under model misspecification}

\author{Yuqian Zhang\thanks{Institute of Statistics and Big Data, Renmin University of China} \and Weijie Ji\thanks{School of Statistics and Management, Shanghai University of Finance and Economics} \and Jelena Bradic\thanks{Department of Mathematics and Halicioglu Data Science Institute, University of California, San Diego, E-mail: \href{mailto:jbradic@ucsd.edu}{jbradic@ucsd.edu}}}

\maketitle

\begin{abstract}
Estimating dynamic treatment effects is crucial across various disciplines, providing insights into the time-dependent causal impact of interventions. However, this estimation poses challenges due to time-varying confounding, leading to potentially biased estimates. Furthermore, accurately specifying the growing number of treatment assignments and outcome models with multiple exposures appears increasingly challenging to accomplish. Double robustness, which permits model misspecification, holds great value in addressing these challenges. This paper introduces a novel “sequential model doubly robust” estimator. We develop novel moment-targeting estimates to account for confounding effects and establish that root-$N$ inference can be achieved as long as at least one nuisance model is correctly specified at each exposure time, despite the presence of high-dimensional covariates. Although the nuisance estimates themselves do not achieve root-\(N\) rates, the carefully designed loss functions in our framework ensure final root-\(N\) inference for the causal parameter of interest. Unlike off-the-shelf high-dimensional methods, which fail to deliver robust inference under model misspecification even within the doubly robust framework, our newly developed loss functions address this limitation effectively.
\end{abstract}

\begin{keyword}
Causal Inference, High-dimensional Statistics, Robust Inference, Longitudinal Data
\end{keyword}

\section{Introduction}\label{sec:intro}
Statistical inference and estimation of causal relationships have a long-standing tradition. In various applications, data is collected dynamically over time, and individuals undergo treatments at multiple stages. Examples include mobile health datasets, electronic health records, and a broad range of studies from biomedicine and public health to political science. Conducting randomized controlled trials, especially those with multiple treatment stages, is often time-consuming, and results are not immediately available. Additionally, financial and ethical constraints frequently result in small sample sizes with unrepresentative populations. In contrast, observational studies provide a more accessible alternative, generating large-scale dynamic datasets with rich information. While observational studies have advantages in terms of economic efficiency, sample representativeness, and timely results, statistical analysis based on them is more challenging. Over time, confounding variables at each time point simultaneously affect future treatments and final outcomes. As a result, methods that are effective in randomized controlled trials, such as the two-sample t-test, generally suffer from non-negligible bias in observational studies.

Bias from dynamic confounders is a major challenge in causal inference. In large-scale dynamic studies, confounders often outnumber treatment-specific samples due to exponentially decreasing sizes across stages, leading to high-dimensional settings. Additionally, model misspecification complicates inference, particularly when earlier counterfactual models depend on later ones \citep{babino2019multiple}. We demonstrate that this challenge can be overcome by introducing new estimates of the underlying causal effects. 

\subsection{Estimation of causal effects under dynamic setups}

Consider a dynamic setting with binary treatments at two exposure times, $A_1$ and $A_2$, although our results extend to any finite number of exposures. We observe independent and identically distributed samples $\mathcal{S} = \{\mathbf{W}_i\}_{i=1}^N =\{ (Y_i, A_{1i}, A_{2i}, \mathbf{S}_{1i}, \mathbf{S}_{2i})\}_{i=1}^N$. Here, $Y \in \mathbb{R}$ denotes the observed outcome at the final stage. We assume the existence of potential outcome variables $Y(a_1, a_2)$ for each $(a_1, a_2) \in \{0, 1\}^2$, representing the outcome an individual would have experienced if exposed to a treatment path $(a_1, a_2)$. Before each exposure, we also collect confounders (or covariates), denoted by $\mathbf{S}_1 \in \mathbb{R}^{d_1}$ and $\mathbf{S}_2 \in \mathbb{R}^{d_2}$, respectively. Covariate history up to the second exposure is denoted by $\bar{\mathbf{S}}_2 := (\mathbf{S}_1^\top, \mathbf{S}_2^\top)^\top \in \mathbb{R}^d$, where the dimensions $d_1$ and $d := d_1 + d_2$ are potentially much larger than $N$. We consider observational studies that allow all variables evaluated at previous stages to potentially influence later ones, without relying on any Markov assumptions, as illustrated in Figure \ref{fig:diag}.

\begin{figure}[h]
	\scriptsize
	\centering
	\begin{tikzpicture}
		[
		roundnode/.style={circle, draw=red!20, fill=red!5, very thick, minimum size=7mm},
		circlenode/.style={circle, draw=blue!20, fill=blue!5, very thick, minimum size=7mm},
		];
		\node[circlenode] (s1) at (0,0) {$\bS_1$};
		\node[roundnode] (a1) at (1.5,2) {$A_1$};
		\node[circlenode] (s2) at (3,0.5) {$\bS_2$};
		\node[roundnode] (a2) at (4.5,2) {$A_2$};
		\node[circlenode] (y) at (6,0) {$Y$};
		\draw[-{Stealth[length=2mm, width=1mm]}] (s1) -- (a1);
		\draw[-{Stealth[length=2mm, width=1mm]}] (a1) -- (s2);
		\draw[-{Stealth[length=2mm, width=1mm]}] (s1) -- (s2);
		\draw[-{Stealth[length=2mm, width=1mm]}] (s2) -- (a2);
		\draw[-{Stealth[length=2mm, width=1mm]}] (s1) -- (a2);
		\draw[-{Stealth[length=2mm, width=1mm]}] (a1) -- (a2);
		\draw[-{Stealth[length=2mm, width=1mm]}] (s1) -- (y);
		\draw[-{Stealth[length=2mm, width=1mm]}] (a1) -- (y);
		\draw[-{Stealth[length=2mm, width=1mm]}] (s2) -- (y);
		\draw[-{Stealth[length=2mm, width=1mm]}] (a2) -- (y);
	\end{tikzpicture}
	\caption{\centering Causal diagrams for dynamic settings with two exposure occasions.}
	\label{fig:diag}
\end{figure}

In this work, we concentrate on estimating a causal effect known as the dynamic treatment effect (DTE), defined as $\mathrm{DTE}:=\E\{Y(a_1,a_2)-Y(a_1’,a_2’)\}$. We focus on counterfactual mean \(\theta_{1,1} := \E\{Y(1,1)\}\) because the same method extends to any \(\E\{Y(a_1,a_2)\}\) and thus to the DTE. Because \(Y_i(1,1)\) is observed only when \(A_{1i} = A_{2i} = 1\), we cannot simply average across all potential outcomes \(Y_i(1,1)\) since many remain unobserved. Additionally, due to the presence of confounding, we generally have
$$\theta_{1,1} \neq \E\{Y(1,1) \mid A_1 = A_2 = 1\}.$$
Marginal Structural Mean (MSM) models are widely used in causal inference to assess the impact of time-dependent treatments on outcomes, allowing for time-dependent covariates affected by previous treatments \citep{robins2000marginal}. MSMs are determined by a score function that identifies $\theta_{1,1}$ and a number of nuisance parameters. Our goal is to ensure correct optimal inference — even if nuisance models are misspecified and cannot be estimated at the usual \(\sqrt{N}\)-rate. We begin with the minimal set of assumptions outlined in Assumption 1. 
\begin{assumption} \label{cond:basic}
(a) Sequential ignorability: 
$
Y(a_1,a_2)\independent A_1\mid\bS_1$, $Y(a_1,a_2)\independent A_2\mid(\bS_1,\bS_2,A_1=a_1).
$
(b) Consistency:
$Y=Y(A_1,A_2).$
(c) Overlap: 
let $\P(c_0<\pi(\bS_1)<1-c_0)=1,\ \P(c_0<\rho(\bar\bS_2)<1-c_0)=1$ with some constant $c_0\in(0,1)$. Additionally, let $\pi^*$ and $\rho^*$ be some functions satisfying $\P(c_0<\pi^*(\bS_1)<1-c_0)=1,\ \P(c_0<\rho^*(\bar\bS_2)<1-c_0)=1.$
\end{assumption}

In the standard conditions above (see, e.g., \cite{robins1987addendum,robins2000marginal,murphy2003optimal}), we include two propensity score (PS) models — \(\pi(\mathbf{s}_1) := \P(A_1 = 1 \mid \mathbf{S}_1 = \mathbf{s}_1)\) and \(\rho(\bar{\mathbf{s}}_2) := \P(A_2 = 1 \mid \bar{\mathbf{S}}_2 = \bar{\mathbf{s}}_2, A_1 = 1)\) — and two conditional, counterfactual, outcome regression (OR) models — \(\mu(\mathbf{s}_1) := \E\{Y(1,1) \mid \mathbf{S}_1 = \mathbf{s}_1\}\) and \(\nu(\bar{\mathbf{s}}_2) := \E\{Y(1,1) \mid \bar{\mathbf{S}}_2 = \bar{\mathbf{s}}_2, A_1 = 1\}\). These four models are the true, yet unknown population processes that will factor into the identification of $\theta_{1,1}$. 
 We introduce “working” models \(\pi^*\), \(\rho^*\), \(\mu^*\), and \(\nu^*\), which need not coincide with the true processes but are used to guide estimation the necessary nuisances.
Given this framework, \(\theta_{1,1}\) can be identified in multiple ways using only observable variables. We focus on the doubly robust identification approach, as outlined in \cite{murphy2001marginal, bang2005doubly, yu2006double}, where
\begin{equation}\label{def:score}
\theta_{1,1}=\E\{\psi(\bW;\etabf^*)\}, \ \psi(\bW;\etabf^*):=\mu^*(\bS_1)+\frac{A_1\{\nu^*(\bar\bS_2)-\mu^*(\bS_1)\}}{\pi^*(\bS_1)}+\frac{A_1A_2\{Y-\nu^*(\bar\bS_2)\}}{\pi^*(\bS_1)\rho^*(\bar\bS_2)},
\end{equation}
as long as Assumption \ref{cond:mis} holds. 
\begin{assumption}[Sequential model double robustness]\label{cond:mis}
Let (a) either $\pi=\pi^*$ or $\mu=\mu^*$ hold, but not necessarily both; and (b) either $\rho=\rho^*$ or $\nu=\nu^*$, but not necessarily both.
\end{assumption} 

However, modeling the counterfactual mean given covariates and treatment history up to a certain time point inherently imposes constraints on the counterfactual mean given earlier histories. For instance, in the representation
$
\mu(\mathbf{s}_1) \;=\; \E\{\nu(\bar{\mathbf{S}}_2) \mid \mathbf{S}_1 = \mathbf{s}_1, A_1 = 1\}
$
from \cite{murphy2001marginal}, the correctness of \(\mu\) hinges on the correctness of \(\nu\). We show that the above double-robust representation is not sufficient to guarantee \emph{model double robustness}. A method is deemed \emph{doubly robust} if it yields consistent estimates under Assumption \ref{cond:mis}. 
In contrast, a method is considered \emph{model doubly robust} (or said to provide \emph{doubly robust inference}) if its inference remains valid under the same Assumption \ref{cond:mis}, as long as the (possibly misspecified) working models can be estimated with $o(N^{-1/4})$ rates \citep{smucler2019unifying}.
New estimates of the nuisances are required to accommodate non-\(\sqrt{N}\) convergence rate while still guaranteeing \(\sqrt{N}\)-rate inference for $\theta_{1,1}$ under the same Assumption \ref{cond:mis}. 

\subsection{Existing results}

We illustrate the existing results under the following four settings all encompassed within Assumption \ref{cond:mis}:
\vskip -10pt
\begingroup
\setlength{\abovedisplayskip}{0pt}
\setlength{\belowdisplayskip}{0pt}
\setlength{\abovedisplayshortskip}{0pt}
\setlength{\belowdisplayshortskip}{0pt}
\begin{align}
&\text{Only the OR models are correctly parametrized;}\label{CAN_a}
\end{align}
\begin{align}
&\text{Only the PS models are correctly parametrized;}\label{CAN_b}
\end{align}
\begin{align}
&\text{Only the first OR and the second PS models are correctly parametrized;}\label{CAN_c}
\end{align}
\begin{align}
&\text{Only the second OR and first the PS models are correctly parametrized.}\label{CAN_d}
\end{align}
\endgroup

\noindent In low dimensions, inverse probability weighting (IPW) \citep{robins1986new,hernan2001marginal,robins2004optimal} provide valid inference allowing only \eqref{CAN_b}. On the other hand, covariate balancing methods \citep{kallus2018optimal,yiu2018covariate} only allow for \eqref{CAN_a}. Doubly robust methods \citep{bang2005doubly, yu2006double,orellana2010dynamic} allow \eqref{CAN_a} or \eqref{CAN_b}, but do not accommodate \eqref{CAN_c} and \eqref{CAN_d}. The multiple robust estimator proposed by \cite{babino2019multiple} allows for \eqref{CAN_a}, \eqref{CAN_b}, or \eqref{CAN_c}, but does not accommodate \eqref{CAN_d}. Using the following double-robust imputation step,
\begin{equation}\label{rep:DR-mu}
\mu(\mathbf{s}_1) \;=\; \E\!\Biggl[\nu^*(\bar{\mathbf{S}}_2) \;+\; \frac{A_2 \bigl\{Y - \nu^*(\bar{\mathbf{S}}_2)\bigr\}}{\rho^*(\bar{\mathbf{S}}_2)} \;\mid\; \mathbf{S}_1 = \mathbf{s}_1, A_1 = 1\Biggr],
\end{equation}
\citet{luedtke2017sequential} propose a consistent estimator and \citet{rotnitzky2017multiply} develop an asymptotically normal estimator of the DTE under conditions \eqref{CAN_a}--\eqref{CAN_d}, requiring that the nuisance estimates belong to a Donsker class \citep{van2000asymptotic}. However, Donsker conditions place bounded complexity on the functional class, ensuring \(\sqrt{N}\)-rate convergence of nuisance estimates. These constraints are unsuitable for many nonparametric or high-dimensional models. \cite{rotnitzky2017multiply,bodory2022evaluating,diaz2023nonparametric} adopt cross-fitting techniques and double machine learning \citep{chernozhukov2017double} to relax the need for Donsker conditions, thereby allowing more flexible, nonparametric methods. However, valid inference still requires \emph{all} nuisance estimates to converge sufficiently quickly to their true underlying models, effectively ruling out model misspecification.

Similar picture persists with high-dimensional working models -- in order to guarantee \(\sqrt{N}\)-inference, all working nuisance models must be correctly specified. This is achieved in a sequence of papers across different models. For structural nested mean models \citep{Robins1997causal}, \citet{lewis2021double} achieve inferential guarantees only when the “blip functions”—differences in outcome regression across treatment paths—are low-dimensional and correctly specified. \citet{viviano2021dynamic} require both OR models to be correct, covering only \eqref{CAN_a}. Dynamic Treatment Lasso (DTL) \citep{bodory2022evaluating} achieves \emph{consistency} under \eqref{CAN_a}, \eqref{CAN_b}, and \eqref{CAN_d}, and Sequential Doubly Robust Lasso (S-DRL) \citep{bradic2024high} extends this to \eqref{CAN_c} via \eqref{rep:DR-mu}. However, both DTL and S-DRL require all nuisance models to be correctly specified for valid \emph{inference}, thus excluding misspecification in \eqref{CAN_a}--\eqref{CAN_d}. 

\subsection{Addressing misspecification: direct bias control}

All the doubly robust methods discussed above share a common limitation: while estimators based on the doubly robust score \eqref{def:score} reduce bias to product (quadratic) terms of the nuisance estimation errors, this reduction critically depends on correctly specifying \emph{all} nuisance models. Once any model is misspecified, the bias reduction fails. To address this, we propose new moment conditions that directly control bias under Assumption~\ref{cond:mis}.
To be concrete, we employ linear OR and logistic PS working models: 
\[
\nu^*(\bar\bs_2) = \bar\bs_2^\top\balpha^*,
\quad
\mu^*(\bs_1) = \bs_1^\top\bbeta^*, 
\quad 
\pi^*(\bs_1)=g(\bs_1^\top\bgamma^*), 
\quad
\rho^*(\bar\bs_2)=g(\bar\bs_2^\top\bdelta^*),
\]
where \(g(u)=\exp(u)/\{\exp(u)+1\}\). These represent the ``best'' linear or logistic models approximating the true underlying processes \citep{buja2019models}. With
\(\boldsymbol{\eta}^* := (\balpha^{*\top}, \bbeta^{*\top}, \bgamma^{*\top}, \bdelta^{*\top})^\top\),
 \(\thetahat_{1,1}=N^{-1}\sum_{i=1}^N\psi(\bW_i;\etabfhat)\) for $\etabfhat=(\balphahat^\top,\bbetahat^\top,\bgammahat^\top,\bdeltahat^\top)^\top$. Then, 
\begin{equation}\label{eq:taylor}
\thetahat_{1,1}-\theta_{1,1}=\Delta_1+\Delta_2+\Delta_3.
\end{equation}
In the above, \(\Delta_1 := N^{-1} \sum_{i=1}^N \psi(\bW_i;\etabf^*) - \theta_{1,1}\) is \(O_p(N^{-1/2})\) and asymptotically normal under Assumption~\ref{cond:mis} and standard moment conditions. \(\Delta_3\) depends quadratically on \(\etabfhat - \etabf^*\) and is typically negligible.
 There are two main strategies to control the main bias term $
\Delta_2 :=N^{-1}\sum_{i=1}^N \nabla_{\etabf} \psi(\bW_i;\etabf^*)^\top (\etabfhat - \etabf^*)$. One is to constrain the nuisance models to be either Donsker or low-dimensional, ensuring \(\sqrt{N}\)-rate convergence of $\hat \etabf -\etabf^*$ -- even if those models are misspecified -- and yielding \(\Delta_2 = o_p(N^{-1/2})\). The other is to apply cross-fitting, which induces independence between the summands in \(N^{-1}\sum_{i=1}^N \nabla_{\etabf} \psi(\bW_i;\etabf^*)\) and thereby allows a weak law of large numbers argument to deliver \(\Delta_2 = o_p(N^{-1/2})\). However, cross-fitting requires \(\E\{\nabla_{\etabf}\psi(\bW;\etabf^*)\}=\bzero\), a “Neyman orthogonality” property \citep{chernozhukov2017double} that holds only when the nuisance models are correctly specified—thus excluding the misspecification scenario. Whenever misspecification occurs, \(\E\{\nabla_{\etabf}\psi(\bW;\etabf^*)\}\neq\bzero\) and
\[
\Delta_2
=
\E\{\nabla_{\etabf}\psi(\bW;\etabf^*)\}^\top(\etabfhat - \etabf^*)
\;+\;
o_p(N^{-1/2}),
\]
meaning \(\Delta_2\) depends linearly on the estimation error \(\etabfhat - \etabf^*\) even if cross-fitting is used and \(\etabfhat\) is estimated on a separate dataset. Instead, we design new nuisance estimates, named \emph{moment-targeted estimates} that directly guarantee the following moment condition
\begin{equation}
\E\{\nabla_{\etabf}\psi(\bW;\etabf^*)\}=\bzero, \;\; \text{even under model misspecification}. \label{eq:SMDR}
\end{equation}
This reduction remains effective even when $\etabfhat - \etabf^*$ does not converge at the $\sqrt{N}$ rate.
We leverage two key components: (i) the standard doubly robust score \eqref{def:score}, and (ii) new moment-targeted nuisance estimates and new loss functions. Both components are needed. The former addresses bias when all models are correctly specified. The latter specifically targets the bias of model misspecification. As shown in Table \ref{table:rate}, this approach transforms linear bias into quadratic, achieving faster convergence rates and robust inference even under model misspecification; see Table \ref{table:sparsity}. 
Our approach applies to all cases \eqref{CAN_a}-\eqref{CAN_d}.

\section{The sequential model doubly robust estimator}\label{sec:DR-DTE}

In this section, we introduce he sequential model doubly robust (SMDR) estimator.
 For any $\bomega \in \{\bgamma, \bdelta, \balpha, \bbeta\}$, let $\Delta_{2,\bomega}:=\E\{\bnabla_{\bomega}\psi(\bW; \etabf^*)\}^\top (\widehat{\bomega} - \bomega^*)$ be the bias resulting from the nuisance estimation of $\bomega^*$. Then, we also have \(\Delta_2 \approx \sum_{\bomega \in \{\bgamma, \bdelta, \balpha, \bbeta\}} \Delta_{2,\bomega}\) with
\begin{align}
 \E\{\bnabla_{\bbeta}\psi(\bW; \etabf^*)\} &= \E\left[\left\{1 - \frac{A_1}{g(\bS_1^\top \bgamma^*)}\right\} \bS_1\right],\label{eq:moment-beta}
 \\
 \E\{\bnabla_{\balpha}\psi(\bW; \etabf^*)\} &= \E\left[\frac{A_1}{g(\bS_1^\top \bgamma^*)}\left\{1 - \frac{A_2}{g(\bar{\bS}_2^\top \bdelta^*)}\right\} \bar{\bS}_2\right],\label{eq:moment-alpha}\\
 \E\{\bnabla_{\bdelta}\psi(\bW; \etabf^*)\} &= \E\left\{-\frac{A_1 A_2 \exp(-\bar{\bS}_2^\top \bdelta^*) (Y - \bar{\bS}_2^\top \balpha^*)}{g(\bS_1^\top \bgamma^*)} \bar{\bS}_2\right\},\nonumber
 \\
 \E\{\bnabla_{\bgamma}\psi(\bW; \etabf^*)\} &= \E\left[-A_1 \exp(-\bS_1^\top \bgamma^*)\left\{\frac{A_2 (Y - \bar{\bS}_2^\top \balpha^*)}{g(\bar{\bS}_2^\top \bdelta^*)} + \bar{\bS}_2^\top \balpha^* - \bS_1^\top \bbeta^*\right\} \bS_1\right].\nonumber
\end{align}
The above equations are listed in an order such that the right-hand side of each equation involves progressively more nuisance parameters. For instance, \eqref{eq:moment-beta} only involves $\bgamma^*$, while \eqref{eq:moment-alpha} involves both $\bgamma^*$ and $\bdelta^*$.

 We first propose nuisance estimates \(\bgammahat\), \(\bdeltahat\), \(\balphahat\), and \(\bbetahat\) in a sequential manner. 
Define the index sets as \(\mathcal{I}_{\bgamma}, \mathcal{I}_{\bdelta}, \mathcal{I}_{\balpha}, \mathcal{I}_{\bbeta} \subseteq \{1, \dots, N\}\), see Algorithm \ref{alg:BRDR}. We define the estimate for the first PS, \(\pi^*(\bs_1)\), with \(\lambda_{\bgamma} > 0\), as 
\begin{align}
\bgammahat & := \arg\min_{\bgamma \in \R^{d_1}} \left\{ \lvert\mathcal{I}_{\bgamma}\rvert^{-1} \sum_{i \in \mathcal{I}_{\bgamma}} \ell_1(\bW_i; \bgamma) + \lambda_{\bgamma} \|\bgamma\|_1 \right\},\;\;\mbox{where}\label{def:alphahat}\\
\ell_1(\bW; \bgamma) & := (1 - A_1) \bS_1^\top \bgamma + A_1 \exp(-\bS_1^\top \bgamma). \label{def:alpha}
\end{align}
 
The loss function \eqref{def:alpha} is designed to achieve covariate balancing as
\[
 \E\{w_1 \bS_1\} = \E(\bS_1) , \qquad w_1:= A_1g^{-1}(\bS_1^\top\bgamma^*).
\]
Strong covariate balancing has been used for estimation of single-stage average treatment effect \citep{ning2020robust}.
Moreover, even when the PS model is misspecified, it still strictly controls the first term in the above decomposition by ensuring \(\Delta_{2,\beta}=\E\{\bnabla_{\bbeta}\psi(\bW; \etabf^*)\} = \bzero\) for \(\bgamma^* := \arg\min_{\bgamma \in \R^{d_1}} \E\{\ell_1(\bW; \bgamma)\}\). This in turn, effectively reduces the bias induced by the estimation error of \(\bbetahat\), the nuisance estimate for the first OR model. On the other hand, if the PS model is logistic, i.e., \(\pi(\bS_1) = g(\bS_1^\top\bgamma^0)\) for some \(\bgamma^0 \in \R^{d_1}\), then, \(\bgamma^* = \bgamma^0\); see Section \ref{sec:exist_unique} of the Supplementary Material.

Next, we construct the estimate for the second PS model, $\rho^*(\bar\bs_2)$, with \(\lambda_{\bdelta}>0\) and 
\begin{align}
\bdeltahat=\bdeltahat(\bgammahat)~:=~&\arg\min_{\bdelta\in\R^{d}}\biggr\{M^{-1}\sum_{i\in\mathcal I_{\bdelta}}\ell_2(\bW_i;\bgammahat,\bdelta)+\lambda_{\bdelta}\|\bdelta\|_1\biggr\},\;\;\mbox{where}\label{def:betahat}\\
\ell_2(\bW;\bgamma,\bdelta)~:=~&\frac{A_1}{g(\bS_1^\top\bgamma)}\left\{(1-A_2)\bar\bS_2^\top\bdelta+A_2\exp(-\bar\bS_2^\top\bdelta)\right\}.\label{def:l2}
\end{align}
 This loss function achieves a new kind of covariate balancing where 
 \[
 \E\{w_2 w_1\bar\bS_2\} = \E(w_1\bar\bS_2), \qquad w_2= A_2g^{-1}(\bar\bS_2^\top\bdelta^*).
 \]
 One can interpret the above as conditional covariate balancing: given the information from the first time period, we achieve classical balancing at the second time exposure. The weights \(w_1\) align with those of \(\pi^*\) but are essential for correctly targeting the bias term \(\Delta_2\). Specifically, this conditional covariate balancing ensures that
$
 \Delta_{2,\alpha} =\E\left\{\nabla_{\balpha} \psi(\bW; \etabf^*)\right\} = \nabla_{\bdelta} \E\left\{\ell_2(\bW; \bgamma^*, \bdelta^*)\right\} = \mathbf{0},
$
for
$
\bdelta^* := \arg\min_{\bdelta \in \mathbb{R}^{d}} \E\left\{\ell_2(\bW; \bgamma^*, \bdelta)\right\},
$
even when the models are misspecified.
This in turn, effectively reduces the bias induced by the estimation error of \(\balphahat\), the nuisance estimate for the second OR model. 

For the remaining two OR models, we define moment-targeted nuisance estimators with properly chosen \(\lambda_{\balpha},\lambda_{\bbeta}>0\) as
\begin{align}
\balphahat=\balphahat(\bgammahat,\bdeltahat)~:=~&\arg\min_{\balpha\in\R^d}\left\{M^{-1}\sum_{i\in\mathcal I_{\balpha}}\ell_3(\bW_i;\bgammahat,\bdeltahat,\balpha)+\lambda_{\balpha}\|\balpha\|_1\right\},\label{def:gammahat}
\\
\bbetahat=\bbetahat(\bgammahat,\bdeltahat,\balphahat)~:=~&\arg\min_{\bbeta\in\R^{d_1}}\left\{M^{-1}\sum_{i\in\mathcal I_{\bbeta}}\ell_4(\bW_i;\bgammahat,\bdeltahat,\balphahat,\bbeta)+\lambda_{\bbeta}\|\bbeta\|_1\right\}.\label{def:deltahat}
\end{align}
The corresponding loss functions are defined as:
\begin{align}
\ell_3(\bW;\bgamma,\bdelta,\balpha)&~:=~\frac{A_1A_2\exp(-\bar\bS_2^\top\bdelta)}{g(\bS_1^\top\bgamma)} \left(Y-\bar\bS_2^\top\balpha \right)^2,\label{def:l3}
\end{align}
\vspace{-2em}
\begin{align}
\ell_4(\bW;\bgamma,\bdelta,\balpha,\bbeta)&~:=~A_1\exp(-\bS_1^\top\bgamma)\left\{\bar\bS_2^\top\balpha+\frac{A_2(Y-\bar\bS_2^\top\balpha)}{g(\bar\bS_2^\top\bdelta)}-\bS_1^\top\bbeta\right\}^2.\label{def:l4}
\end{align}

The above \eqref{def:l3}-\eqref{def:l4} mitigate the estimation bias by ensuring 
\[
\E\left\{\bnabla_{\bgamma}\psi(\bW; \etabf^*)\right\} = \bnabla_{\bbeta}\E\{\ell_4(\bW; \bgamma^*, \bdelta^*, \balpha^*, \bbeta^*)\}/2 = \bzero\]
\[ \E\left\{\bnabla_{\bdelta}\psi(\bW; \etabf^*)\right\} = \bnabla_{\balpha}\E\{\ell_3(\bW; \bgamma^*, \bdelta^*, \balpha^*)\}/2 = \bzero, \]
leading to \(\Delta_{2,\bgamma} = \Delta_{2,\bdelta} = 0\) for population slopes $\balpha^*:=\arg\min_{\balpha\in\R^d}\E\{\ell_3(\bW;\bgamma^*,\bdelta^*,\balpha)\}$ and $\bbeta^*:= \arg\min_{\bbeta\in\R^{d_1}}\E\{\ell_4(\bW;\bgamma^*,\bdelta^*,\balpha^*,\bbeta)\}$, respectively.
 Uniqueness of \(\bgamma^*\), \(\bdelta^*\), \(\balpha^*\), and \(\bbeta^*\) are discussed in Section \ref{sec:exist_unique} of the Supplementary Material.
 The introduced loss functions are performing (imputed) residual covariate balancing of the outcome regressions as 
 \[
 \E\left\{w_1 w_2' ( Y-\bar\bS_2^\top\balpha) \right\}=0, \qquad \E\left\{w_1'( Y^{\mbox{\tiny DR}}-\bS_1^\top\bbeta ) \right\} =0
 \]
 where $w_1'=\bnabla_{\bgamma} w_1$ and $w_2'=\bnabla_{\bdelta} w_2$ and $Y^{\mbox{\tiny DR}}=\bar\bS_2^\top\balpha+ {A_2(Y-\bar\bS_2^\top\balpha)}/{g(\bar\bS_2^\top\bdelta)}$ is the double robust imputation based of \eqref{rep:DR-mu}. Here, we are ensuring that the (imputed) residuals are uncorrelated with the adjustments made to the second and first PS estimations.

 The loss functions \eqref{def:alpha}, \eqref{def:l2}, \eqref{def:l3}, and \eqref{def:l4} are termed \emph{moment-targeting loss functions}. The \emph{sequential model doubly robust} (SMDR) estimator of \(\theta_{1,1}\) utilizing a cross-fitting technique can be found in Algorithm \ref{alg:BRDR}.

\begin{algorithm}[ht]
\caption{The sequential model doubly robust (SMDR) estimator of $\theta_{1,1}$}\label{alg:BRDR}
\begin{algorithmic}[1]
\Require Observations $\S=(\bW_i)_{i=1}^N $ and the treatment path $(a_1,a_2)=(1,1)$.
\State Let $\mathcal I = \{1,2,\dots,N\}=\cup_{k=1}^{\K}\mathcal I_k$ with equal sized splits $n=N/\K$ and $\K\geq2$.
\For{$k=1,2,...,\K$}
\State $\mathcal I_{-k}\leftarrow\mathcal I\setminus\mathcal I_k$ 
\State $\mathcal I_{\bgamma},\mathcal I_{\bdelta}, \mathcal I_{\balpha}, \mathcal I_{\bbeta}$ $\leftarrow$ size $M$ disjoint partition of $\mathcal I_{-k}$ with $M=N(\K-1)/(4\K)$. 
\State Propensity estimate at the first exposure $\bgammahat_{-k}\leftarrow\bgammahat$ as in \eqref{def:alphahat}.
\State Propensity estimate at the last/second exposure $\bdeltahat_{-k} \leftarrow \bdeltahat$ as in \eqref{def:betahat}.
\State Outcome estimate at the last exposure $\balphahat_{-k} \leftarrow\balphahat$ as in \eqref{def:gammahat}.
\State Outcome estimate at the first exposure $\bbetahat_{-k} \leftarrow \bbetahat$ as in \eqref{def:deltahat}.
\EndFor\\
\Return The SMDR estimator is
\begin{equation}
\thetahat_{1,1}=N^{-1}\sum_{k=1}^{\K}\sum_{i\in\mathcal I_k}\psi(\bW_i;\etabfhat_{-k}),\;\;\mbox{where}\;\;\etabfhat_{-k} = (\bgammahat_{-k}^\top,\bdeltahat_{-k}^\top,\balphahat_{-k}^\top,\bbetahat_{-k}^\top)^\top\label{def:thetahat} 
\end{equation}
and $\psi(\bW_i;\etabfhat_{-k})$ is defined through \eqref{def:score} replacing $\etabf^*$ with $\etabfhat_{-k}$.
\end{algorithmic}
\end{algorithm}

Off-the-shelf methods cannot achieve model double robustness, even with doubly robust or Neyman orthogonal scores, for estimating average treatment effects (ATE) with a single time exposure \citep{smucler2019unifying, tan2020model, dukes2020doubly, avagyan2021high, dukes2021inference, bradic2019sparsity}. Our introduced loss functions reduce to \(\ell_2 \) and \(\ell_4 \) in the single time exposure setting, but our dynamic problem introduces more complex challenges compared to the static case. We believe our work is the first to achieve model double robustness fully under Assumption \ref{cond:mis}.

In dynamic settings, \citep{luedtke2017sequential, rotnitzky2017multiply, bradic2024high, diaz2023nonparametric} employ doubly robust imputation, \( Y^{\mbox{\tiny DR}} \), to improve the rate conditions of the final estimator. However, beyond this step, they rely solely on off-the-shelf methods, such as (regularized) maximum likelihood estimation. We show that this approach alone is insufficient for robustness against model misspecification and that our newly introduced loss functions are essential. While the classical logistic loss 
\[
 A_1\{(1-A_2)\bar\bS_2^\top\bdelta + A_2 h(\bar\bS_2^\top\bdelta)\}
\] 
with \( h(u) = -\log g(u) \) 
is commonly used, we propose a covariate (conditional) balancing-inspired modification, replacing \( h(\bar\bS_2^\top\bdelta) \) with \( \exp(-\bar\bS_2^\top\bdelta) \) and introducing an additional weight \( w_2 \). 
Furthermore, while classical least squares loss is typically applied in OR estimation, we identify the weights \( w_1 w_2' \) and \( w_1' \) as necessary to ensure score orthogonality under PS model misspecification. Our numerical experiments confirm that, in finite samples, our approach consistently achieves better bias control than existing off-the-shelf methods (see Tables \ref{table:settinga1}-\ref{table:lin}).

\begin{remark}[Correctness of nuisance models]\label{remark:correctness}
We now discuss the correct specification of nuisance models. The first outcome regression \(\mu\), is inherently challenging to interpret in dynamic settings \citep{babino2019multiple}. This complexity arises from the dynamic nature of the problem, not from our representation.
 Below, we introduce the specific meaning of a ``correctly specified model'':
\begin{enumerate}
\item[(a)] We say $\pi^*$ is correctly specified when $\pi^*=\pi$, which occurs if and only if (iff) there exists some $\bgamma^0\in\R^{d_1}$, such that $\pi(\bs_1)=g(\bs_1^\top\bgamma^0)$ holds. Additionally, $\bgamma^*=\bgamma^0$.
\item[(b)]
We say $\rho^*$ is correctly specified when $\rho^*=\rho$, which occurs iff there exists some $\bdelta^0\in\R^d$, such that $\rho(\bar\bs_2)=g(\bar\bs_2^\top\bdelta^0)$ holds. Additionally, $\bdelta^*=\bdelta^0$.
\item[(c)]
We say $\nu^*$ is correctly specified when $\nu^*=\nu$, which occurs iff there exists some $\balpha^0\in\R^d$, such that $\nu(\bar\bs_2)=\bar\bs_2^\top\balpha^0$ holds. Additionally, $\balpha^*=\balpha^0$.
\item[(d)]
We say $\mu^*$ is correctly specified when $\mu^*=\mu$, which occurs if there exists some $\bbeta^0\in\R^d$, such that $\mu(\bs_1)=\bs_1^\top\bbeta^0$ and, furthermore, either case (b) or (c) holds. Additionally, $\bbeta^*=\bbeta^0$.
\end{enumerate}

Note that $\bdelta^*$ depends on $\bgamma^*$. However, condition (b) ensures that the correctness of $\rho^*$ is independent of $\bgamma^*$. Similarly, conditions (a)-(c) establish that the correctness of $\pi^*$, $\rho^*$, and $\nu^*$ are mutually independent.
 However, this is not the case for the OR model at the first exposure, $\mu^*$. Specifically, if $\mu$ is linear with $\mu(\bs_1) = \bs_1^\top\bbeta^0$ for some $\bbeta^0$, this does not imply that $\mu^*$ is correctly specified, as $\bbeta^*$ may differ from $\bbeta^0$. In particular, $\mu^*$ is correctly specified if, additionally, either $\rho^*$ or $\nu^*$ is (or both are) correctly specified. That is, under Assumption \ref{cond:mis}, $\mu^*$ is correctly specified if and only if $\mu$ is indeed a linear function -- no further constraints are needed. In contrast, the correctness of a linear $\mu^*$ defined through nested approaches \citep{murphy2001marginal, bodory2022evaluating} typically relies additionally on the linearity of $\nu$; see Section \ref{sec:just} of the Supplementary Materials.\end{remark}

\section{Sequential model doubly robust estimation and inference}\label{sec:DTE}
 
In the following, we choose tuning parameters $\lambda_{\bgamma}\asymp\sqrt{\log d_1/N}$, $\lambda_{\bdelta}\asymp\sqrt{\log d/N}$, $\lambda_{\balpha}\asymp\sqrt{\log d/N}$, $\lambda_{\bbeta}\asymp\sqrt{\log d_1/N}$. 
Define $s_{\bgamma}:=\|\bgamma^*\|_0$, $s_{\bdelta}:=\|\bdelta^*\|_0$, $s_{\balpha}:=\|\balpha^*\|_0$, and $s_{\bbeta}:=\|\bbeta^*\|_0$ as the sparsity levels of the population nuisance parameters. 
\begin{assumption}[Sparsity]\label{cond:sparse}
Let $s_{\bgamma}+s_{\bbeta}=o(N/\log d_1)$, $s_{\bdelta}+s_{\balpha}=o(N/\log d)$, and $(s_{\bgamma}+s_{\balpha})\log d_1\log d+s_{\bdelta}\log^2d=O(N)$.
\end{assumption}

The sparsity conditions of the form $s=o\left(N/\log d\right)$ are very common in the high-dimensional statistics literature and guarantee estimation consistency. The additional condition $(s_{\bgamma}+s_{\balpha})\log d_1\log d+s_{\bdelta}\log^2d=O(N)$ is necessary since, in general, the imputed outcomes considered in the Lasso problem \eqref{def:deltahat} do not have a bounded $\psi_\alpha$-Orlicz norm. However, this condition is no longer required if we further assume that $\|\bar\bS_2\|_\infty<C$, as in, e.g., \cite{bradic2019sparsity,tan2020model,smucler2019unifying}.

The following assumption imposes some standard moment conditions where $\|X\|_{\psi_2}:=\inf\{c>0:\E[\psi_{2}(\lvert X\rvert/c)]\leq 1\}$, with $\psi_2(x)=\exp(x^2)-1$. 

\begin{assumption}[Sub-Gaussianity]\label{cond:subG}
Let $\bar\bS_2$ be a sub-Gaussian random vector with $\|\bv^\top\bar\bS_2\|_{\psi_2}\leq\sigma_{\bS}\|\bv\|_2$ for all $\bv\in\R^d$. Let $\varepsilon:=Y(1,1)-\bar\bS_2^\top\balpha^*$ and $\zeta:=\bar\bS_2^\top\balpha^*-\bS_1^\top\bbeta^*$ be sub-Gaussian with $\|\varepsilon\|_{\psi_2}\leq\sigma_\varepsilon$ and $\|\zeta\|_{\psi_2}\leq\sigma_\zeta$. 
In addition, let $\E[A_1A_2\{Y(1,1)-\nu(\bar\bS_2)\}^2]>c_Y$ and the smallest eigenvalue of $\Cov(A_1\bar\bS_2)$ is bounded below by $c_{\min}$. Here, $\sigma_{\bS},\sigma_\varepsilon,\sigma_\zeta,c_Y,c_{\min}$ are some positive constants. 
\end{assumption}

\begin{theorem}[Convergence rates]\label{thm:rate}
Let Assumptions \ref{cond:basic}-\ref{cond:subG} hold. Define
\begin{equation}
r_{\bgamma}:=\sqrt\frac{s_{\bgamma}\log d_1}{N},\;\;r_{\bdelta}:=\sqrt\frac{s_{\bdelta}\log d}{N},\;\;r_{\balpha}:=\sqrt\frac{s_{\balpha}\log d}{N},\;\;r_{\bbeta}:=\sqrt\frac{s_{\bbeta}\log d_1}{N}.\label{def:rs}
\end{equation}
Then, as $N,d_1,d_2\to\infty$, $\sigma^2:=\E\{\psi(\bW;\etabf^*)-\theta_{1,1}\}^2\asymp\|\bbeta^*\|_2+1$ and
\begin{align}
\thetahat_{1,1}-\theta_{1,1}&=O_p(\sigma N^{-1/2}+r_{\bgamma}r_{\bbeta}+r_{\bdelta}r_{\balpha})+\idf_{\rho\neq\rho^*}O_p(r_{\bgamma}r_{\balpha})\nonumber\\
&\qquad+\idf_{\nu\neq\nu^*}O_p(r_{\bgamma}r_{\bdelta}+r_{\bdelta}^2)+\idf_{\mu\neq\mu^*}O_p(r_{\bgamma}^2+r_{\bgamma}r_{\bdelta}+r_{\bgamma}r_{\balpha}).\label{rate:thetahat}
\end{align}
\end{theorem}

Theorem~\ref{thm:rate} characterizes the convergence rate of the SMDR estimator. When all nuisance models are correctly specified, we have
\[
\thetahat_{1,1} - \theta_{1,1} = O_p\left(\sigma N^{-1/2} + r_{\bgamma}r_{\bbeta} + r_{\bdelta}r_{\balpha}\right),
\]
which matches the S-DRL estimator \citep{bradic2024high} and outperforms the DTL estimator \citep{bradic2024high, bodory2022evaluating}. Under model misspecification, DTL and S-DRL exhibit convergence rates with additional linear terms (see Table~\ref{table:rate}), whereas our result in \eqref{rate:thetahat} involves only quadratic terms—products of nuisance estimation errors. 
When a nuisance model is misspecified, the convergence rate of S-DRL (and DTL) includes an additional term that is linearly dependent on the estimation error of the other nuisance model at the same exposure. In contrast, the proposed SMDR method mitigates such model misspecification errors by introducing a multiplicity factor that incorporates estimation errors from nuisance estimates constructed prior to the misspecified model. For instance, when \(\mu^*\) is misspecified, the convergence rates of DTL and S-DRL both involve \(r_{\bgamma}\), the nuisance estimation rate for \(\nu^*\). In comparison, the SMDR method reduces this term to a product of \(r_{\bgamma}\) and \(r_{\bgamma} + r_{\bdelta} + r_{\balpha}\). The sequentially designed loss functions in SMDR effectively leverage the structures of previously estimated models to downstream the impact of model misspecification when estimating subsequent models, thereby enhancing overall robustness and accuracy in the final DTE estimation.

\begin{table}[h]
\caption{Convergence rates of DTL, S-DRL, and SMDR estimators under model misspecification situations in high dimensions. The sequences $r_{\bgamma}, r_{\bdelta}, r_{\balpha}, r_{\bbeta}$ are defined in \eqref{def:rs}. For simplicity, we consider $\sigma \asymp 1$, and denote $r_0^2 := N^{-1/2} + r_{\bgamma}r_{\bbeta} + r_{\bdelta}r_{\balpha}$. The quantities in \textcolor{red}{red} denote the additional linear terms in the convergence rates of DTL and S-DRL. The quantities in {\green green} denote quadratic terms that decay faster than the corresponding {\red red} terms on the same row. The quantities in {\orange orange} denote the additional terms that appear in DTL's convergence rate only.} \label{table:rate}
\begin{center}
\begin{tabular}{| c | c | c | c | c | c | c |}
\hline
\multicolumn{4}{| c |}{Model correctness}&\multicolumn{3}{ c |}{Convergence raets}\\
\hline
$\pi^*$&$\rho^*$&$\nu^*$&$\mu^*$&DTL&S-DRL&SMDR\\
\hline
\cmark&\cmark&\cmark&\cmark&$r_0^2+{\orange r_{\bgamma}r_{\balpha}}$&$r_0^2$&$r_0^2$\\
\hline
\cmark&\cmark&\cmark&\xmark&$r_0^2+{\red r_{\bgamma}}$&$r_0^2+{\red r_{\bgamma}}$&$r_0^2+{\green r_{\bgamma}(r_{\bgamma}+r_{\bdelta}+r_{\balpha})}$\\
\hline
\cmark&\cmark&\xmark&\cmark&$r_0^2+{\orange r_{\bgamma}r_{\balpha}}+{\red r_{\bdelta}}$&$r_0^2+{\red r_{\bdelta}}$&$r_0^2+{\green r_{\bdelta}(r_{\bgamma}+r_{\bdelta})}$\\
\hline
\cmark&\xmark&\cmark&\cmark&$r_0^2+{\red r_{\balpha}}$&$r_0^2+{\red r_{\balpha}}$&$r_0^2+{\green r_{\balpha}r_{\bgamma}}$\\
\hline
\xmark&\cmark&\cmark&\cmark&$r_0^2+{\orange r_{\balpha}}+{\red r_{\bbeta}}$&$r_0^2+{\red r_{\bbeta}}$&$r_0^2$\\
\hline
\cmark&\cmark&\xmark&\xmark&$r_0^2+{\red r_{\bgamma}+r_{\bdelta}}$&$r_0^2+{\red r_{\bgamma}+r_{\bdelta}}$&$r_0^2+{\green r_{\bgamma}(r_{\bgamma}+r_{\bdelta}+r_{\balpha})+r_{\bdelta}^2}$\\
\hline
\cmark&\xmark&\cmark&\xmark&$r_0^2+{\red r_{\balpha}+r_{\bgamma}}$&$r_0^2+{\red r_{\balpha}+r_{\bgamma}}$&$r_0^2+{\green r_{\bgamma}(r_{\bgamma}+r_{\bdelta}+r_{\balpha})}$\\
\hline
\xmark&\cmark&\xmark&\cmark&$r_0^2+{\orange r_{\balpha}}+{\red r_{\bbeta}+r_{\bdelta}}$&$r_0^2+{\red r_{\bbeta}+r_{\bdelta}}$&$r_0^2+{\green r_{\bdelta}(r_{\bgamma}+r_{\bdelta})}$\\
\hline
\xmark&\xmark&\cmark&\cmark&$r_0^2+{\red r_{\balpha}+r_{\bbeta}}$&$r_0^2+{\red r_{\balpha}+r_{\bbeta}}$&$r_0^2+{\green r_{\balpha}r_{\bgamma}}$\\
\hline
\end{tabular}
\end{center}
\end{table}

\begin{theorem}[Inference under model misspecification]\label{thm:main}
Let Assumptions \ref{cond:basic}-\ref{cond:subG} hold. Let the following product sparsity conditions hold
\begin{equation}\label{cond:s2}
r_{\bgamma}r_{\bbeta}=o(N^{-1/2})\quad\mbox{and}\quad r_{\bdelta}r_{\balpha}=o(N^{-1/2}),
\end{equation}
where sequences $r_{\bgamma}$, $r_{\bdelta}$, $r_{\balpha}$, and $r_{\bbeta}$ are defined in \eqref{def:rs}.
We assume the following additional conditions if model misspecification occurs:
\begin{align}
&\text{if}\;\;\rho\neq\rho^*,\;\;\text{further let}\;\;r_{\bgamma}r_{\balpha}=o(N^{-1/2});\label{cond:s3}\\
&\text{if}\;\;\nu\neq\nu^*,\;\;\text{further let}\;\;r_{\bgamma}r_{\bdelta}=o(N^{-1/2}),\;\;r_{\bdelta}=o(N^{-1/4});\label{cond:s4}\\
&\text{if}\;\;\mu\neq\mu^*,\;\;\text{further let}\;\;r_{\bgamma}=o(N^{-1/4}),\;\;r_{\bgamma}r_{\bdelta}+r_{\bgamma}r_{\balpha}=o(N^{-1/2}).\label{cond:s5}
\end{align}
Then, as $N,d_1,d_2\to\infty$, 
$$\sigma^{-1}N^{1/2}(\thetahat_{1,1}-\theta_{1,1})\to\mathcal N(0,1)$$ in distribution and $\sigmahat^2=\sigma^2\{1+o_p(1)\}$, where $\sigmahat^2:=N^{-1}\sum_{k=1}^{\K}\sum_{i\in\mathcal I_k}\{\psi(\bW_i;\etabfhat_{-k})-\thetahat_{1,1}\}^2$.
\end{theorem}

\begin{remark}[Sequential model double robustness]\label{remark:MDR}
In Theorem \ref{thm:main}, we establish the ``sequential model double robustness'' (SMDR) property of our proposed estimator, ensuring root-\(N\) inference as long as at least one nuisance model is correctly specified at each exposure (see Assumption \ref{cond:mis}), and some of the working models are estimated with \(o(N^{-1/4})\) rates. The specific nuisance estimation conditions (or equivalently, sparsity conditions) will be detailed in Remark \ref{remark:sparsity} and Table \ref{table:sparsity}.

The SMDR property differs from the sequential double robustness (SDR) established by \cite{luedtke2017sequential}, which guarantees only \emph{consistency} of the DTE estimates under model misspecification. Root-\(N\) inference under model misspecification has been achieved in low-dimensional settings \citep{rotnitzky2017multiply}, where nuisance estimates are assumed to satisfy Donsker conditions and exhibit parametric convergence rates, i.e., \(r_{\balpha} \asymp r_{\bbeta} \asymp r_{\bgamma} \asymp r_{\bdelta} \asymp N^{-1/2}\). Due to the fast convergence rates they require, their method also fails to achieve the SMDR property that we aim to establish. In fact, provided these parametric convergence rates, the required conditions \eqref{cond:s2}-\eqref{cond:s5} above are automatically satisfied.

Our sequentially designed loss functions expand the complexity of the functional class to which the nuisance functions belong, requiring only some models to be estimated with a rate of \(o(N^{-1/4})\), while others can be estimated with even slower rates. For instance, when $\rho^*$ is misspecified, we allow for scenarios where \(r_{\balpha} \asymp r_{\bbeta} \asymp N^{-1/3}\) and \(r_{\bgamma} \asymp r_{\bdelta} \asymp N^{-1.01/6}\). This is equivalent to \(s_{\balpha} \log d \asymp s_{\bbeta} \log d_1 \asymp N^{1/3}\) and \(s_{\bgamma} \log d \asymp s_{\bdelta} \log d_1 \asymp N^{1.99/3}\), accommodating growing sparsity levels and dimensions that violate Donsker conditions. The proposed method extends root-\(N\) inference from low-dimensional to more challenging high-dimensional settings and achieves the SMDR property.

In high dimensions, the existing DTL and S-DRL methods \citep{bodory2022evaluating, bradic2024high} only ensure consistent DTE estimates without inferential guarantees, as long as at least one nuisance model is misspecified. Only \cite{viviano2021dynamic} provided valid inference without relying on specific parametric forms for the PS functions; however, they always require all OR models to be correctly specified, i.e., \eqref{CAN_a} holds. Our proposed strategy accommodates all cases \eqref{CAN_a}-\eqref{CAN_d}, achieving the best model robustness in high dimensions.
\end{remark}

\begin{remark}[Required sparsity conditions under model misspecification]\label{remark:sparsity}
We now discuss the sparsity conditions required for root-$N$ inference in Theorem \ref{thm:main}. 

First, we examine a simpler static scenario and compare the sparsity conditions with existing literature on settings with a single exposure. These settings can be viewed as a special case of our framework, where the exposure $A_1$ is independent of $\bar{\bS}_2$, $A_2$, and $Y$. In such cases, the DTE reduces to the average treatment effect (ATE), and robust inference under model misspecification has been established by \cite{smucler2019unifying}, \cite{tan2020model}, \cite{avagyan2021high}, \cite{ning2020robust}, among others. For these setups, the required sparsity conditions are $r_{\bdelta} r_{\balpha} = o(N^{-1/2})$, and if the outcome regression (OR) model $\nu^*$ is misspecified, we additionally require $r_{\bdelta} = o(N^{-1/4})$, as given in \eqref{cond:s2} and \eqref{cond:s4}. These conditions match those in \cite{smucler2019unifying}, but are weaker than the conditions in \cite{tan2020model}, \cite{avagyan2021high}, and \cite{ning2020robust}, where a stronger condition $r_{\bdelta} + r_{\balpha} = o(N^{-1/4})$ is required.

Next, we consider the more complex dynamic scenarios. Similar to the static case, the correctness of one PS model, $\pi^*$, does not affect the required sparsity conditions. However, the correctness of the other PS model, $\rho^*$, along with both OR models, $\nu^*$ and $\mu^*$, influences the sparsity requirements. The more models that are misspecified, the more stringent the sparsity conditions become. When $\rho^*$, $\nu^*$, and $\mu^*$ are all correctly specified, we require Assumption \ref{cond:sparse} and \eqref{cond:s2}. If any model at exposure $t \in \{1, 2\}$ is misspecified, we impose a product condition between (i) the sparsity level of the other (correctly specified) model at the same exposure $t$ and (ii) the summation of sparsity levels corresponds to all the nuisance estimators that such a misspecified estimator is constructed based on. Recall that we estimate the nuisance models sequentially in the order: $\bgammahat$, then $\bdeltahat$, followed by $\balphahat$ and $\bbetahat$. For instance, when the OR model $\mu^*$ is misspecified, as shown in \eqref{cond:s5}, we require a product condition between (i) $s_{\bgamma}$ and (ii) $s_{\bgamma} + s_{\bdelta} + s_{\balpha}$. Moreover, if the OR model at the $t$-th exposure is misspecified, an ultra-sparse PS parameter is required at that exposure, as the OR models are estimated based on the PS estimates.

Our results provide a clear framework for understanding the additional challenges posed by model misspecification. Since achieving the SMDR property requires sequential estimation of the PS models followed by backward estimation of the OR models, misspecification of early-stage OR models imposes the most stringent conditions on the model structures. In contrast, misspecification of the first-stage PS model does not impact the sparsity conditions. The sparsity conditions required in \cite{smucler2019unifying} can be viewed as a special case of the more general phenomenon we identify for single-exposure settings. Further details are provided in Table \ref{table:sparsity}.
\end{remark}

Whenever all nuisance models are correctly specified, we have the following result.
\begin{theorem}[Inference under correctly specified models]\label{cor:correct}
Suppose all the nuisance models are correctly specified. Let Assumptions \ref{cond:basic}, \ref{cond:sparse}, and \ref{cond:subG} hold, as well as the product sparsity \eqref{cond:s2}. Then, as $N,d_1,d_2\to\infty$, 
$$\sigma^{-1}N^{1/2}(\thetahat_{1,1}-\theta_{1,1})~\to~\mathcal N(0,1)$$ in distribution and $\sigmahat^2=\sigma^2\{1+o_p(1)\}$.
\end{theorem}

When all nuisance functions are correctly specified, the result coincides with that of \cite{bradic2024high} while also achieving the semi-parametric efficiency of \cite{bang2005doubly}. Hence, we do not lose accuracy when the nuisance models are correctly specified.
As shown in Theorem \ref{cor:correct}, root-$N$ inference requires product sparsity conditions between the nuisance parameters' sparsity levels at each exposure, i.e., \eqref{cond:s2}; we name such a property as ``sequential rate double robustness''. This condition is weaker than the DTL estimator where an additional product sparsity condition $s_{\bgamma}s_{\balpha}=o(N/(\log d_1\log d))$ is imposed.

\begin{table}[h]
\caption{Sparsity conditions required for the SMDR estimator asymptotically normal. For simplicity, $\|\bar\bS_2\|_\infty<C$, $d_1\asymp d$, and $s_{\bgamma}+s_{\bdelta}+s_{\balpha}+s_{\bbeta}=o\left(N/\log d\right)$. } \label{table:sparsity}
\begin{center}
\begin{tabular}{| c | c | c | c | c |}
\hline
\multicolumn{4}{| c |}{Model correctness}&\multirow{2}{*}{Required sparsity conditions}\\
\cline{1-4}
$\pi^*$&$\rho^*$&$\nu^*$&$\mu^*$&\\
\hline
\cmark&\cmark&\cmark&\cmark&$s_{\bgamma}s_{\bbeta}+s_{\bdelta}s_{\balpha}=o\left(\frac{N}{\log^2d}\right)$\\
\hline
\cmark&\cmark&\cmark&\xmark&$s_{\bgamma}=o\left(\frac{\sqrt N}{\log d}\right),\;s_{\bgamma}s_{\bdelta}+s_{\bgamma}s_{\balpha}+s_{\bgamma}s_{\bbeta}+s_{\bdelta}s_{\balpha}=o\left(\frac{N}{\log^2d}\right)$\\
\hline
\cmark&\cmark&\xmark&\cmark&$s_{\bdelta}=o\left(\frac{\sqrt N}{\log d}\right),\;s_{\bgamma}s_{\bdelta}+s_{\bgamma}s_{\bbeta}+s_{\bdelta}s_{\balpha}=o\left(\frac{N}{\log^2d}\right)$\\
\hline
\cmark&\xmark&\cmark&\cmark&$s_{\bgamma}s_{\balpha}+s_{\bgamma}s_{\bbeta}+s_{\bdelta}s_{\balpha}=o\left(\frac{N}{\log^2d}\right)$\\
\hline
\xmark&\cmark&\cmark&\cmark&$s_{\bgamma}s_{\bbeta}+s_{\bdelta}s_{\balpha}=o\left(\frac{N}{\log^2d}\right)$\\
\hline
\cmark&\cmark&\xmark&\xmark&$s_{\bgamma}+s_{\bdelta}=o\left(\frac{\sqrt N}{\log d}\right),\;s_{\bgamma}s_{\balpha}+s_{\bgamma}s_{\bbeta}+s_{\bdelta}s_{\balpha}=o\left(\frac{N}{\log^2d}\right)$\\
\hline
\cmark&\xmark&\cmark&\xmark&$s_{\bgamma}=o\left(\frac{\sqrt N}{\log d}\right),\;s_{\bgamma}s_{\bdelta}+s_{\bgamma}s_{\balpha}+s_{\bgamma}s_{\bbeta}+s_{\bdelta}s_{\balpha}=o\left(\frac{N}{\log^2d}\right)$\\
\hline
\xmark&\cmark&\xmark&\cmark&$s_{\bdelta}=o\left(\frac{\sqrt N}{\log d}\right),\;s_{\bgamma}s_{\bdelta}+s_{\bgamma}s_{\bbeta}+s_{\bdelta}s_{\balpha}=o\left(\frac{N}{\log^2d}\right)$\\
\hline
\xmark&\xmark&\cmark&\cmark&$s_{\bgamma}s_{\balpha}+s_{\bgamma}s_{\bbeta}+s_{\bdelta}s_{\balpha}=o\left(\frac{N}{\log^2d}\right)$\\
\hline
\end{tabular}
\end{center}
\end{table}

\section{Theoretical results for the nuisance estimators}\label{sec:nuis}

In the following, we develop theoretical properties of the proposed moment-targeted nuisance estimators $\bgammahat$, $\bdeltahat$, $\balphahat$, and $\bbetahat$, defined in \eqref{def:alphahat}-\eqref{def:deltahat}. The analysis of nuisance estimation is non-trivial since the nuisance estimates are constructed in a sequential manner. Section \ref{sec:asymp_nuisance} shows the nuisance estimators' consistency despite potential model misspecification, while Section \ref{sec:asymp_nuisance'} presents their faster consistency rates when certain models are correctly specified. Our findings show that the accuracy of nuisance models affects the estimation errors.

\subsection{Results for misspecified models}\label{sec:asymp_nuisance}

Our first focus is on the asymptotic behavior of moment-targeted nuisance estimators with possibly inaccurate models. 
 In determining convergence rates, we confront the complexities of RSC conditions caused by dependent loss functions in Lemma \ref{lemma:RSC}, and manage the increased gradient variability in Lemma \ref{lemma:gradient}, as expanded upon in the Supplementary Material.

\begin{theorem}\label{thm:nuisance}
Let Assumptions \ref{cond:basic} and \ref{cond:subG} hold. Define sequences $r_{\bgamma}$, $r_{\bdelta}$, $r_{\balpha}$, and $r_{\bbeta}$ as in \eqref{def:rs}. Then, as $N,d_1,d_2\to\infty$, the following holds:
\begin{enumerate}
\item[(a)] If $r_{\bgamma}=o(1)$, then
$\|\bgammahat-\bgamma^*\|_2=O_p(r_{\bgamma}).$
\item[(b)] In addition to (a), if $r_{\bdelta}=o(1)$, then
$\|\bdeltahat-\bdelta^*\|_2= O_p(r_{\bgamma}+r_{\bdelta}).$
\item[(c)] In addition to (a) and (b), if $r_{\balpha}=o(1)$, then
$\|\balphahat-\balpha^*\|_2= O_p(r_{\bgamma}+r_{\bdelta}+r_{\balpha}).$
\item[(d)] In addition to (a), (b), and (c), if $r_{\bbeta}=o(1)$, then $
\|\bbetahat-\bbeta^*\|_2= O_p(r_{\bgamma}+r_{\bdelta}+r_{\balpha}+r_{\bbeta}).$
\end{enumerate}
\end{theorem}

 Among the results in Theorem \ref{thm:nuisance}, part (b) is the most challenging to show. Notice that $\bdeltahat$ is constructed based on a first-stage estimate $\bgammahat$. Due to the occurrence of the imputation error $\bgammahat-\bgamma^*$, the estimation error $\bdeltahat-\bdelta^*$ no longer belongs to the usual cone set $\C(S,k):=\{\bDelta\in\R^d:\|\bDelta_{S^c}\|_1\leq k\|\bDelta_S\|_1\}$. A similar problem has been recently studied by \cite{bradic2024high}, where their Theorem 8 provides consistency rates of imputed Lasso estimates. The problem we consider here is even more technically challenging in that the loss function \eqref{def:l2} is non-quadratic with respect to $\bdelta$. We consider a cone set $\Ctil(s,k):=\{\bDelta\in\R^d:\|\bDelta\|_1\leq k\sqrt{s}\|\bDelta\|_2\}$ that is ``larger'' than the usual $\C(S,k)$ and also different from the cone set studied by \cite{bradic2024high}. We show that $\bdeltahat-\bdelta^*\in\Ctil(s,k)$ with high probability and some $k,s>0$; see details in Lemma \ref{lemma:beta2}. Together with some empirical process results as in Lemma \ref{lemma:beta1}, we control the imputation error's effect and finally reach the consistency rates introduced above; see Lemma \ref{lemma:beta3} and the proof of Theorem \ref{thm:nuisance}. Although we focus on a specific loss function \eqref{def:l2}, the results of part (b) in fact apply more broadly to other smooth and convex loss functions. 
 Since the nuisance estimators $\bgammahat,\bdeltahat,\balphahat,\bbetahat$ are constructed sequentially, and the later estimators depend on all the previous ones, the estimation errors of the nuisance parameters are cumulative, i.e., the consistency rate depends on the sparsity levels of all the nuisance parameters up to the current one.

\subsection{Results for correctly specified models}\label{sec:asymp_nuisance'}

If we have additional information that some of the nuisance models are correctly specified, we are able to achieve better consistency rates. 

\begin{theorem}\label{thm:nuisance'}
Let Assumptions \ref{cond:basic} and \ref{cond:subG} hold. Suppose that the sequences defined in \eqref{def:rs} satisfy $r_{\bgamma}+r_{\bdelta}+r_{\balpha}+r_{\bbeta}=o(1)$. Then, as $N,d_1,d_2\to\infty$, the following holds:
\begin{enumerate}
\item[(a)] Let $\rho=\rho^*$ and $s_{\bgamma}=O(N/(\log d_1\log d))$, then 
 $\|\bdeltahat-\bdelta^*\|_2=O_p(r_{\bdelta}).$
\item[(b)] Let $\nu=\nu^*$, $s_{\bgamma}$ as in (a) and $s_{\bdelta}=O(N/\log^2d)$, then 
 $\|\balphahat-\balpha^*\|_2=O_p(r_{\balpha}).$
\item[(c)] Let $\nu=\nu^*$, $\mu=\mu^*$, $s_{\bgamma}$ and $s_{\bdelta}$ are as in (b), then
 $\|\bbetahat-\bbeta^*\|_2 =O_p(r_{\balpha}+r_{\bbeta}).$
\item[(d)] Let $\rho=\rho^*$, $\mu=\mu^*$, and $s_{\bgamma}+s_{\bdelta}+s_{\balpha}=O(N/(\log d_1\log d))$, then 
 $\|\bbetahat-\bbeta^*\|_2 =O_p(r_{\bdelta}+r_{\bbeta}).$
 \item[(e)] Let $\rho=\rho^*$, $\nu=\nu^*$, $\mu=\mu^*$, $s_{\bgamma}$ and $s_{\bdelta}$ are as in (b) and $s_{\balpha}=O(N/(\log d_1\log d))$, then
 $\|\bbetahat-\bbeta^*\|_2 =O_p(r_{\bdelta}r_{\balpha}+r_{\bbeta}).$
\end{enumerate}
\end{theorem}

The new convergence rates in Theorem \ref{thm:nuisance'} are established through Lemmas \ref{lemma:RSC} and \ref{lemma:gradient'} of the Supplementary Material. Assuming certain nuisance models being correct, unlike Theorem \ref{thm:nuisance} and Lemma \ref{lemma:gradient}, we can control the gradients involving the \emph{estimated} nuisance parameters and control the imputation errors from the previous steps' nuisances in a more efficient way; see more details in Lemma \ref{lemma:gradient'}. As a result, we obtain faster convergence rates than Theorem \ref{thm:nuisance} given additional model correctness information.
 
 First, we see that the subsequent estimates, with correct model specifications, lose a factor of $r_{\bgamma}$ in their rates of estimation. Secondly, specific estimates demonstrate asymptotic decoupling: (i) the convergence rate of $\bdeltahat$ depends only on $s_{\bdelta}$ when $\rho^*$ is correctly specified; (ii) the convergence rate of $\balphahat$ depends only on $s_{\balpha}$ when $\nu^*$ is correctly specified. Lastly, the convergence of \(\bbetahat\) relies on the model correctness of \(\mu^*\), as well as the preceding models \(\rho^*\) and \(\nu^*\).
 Specifically, if either $\rho^*$ or $\nu^*$ is correctly specified, as explored in cases (c) and (d), the consistency rate of $\bbetahat$ depends on $s_{\bbeta}$ 	
 and the sparsity level of the correctly specified model, be it $\rho^*$ or $\nu^*$. When both $\rho^*$ and $\nu^*$ are accurate, as in case (e), the consistency rate of $\bbetahat$ depends on $s_{\bbeta}$ and a product sparsity $s_{\bdelta}s_{\balpha}$. 
 When a product sparsity condition, $s_{\bdelta}s_{\balpha}=o(N/\log^2d)$, is assumed as in \eqref{cond:s2} of Theorem \ref{thm:main}, $\bbetahat$ also becomes asymptotically decoupled from the other three estimates.

\section{Numerical Experiments}\label{sec:num}
\subsection{Simulation studies}\label{sec:sim}
We illustrate the finite sample properties of the introduced estimator on a number of simulated experiments. We focus on the estimation of $\theta=\theta_a-\theta_{a'}$ where $a=(a_1,a_2)=(1,1)$ and $a'=(a_1',a_2')=(0,0)$. We describe the considered data generating processes below. The outcome variables are generated as $Y_i=A_{1i}A_{2i}Y_i(1,1)+(1-A_{1i})(1-A_{2i})Y_i(0,0)$. 

Setting (a): Non-linear $\mu$ and non-logistic $\rho$.
Generate covariates at the first exposure: for each $i\leq N$, $\bS_{1i}\sim^\mathrm{iid} N_{d_1}(\mathbf{0},\mathbf{I}_{d_1}).$ The treatment indicators of the first exposure are generated as $A_{1i}\mid\bS_{1i}\sim\mathrm{Bernoulli}(g(\bS_{1i}^\top\bgamma))$. A $d$ dimensional vector of all ones and zeros are denoted with $\mathbf{1}_{(d)}$ and $\mathbf{0}_{(d)}$, respectively. Covariates at the second exposure satisfy
$\bS_{2i}=0.5Q(A_{1i})(\bS_{1i}^2-1)+Q(A_{1i})\bS_{1i}+A_{i}(1+\delta_{1i})\mathbf{1}_{(d_2)}+\bdelta_{1i},$ where $\bS_{1i}^2\in\R^{d_1}$ is the coordinate-wise square of $\bS_{1i}$, $\bdelta_{1i}\sim^\mathrm{iid} N_{d_2}(0,\mathbf{I}_{d_2})$, and a matrix $Q$ is defined with $\{Q(1)\}_{i,j}=0.8^{|i-j|}\mathbbm1\{|i-j|\leq1\}$ and $\{Q(0)\}_{i,j}=0.7^{|i-j|}\mathbbm1\{|i-j|\leq2\}$ for $i\leq d_2$ and $j\leq d_1$. The treatment indicators at the second exposure are generated as $A_{2i}\mid(\bar\bS_{2i},A_{1i})\sim\mathrm{Bernoulli}(A_{1i}\tilde g(\bar\bS_{2i}^\top\bdelta)+(1-A_{1i})\tilde g(-\bar\bS_{2i}^\top\bdelta))$, where $\tilde g(u):=(|u+1|+0.1)/(|u+1|+1)$. Lastly, $Y_i(1,1)=\bar\bS_{2i}^\top\balpha+1+\epsilon_i$, $Y_i(0,0)=-\bar\bS_{2i}^\top\balpha-1+\epsilon_i$ and $\epsilon_i\sim^\mathrm{iid}N(0,1)$. We consider $\balpha=(1,\mathbf0_{(d_1-1)},0.5,0.5,0.5,0.5,\mathbf0_{(d_2-4)})^\top$, $\bgamma=(1,1,\mathbf0_{(d_1-2)})^\top$ and $\bdelta=(1,\mathbf0_{(d_1-1)},0.5,0.5,0.5,0.5,\mathbf0_{(d_2-4)})^\top$.

Setting (b): Non-linear $\mu$ and non-linear $\nu$.
At the first exposure, generate covariates from a centered Beta distribution, i.e., $\bS_{1ij}\sim^\mathrm{iid} \mathrm{Beta}(1,2)-1/3$ for each $i\leq N$ and $j\leq d_1$; generate $A_{1i}\mid\bS_{1i}\sim\mathrm{Bernoulli}(g(\bS_{1i}^\top\bgamma))$. At the second exposure, generate
$\bS_{2i}=W(A_{1i})\bS_{1i}+A_{2i}\mathbf{1}_{(d_2)}+\bdelta_i$ and $A_{2i}\mid(\bar\bS_{2i},A_{1i})\sim\mathrm{Bernoulli}(A_{1i}g(\bar\bS_{2i}^\top\bdelta)+(1-A_{1i})g(-\bar\bS_{2i}^\top\bdelta))$, where $\bdelta_{ij}\sim^\mathrm{iid} \mathrm{Beta}(1,4)-1/5$, $\{W(1)\}_{i,j}=0.2^{|i-j|}\mathbbm1\{|i-j|\leq1\}$ and $\{W(0)\}_{i,j}=0.2^{|i-j|}\mathbbm1\{|i-j|\leq2\}+0.1\mathbbm1\{|i-j|=2\}$ for each $i\leq d_2$ and $j\leq d_1$. Here, $Y_i(1,1)=\bar\bS_{2i}^\top\balpha-1+2r_i+\epsilon_i$, $Y_i(0,0)=-\bar\bS_{2i}^\top\balpha+1-2r_i+\epsilon_i$ and $\epsilon_i\sim^\mathrm{iid}N(0,1)$. Here, we consider non-linear signals with $r_i$ as the standardized version of $\bS_{1i1}\bS_{1i2}\mathbbm1\{\bS_{1i2}>0.3\}+\bS_{1i1}\bS_{1i3}\mathbbm1\{\bS_{1i1}>0.3\}+\bS_{1i2}\bS_{1i3}\mathbbm1\{\bS_{1i1}>0.3\}$. The parameters are $\balpha=(-1,0,0,1/18,\mathbf0_{(d_1-4)},-1,-1,-1,\mathbf0_{(d_2-3)})^\top$, $\bgamma=(1,1,\mathbf0_{(d_1-2)})^\top$ and $\bdelta=(-2,-2,\mathbf0_{(d_1+d_2-2)})^\top$.

For each setting, we consider dimensions $d_1=100$ and $d_2=50$ (resulting in $d=d_1+d_2=150$), with total sample sizes $N$ ranging from $400$ to $16,000$. The experiments are repeated 200 times. Our proposed SMDR estimator is denoted as SMDR1 (see Algorithm \ref{alg:BRDR} with $\K=5$). Additionally, we present a slightly modified version, SMDR2, which constructs all the nuisances on the entire sub-sample of $\mathcal{I}_{-k}$ in Steps 4-7 of Algorithm \ref{alg:BRDR}. 

For comparison, we include several existing estimators: the inverse probability weighting (IPW) estimator, where propensity score models are estimated using $\ell_1$-regularized logistic regression without cross-fitting; the sequential doubly robust (SDR) estimator by Luedtke et al. (2017), where nuisance functions are estimated through linear and logistic regression without $\ell_1$-regularization; a cross-fitted version of the SDR estimator \citep{rotnitzky2017multiply,diaz2023nonparametric} using random forest nuisance estimates, denoted as SDR-RF; the sequential doubly robust Lasso (S-DRL) estimator proposed by Bradic et al. (2021); and two versions of the dynamic treatment Lasso (DTL) estimator proposed by Bradic et al. (2021) and Bodory et al. (2022), denoted as DTL2 and DTL1, respectively. DTL2's nuisances are estimated using samples in $\mathcal{I}_{-k}$, while DTL1's nuisances use different sub-samples in $\mathcal{I}_{\bgamma}$, $\mathcal{I}_{\bdelta}$, $\mathcal{I}_{\balpha}$, and $\mathcal{I}_{\bbeta}$. Here, DTL1 and SMDR1 share the same type of sample splitting, while DTL2 and SMDR2 share the same type of sample splitting. The tuning parameters are chosen through 5-fold cross-validations. Additionally, we report the performance of a naive empirical difference estimator (empdiff), $\thetahat_\mathrm{empdiff}:=\sum_{i=1}^NA_{1i}A_{2i}Y_i/\sum_{i=1}^NA_{1i}A_{2i}-\sum_{i=1}^N(1-A_{1i})(1-A_{2i})Y_i/\sum_{i=1}^N(1-A_{1i})(1-A_{2i})$, as well as an oracle doubly robust estimator, $\thetahat_\mathrm{oracle}$, which uses the doubly robust score with correct nuisance functions. The results are reported in Tables \ref{table:settinga1}-\ref{table:settingb1}.

\begin{table}[h!]
\centering
\caption{Simulation under Setting (a) with $d_1=100$, $d=150$. Bias: empirical bias; RMSE: root mean square error; Length: average length of the $95\%$ confidence intervals; Coverage: average coverage of the $95\%$ confidence intervals; ESD: empirical standard deviation; ASD: average of estimated standard deviations. All the reported values (except Coverage) are based on robust (median-type) estimates. $N_1$ and $N_0$ denote the expected numbers of observations in the treatment groups $(1,1)$ and $(0,0)$, respectively.} \label{table:settinga1}

\begin{tabular}{lcccccccccc}
\toprule
Method&Bias&RMSE&Length&Coverage&&Bias&RMSE&Length&Coverage\\
\hline
\multicolumn{1}{c}{ } & \multicolumn{4}{c}{\cellcolor{gray!50} $N=400,N_1\approx136,N_0\approx68$}&&\multicolumn{4}{c}{ \cellcolor{gray!50} $N=1000,N_1\approx341,N_0\approx169$}\\
\cline{2-5}\cline{7-11}
oracle&0.037&0.298&1.627&0.955&&0.004&0.190&1.056&0.975\\
\cdashline{2-5}\cdashline{7-11}
empdiff&-0.409&0.460&0.491&0.325&&-0.392&0.392&0.310&0.120\\
\cdashline{2-5}\cdashline{7-11}
IPW&0.553&0.582&1.941&0.770&&0.730&0.730&1.254&0.390\\
\cdashline{2-5}\cdashline{7-11}
SDR&0.597&4.067&14.304&0.800&&0.126&0.479&2.396&0.930\\
\cdashline{2-5}\cdashline{7-11}
SDR-RF&0.294&0.353&1.564&0.885&&0.423&0.423&0.945&0.595\\
\cdashline{2-5}\cdashline{7-11}
DTL1&0.643&0.775&2.086&0.775&&0.488&0.500&1.048&0.580\\
\cdashline{2-5}\cdashline{7-11}
DTL2&0.466&0.499&1.597&0.775&&0.278&0.286&1.044&0.775\\
\cdashline{2-5}\cdashline{7-11}
S-DRL&0.507&0.520&1.632&0.710&&0.311&0.313&1.055&0.785\\
\cdashline{2-5}\cdashline{7-11}
SMDR1&0.664&0.666&1.633&0.600&&0.462&0.462&0.975&0.530\\
\cdashline{2-5}\cdashline{7-11}
SMDR2&0.307&0.359&1.546&0.840&&0.091&0.193&0.998&0.890\\
\hline
\multicolumn{1}{c}{ } & \multicolumn{4}{c}{\cellcolor{gray!50} $N=12000,N_1\approx4103,N_0\approx2033$}&&\multicolumn{4}{c}{ \cellcolor{gray!50} $N=16000,N_1\approx5471,N_0\approx2710$}\\
\cline{2-5}\cline{7-11}
oracle&-0.002&0.053&0.317&0.945&&0.007&0.056&0.276&0.955\\
\cdashline{2-5}\cdashline{7-11}
empdiff&-0.401&0.401&0.090&0.000&&-0.397&0.397&0.078&0.000\\
\cdashline{2-5}\cdashline{7-11}
IPW&0.946&0.946&0.418&0.000&&0.957&0.957&0.369&0.000 \\
\cdashline{2-5}\cdashline{7-11}
SDR&0.012&0.090&0.571&0.955&&0.011&0.073&0.508&0.945\\
\cdashline{2-5}\cdashline{7-11}
SDR-RF&0.560&0.560&0.276&0.000&&0.567&0.567&0.241&0.000\\
\cdashline{2-5}\cdashline{7-11}
DTL1&0.243&0.243&0.329&0.260&&0.212&0.212&0.293&0.235\\
\cdashline{2-5}\cdashline{7-11}
DTL2&0.137&0.141&0.355&0.670&&0.122&0.122&0.314&0.655\\
\cdashline{2-5}\cdashline{7-11}
S-DRL&0.143&0.143&0.356&0.650&&0.123&0.123&0.313&0.670 \\
\cdashline{2-5}\cdashline{7-11}
SMDR1&0.053&0.075&0.311&0.890&&0.048&0.069&0.269&0.920\\
\cdashline{2-5}\cdashline{7-11}
SMDR2&0.020&0.058&0.319&0.935&&0.013&0.053&0.277&0.925\\
\bottomrule
\end{tabular}
\end{table}

\begin{table}[h]
\centering
\caption{Simulation under Setting (b) with $d_1=100$, $d=150$. The rest of the caption details remain the same as those in Table \ref{table:settinga1}.} \label{table:settingb1}
\begin{tabular}{lcccccccccc}
\toprule
Method&Bias&RMSE&Length&Coverage&&Bias&RMSE&Length&Coverage\\
\hline
\multicolumn{1}{c}{ } & \multicolumn{4}{c}{\cellcolor{gray!50} $N=400,N_1\approx109,N_0\approx92$}&&\multicolumn{4}{c}{ \cellcolor{gray!50} $N=1000,N_1\approx271,N_0\approx227$}\\
\cline{2-5}\cline{7-11}
oracle&-0.016&0.140&1.038&0.980&&-0.007&0.110&0.663&0.940\\
\cdashline{2-5}\cdashline{7-11}
empdiff&-0.066&0.220&0.369&0.450&&-0.097&0.173&0.239&0.330\\
\cdashline{2-5}\cdashline{7-11}
IPW&-0.170&0.266&1.581&0.950&&-0.044&0.153&0.941&0.980\\
\cdashline{2-5}\cdashline{7-11}
SDR&-0.259&3.874&14.351&0.740&&-0.156&0.386&1.303&0.875\\
\cdashline{2-5}\cdashline{7-11}
SDR-RF&-0.109&0.219&1.382&0.940&&-0.121&0.169&0.844&0.925\\
\cdashline{2-5}\cdashline{7-11}
DTL1&-0.078&0.366&1.908&0.940&&-0.161&0.220&0.972&0.895\\
\cdashline{2-5}\cdashline{7-11}
DTL2&-0.161&0.252&1.410&0.905&&-0.160&0.194&0.825&0.865\\
\cdashline{2-5}\cdashline{7-11}
S-DRL&-0.160&0.265&1.418&0.915&&-0.153&0.187&0.815&0.895\\
\cdashline{2-5}\cdashline{7-11}
SMDR1&-0.083&0.223&1.642&0.925&&-0.154&0.193&0.954&0.880\\
\cdashline{2-5}\cdashline{7-11}
SMDR2&-0.156&0.232&1.393&0.910&&-0.114&0.146&0.816&0.925\\
\hline
\multicolumn{1}{c}{ } & \multicolumn{4}{c}{\cellcolor{gray!50} $N=12000,N_1\approx3264,N_0\approx2726$}&&\multicolumn{4}{c}{ \cellcolor{gray!50} $N=16000,N_1\approx4352,N_0\approx3634$}\\
\cline{2-5}\cline{7-11}
oracle&-0.005&0.034&0.192&0.950&&0.002&0.030&0.166&0.960\\
\cdashline{2-5}\cdashline{7-11}
empdiff&-0.103&0.103&0.069&0.135&&-0.094&0.094&0.060&0.125\\
\cdashline{2-5}\cdashline{7-11}
IPW&0.034&0.048&0.274&0.945&&0.033&0.045&0.237&0.965\\
\cdashline{2-5}\cdashline{7-11}
SDR&-0.010&0.050&0.266&0.940&&-0.009&0.042&0.231&0.965\\
\cdashline{2-5}\cdashline{7-11}
SDR-RF&-0.090&0.090&0.219&0.600&&-0.087&0.087&0.189&0.565\\
\cdashline{2-5}\cdashline{7-11}
DTL1&-0.123&0.123&0.237&0.475&&-0.100&0.100&0.210&0.540\\
\cdashline{2-5}\cdashline{7-11}
DTL2&-0.072&0.073&0.246&0.775&&-0.060&0.062&0.217&0.880\\
\cdashline{2-5}\cdashline{7-11}
S-DRL&-0.060&0.069&0.246&0.790&&-0.048&0.051&0.215&0.815\\
\cdashline{2-5}\cdashline{7-11}
SMDR1&-0.035&0.051&0.227&0.890&&-0.018&0.037&0.198&0.950\\
\cdashline{2-5}\cdashline{7-11}
SMDR2&-0.013&0.047&0.228&0.940&&0.000&0.034&0.199&0.935\\
\bottomrule
\end{tabular}
\end{table}

Due to the confounding factors, the naive empirical difference estimator $\thetahat_\mathrm{empdiff}$ is not consistent with large biases and poor coverage; see Tables \ref{table:settinga1}-\ref{table:settingb1}. The IPW estimator also has very large biases (especially when $N$ is large) and provides bad coverage results under Setting (a), where the PS model at the second exposure is misspecified. In Setting (b) where both PS models are correctly specified, surprisingly, the IPW estimator provides acceptable coverages although there is no theoretical guarantees from existing work in high dimensions. However, the RMSEs of IPW are comparable with SMDR2 only when $N=12000$, and worse than SMDR2 in other cases; see Table \ref{table:settingb1}.

In scenarios where the sample size is relatively small compared to the dimensionality, the SDR estimator exhibits substantial estimation errors due to the absence of regularization in the nuisance estimation process. As the sample size increases, the SDR estimator tends to yield more acceptable estimation and inference results; however, its efficiency remains notably inferior to that of the proposed SMDR1 and SMDR2 estimators. On the other hand, the SDR-RF estimator typically yields satisfactory results in scenarios with relatively small sample sizes. However, as the sample size increases, the inferential outcomes, as well as the estimation results in Setting (a), tend to deteriorate. This phenomenon occurs due to the relatively slow convergence rates of random forests for nuisance estimation, leading to notable biases that cannot be ignored and consequently compromising the accuracy of inference.

The DTL1 estimator exhibits relatively poor performance overall, with biases often close to RMSE, and coverages far below the desired $95\%$. This suboptimal performance arises from two main factors: (i) The DTL estimators are only proven to be consistent when model misspecification occurs \citep{bradic2024high} and are not necessarily $\sqrt N$-consistent nor asymptotically normal; (ii) The sample splitting method of DTL1 is inefficient in finite samples, as only $1/5$ of the samples are used to obtain each nuisance estimator when $\K=5$. DTL2 is constructed using a more efficient sample splitting, leading to smaller biases than DTL1. However, it fails to achieve satisfactory coverage guarantees even with a large sample size. 

The S-DRL estimator is constructed similarly to DTL2, except with a different doubly robust estimation strategy for the first OR model. When the sample size is large enough, the S-DRL method provides RMSEs similar to (see Table \ref{table:settinga1}) or smaller than (see Table \ref{table:settingb1}) the DTL2 estimator. However, the coverages based on S-DRL remain below the desired $95\%$, even with a large sample size. Interestingly, the DTL and S-DRL methods yield coverages closer to $95\%$ when the sample size $N$ is small; however, an increase in the sample size does not lead to better coverage. This is due to the biases of these methods decaying slower than the parametric rate under model misspecification, making the normal approximation inaccurate.

For sufficiently large sample sizes, the proposed SMDR1 estimator consistently outperforms IPW, SDR, SDR-RF, DTL1, DTL2, and S-DRL in terms of estimation, exhibiting smaller biases and RMSEs across all considered settings (refer to Tables \ref{table:settinga1}-\ref{table:settingb1}). Moreover, SMDR1 provides satisfactory coverages that closely approach the desired $95\%$. However, its performance can be occasionally inferior to that of SDR-RF, DTL2, and S-DRL when the total sample size $N$ is small, as observed in Table \ref{table:settinga1} for $N\in\{400,1000\}$. Notably, SMDR1 consistently outperforms the DTL1 method, which utilizes the same type of sample splitting. This observation suggests the inadequacy of the sample splitting technique introduced in SMDR1. Indeed, when $N=1000$, only approximately $N_0\approx341$ samples are observed with the treatment path $(1,1)$ under Setting (a). Consequently, Step 4 of Algorithm \ref{alg:BRDR} results in only $N_0(\K-1)/(4\K)=N_0/5\approx68$ training samples for nuisance estimation, while the nuisance parameters have dimensions $d_1=100$ and $d=150$ for the first and second exposures, respectively. In contrast, the SMDR2 method employs $N_0(\K-1)/\K=0.8N_0\approx273$ training samples under the same setup. Notably, the SMDR2 method yields more stable results when $N$ is small, especially under Setting (a); see Table \ref{table:settinga1}. Therefore, while we acknowledge that the SMDR2 estimator may demand more stringent sparsity conditions compared to SMDR1 from a theoretical perspective, we recommend also considering the more efficient sample splitting technique of SMDR2, particularly when the sample size is small.

\subsection{A semi-synthetic analysis based on the National Job Corps Study (NJCS)}

\begin{table}[h!]
\centering
\caption{Semi-synthetic analysis. Bias: empirical bias; CI: the $95\%$ confidence interval; p-value: the p-value of $H_0: \theta=0$ v.s. $H_1: \theta\neq0$.} \label{table:lin}
\resizebox{15cm}{!}{
\begin{tabular}{lcccccccccccc}
\toprule
Method&$\thetahat_O$&$\thetahat$&Bias&CI&p-value&&$\thetahat_O$&$\thetahat$&Bias&CI&p-value&\\
\hline
\multicolumn{1}{c}{ } & \multicolumn{12}{c}{\cellcolor{gray!50} Setting (a)}\\
\cline{2-13}
DTL2&\multirow{3}{*}{0.15}&0.103&-0.047&[-0.090, 0.297]&0.295&&\multirow{3}{*}{0.20}&0.153&-0.047&[-0.040, 0.347]&0.120\\
\cdashline{3-6}\cdashline{9-12}
S-DRL&&0.092&-0.058&[-0.101, 0.308]&0.378&&&0.142&-0.058&[-0.051, 0.358]&0.173\\
\cdashline{3-6}\cdashline{9-12}
SMDR1&&0.180&0.030&[-0.019, 0.380]&0.076&&&0.230&0.030&[0.031, 0.430]&0.024\\
\hline
DTL2&\multirow{3}{*}{0.25}&0.203&-0.047&[0.010, 0.397]&0.039&&\multirow{3}{*}{0.30}&0.253&-0.047&[0.060, 0.447]&0.010\\
\cdashline{3-6}\cdashline{9-12}
S-DRL&&0.192&-0.058&[-0.001, 0.408]&0.066&&&0.241&-0.059&[0.049, 0.458]&0.021\\
\cdashline{3-6}\cdashline{9-12}
SMDR1&&0.280&0.030&[0.080, 0.480]&0.005&&&0.330&0.030&[0.131, 0.530]&0.001\\
\hline
\multicolumn{1}{c}{ } & \multicolumn{12}{c}{\cellcolor{gray!50} Setting (b)}\\
\cline{2-13}
DTL2&\multirow{3}{*}{0.25}&-0.027&-0.277&[-0.223,0.168]&0.783&&\multirow{3}{*}{0.30}&0.023&-0.277&[-0.172,0.218]&0.817\\
\cdashline{3-6}\cdashline{9-12}
S-DRL&&0.013&-0.237&[-0.232, 0.178]&0.902&&&0.063&-0.237&[-0.182,0.228]&0.548\\
\cdashline{3-6}\cdashline{9-12}
SMDR1&&0.141&-0.109&[-0.038, 0.320]&0.123&&&0.192&-0.108&[0.013, 0.371]&0.035\\
\hline
DTL2&\multirow{3}{*}{0.35}&0.072&-0.278&[-0.123,0.268]&0.468&&\multirow{3}{*}{0.40}&0.122&-0.278&[-0.074,0.317]&0.222\\
\cdashline{3-6}\cdashline{9-12}
S-DRL&&0.113&-0.237&[-0.133,0.277]&0.281&&&0.162&-0.238&[-0.083,0.327]&0.121\\
\cdashline{3-6}\cdashline{9-12}
SMDR1&&0.242&-0.108&[0.063, 0.421]&0.008&&&0.291&-0.109&[0.111, 0.470]&0.001\\
\hline
DTL2&\multirow{3}{*}{0.45}&0.172&-0.278&[-0.023,0.367]&0.084&&\multirow{3}{*}{0.50}&0.223&-0.277&[0.028,0.418]&0.025\\
\cdashline{3-6}\cdashline{9-12}
S-DRL&&0.212&-0.238&[-0.033,0.377]&0.043&&&0.263&-0.237&[0.018,0.428]&0.012\\
\cdashline{3-6}\cdashline{9-12}
SMDR1&&0.340&-0.110&[0.161, 0.519]&0.000&&&0.406&-0.094&[0.225, 0.586]&0.000\\
\bottomrule
\end{tabular}}
\end{table}

In this section, we compare the estimation and inference performance of the DTE estimators through semi-synthetic experiments. We consider a dataset from the National Job Corps Study, which is the largest and most comprehensive job training program in the US established in 1964, and serves approximately 50,000 disadvantaged youths aged 16-24 each year by providing vocational training and academic education. A detailed description of the original design and main effects can be found in \cite{schochet2008does,schochet2001national}.

We consider a dataset of 11,313 individuals, with 6,828 assigned to the Job Corps and 4,485 not. Treatments, denoted as $Z_{ti}\in\{0,1,2,3\}$ ($t\in{1,2}$), are assigned to the $i$th individual in the first and second years after the initial randomization, where $Z_{ti}=0$ represents non-enrollment, $Z_{ti}=1$ enrollment without program participation, $Z_{ti}=2$ high-school-level education, and $Z_{ti}=3$ vocational training. The baseline covariate vector, $\bS_{1i}$, has 909 characteristics, while $\bS_{2i}$ includes 1,427 characteristics that are evaluated before the second-year treatment assignment. We exclude 2,610 individuals whose treatment stages are missing completely at random \citep{Schochet2003national}, resulting in a final sample of 8,703 individuals. We also exclude the binary characteristics, if the 0/1 groups are extremely unbalanced in that the minority group's size is less than 10 within the 8703 individuals, resulting in the final $\bS_{1i}$ with 891 characteristics and $\bS_{2i}$ with 1350 characteristics. After standardizing the covariates, we generate the potential outcomes $\Ytil_i(z)$ corresponding to treatment paths $z=(z_1,z_2)\in\{0,1,2,3\}^2$ based on Settings (a) and (b) below. The observed outcome is generated as $Y_i=Y_i(Z_{i1},Z_{i2})=\sum_{z\in\{0,1,2,3\}^2}\mathbbm1_{\{(Z_{1i},Z_{2i})=z\}}Y_i(z)$.

We consider estimation of the DTE, $\theta=E\{\Ytil_i(z)\}-E\{\Ytil_i(z')\}$, focusing on the treatment path $z=(z_1,z_2)=(3,3)$ and the control path $z'=(z_1',z_2')=(1,1)$. To estimate the expected potential outcome $E\{\Ytil_i(z)\}$, we set $A_{1i}=\mathbbm1_{\{Z_{1i}=z_1\}}$, $A_{2i}=\mathbbm1_{\{Z_{2i}=z_2\}}$, and $Y_i(1,1)=\Ytil_i(z)$. Then $E\{\Ytil_i(z)\}=E\{Y_i(1,1)\}$ can be estimated using Algorithm \ref{alg:BRDR}. The control arm $E\{\Ytil_i(z')\}=E\{Y_i(0,0)\}$ can be estimated analogously, and the final DTE estimator is constructed as the difference of the obtained estimates. Let $\epsilon_i\sim^\mathrm{iid}N(0,1)$. The potential outcomes $Y_i(1,1)=\Ytil_i(z)$ and $Y_i(0,0)=\Ytil_i(z')$ are generated as below.

\begin{figure*}
	\captionsetup[subfloat]{labelformat=empty}
	\subfloat[Setting (a)]{
	\includegraphics[height=0.35\linewidth,width=0.45\linewidth]{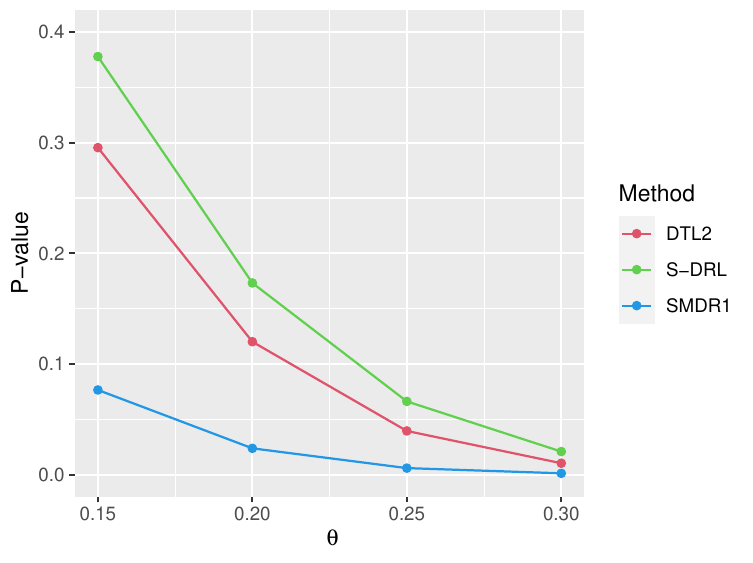}
	}
 \hfill
 \subfloat[Setting (b)]{
 \includegraphics[height=0.35\linewidth,width=0.45\linewidth]{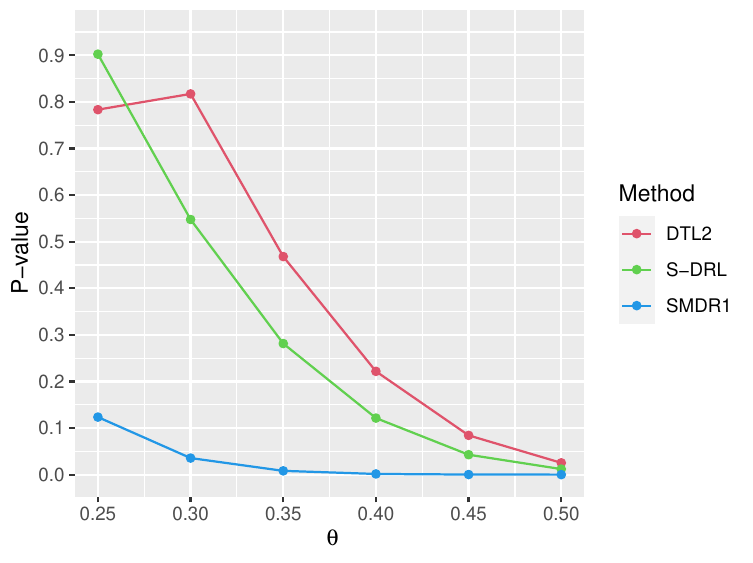}
 }
\caption{\centering The p-values for the null $H_0: \theta = 0$ as $\hat{\theta}_O$ varies. Since the true $\theta$ is unknown, the x-axis denotes the oracle difference-in-mean estimate of $\theta$.}\label{fig_JC}
\end{figure*}

Setting (a): Linear $\nu$. Let $Y_i(1,1)=\bar\bS_{2i}^\top\balpha+\epsilon_i$ and $Y_i(0,0)=-\bar\bS_{2i}^\top\balpha+\epsilon_i$, where $\balpha=0.5\cdot(\alpha_0,\mathbf{1}_{(8)},\mathbf{0}_{(d_1-8)}, \mathbf{1}_{(4)},\mathbf0_{(d_2-4)})^\top$ with $\alpha_0$ varying from 0.15 to 0.3.

Setting (b): Non-linear $\nu$. Let $Y_i(1,1)=\bS_{1i}^\top\balpha_1+ (\bS_{2i}^2-1)^\top\balpha_2+\epsilon_i$ and $Y_i(0,0)=-\bS_{1i}^\top\balpha_1-(\bS_{2i}^2-1)^\top\balpha_2+\epsilon_i$, where $\bS_{2i}^2 $ is the coordinate-wise square of $\bS_{2i}$, $\balpha_1=0.5\cdot(\alpha_0,\mathbf{1}_{(8)},\mathbf{0}_{(d_1-8)})^\top$, $\balpha_2=0.05\cdot (\mathbf{1}_{(4)},\mathbf0_{(d_2-4)})^\top$, and $\alpha_0$ varies from 0.25 to 0.5. 

For each setting, we implement the DTL2, S-DRL, and the proposed SMDR1 estimators (see Section \ref{sec:sim}). The results are reported in Table \ref{table:lin}, where the biases are calculated based on the oracle difference-in-mean estimate $\thetahat_O:=N^{-1}\sum_{i=1}^N\{Y_i(1,1)-Y_i(0,0)\}=\alpha_0$. Recall that under our simulated outcome setting, we get to see all potential outcomes: two per individual. Under both Settings (a) and (b), the proposed SMDR1 method provides smaller absolute biases than the DTL2 and S-DRL estimators. In addition, under Setting (a), where the OR model at the second exposure is truly linear, all the constructed confidence intervals contain the oracle estimate $\thetahat_O$. However, when the potential outcome is generated through a quadratic function (under Setting (b)), the oracle estimate $\thetahat_O$ does not lie in the confidence intervals based on the DTL2 and S-DRL methods; on the other hand, the proposed SMDR1 method leads to confidence intervals containing the oracle estimate. Moreover, considering the hypothesis testing problem with the null $H_0:\theta=0$ and the alternative $H_1:\theta\neq0$, the reported p-values decay as $\thetahat_O=\alpha_0$ grows; see Figure \ref{fig_JC}. When $\alpha_0$ is large enough, all the methods return p-values smaller than $0.05$; however, different methods require different signal levels to detect the causal effect and reject the null successfully. Under Setting (a), the proposed SMDR1 method is able to detect the causal effect with a significance level of $95\%$ when $\alpha_0=0.2$; however, under the same signal level, both the DTL2 and S-DRL methods fail to reject the null as the corresponding p-values are larger than $0.05$. Similarly, under Setting (b) with $\thetahat_O=\alpha_0=0.3$, the proposed SMDR1 method is able to detect the causal effect, whereas the p-value based on the DTL2 and S-DRL methods are both very large. Therefore, we observed a significantly better power in the SMDR1 method than both DTL2 and S-DRL.

\section{Discussion}\label{sec:dis}

This paper introduces new techniques to enable statistical inference for treatment effects in dynamic, high-dimensional, and potentially misspecified settings. By proposing a set of novel loss functions for nuisance models, we develop a sequential model doubly robust (SMDR) method that achieves root-\(N\) inference under minimal requirements. Our findings highlight the critical role of nuisance model estimation—naive, off-the-shelf estimators fail to achieve the desired robustness, even within doubly robust frameworks. While some nuisance models can be estimated independently, our results demonstrate that adopting a sequential estimation approach with nested designs significantly reduces the final estimation error for causal parameters. This observation raises an intriguing question: does this phenomenon persist in other statistical estimation problems, particularly in complex longitudinal settings requiring multi-stage estimation?

In the context of dynamic treatment regimes, a related but distinct doubly robust (DR) property has been explored. Existing methods for consistently estimating the optimal regime often require correctly specified contrast models at all later stages of estimation \citep{schulte2014q, shi2018high}. This condition is highly restrictive, especially in settings with multiple exposure occasions, and is not required in our framework. A natural question arises: How should decisions be made if contrast models cannot be accurately specified at later stages? Our results, outlined in Theorem \ref{thm:nuisance}, suggest that outcome regression (OR) models, such as \(\E\{Y(a_1,a_2) \mid \bS_1 = \bs_1\}\), can still be consistently estimated under these conditions. Consequently, optimizing the estimated OR functions at the first exposure over possible treatment paths offers a conservative yet viable strategy for newly arriving individuals. Notably, this approach eliminates the need for additional covariate evaluations at later stages, making it particularly useful in applications where accessing longitudinal covariates is costly or impractical.

The proposed algorithms can also be implemented using linear or logistic forms with basis functions, such as B-splines. However, further theoretical analysis is required for such approaches, as well as for the application of other non-parametric methods, including random forests and boosting. Additionally, while our methods leverage sparse structures in the models, future research should explore strategies that accommodate dense models with robust guarantees, broadening the applicability of our framework.

\section*{Supplementary Material}\label{supp_mat}

Sections \ref{sec:notation}-\ref{sec:proof_lemmas} contain additional discussions, justifications, and proofs of the main results. Additional notations used in the supplementary material are introduced in Section \ref{sec:notation}. Section \ref{sec:exist_unique} discusses the uniqueness of the moment-targeted parameters; the justification of their identification is provided in Section \ref{sec:just}. We introduce some useful auxiliary lemmas in Section \ref{sec:lemmas}. The proofs of main results and auxiliary lemmas are in Sections \ref{sec:proof_DTE} and \ref{sec:proof_lemmas}, respectively.

\appendix
\bibliographystyle{abbrvnat}
\bibliography{DTE}


\renewcommand{\thetheorem}{S.\arabic{theorem}}
\renewcommand{\thelemma}{S.\arabic{lemma}}


\clearpage\newpage 
\begin{center}
\textbf{\uppercase{Supplementary Material to ``Robust inference for the dynamic treatment effect in high dimensions''}}
\end{center}

\par\medskip

\section{Notation}\label{sec:notation}
 We use the following notation throughout. Let $\P$ and $\E$ denote the probability measure and expectation characterizing the joint distribution of the underlying random vector $\mathbf W:=(\{Y(a_1,a_2)\}_{a_1,a_2\in\{0,1\}},A_1,A_2,\bS_1,\bS_2)$ (independent of the observed samples), respectively. 
For any $\alpha>0$, let $\psi_{\alpha}(x):=\exp(x^\alpha)-1$, $\forall x>0$. The $\psi_{\alpha}$-Orlicz norm $\|\cdot\|_{\psi_{\alpha}}$ of a random variable $X\in\R$ is defined as $\|X\|_{\psi_{\alpha}}:=\inf\{c>0:\E[\psi_{\alpha}(\lvert X\rvert/c)]\leq 1\}$. Two special cases are given by $\psi_2(x)=\exp(x^2)-1$ and $\psi_{1}(x)=\exp(x)-1$. We use $a_N\asymp b_N$ to denote $cb_N\leq a_N \leq Cb_N$ for all $N \geq 1$ and constants $c,C>0$. For any $\tilde{\S}\subseteq\S=(\bZ_i)_{i=1}^N$, define $\P_{\tilde{\S}}$ as the joint distribution of $\tilde{\S}$ and $\E_{\tilde{\S}}(f)=\int fd\P_{\tilde{\S}}$. For $r\geq1$, define the $l_r$-norm of a vector $\bz$ with $\|\bz\|_r:=(\sum_{j=1}^p\lvert\bz_j\rvert^r)^{1/r}$, $\|\bz\|_0 := \lvert\{j:\bz_j\neq0\}\rvert$, and $\|\bz\|_\infty:=\max_j\lvert\bz_j\rvert$. We denote the logistic function with $g(u)= {\exp(u)}/{(1+\exp(u))}$, for all $u\in\R.$ A $d$ dimensional vector of all ones and zeros are denoted with $\mathbf{1}_{(d)}$ and $\mathbf{0}_{(d)}$, respectively.
 
  Constants $c,C>0$, independent of $N$ and $d$, may change from one line to the other. For any $r > 0$, let $ \| f(\cdot) \|_{r,\P} := \{\E \lvert f(\bZ)\rvert^r\}^{1/r}$. Denote $\be_j$ as the vector whose $j$-th element is $1$ and other elements are $0$s. For any symmetric matrices $\mathbf{A}$ and $\mathbf{B}$, $\mathbf{A}\succ\mathbf{B}$ denotes that $\mathbf{A}-\mathbf{B}$ is positive definite and $\mathbf{A}\succeq\mathbf{B}$ denotes that $\mathbf{A}-\mathbf{B}$ is positive semidefinite; denote $\lambda_{\min}(\mathbf{A})$ as the smallest eigenvalue of $\mathbf{A}$. For the sake of simplicity, we denote $Y_{1,1}:=Y(1,1)$ in this supplementary document. Let $\S_{\bgamma}=(\bW_i)_{i\in\mathcal I_{\bgamma}}$, $\S_{\bdelta}=(\bW_i)_{i\in\mathcal I_{\bdelta}}$, $\S_{\balpha}=(\bW_i)_{i\in\mathcal I_{\balpha}}$, and $\S_{\bbeta}=(\bW_i)_{i\in\mathcal I_{\bbeta}}$ be the subsets of $\S$ corresponding to the index sets $\mathcal I_{\bgamma}$, $\mathcal I_{\bdelta}$, $\mathcal I_{\balpha}$, and $\mathcal I_{\bbeta}$, respectively.

\section{Uniqueness of moment-targeted  parameters}\label{sec:exist_unique}

In this section, we discuss the uniqueness of moment-targeted nuisance parameters $\bgamma^*,\bdelta^*,\balpha^*,\bbeta^*$ defined in Section \ref{sec:DR-DTE}. Note that the parameters are defined through loss functions \eqref{def:alpha}, \eqref{def:l2}, \eqref{def:l3}, and \eqref{def:l4}. We first consider the Hessian matrices of the objective functions:
\begin{align*}
\bnabla_{\bgamma}^2\E\{\ell_1(\bW;\bgamma)\}&=\E\left[A_1\{g^{-1}(\bS_1^\top\bgamma)-1\}\bS_1\bS_1^\top\right],\\
\bnabla_{\bdelta}^2\E\{\ell_2(\bW;\bgamma^*,\bdelta)\}&=\E\left[A_1A_2g^{-1}(\bS_1^\top\bgamma^*)\{g^{-1}(\bar\bS_2^\top\bdelta)-1\}\bar\bS_2\bar\bS_2^\top\right],\\
\bnabla_{\balpha}^2\E\{\ell_3(\bW;\bgamma^*,\bdelta^*,\balpha)\}&=2\E\left[A_1A_2g^{-1}(\bS_1^\top\bgamma^*)\{g^{-1}(\bar\bS_2^\top\bdelta^*)-1\}\bar\bS_2\bar\bS_2^\top\right],\\
\bnabla_{\bbeta}^2\E\{\ell_4(\bW;\bgamma^*,\bdelta^*,\balpha^*,\bbeta)\}&=2\E\left[A_1\{g^{-1}(\bS_1^\top\bgamma^*)-1\}\bS_1\bS_1^\top\right],
\end{align*}
where $g(u)=\exp(u)/\{1+\exp(u)\}$ is the logistic function. In the following, we will prove by contradiction to demonstrate that
\begin{align}\label{eq:pd_gamma}
\bnabla_{\bgamma}^2\E\{\ell_1(\bW;\bgamma)\}\succ\bzero,\;\;\forall\bgamma\in\R^{d_1}.
\end{align}
Assume there exists some $\ba,\bgamma\in\R^{d_1}$ such that $\ba^\top\bnabla_{\bgamma}^2\E\{\ell_1(\bW;\bgamma)\}\ba=0$ and $\|\ba\|_2=1$. Then $A_1\{g^{-1}(\bS_1^\top\bgamma)-1\}(\bS_1^\top\ba)^2=0$ almost surely. Since $g(u)\in(0,1)$ for all $u\in\R$, $g^{-1}(\bs_1^\top\bgamma)-1>0$ for all $\bs_1\in\R^{d_1}$ and hence $A_1(\bS_1^\top\ba)^2=0$ almost surely. It follows that $\E\{A_1(\bS_1^\top\ba)^2\}=0$ and hence $\lambda_{\min}(\E(A_1\bar\bS_2\bar\bS_2^\top))\leq\lambda_{\min}(\E(A_1\bS_1\bS_1^\top))=0$, which conflicts Assumption \ref{cond:subG}. Therefore, \eqref{eq:pd_gamma} holds, the solution $\bgamma^*$ is unique and can be equivalently defined as the solution of the first-order optimality condition $\E\{\bnabla_{\bgamma}\ell_1(\bW;\bgamma)\}=\bzero$.

In addition, we also note that
\begin{align*}
&\bnabla_{\bdelta}^2\E\{\ell_2(\bW;\bgamma^*,\bdelta)\}\\
&\qquad=\E\left(\E\left[A_2g^{-1}(\bS_1^\top\bgamma^*)\{g^{-1}(\bar\bS_2^\top\bdelta)-1\}\bar\bS_2\bar\bS_2^\top\mid\bar\bS_2,A_1=1\right]\E(A_1\mid\bar\bS_2)\right)\\
&\qquad=\E\left[\rho(\bar\bS_2)g^{-1}(\bS_1^\top\bgamma^*)\{g^{-1}(\bar\bS_2^\top\bdelta)-1\}\bar\bS_2\bar\bS_2^\top\E(A_1\mid\bar\bS_2)\right]\\
&\qquad=\E\left[A_1\rho(\bar\bS_2)g^{-1}(\bS_1^\top\bgamma^*)\{g^{-1}(\bar\bS_2^\top\bdelta)-1\}\bar\bS_2\bar\bS_2^\top\right]\\
&\qquad\succeq c_0\E\left[A_1g^{-1}(\bS_1^\top\bgamma^*)\{g^{-1}(\bar\bS_2^\top\bdelta)-1\}\bar\bS_2\bar\bS_2^\top\right]
\end{align*}
under Assumption \ref{cond:basic}. In the following, we will prove by contradiction to show that 
\begin{align}\label{eq:pd_delta}
\bnabla_{\bdelta}^2\E\{\ell_2(\bW;\bgamma^*,\bdelta)\}\succ\bzero,\;\;\forall\bdelta\in\R^d.
\end{align}
Assume there exists some $\ba,\bdelta\in\R^d$ such that $\ba^\top\bnabla_{\bdelta}^2\E\{\ell_2(\bW;\bgamma^*,\bdelta)\}\ba=0$ and $\|\ba\|_2=1$. Then $A_1g^{-1}(\bS_1^\top\bgamma^*)\{g^{-1}(\bar\bS_2^\top\bdelta)-1\}(\bar\bS_2^\top\ba)^2=0$ almost surely. Since $g(u)\in(0,1)$ for all $u\in\R$, $g^{-1}(\bs_1^\top\bgamma^*)\{g^{-1}(\bar\bs_2^\top\bdelta)-1\}>0$ for all $\bar\bs_2\in\R^d$ and hence $A_1(\bar\bS_2^\top\ba)^2$ almost surely. It follows that $\E\{A_1(\bar\bS_2^\top\ba)^2\}=0$ and hence $\lambda_{\min}(\E(A_1\bar\bS_2\bar\bS_2^\top))=0$, which conflicts Assumption \ref{cond:subG}. Therefore, \eqref{eq:pd_delta} holds, the solution $\bdelta^*$ is unique and can be equivalently defined as the solution of $\E\{\bnabla_{\bdelta}\ell_2(\bW;\bgamma^*,\bdelta)\}=\bzero$.

Besides, we note that
\begin{align}
\bnabla_{\balpha}^2\E\{\ell_3(\bW;\bgamma^*,\bdelta^*,\balpha)\}&=2\bnabla_{\bdelta}^2\E\{\ell_2(\bW;\bgamma^*,\bdelta^*)\}\succ\bzero,\;\;\forall\balpha\in\R^d,\label{eq:pd_alpha}\\
\bnabla_{\bbeta}^2\E\{\ell_4(\bW;\bgamma^*,\bdelta^*,\balpha^*,\bbeta)\}&=2\bnabla_{\bgamma}^2\E\{\ell_1(\bW;\bgamma^*)\}\succ\bzero,\;\;\forall\bbeta\in\R^{d_1}.\label{eq:pd_beta}
\end{align}
Therefore, the solutions $\balpha^*$ and $\bbeta^*$ are also unique.

\section{Justifications for Remark \ref{remark:correctness}}\label{sec:just}
In this section, we provide justifications for the results in Remark \ref{remark:correctness}. We first introduce equivalent expressions for the OR functions $\mu(\cdot)$ and $\nu(\cdot)$: under Assumption \ref{cond:basic},
\begin{align}
\nu(\bar\bs_2):=&\E(Y_{1,1}\mid\bar\bS_2=\bs_2,A_1=1)\overset{(iii)}{=}\E(Y_{1,1}\mid\bar\bS_2=\bs_2,A_1=1,A_2=1),\nonumber\\
\mu(\bs_1):=&\E(Y_{1,1}\mid\bS_1=\bs_1)\overset{(ii)}=\E(Y_{1,1}\mid\bS_1=\bs_1,A_1=1)\nonumber\\
\overset{(iii)}{=}&\E\{\nu(\bar\bS_2)\mid\bS_1=\bs_1,A_1=1\}.\label{eq:rep_mu}
\end{align}
Here, (i) holds since $Y_{1,1}\independent A_2\mid(\bar\bS_2,A_1=1)$; (ii) holds since $Y_{1,1}\independent A_1\mid\bS_1$; (iii) holds by the tower rule.

The justifications for cases (a)-(d) of Remark \ref{remark:correctness} are provided below:
\begin{enumerate}[label=(a.\arabic*),leftmargin=1mm, labelsep=0.4em, itemindent=3.9em]
\item Assume $\pi^*(\cdot)=\pi(\cdot)$. Then $\pi(\bs_1)=\pi^*(\bs_1)=g(\bs_1^\top\bgamma^*)$ and hence $\pi(\bs_1)=g(\bs_1^\top\bgamma^0)$ with $\bgamma^0=\bgamma^*$. 
\item \label{a.2}  Assume there exists some $\bgamma^0\in\R^{d_1}$ such that $\pi(\bs_1)=g(\bs_1^\top\bgamma^0)$. By the construction of $\bgamma^*$ and note that the Hessian matrix satisfies \eqref{eq:pd_gamma}, $\bgamma=\bgamma^*$ is the unique solution of
$$\E\{\bnabla_{\bgamma}\ell_1(\bW;\bgamma)\}=\E\left[\{1-A_1g^{-1}(\bS_1^\top\bgamma)\}\bS_1\right]=\bzero\in\R^{d_1}.$$
Meanwhile, we also have
\begin{align*}
&\E\left[\{1-A_1g^{-1}(\bS_1^\top\bgamma^0)\}\bS_1\right]=\E\left(\E\left[\{1-A_1g^{-1}(\bS_1^\top\bgamma^0)\}\bS_1\mid\bS_1\right]\right)\\
&\qquad=\E\left[\{1-\pi(\bS_1)g^{-1}(\bS_1^\top\bgamma^0)\}\bS_1\right]=\bzero\in\R^{d_1}.
\end{align*}
By the uniqueness of $\bgamma^*$, we conclude that $\bgamma^0=\bgamma^*$ and hence $\pi(\cdot)=\pi^*(\cdot)$.
\end{enumerate}

\begin{enumerate}[label=(b.\arabic*),leftmargin=1mm, labelsep=0.4em, itemindent=3.9em]
\item Assume $\rho^*(\cdot)=\rho(\cdot)$. Then $\rho(\bar\bs_2)=\rho^*(\bar\bs_2)=g(\bar\bs_2^\top\bdelta^*)$ and hence $\rho(\bar\bs_2)=g(\bar\bs_2^\top\bdelta^0)$ with $\bdelta^0=\bdelta^*$.

\item \label{b.2}  Assume there exists some $\bdelta^0\in\R^d$ such that $\rho(\bar\bs_2)=g(\bar\bs_2^\top\bdelta^0)$. By the construction of $\bdelta^*$ and note that the Hessian matrix satisfies \eqref{eq:pd_delta}, $\bdelta=\bdelta^*$ is the unique solution of
$$\E\{\bnabla_{\bdelta}\ell_2(\bW;\bgamma^*,\bdelta)\}=\E\left[A_1g^{-1}(\bS_1^\top\bgamma^*)\{1-A_2g^{-1}(\bar\bS_2^\top\bdelta)\}\bar\bS_2\right]=\bzero\in\R^{d}.$$
Meanwhile, we also have
\begin{align*}
&\E\left[A_1g^{-1}(\bS_1^\top\bgamma^*)\{1-A_2g^{-1}(\bar\bS_2^\top\bdelta^0)\}\bar\bS_2\right]\\
&\qquad=\E\left(\E\left[g^{-1}(\bS_1^\top\bgamma^*)\{1-A_2g^{-1}(\bar\bS_2^\top\bdelta^0)\}\bar\bS_2\mid\bar\bS_2,A_1=1\right]\E(A_1\mid\bar\bS_2)\right)\\
&\qquad=\E\left[g^{-1}(\bS_1^\top\bgamma^*)\{1-\rho(\bar\bS_2)g^{-1}(\bar\bS_2^\top\bdelta^0)\}\bar\bS_2\E(A_1\mid\bar\bS_2)\right]\\
&\qquad=\E\left[A_1g^{-1}(\bS_1^\top\bgamma^*)\{1-\rho(\bar\bS_2)g^{-1}(\bar\bS_2^\top\bdelta^0)\}\bar\bS_2\right]=\bzero\in\R^{d}.
\end{align*}
By the uniqueness of $\bdelta^*$, we conclude that $\bdelta^0=\bdelta^*$ and hence $\rho(\cdot)=\rho^*(\cdot)$.
\end{enumerate}

\begin{enumerate}[label=(c.\arabic*),leftmargin=1mm, labelsep=0.4em, itemindent=3.9em]
\item Assume $\nu^*(\cdot)=\nu(\cdot)$. Then $\nu(\bar\bs_2)=\nu^*(\bar\bs_2)=\bar\bs_2^\top\balpha^*$ and hence $\nu(\bar\bs_2)=\bar\bs_2^\top\balpha^0$ with $\balpha^0=\balpha^*$.

\item \label{c.2}  Assume there exists some $\balpha^0\in\R^d$ such that $\nu(\bar\bs_2)=\bar\bs_2^\top\balpha^0$. By the construction of $\balpha^*$ and note that the Hessian matrix satisfies \eqref{eq:pd_alpha}, $\balpha=\balpha^*$ is the unique solution of
\begin{align*}
\E\{\bnabla_{\balpha}\ell_3(\bW;\bgamma^*,\bdelta^*,\balpha)\}&=2\E\left[A_1A_2g^{-1}(\bS_1^\top\bgamma^*)\{g^{-1}(\bar\bS_2^\top\bdelta^*)-1\}(\bar\bS_2^\top\balpha-Y)\bar\bS_2\right]=\bzero\in\R^{d}.
\end{align*}
Meanwhile, we also have
\begin{align*}
&\E\left[A_1A_2g^{-1}(\bS_1^\top\bgamma^*)\{g^{-1}(\bar\bS_2^\top\bdelta^*)-1\}(\bar\bS_2^\top\balpha^0-Y)\bar\bS_2\right]\\
&\qquad=\E\left(\E\left[A_2g^{-1}(\bS_1^\top\bgamma^*)\{g^{-1}(\bar\bS_2^\top\bdelta^*)-1\}(\bar\bS_2^\top\balpha^0-Y_{1,1})\bar\bS_2\mid\bar\bS_2,A_1=1\right]\E(A_1\mid\bar\bS_2)\right)\\
&\qquad=\E\left[\E\left\{A_2(\bar\bS_2^\top\balpha^0-Y_{1,1})\mid\bar\bS_2,A_1=1\right\}g^{-1}(\bS_1^\top\bgamma^*)\{g^{-1}(\bar\bS_2^\top\bdelta^*)-1\}\bar\bS_2\E(A_1\mid\bar\bS_2)\right]\\
&\qquad\overset{(i)}{=}\E\left[\rho(\bar\bS_2)\{\bar\bS_2^\top\balpha^0-\nu(\bar\bS_2)\}g^{-1}(\bS_1^\top\bgamma^*)\{g^{-1}(\bar\bS_2^\top\bdelta^*)-1\}\bar\bS_2\E(A_1\mid\bar\bS_2)\right]=\bzero\in\R^{d},
\end{align*}
where (i) holds since $Y_{1,1}\independent A_2\mid(\bar\bS_2,A_1=1)$ under Assumption \ref{cond:basic}. By the uniqueness of $\balpha^*$, we conclude that $\balpha^0=\balpha^*$ and hence $\nu(\cdot)=\nu^*(\cdot)$.
\end{enumerate}

\begin{enumerate}[label=(d),leftmargin=1mm, labelsep=0.4em, itemindent=3.9em]
\item  Assume there exists some $\bbeta^0\in\R^{d_1}$ such that $\mu(\bs_1)=\bs_1^\top\bbeta^0$ and either $\rho^*(\cdot)=\rho(\cdot)$ or $\nu^*(\cdot)=\nu(\cdot)$ holds. By the construction of $\bbeta^*$ and note that the Hessian matrix satisfies \eqref{eq:pd_beta}, $\bbeta=\bbeta^*$ is the unique solution of
\begin{align*}
&\E\{\bnabla_{\bbeta}\ell_4(\bW;\bgamma^*,\bdelta^*,\balpha^*,\bbeta)\}\\
&\qquad=2\E\left[A_1\{g^{-1}(\bS_1^\top\bgamma^*)-1\}\left\{\bS_1^\top\bbeta-\bar\bS_2^\top\balpha^*-A_2g^{-1}(\bar\bS_2^\top\bdelta^*)(Y-\bar\bS_2^\top\balpha^*)\right\}\bS_1\right]\\
&\qquad=\bzero\in\R^{d_1}.
\end{align*}
Meanwhile, note that
\begin{align*}
&\E\left[\exp(-\bS_1^\top\bgamma^*)\left\{\bS_1^\top\bbeta^0-\bar\bS_2^\top\balpha^*-A_2g^{-1}(\bar\bS_2^\top\bdelta^*)(Y-\bar\bS_2^\top\balpha^*)\right\}\bS_1\mid\bar\bS_2,A_1=1\right]\\
&\qquad=\exp(-\bS_1^\top\bgamma^*)\left(\mu(\bS_1)-\nu^*(\bar\bS_2)-\frac{\E\left[A_2\{Y_{1,1}-\nu^*(\bar\bS_2)\}\mid\bar\bS_2,A_1=1\right]}{\rho^*(\bar\bS_2)}\right)\bS_1\\
&\qquad\overset{(i)}{=}\exp(-\bS_1^\top\bgamma^*)\left[\mu(\bS_1)-\nu^*(\bar\bS_2)-\frac{\rho(\bar\bS_2)\{\nu(\bar\bS_2)-\nu^*(\bar\bS_2)\}}{\rho^*(\bar\bS_2)}\right]\bS_1\\
&\qquad=\exp(-\bS_1^\top\bgamma^*)\left[\mu(\bS_1)-\nu(\bar\bS_2)+\left\{1-\frac{\rho(\bar\bS_2)}{\rho^*(\bar\bS_2)}\right\}\left\{\nu(\bar\bS_2)-\nu^*(\bar\bS_2)\right\}\right]\bS_1\\
&\qquad\overset{(ii)}{=}\exp(-\bS_1^\top\bgamma^*)\{\mu(\bS_1)-\nu(\bar\bS_2)\}\bS_1,
\end{align*}
where (i) holds since $Y_{1,1}\independent A_2\mid(\bar\bS_2,A_1=1)$ under Assumption \ref{cond:basic}; (ii) follows since either $\rho^*(\cdot)=\rho(\cdot)$ or $\nu^*(\cdot)=\nu(\cdot)$ holds. Hence, by the tower rule,
\begin{align*}
&\E\left[A_1\{g^{-1}(\bS_1^\top\bgamma^*)-1\}\left\{\bS_1^\top\bbeta^0-\bar\bS_2^\top\balpha^*-A_2g^{-1}(\bar\bS_2^\top\bdelta^*)(Y-\bar\bS_2^\top\balpha^*)\right\}\bS_1\right]\\
&\qquad=\E\left[\exp(-\bS_1^\top\bgamma^*)\{\mu(\bS_1)-\nu(\bar\bS_2)\}\bS_1\E(A_1\mid\bar\bS_2)\right]\\
&\qquad=\E\left[A_1\exp(-\bS_1^\top\bgamma^*)\{\mu(\bS_1)-\nu(\bar\bS_2)\}\bS_1\right]\\
&\qquad=\E\left(\exp(-\bS_1^\top\bgamma^*)[\mu(\bS_1)-\E\{\nu(\bar\bS_2)\mid\bS_1,A_1=1\}]\bS_1\E(A_1\mid\bS_1)\right)\overset{(i)}{=}\bzero\in\R^{d_1},
\end{align*}
where (i) follows from \eqref{eq:rep_mu}. By the uniqueness of $\bbeta^*$, we conclude that $\bbeta^0=\bbeta^*$ and hence $\mu(\cdot)=\mu^*(\cdot)$.
\end{enumerate}

\section{Auxiliary lemmas}\label{sec:lemmas}

The following Lemmas will be useful in the proofs.
\begin{lemma}[Lemmas D.1 and D.2 of \cite{zhang2023double}]\label{l1}
Let $(X_N)_{N\geq1}$ and $(Y_N)_{N\geq1}$ be sequences of random variables in $\mathbb R$. If $\E(\lvert X_N\rvert^r\mid Y_N)=O_p(1)$ for any $r\geq1$, then $X_N=O_p(1)$. If $\E(\lvert X_N\rvert^r\mid Y_N)=o_p(1)$ for any $r\geq1$, then $X_N=o_p(1)$.
\end{lemma}

\begin{lemma}[Lemma S.4 of \cite{bradic2024high}]\label{lemma:subG'}
Let Assumptions \ref{cond:basic} and \ref{cond:subG} hold. Then the smallest eigenvalues of $\E(A_1\bS_1\bS_1^\top)$ and $\E(A_1A_2\bar\bS_2\bar\bS_2^\top)$ are both lower bounded by some constant $c_{\min}'>0$. Additionally, with some constant $\sigma_{\bS}'>0$, $\|\bv^\top\bS_1\|_{\psi_2}\leq\sigma_{\bS}'\|\bv\|_2$, $\|A_1\bv^\top\bS_1\|_{\psi_2}\leq\sigma_{\bS}'\|\bv\|_2$, and $\|A_1A_2\bv^\top\bar\bS_2\|_{\psi_2}\leq\sigma_{\bS}'\|\bv\|_2$ for all $\bv\in\R^d$.
\end{lemma}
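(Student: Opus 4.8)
The plan is to derive all five conclusions from just two inputs: the sub-Gaussianity of $\bar\bS_2$ and the lower eigenvalue bound $\lambda_{\min}\{\E(A_1\bar\bS_2\bar\bS_2^T)\} \ge c_{\min}$ assumed in Assumption \ref{cond:subG}, together with the overlap bound $\rho(\bar\bS_2) > c_0$ from Assumption \ref{cond:basic}(c). The key structural fact I would exploit throughout is that $\bar\bS_2 = (\bS_1^T,\bS_2^T)^T$, so any $\bv \in \R^{d_1}$ can be zero-padded to $\bv' := (\bv^T,\bzero^T)^T \in \R^d$ with $\|\bv'\|_2 = \|\bv\|_2$ and $\bv^T\bS_1 = \bv'^T\bar\bS_2$.

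For the three Orlicz-norm bounds I would first obtain $\|\bv^T\bS_1\|_{\psi_2} = \|\bv'^T\bar\bS_2\|_{\psi_2} \le \sigma_{\bS}\|\bv\|_2$ directly from the zero-padding and the assumed sub-Gaussianity of $\bar\bS_2$. For the versions carrying treatment indicators I would invoke monotonicity of the $\psi_2$-norm under pointwise domination: since $\psi_2$ is nondecreasing on $[0,\infty)$ and $A_1, A_2 \in \{0,1\}$, one has $|A_1\bv^T\bS_1| \le |\bv^T\bS_1|$ and $|A_1A_2\bv^T\bar\bS_2| \le |\bv^T\bar\bS_2|$ almost surely, so the bound $\sigma_{\bS}\|\bv\|_2$ carries over in both cases; taking $\sigma_{\bS}' = \sigma_{\bS}$ then suffices.

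For the eigenvalue bounds, the matrix $\E(A_1\bS_1\bS_1^T)$ is immediate from zero-padding, since $\bv^T\E(A_1\bS_1\bS_1^T)\bv = \bv'^T\E(A_1\bar\bS_2\bar\bS_2^T)\bv' \ge c_{\min}\|\bv\|_2^2$. The one step needing a little care is the bound for $\E(A_1A_2\bar\bS_2\bar\bS_2^T)$: I would condition on $(\bar\bS_2, A_1)$ and use that $A_1\,\E(A_2 \mid \bar\bS_2, A_1) = A_1\,\rho(\bar\bS_2)$ — this is just the definition of $\rho$ evaluated on the event $\{A_1=1\}$, the prefactor $A_1$ absorbing the $\{A_1=0\}$ contribution, and notably it does \emph{not} require sequential ignorability. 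Combined with $\rho(\bar\bS_2) > c_0$ this gives, for any $\bv \in \R^d$,
\begin{align*}
\bv^T\E(A_1A_2\bar\bS_2\bar\bS_2^T)\bv = \E\{A_1\rho(\bar\bS_2)(\bv^T\bar\bS_2)^2\} \ge c_0\,\bv^T\E(A_1\bar\bS_2\bar\bS_2^T)\bv \ge c_0 c_{\min}\|\bv\|_2^2,
\end{align*}
so the common constant $c_{\min}' = c_0 c_{\min}$ works for both matrices (note $c_0 c_{\min} \le c_{\min}$ since $c_0 \in (0,1)$).

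I do not foresee a genuine obstacle: once one notices the zero-padding reduction and the cancellation $A_1\,\E(A_2 \mid \bar\bS_2, A_1) = A_1\,\rho(\bar\bS_2)$, everything collapses onto the assumed properties of $\E(A_1\bar\bS_2\bar\bS_2^T)$. The mildest subtlety is simply keeping track of which quantities live in $\R^{d_1}$ versus $\R^d$ and confirming that the overlap condition alone — rather than ignorability — is what is needed to pass from $\E(A_1A_2\bar\bS_2\bar\bS_2^T)$ back to $\E(A_1\bar\bS_2\bar\bS_2^T)$.
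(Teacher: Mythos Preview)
Your proposal is correct in all particulars: the zero-padding reduction handles the $\bS_1$ claims, the pointwise domination $|A_1X|\le|X|$ transfers the Orlicz bound to the indicator-weighted versions, and the conditioning identity $A_1\,\E(A_2\mid\bar\bS_2,A_1)=A_1\,\rho(\bar\bS_2)$ combined with the overlap bound $\rho>c_0$ gives the second eigenvalue inequality with $c_{\min}'=c_0c_{\min}$.

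The paper itself does not supply a proof of this lemma; it is quoted as Lemma~S.2 of \cite{bradic2021high} and used as a black box. So there is no in-paper argument to compare against, but your derivation is exactly the natural one and matches what the cited reference does.
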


\begin{lemma}[Lemma D.1 (ii), (iv), (vi), and (v) of \cite{chakrabortty2019high}]\label{lemma:psi2norm}
Let $X, Y\in \R$ be random variables. If $\lvert X\rvert\leq|Y|$ a.s., then $\|X\|_{\psi_2}\leq\|Y\|_{\psi_2}$. For any $c\in\R$, $\|cX\|_{\psi_2}=|c|\|X\|_{\psi_2}$.  If $\|X\|_{\psi_2}\leq\sigma$, then $E(X)\leq \sigma\sqrt\pi$ and $E(\lvert X\rvert^m)\leq2\sigma^m(m/2)^{m/2}$ for all $m\geq2$. Moreover, $\|XY\|_{\psi_1}\leq\|X\|_{\psi_2}\|Y\|_{\psi_2}$.
Let $\{X_i\}_{i=1}^n$ be a sequence of possibly dependent random variables with $\max_{1\leq i \leq n}\|X_i\|_{\psi_2}\leq\sigma$, then $\|\max_{1\leq i \leq n}|X_i|\|_{\psi_2}\leq\sigma(\log n+2)^{1/2}$.
\end{lemma}

\begin{lemma}[Corollary 2.3 of \cite{dumbgen2010nemirovski}]\label{Nemirovski}
Let $\{X_i\}_{i=1}^n$ be identically distributed with $\E(X_i)=\bzero$, then
$$\E\left[\left\|n^{-1}\sum_{i=1}^{n} X_i\right\|_{\infty}^2\right]\leq n^{-1} (2e\log d-e) \E\left[\|X_i\|_{\infty}^2\right].$$
\end{lemma}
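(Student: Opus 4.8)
The plan is to bound the $\ell_\infty$ second moment by passing through an $\ell_p$ norm with $p=2\log d$ and exploiting the uniform smoothness (modulus of smoothness of power type $2$) of $\ell_p$ spaces. Before starting I record that the displayed bound cannot follow from identical distribution alone: if $X_1=\cdots=X_n$ is one fixed nonzero vector, the left-hand side is constant in $n$ while the right-hand side tends to $0$. The correct, intended hypotheses — satisfied in every invocation in this paper, where each $X_i$ is a centered function of a single i.i.d.\ observation — are that the $X_i$ are \emph{independent} with $\E X_i=\bzero$, and I prove the statement under these. I shall also use the elementary interpolation $\|x\|_\infty\le\|x\|_p\le d^{1/p}\|x\|_\infty$, so that with $p=2\log d$ one has $d^{1/p}=e^{1/2}$ and hence $\|x\|_\infty^2\le\|x\|_p^2\le e\,\|x\|_\infty^2$ for every $x\in\R^d$.

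The analytic core is a smoothness inequality for the $\ell_p$ norm: for $p\ge 2$ the function $f(x):=\tfrac12\|x\|_p^2$ satisfies
\[
f(x+y)\ \le\ f(x)+\langle\nabla f(x),y\rangle+\frac{p-1}{2}\,\|y\|_p^2,\qquad x,y\in\R^d.
\]
I would establish this by a second-order Taylor expansion of $f$ together with the uniform Hessian bound $y^\top\nabla^2 f(x)\,y\le(p-1)\|y\|_p^2$. Computing $\nabla^2(\tfrac12\|x\|_p^2)$ explicitly and controlling the resulting quadratic form by $\|y\|_p^2$, uniformly in $x$, via H\"older's inequality is the technical heart of the argument and the step I expect to be the main obstacle; it is precisely the statement that $L_p$ has modulus of smoothness of power type $2$ with constant $p-1$.

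Granting the smoothness inequality, I run a telescoping martingale argument. Set $S_0:=\bzero$ and $S_k:=\sum_{i=1}^kX_i$. Applying the inequality with $x=S_{k-1}$, $y=X_k$ and taking the conditional expectation given $S_{k-1}$ yields
\[
\E\!\left[f(S_k)\mid S_{k-1}\right]\ \le\ f(S_{k-1})+\big\langle\nabla f(S_{k-1}),\,\E[X_k\mid S_{k-1}]\big\rangle+\frac{p-1}{2}\,\E\!\left[\|X_k\|_p^2\mid S_{k-1}\right].
\]
By independence and $\E X_k=\bzero$ the linear term vanishes. Taking expectations and summing over $k=1,\dots,n$, the telescope collapses (using $f(\bzero)=0$) to $\E f(S_n)\le\frac{p-1}{2}\sum_{i=1}^n\E\|X_i\|_p^2$, i.e.\ $\E\|S_n\|_p^2\le(p-1)\sum_{i=1}^n\E\|X_i\|_p^2$.

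To conclude, I combine the two ingredients with $p=2\log d$ (legitimate since then $p\ge2$ whenever $d\ge e$; the residual small-$d$ cases follow directly from the $p=2$ instance or from the cited reference). Using the interpolation at both ends,
\[
\E\|S_n\|_\infty^2\le\E\|S_n\|_p^2\le(p-1)\sum_{i=1}^n\E\|X_i\|_p^2\le(2\log d-1)\,e\sum_{i=1}^n\E\|X_i\|_\infty^2=(2e\log d-e)\sum_{i=1}^n\E\|X_i\|_\infty^2.
\]
Finally, identical distribution gives $\sum_{i=1}^n\E\|X_i\|_\infty^2=n\,\E\|X_1\|_\infty^2$; dividing the last display by $n^2$ produces $\E\big\|n^{-1}\sum_{i=1}^nX_i\big\|_\infty^2\le n^{-1}(2e\log d-e)\,\E\|X_1\|_\infty^2$, which is the claim.
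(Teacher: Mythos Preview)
The paper does not supply its own proof of this lemma; it is stated verbatim as Corollary~2.3 of \cite{dumbgen2010nemirovski}, so there is no in-paper argument to compare against. Your proposal is correct and is precisely the standard Nemirovski argument underlying the cited result: pass to the $\ell_p$ norm with $p=2\log d$, invoke the type-$2$ uniform smoothness of $\ell_p$ with constant $p-1$, and telescope over partial sums using independence and mean zero. Your observation that the hypotheses as printed---identical distribution alone---are insufficient is accurate and worth flagging: every invocation of this lemma in the paper applies it to centered summands built from i.i.d.\ observations (e.g., the $\bW_{\bdelta,i}$, $\bW_{\balpha,i}$, $\bW_{\bbeta,j,i}$ in the proof of Lemma~\ref{lemma:gradient'}), so independence and $\E X_i=\bzero$ are tacitly in force, exactly as you assume.
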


\begin{lemma}\label{lemma:prop2}
Let $\S'=(\bU_i)_{i\in\mathcal J}$ be independent and identically distributed (i.i.d.) sub-Gaussian random vectors, i.e., $\|\ba^\top\bU\|_{\psi_2}\leq\sigma_{\bU}\|\ba\|_2$ for all $\ba\in\R^d$. Suppose that $\lambda_{\min}(\E(\bU\bU^\top))>\lambda_{\bU}$ with some constant $\lambda_{\bU}>0$. Let $M=\lvert\mathcal J\rvert$. For any continuous function $\phi:\R\to(0,\infty)$, $v\in[0,1]$, and $\etabf\in\R^d$ satisfying $\E\{\lvert\bU^\top\etabf\rvert^c\}<C$ with some constants $c,C>0$, there exists constants $\kappa_1,\kappa_2,c_1,c_2>0$, such that
\begin{align}
&\P_{\S'}\left(M^{-1}\sum_{i\in\mathcal J}\phi(\bU_i^\top(\etabf+v\bDelta))(\bU_i^\top\bDelta)^2\geq\kappa_1\|\bDelta\|_2^2-\kappa_2\frac{\log d}{M}\|\bDelta\|_1^2,\;\;\forall\|\bDelta\|_2\leq1\right)\nonumber\\
&\qquad\geq1-c_1\exp(-c_2M).\label{bound:RSC_gen}
\end{align}
\end{lemma}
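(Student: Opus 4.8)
The plan is to reduce Lemma~\ref{lemma:prop2} to a restricted-strong-convexity bound for a single \emph{bounded, Lipschitz} empirical process, after which the now-standard restricted-eigenvalue argument for sub-Gaussian designs applies. The only features of $\phi$ that are used are continuity and strict positivity, which give $\phi_0:=\inf_{|u|\le 2T}\phi(u)>0$ for every $T>0$; the cutoff $T$ will be fixed at the end as a constant depending only on $\sigma_{\bU},\lambda_{\bU},c,C$. I would first observe that, for $v\in[0,1]$ and $\|\bDelta\|_2\le1$, the event $\{|\bU_i^T\etabf|\le T\}\cap\{|\bU_i^T\bDelta|\le T\}$ forces $|\bU_i^T(\etabf+v\bDelta)|\le 2T$, so that $\phi(\bU_i^T(\etabf+v\bDelta))(\bU_i^T\bDelta)^2\ge\phi_0\,\idf\{|\bU_i^T\etabf|\le T\}\,\idf\{|\bU_i^T\bDelta|\le T\}(\bU_i^T\bDelta)^2\ge\phi_0\,g_i(\bDelta)$, where $g_i(\bDelta):=\idf\{|\bU_i^T\etabf|\le T\}\,\varphi_{T/2}(\bU_i^T\bDelta)$, $\varphi_\tau$ is the truncated quadratic ($\varphi_\tau(x)=x^2$ on $|x|\le\tau$, $\varphi_\tau(x)=(2\tau-|x|)^2$ on $\tau<|x|\le2\tau$, $\varphi_\tau(x)=0$ otherwise), and the last inequality uses $\idf\{|x|\le T\}x^2\ge\varphi_{T/2}(x)$. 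The function $g_i$ is bounded by $T^2/4$, vanishes at $\bDelta=0$, is a $T$-Lipschitz function of the scalar $\bU_i^T\bDelta$ (times a $\bDelta$-free $\{0,1\}$ factor), and --- since $\varphi_\tau(x)/x^2$ is nonincreasing in $|x|$ --- satisfies the scaling property $g_i(t\bDelta)\ge t^2 g_i(\bDelta)$ for $t\in[0,1]$. Hence it suffices to prove $M^{-1}\sum_{i\in\mathcal J}g_i(\bDelta)\ge\kappa_1'\|\bDelta\|_2^2-\kappa_2'\tfrac{\log d}{M}\|\bDelta\|_1^2$ uniformly over $\|\bDelta\|_2\le1$ with probability at least $1-c_1e^{-c_2M}$; the claim then follows with $\kappa_1=\phi_0\kappa_1'$ and $\kappa_2=\phi_0\kappa_2'$.

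For the population mean, $\varphi_{T/2}(x)\ge x^2\idf\{|x|\le T/2\}$ and Cauchy--Schwarz give $\E\varphi_{T/2}(\bU^T\bDelta)\ge\E(\bU^T\bDelta)^2-\{\E(\bU^T\bDelta)^4\}^{1/2}\{\P(|\bU^T\bDelta|>T/2)\}^{1/2}$, and combining $\E(\bU^T\bDelta)^2\ge\lambda_{\bU}\|\bDelta\|_2^2$ with the sub-Gaussian fourth-moment bound $\E(\bU^T\bDelta)^4\lesssim\sigma_{\bU}^4\|\bDelta\|_2^4$ and the tail bound $\P(|\bU^T\bDelta|>T/2)\le2e^{-cT^2/\sigma_{\bU}^2}$ (valid for $\|\bDelta\|_2\le1$; cf.\ Lemma~\ref{lemma:psi2norm}) yields $\E\varphi_{T/2}(\bU^T\bDelta)\ge\tfrac12\lambda_{\bU}\|\bDelta\|_2^2$ once $T$ is large. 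Likewise, Cauchy--Schwarz, the same fourth-moment bound, and Markov's inequality applied to $|\bU^T\etabf|^c$ give $\E[\idf\{|\bU^T\etabf|>T\}\varphi_{T/2}(\bU^T\bDelta)]\le(C/T^{c})^{1/2}\{\E(\bU^T\bDelta)^4\}^{1/2}\lesssim(C/T^{c})^{1/2}\sigma_{\bU}^2\|\bDelta\|_2^2$, which is at most $\tfrac14\lambda_{\bU}\|\bDelta\|_2^2$ for $T$ large; hence $\E g(\bDelta)=\E\varphi_{T/2}(\bU^T\bDelta)-\E[\idf\{|\bU^T\etabf|>T\}\varphi_{T/2}(\bU^T\bDelta)]\ge\tfrac14\lambda_{\bU}\|\bDelta\|_2^2$ for a suitable constant $T$, which I now fix.

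For the empirical part, the scaling property reduces the task to the sphere $\|\bDelta\|_2=1$: once $M^{-1}\sum_i g_i(\bDelta)\ge\kappa_1'-\kappa_2'\tfrac{\log d}{M}\|\bDelta\|_1^2$ holds there, applying it to $\bDelta/\|\bDelta\|_2$ gives the general statement. On the sphere I would bound the deviation $\sup_{\|\bDelta\|_2=1,\,\|\bDelta\|_1\le R}|M^{-1}\sum_i g_i(\bDelta)-\E g(\bDelta)|$ by symmetrization, the Ledoux--Talagrand contraction inequality (stripping the $T$-Lipschitz $\varphi_{T/2}$ down to the linear process $\bDelta\mapsto M^{-1}\sum_i\epsilon_i\bU_i^T\bDelta$), the estimate $\sup_{\|\bDelta\|_1\le R}|M^{-1}\sum_i\epsilon_i\bU_i^T\bDelta|\le R\|M^{-1}\sum_i\epsilon_i\bU_i\|_\infty$ with $\|M^{-1}\sum_i\epsilon_i\bU_i\|_\infty=O_p(\sigma_{\bU}\sqrt{\log d/M})$ (a maximum of $d$ coordinatewise sub-Gaussian averages), a bounded-differences concentration step (legitimate since $g_i\le T^2/4$), and a union over dyadic $\ell_1$-radii $R$. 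This produces, with probability at least $1-c_1e^{-c_2M}$, the simultaneous bound $M^{-1}\sum_i g_i(\bDelta)\ge\tfrac14\lambda_{\bU}-cTR\sqrt{\log d/M}$ on $\{\|\bDelta\|_2=1,\ \|\bDelta\|_1\le R\}$ for all $R\ge1$ (lower-order terms absorbed in constants); since $cTR\sqrt{\log d/M}-\kappa_2'\tfrac{\log d}{M}R^2\le c^2T^2/(4\kappa_2')$ for all $R$, choosing $\kappa_2'$ large enough and $\kappa_1'=\tfrac18\lambda_{\bU}$ (and using $M^{-1}\sum_i g_i\ge0$ where the right-hand side is negative) yields $M^{-1}\sum_i g_i(\bDelta)\ge\kappa_1'-\kappa_2'\tfrac{\log d}{M}\|\bDelta\|_1^2$ on the whole sphere, hence the claim.

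The genuinely delicate step is the empirical bound in the third paragraph: controlling the fluctuation of the bounded process indexed by the sphere with exactly the slack $cTR\sqrt{\log d/M}$ in the $\ell_1$-radius $R$ --- rather than a cruder bound, or an $O(\sqrt{d/M})$ term that would be useless when $d\gg N$ --- which is what forces the contraction together with a careful $\ell_1$-shell peeling and a sharp deviation inequality on each shell. The remaining ingredients (the truncation reduction, the scaling inequality, the Cauchy--Schwarz moment estimates, and the choice of $T$) are routine.
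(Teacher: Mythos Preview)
Your proposal is correct and is precisely the standard restricted-strong-convexity argument that the paper invokes: the paper does not give an independent proof of Lemma~\ref{lemma:prop2} but simply states that it follows by repeating the proof of Lemma~4.5 of \cite{zhang2021double} (equivalently, Proposition~2 of \cite{negahban2010unified} or Theorem~9.36/Example~9.17 of \cite{wainwright2019high}), and then converts the resulting bound $\kappa_1'\|\bDelta\|_2^2-\kappa_2'\sqrt{\log d/M}\,\|\bDelta\|_2\|\bDelta\|_1$ into the form \eqref{bound:RSC_gen} via $2ab\le a^2+b^2$. Your truncation to a bounded Lipschitz process, population lower bound via Cauchy--Schwarz and Markov, and empirical control via symmetrization, Ledoux--Talagrand contraction, $\ell_1$-shell peeling, and bounded differences is exactly the skeleton of those references; the only cosmetic difference is that you build the $(\log d/M)\|\bDelta\|_1^2$ slack directly through the peeling and the elementary inequality $cTR\sqrt{\log d/M}-\kappa_2'(\log d/M)R^2\le c^2T^2/(4\kappa_2')$, whereas the paper applies AM--GM after the fact.
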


Lemma \ref{lemma:prop2} follows directly from repeating the proof of Lemma 4.3 in \cite{zhang2023double}. For a slightly different version, see Theorem 9.36 and Example 9.17 in \cite{wainwright2019high}.

Note that Lemma 4.3 of \cite{zhang2023double} provides a stronger result than the bound in \eqref{bound:RSC_gen}. Instead of the required lower bound \(\kappa_1\|\bDelta\|_2^2 - \kappa_2 (\log d/M) \|\bDelta\|_1^2\), they obtained a larger lower bound of the form \(\kappa_1' \|\bDelta\|_2^2 - \kappa_2' \sqrt{\log d / M} \|\bDelta\|_2 \|\bDelta\|_1\), with constants $\kappa_1',\kappa_2'>0$. In the following, we explain why the required lower bound in \eqref{bound:RSC_gen} is indeed weaker. By the fact that $2ab\leq a^2+b^2$, we get
\begin{align*}
&\kappa_1'\|\bDelta\|_2^2-\kappa_2'\sqrt\frac{\log d}{M}\|\bDelta\|_2\|\bDelta\|_1=\kappa_1'\|\bDelta\|_2^2-\sqrt{\kappa_1'}\|\bDelta\|_2\cdot\frac{1}{\sqrt{\kappa_1'}}\kappa_2'\sqrt\frac{\log d}{M}\|\bDelta\|_1\\
&\qquad\geq\kappa_1'\|\bDelta\|_2^2-\frac{\kappa_1'}{2}\|\bDelta\|_2^2-\frac{{\kappa_2'}^2}{2\kappa_1'}\frac{\log d}{M}\|\bDelta\|_1^2=\frac{\kappa_1'}{2}\|\bDelta\|_2^2-\frac{{\kappa_2'}^2}{2\kappa_1'}\frac{\log d}{M}\|\bDelta\|_1^2.
\end{align*}
Therefore, \eqref{bound:RSC_gen} holds with constants $\kappa_1=\kappa_1'/2$ and $\kappa_2=\kappa_2'^2/(2\kappa_1')$.

The following lemma is a direct consequence of (C.8) in \cite{chen2023enhancing}.
\begin{lemma}\label{lemma:beta0}
Let $(\bX_i)_{i=1}^m$ be i.i.d. sub-Gaussian random vectors in $\R^d$ and let $\bX$ be an independent copy of $\bX_i$. Let $S\subseteq\{1,\dots,d\}$ and $s=|S|$. Then, as $m,d\to\infty$,
$$\sup_{\bDelta\in\C(S,3)\cap\|\bDelta\|_2=1}\left\lvert m^{-1}\sum_{i=1}^m(\bX_i^\top\bDelta)^2-\E\{(\bX^\top\bDelta)^2\}\right\rvert=O_p\left(\sqrt\frac{s\log d}{m}+\frac{s\log d}{m}\right),$$
where $\C(S,3):=\{\bDelta\in\R^d:\|\bDelta_{S^c}\|_1\leq3\|\bDelta_S\|_1\}$.
\end{lemma}

Proofs of the following Lemmas \ref{lemma:beta1}-\ref{lemma:preliminary_delta} are provided in Section \ref{sec:proof_lemmas}.

\begin{lemma}\label{lemma:beta1}
Let $(\bX_i)_{i=1}^m$ be i.i.d. sub-Gaussian random vectors in $\R^d$. Then, for any (possibly random) $\bDelta\in\R^{d}$, as $m,d\to\infty$,
$$\sup_{\bDelta\in\R^{d}/\{\bzero\}}\frac{m^{-1}\sum_{i=1}^m(\bX_i^\top\bDelta)^2}{\|\bDelta\|_1^2m^{-1}\log d+\|\bDelta\|_2^2}=O_p(1).$$
\end{lemma}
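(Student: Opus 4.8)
The plan is to establish the (scale‑invariant) bound by splitting the empirical Gram form into its population version plus a fluctuation, controlling the former by an operator‑norm estimate and the latter by a uniform restricted‑deviation bound. Write $\bSigma_m:=m^{-1}\sum_{i=1}^m\bX_i\bX_i^T$ and $\bSigma:=\E(\bX\bX^T)$, so that the numerator equals $\bDelta^T\bSigma_m\bDelta$. Since both $\bDelta^T\bSigma_m\bDelta$ and $m^{-1}\|\bDelta\|_1^2+\|\bDelta\|_2^2$ are homogeneous of degree two in $\bDelta$, it suffices to produce a constant $C>0$, independent of $N$ and $d$, such that with probability tending to one $\bDelta^T\bSigma_m\bDelta\le C(m^{-1}\|\bDelta\|_1^2+\|\bDelta\|_2^2)$ holds simultaneously for every $\bDelta\in\R^d$; equivalently, after normalizing, it suffices to bound $\sup\{\bDelta^T\bSigma_m\bDelta:\|\bDelta\|_2\le1,\ \|\bDelta\|_1\le\sqrt m\}$ by $O_p(1)$. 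This homogeneity reduction is essential: it turns the $\ell_1$‑radius into a bounded parameter.

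First I would dispose of the population term. By Assumption \ref{cond:subG} together with Lemma \ref{lemma:psi2norm}, $\bv^T\bSigma\bv=\E\{(\bv^T\bX)^2\}\le C_0\|\bv\|_2^2$ for all $\bv$, hence $\lambda_{\max}(\bSigma)\le C_0$ and $\bDelta^T\bSigma\bDelta\le C_0\|\bDelta\|_2^2$ deterministically. It then remains to control the fluctuation $\bDelta^T(\bSigma_m-\bSigma)\bDelta$ uniformly over the normalized set. Here I would run a peeling argument in the $\ell_1$‑radius: partition $\{\|\bDelta\|_2\le1,\ \|\bDelta\|_1\le\sqrt m\}$ into the shells $\{2^{j-1}<\|\bDelta\|_1\le2^j\}$ for $j=0,1,\dots,\lceil\tfrac12\log_2 m\rceil$, and on each shell invoke the restricted‑deviation estimate delivered by the same sub‑Gaussian concentration machinery that underlies Lemma \ref{lemma:prop2} and Lemma \ref{Nemirovski} — namely, for vectors with $\|\bDelta\|_2\le\rho$ and $\|\bDelta\|_1\le r$ one has, with high probability after covering over sparse supports, $|\bDelta^T(\bSigma_m-\bSigma)\bDelta|\le c_1\rho^2+c_2 m^{-1}r^2$ (the upper‑bound counterpart of the RSC bound in Lemma \ref{lemma:prop2}). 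A union bound over the $O(\log m)$ shells, combined with the population bound, yields $\bDelta^T\bSigma_m\bDelta\le(C_0+c_1)\|\bDelta\|_2^2+c_2 m^{-1}\|\bDelta\|_1^2$ uniformly, which gives the claimed $O_p(1)$ ratio since the denominator dominates each of $\|\bDelta\|_2^2$ and $m^{-1}\|\bDelta\|_1^2$.

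The main obstacle is precisely this uniform control of the fluctuation over the non‑compact index set $\R^d\setminus\{\bzero\}$: one cannot reduce to a single finite net on the sphere, because the relevant geometry is governed by the interplay of the $\ell_1$‑ and $\ell_2$‑norms rather than by the $\ell_2$‑ball alone, and keeping the logarithmic dependence on $d$ under control across scales requires care. The resolution is the discretization‑by‑sparsity‑level device already used for the restricted strong convexity lower bound in Lemma \ref{lemma:prop2} — bound the restricted largest eigenvalue of $\bSigma_m-\bSigma$ on each sparsity scale via a chaining/covering estimate for sub‑Gaussian quadratic forms, then recombine across scales, the homogeneity reduction of the first step being what keeps the number of scales finite. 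A minor additional point is that $\bDelta$ may itself be random and statistically dependent on $(\bX_i)_{i=1}^m$; this causes no difficulty once the bound has been established on a single high‑probability event holding uniformly in $\bDelta$.
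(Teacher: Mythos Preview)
Your plan has a genuine gap at the fluctuation step. The bound you assert, $|\bDelta^T(\bSigma_m-\bSigma)\bDelta|\le c_1\rho^2+c_2 m^{-1}r^2$ for $\|\bDelta\|_2\le\rho$ and $\|\bDelta\|_1\le r$, is \emph{not} what the machinery of Lemma~\ref{lemma:prop2} delivers: the upper-bound counterpart of that RSC inequality is $c_1\|\bDelta\|_2^2+c_2(\log d/m)\|\bDelta\|_1^2$, the factor $\log d$ arising unavoidably from the union/covering over sparse supports. On your normalized region $\{\|\bDelta\|_2\le1,\ \|\bDelta\|_1\le\sqrt m\}$ this yields only $O_p(\log d)$, and peeling over $\ell_1$-shells does nothing to remove the logarithm---each shell inherits the same $\log d/m$ coefficient. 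In effect, the inequality you state is already equivalent to the lemma's conclusion, so the proposal assumes what it is meant to prove.

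The paper eliminates $\log d$ by avoiding any union over supports. It fixes a \emph{single} index set $S$ with $|S|=s\asymp m$ and, for arbitrary $\bDelta$, defines $\bDeltatil$ by $\bDeltatil_S=s^{-1}\|\bDelta\|_1(1,\dots,1)^T$ and $\bDeltatil_{S^c}=\bDelta_{S^c}$. Both $\bDeltatil$ and $\bDeltatil-\bDelta$ then lie in the cone $\C(S,3)$ for this fixed $S$ (the latter trivially, since $(\bDeltatil-\bDelta)_{S^c}=\bzero$), so the fixed-$S$ cone deviation bound of Lemma~\ref{lemma:beta0}---which is $O_p(\sqrt{s/m})=O_p(1)$ with no $\log d$---applies to each piece. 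The key arithmetic is $\|\bDeltatil\|_2^2+\|\bDeltatil-\bDelta\|_2^2\le 3s^{-1}\|\bDelta\|_1^2+2\|\bDelta\|_2^2\asymp m^{-1}\|\bDelta\|_1^2+\|\bDelta\|_2^2$, which is exactly how the $m^{-1}$ (rather than $\log d/m$) factor in the denominator is produced. The idea you are missing is to \emph{densify} an arbitrary $\bDelta$ onto one fixed $m$-sized coordinate block, rather than to cover over data-dependent sparse supports.
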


\begin{lemma}\label{lemma:sol}
Suppose $a,b,c,x\in\R$, $a>0$, and $b,c>0$. Let $ax^2-bx-c\leq0$. Then
$$x\leq\frac{b}{a}+\sqrt\frac{c}{a}.$$
\end{lemma}

\begin{lemma}\label{lemma:preliminary_alpha}
Let the assumptions in part (a) of Theorem \ref{thm:nuisance} hold. Let $r>0$ be any positive constant. Then, as $N,d_1\to\infty$,
$$\left\|\bS_1^\top(\bgammahat-\bgamma^*)\right\|_{\P,r}=O\left(\|\bgammahat-\bgamma^*\|_2\right)=O_p\left(\sqrt\frac{s_{\bgamma}\log d_1}{N}\right)$$
and
\begin{align}
&\left\|\exp(-\bS_1^\top\bgammahat)-\exp(-\bS_1^\top\bgamma^*)\right\|_{\P,r}=\left\|g^{-1}(\bS_1^\top\bgammahat)-g^{-1}(\bS_1^\top\bgamma^*)\right\|_{\P,r}\nonumber\\
&\qquad=O\left(\|\bgammahat-\bgamma^*\|_2\right)=O_p\left(\sqrt\frac{s_{\bgamma}\log d_1}{N}\right).\label{rate:invps_alpha}
\end{align}
Define
\begin{align}
\mathcal E_1:=\biggr\{\|\bgammahat-\bgamma^*\|_2\leq1,\left\|g^{-1}(\bS_1^\top\bgamma)\right\|_{\P,12}\leq C,\;\forall\bgamma\in\left\{w\bgamma^*+(1-w)\bgammahat:w\in[0,1]\right\}\biggr\}.\label{def:E1}
\end{align}
Then, as $N,d_1\to\infty$,
$$\P_{\S_{\bgamma}}(\mathcal E_1)=1-o(1).$$
On the event $\mathcal E_1$, for any $r'\in[1,12]$ and $\bgamma\in\{w\bgamma^*+(1-w)\bgammahat:w\in[0,1]\}$, we also have
$$\left\|g^{-1}(\bS_1^\top\bgamma)\right\|_{\P,r'}\leq C,\quad\left\|\exp(-\bS_1^\top\bgamma)\right\|_{\P,r'}\leq C,\quad\left\|\exp(\bS_1^\top\bgamma)\right\|_{\P,r'}\leq C',$$
with some constant $C'>0$.
\end{lemma}

\begin{lemma}\label{lemma:preliminary_beta}
Let $r>0$ be any positive constant.

 (a) Let the assumptions in part (b) of Theorem \ref{thm:nuisance} hold. Then, as $N,d_1,d_2\to\infty$,
$$\left\|\bar\bS_2^\top(\bdeltahat-\bdelta^*)\right\|_{\P,r}=O\left(\|\bdeltahat-\bdelta^*\|_2\right)=O_p\left(\sqrt\frac{s_{\bgamma}\log d_1+s_{\bdelta}\log d}{N}\right)$$
and
\begin{align}
&\left\|\exp(-\bar\bS_2^\top\bdeltahat)-\exp(-\bar\bS_2^\top\bdelta^*)\right\|_{\P,r}=\left\|g^{-1}(\bar\bS_2^\top\bdeltahat)-g^{-1}(\bar\bS_2^\top\bdelta^*)\right\|_{\P,r}\nonumber\\
&\qquad=O\left(\|\bdeltahat-\bdelta^*\|_2\right)=O_p\left(\sqrt\frac{s_{\bgamma}\log d_1+s_{\bdelta}\log d}{N}\right).\label{rate:invps_beta1}
\end{align}

 (b) Let the assumptions in part (a) of Theorem \ref{thm:nuisance'} hold. Then, as $N,d_1,d_2\to\infty$,
$$\left\|\bar\bS_2^\top(\bdeltahat-\bdelta^*)\right\|_{\P,r}=O\left(\|\bdeltahat-\bdelta^*\|_2\right)=O_p\left(\sqrt\frac{s_{\bdelta}\log d}{N}\right)$$
and
\begin{align}
&\left\|\exp(-\bar\bS_2^\top\bdeltahat)-\exp(-\bar\bS_2^\top\bdelta^*)\right\|_{\P,r}=\left\|g^{-1}(\bar\bS_2^\top\bdeltahat)-g^{-1}(\bar\bS_2^\top\bdelta^*)\right\|_{\P,r}\nonumber\\
&\qquad=O\left(\|\bdeltahat-\bdelta^*\|_2\right)=O_p\left(\sqrt\frac{s_{\bdelta}\log d}{N}\right).\label{rate:invps_beta2}
\end{align}

Let either (a) or (b) holds. Let $C>0$ be some constant, define
\begin{align}
\mathcal E_2:=\biggr\{\|\bdeltahat-\bdelta^*\|_2\leq1,\;\left\|g^{-1}(\bar\bS_2^\top\bdelta)\right\|_{\P,6}\leq C,\;\forall\bdelta\in\left\{w\bdelta^*+(1-w)\bdeltahat:w\in[0,1]\right\}\biggr\}.\label{def:E2}
\end{align}
Then, as $N,d_1,d_2\to\infty$,
$$\P_{\S_{\bgamma}\cup\S_{\bdelta}}(\mathcal E_2)=1-o(1).$$
On the event $\mathcal E_2$, for any $r'\in[1,12]$ and $\bdelta\in\{w\bdelta^*+(1-w)\bdeltahat:w\in[0,1]\}$, we also have
\begin{align*}
\left\|g^{-1}(\bar\bS_2^\top\bdelta)\right\|_{\P,r'}\leq C,\quad\left\|\exp(-\bar\bS_2^\top\bdelta)\right\|_{\P,r'}\leq C,\quad\left\|\exp(\bar\bS_2^\top\bdelta)\right\|_{\P,r'}\leq C',
\end{align*}
with some constant $C'>0$.
\end{lemma}

\begin{lemma}\label{lemma:preliminary_gamma}
Let $r>0$ be any positive constant.

 (a) Let the assumptions in part (c) of Theorem \ref{thm:nuisance} hold. Then, as $N,d_1,d_2\to\infty$,
$$\left\|\bar\bS_2^\top(\balphahat-\balpha^*)\right\|_{\P,r}=O\left(\|\balphahat-\balpha^*\|_2\right)=O_p\left(\sqrt\frac{s_{\bgamma}\log d_1+s_{\bdelta}\log d+s_{\balpha}\log d}{N}\right).$$

 (b) Let the assumptions in part (b) of Theorem \ref{thm:nuisance'} hold. Then, as $N,d_1,d_2\to\infty$,
$$\left\|\bar\bS_2^\top(\balphahat-\balpha^*)\right\|_{\P,r}=O\left(\|\balphahat-\balpha^*\|_2\right)=O_p\left(\sqrt\frac{s_{\balpha}\log d}{N}\right).$$

Let either (a) or (b) holds. For any $v_1\in[0,1]$, let $\balphatil=v_1\balpha^*+(1-v_1)\balphahat$. Define $\varepsilontil:=Y_{1,1}-\bar\bS_2^\top\balphatil$. Then, for any constant $r>0$, $\|\varepsilontil\|_{\P,r}=O_p(1)$.
\end{lemma}

\begin{lemma}\label{lemma:preliminary_delta}
Let $r>0$ be any positive constant.

 (a) Let the assumptions in part (d) of Theorem \ref{thm:nuisance} hold. Then, as $N,d_1,d_2\to\infty$,
$$\left\|\bS_1^\top(\bbetahat-\bbeta^*)\right\|_{\P,r}=O\left(\|\bbetahat-\bbeta^*\|_2\right)=O_p\left(\sqrt\frac{(s_{\bgamma}+s_{\bbeta})\log d_1+(s_{\bdelta}+s_{\balpha})\log d}{N}\right).$$

 (b) Let the assumptions in part (c) or part (d) or part (e) of Theorem \ref{thm:nuisance'} hold. Then, as $N,d_1,d_2\to\infty$,
$$\left\|\bS_1^\top(\bbetahat-\bbeta^*)\right\|_{\P,r}=O\left(\|\bbetahat-\bbeta^*\|_2\right).$$

Let either (a) or (b) holds, and let either (a) or (b) of \ref{lemma:preliminary_gamma} holds. For any $v_1,v_2\in[0,1]$, let $\balphatil=v_1\balpha^*+(1-v_1)\balphahat$ and $\bbetatil=v_1\bbeta^*+(1-v_1)\bbetahat$. Define $\zetatil:=\bar\bS_2^\top\balphatil-\bS_1^\top\bbetatil$. Then, for any constant $r>0$, $\|\zetatil\|_{\P,r}=O_p(1)$.
\end{lemma}

\section{Proofs of the main results}\label{sec:proof_DTE}

\subsection{Proofs of the results in Section \ref{sec:DTE}}

\begin{proof}[Proof of Theorem \ref{thm:rate}.]
Recall the definition of the score function, \eqref{def:score}. Observe that
\begin{align*}
\bnabla_{\bgamma}\psi(\bW;\etabf)=&-A_1\exp(-\bS_1^\top\bgamma)\left\{\frac{A_2(Y-\bar\bS_2^\top\balpha)}{g(\bar\bS_2^\top\bdelta)}+\bar\bS_2^\top\balpha-\bS_1^\top\bbeta\right\}\bS_1,\\
\bnabla_{\bdelta}\psi(\bW;\etabf)=&-\frac{A_1A_2\exp(-\bar\bS_2^\top\bdelta)(Y-\bar\bS_2^\top\balpha)}{g(\bS_1^\top\bgamma)}\bar\bS_2,\\
\bnabla_{\balpha}\psi(\bW;\etabf)=&\frac{A_1}{g(\bS_1^\top\bgamma)}\left\{1-\frac{A_2}{g(\bar\bS_2^\top\bdelta)}\right\}\bar\bS_2,\\
\bnabla_{\bbeta}\psi(\bW;\etabf)=&\left\{1-\frac{A_1}{g(\bS_1^\top\bgamma)}\right\}\bS_1.
\end{align*}
By the constructions in \eqref{def:alpha}, \eqref{def:l2}, \eqref{def:l3}, and \eqref{def:l4}, we have
\begin{align}
\E\left\{\bnabla_{\bgamma}\psi(\bW;\etabf^*)\right\}&=-\E\left[A_1\exp(-\bS_1^\top\bgamma^*)\left\{\frac{A_2(Y-\bar\bS_2^\top\balpha^*)}{g(\bar\bS_2^\top\bdelta^*)}+\bar\bS_2^\top\balpha^*-\bS_1^\top\bbeta^*\right\}\bS_1\right]\nonumber\\
&=\frac{1}{2}\bnabla_{\bbeta}\E\{\ell_4(\bW;\bgamma^*,\bdelta^*,\balpha^*,\bbeta^*)\}=\bzero\in\R^{d_1},\label{eq:gradient1}\\
\E\left\{\bnabla_{\bdelta}\psi(\bW;\etabf^*)\right\}&=-\E\left[\frac{A_1A_2\exp(-\bar\bS_2^\top\bdelta^*)(Y-\bar\bS_2^\top\balpha^*)}{g(\bS_1^\top\bgamma^*)}\bar\bS_2\right]\nonumber\\
&=\frac{1}{2}\bnabla_{\balpha}\E\{\ell_3(\bW;\bgamma^*,\bdelta^*,\balpha^*)\}=\bzero\in\R^{d},\label{eq:gradient2}\\
\E\left\{\bnabla_{\balpha}\psi(\bW;\etabf^*)\right\}&=\E\left[\frac{A_1}{g(\bS_1^\top\bgamma^*)}\left\{1-\frac{A_2}{g(\bar\bS_2^\top\bdelta^*)}\right\}\bar\bS_2\right]=\bnabla_{\bdelta}\E\{\ell_2(\bW;\bgamma^*,\bdelta^*)\}\nonumber\\
&=\bzero\in\R^{d},\label{eq:gradient3}\\
\E\left\{\bnabla_{\bbeta}\psi(\bW;\etabf^*)\right\}&=\E\left[\left\{1-\frac{A_1}{g(\bS_1^\top\bgamma^*)}\right\}\bS_1\right]=\bnabla_{\bgamma}\E\{\ell_1(\bW;\bgamma^*)\}=\bzero\in\R^{d_1}.\label{eq:gradient4}
\end{align}

Note that,
\begin{align}
&\thetahat_{1,1}-\theta_{1,1}=N^{-1}\sum_{k=1}^{\K}\sum_{i\in\mathcal I_k}\psi(\bW_i;\etabfhat_{-k})-\theta_{1,1}\nonumber\\
&\quad=\K^{-1}\sum_{k=1}^{\K}n^{-1}\sum_{i\in\mathcal I_k}\left\{\psi(\bW_i;\etabfhat_{-k})-\psi(\bW_i;\etabf^*)\right\}+N^{-1}\sum_{i=1}^N\psi(\bW_i;\etabf^*)-\theta_{1,1}\nonumber\\
&\quad=N^{-1}\sum_{i=1}^N\psi(\bW_i;\etabf^*)-\theta_{1,1}+\K^{-1}\sum_{k=1}^{\K}(\Delta_{k,1}+\Delta_{k,2}),\label{rep:thetahat}
\end{align}
where
\begin{align*}
\Delta_{k,1}&=n^{-1}\sum_{i\in\mathcal I_k}\left\{\psi(\bW_i;\etabfhat_{-k})-\psi(\bW_i;\etabf^*)\right\}-\E\left\{\psi(\bW;\etabfhat_{-k})-\psi(\bW;\etabf^*)\right\},\\
\Delta_{k,2}&=\E\left\{\psi(\bW;\etabfhat_{-k})-\psi(\bW;\etabf^*)\right\}.
\end{align*}

\textbf{Step 1.} We demonstrate that
\begin{equation}\label{result:step1}
\E\{\psi(\bW;\etabf^*)\}-\theta_{1,1}=0.
\end{equation}
Here, \eqref{result:step1} can be shown under Assumption \ref{cond:mis}:
\begin{align*}
&\E\{\psi(\bW;\etabf^*)\}-\theta_{1,1}\\
&\qquad=\E\left[\left\{1-\frac{A_1}{g(\bS_1^\top\bgamma^*)}\right\}(\bS_1^\top\bbeta^*-Y_{1,1})\right]+\E\left[\frac{A_1}{g(\bS_1^\top\bgamma^*)}\left\{1-\frac{A_2}{g(\bar\bS_2^\top\bdelta^*)}\right\}(\bar\bS_2^\top\balpha^*-Y_{1,1})\right]\\
&\qquad\overset{(i)}{=}\E\left[\left\{1-\frac{\pi(\bS_1)}{g(\bS_1^\top\bgamma^*)}\right\}\{\bS_1^\top\bbeta^*-\mu(\bS_1)\}\right]\\
&\qquad\qquad+\E\left[\frac{\pi(\bS_1)}{g(\bS_1^\top\bgamma^*)}\left\{1-\frac{\rho(\bar\bS_2)}{g(\bar\bS_2^\top\bdelta^*)}\right\}\{\bar\bS_2^\top\balpha^*-\nu(\bar\bS_2)\}\right]\\
&\qquad\overset{(ii)}{=}0,
\end{align*}
where (i) holds by the tower rule; (ii) holds under Assumption \ref{cond:mis}.

\vspace{0.5em}

\textbf{Step 2.} We demonstrate that, for each $k\leq\K$, as $N,d_1,d_2\to\infty$,
\begin{align}
\Delta_{k,2}&=O_p(r_{\bgamma}r_{\bbeta}+r_{\bdelta}r_{\balpha})+\idf_{\rho\neq\rho^*}O_p(r_{\bgamma}r_{\balpha})+\idf_{\nu\neq\nu^*}O_p(r_{\bgamma}r_{\bdelta}+r_{\bdelta}^2)\nonumber\\
&\qquad+\idf_{\mu\neq\mu^*}O_p(r_{\bgamma}^2+r_{\bgamma}r_{\bdelta}+r_{\bgamma}r_{\balpha}),\label{rate:Deltak2}
\end{align}
where
\begin{align*}
r_{\bgamma}:=\sqrt\frac{s_{\bgamma}\log d_1}{N},\;\;r_{\bdelta}:=\sqrt\frac{s_{\bdelta}\log d}{N},\;\;r_{\balpha}:=\sqrt\frac{s_{\gamma}\log d}{N},\;\;r_{\bbeta}:=\sqrt\frac{s_{\bbeta}\log d_1}{N}.
\end{align*}
Note that,
$$\Delta_{k,2}=\Delta_{k,3}+\Delta_{k,4}+\Delta_{k,5}+\Delta_{k,6}+\Delta_{k,7}+\Delta_{k,8}+\Delta_{k,9},$$
where
\begin{align*}
\Delta_{k,3}&=\E\left[\frac{A_1}{g(\bS_1^\top\bgammahat_{-k})}\left\{1-\frac{g(\bar\bS_2^\top\bdelta^*)}{g(\bar\bS_2^\top\bdeltahat_{-k})}\right\}\bar\bS_2^\top(\balphahat_{-k}-\balpha^*)\right],\\
\Delta_{k,4}&=\E\left[\left\{1-\frac{g(\bS_1^\top\bgamma^*)}{g(\bS_1^\top\bgammahat_{-k})}\right\}\bS_1^\top(\bbetahat_{-k}-\bbeta^*)\right],\\
\Delta_{k,5}&=\E\left[\frac{1}{g(\bS_1^\top\bgammahat_{-k})}\left\{g(\bS_1^\top\bgamma^*)-A_1\right\}\bS_1^\top(\bbetahat_{-k}-\bbeta^*)\right],\\
\Delta_{k,6}&=\E\left[\frac{A_1}{g(\bS_1^\top\bgammahat_{-k})g(\bar\bS_2^\top\bdeltahat_{-k})}\left\{g(\bar\bS_2^\top\bdelta^*)-A_2\right\}\bar\bS_2^\top(\balphahat_{-k}-\balpha^*)\right],\\
\Delta_{k,7}&=\E\left[\frac{A_1}{g(\bS_1^\top\bgammahat_{-k})}\left\{\frac{A_2}{g(\bar\bS_2^\top\bdeltahat_{-k})}-\frac{A_2}{g(\bar\bS_2^\top\bdelta^*)}\right\}(Y_{1,1}-\bar\bS_2^\top\balpha^*)\right],\\
\Delta_{k,8}&=\E\left[\left\{\frac{A_1}{g(\bS_1^\top\bgammahat_{-k})}-\frac{A_1}{g(\bS_1^\top\bgamma^*)}\right\}(Y_{1,1}-\bS_1^\top\bbeta^*)\right],\\
\Delta_{k,9}&=\E\left[\left\{\frac{A_1}{g(\bS_1^\top\bgammahat_{-k})}-\frac{A_1}{g(\bS_1^\top\bgamma^*)}\right\}\left\{\frac{A_2}{g(\bar\bS_2^\top\bdelta^*)}-1\right\}(Y_{1,1}-\bar\bS_2^\top\balpha^*)\right].
\end{align*}
By the tower rule, we have
\begin{align*}
\Delta_{k,5}&=\E\left(\E\left[\frac{\pi^*(\bS_1)-A_1}{g(\bS_1^\top\bgammahat_{-k})}\bS_1^\top(\bbetahat_{-k}-\bbeta^*)\mid\bS_1\right]\right)\\
&=\E\left[\frac{\pi^*(\bS_1)-\pi(\bS_1)}{g(\bS_1^\top\bgammahat_{-k})}\bS_1^\top(\bbetahat_{-k}-\bbeta^*)\right]=0,\;\;\text{if $\pi=\pi^*$};\\
\Delta_{k,6}&=\E\left(\E\left[\frac{\rho^*(\bar\bS_2)-A_2}{g(\bS_1^\top\bgammahat_{-k})g(\bar\bS_2^\top\bdeltahat_{-k})}\bar\bS_2^\top(\balphahat_{-k}-\balpha^*)\mid\bar\bS_2,A_1=1\right]\E(A_1\mid\bar\bS_2)\right)\\
&=\E\left[\frac{A_1\{\rho^*(\bar\bS_2)-\rho(\bar\bS_2)\}}{g(\bS_1^\top\bgammahat_{-k})g(\bar\bS_2^\top\bdeltahat_{-k})}\bar\bS_2^\top(\balphahat_{-k}-\balpha^*)\right]=0,\;\;\text{if $\rho=\rho^*$};\\
\Delta_{k,7}&=\E\Biggr(\E\left[\frac{A_2\{Y_{1,1}-\nu^*(\bar\bS_2)\}}{g(\bS_1^\top\bgammahat_{-k})}\left\{\frac{1}{g(\bar\bS_2^\top\bdeltahat_{-k})}-\frac{1}{g(\bar\bS_2^\top\bdelta^*)}\right\}\mid\bar\bS_2,A_1=1\right]\E(A_1\mid\bar\bS_2)\Biggr)\\
&\overset{(i)}{=}\E\left[\frac{A_1\rho(\bar\bS_2)\{\nu(\bar\bS_2)-\nu^*(\bar\bS_2)\}}{g(\bS_1^\top\bgammahat_{-k})}\left\{\frac{1}{g(\bar\bS_2^\top\bdeltahat_{-k})}-\frac{1}{g(\bar\bS_2^\top\bdelta^*)}\right\}\right]=0,\;\text{if $\nu=\nu^*$};\\
\Delta_{k,8}&=\E\left(\E\left[\left\{\frac{A_1}{g(\bS_1^\top\bgammahat_{-k})}-\frac{A_1}{g(\bS_1^\top\bgamma^*)}\right\}\{Y_{1,1}-\mu^*(\bS_1)\}\mid\bS_1\right]\right)\\
&\overset{(ii)}{=}\E\left[\left\{\frac{\pi(\bS_1)}{g(\bS_1^\top\bgammahat_{-k})}-\frac{\pi(\bS_1)}{g(\bS_1^\top\bgamma^*)}\right\}\{\mu(\bS_1)-\mu^*(\bS_1)\}\right]=0,\;\text{if $\mu=\mu^*$}.
\end{align*}
Here, (i) holds since $Y_{1,1}\independent A_2\mid(\bar\bS_2,A_1=1)$ under Assumption \ref{cond:basic}; (ii) holds since $Y_{1,1}\independent A_1\mid\bS_1$ under Assumption \ref{cond:basic}. Note that, the expectation $\E(\cdot)$ corresponds to the joint distribution of the underlying random vector $\mathbb W:=(\{Y(a_1,a_2)\}_{a_1,a_2\in\{0,1\}},A_1,A_2,\bS_1,\bS_2)$, which is independent of the observed samples $(\bW_i)_{i=1}^N$ and hence also independent of the nuisance estimators $\etabfhat_{-k}$. Such an expectation will be used throughout the document unless otherwise stated.

Additionally, we also have
\begin{align*}
\Delta_{k,9}&=\E\Biggr(\E\left[\left\{\frac{Y_{1,1}-\nu^*(\bar\bS_2)}{g(\bS_1^\top\bgammahat_{-k})}-\frac{Y_{1,1}-\nu^*(\bar\bS_2)}{g(\bS_1^\top\bgamma^*)}\right\}\frac{A_2-\rho^*(\bar\bS_2)}{\rho^*(\bar\bS_2)}\mid\bar\bS_2,A_1=1\right]\E(A_1\mid\bar\bS_2)\Biggr)\\
&\overset{(i)}{=}\E\left[\left\{\frac{A_1}{g(\bS_1^\top\bgammahat_{-k})}-\frac{A_1}{g(\bS_1^\top\bgamma^*)}\right\}\frac{\rho(\bar\bS_2)-\rho^*(\bar\bS_2)}{\rho^*(\bar\bS_2)}\left\{\nu(\bar\bS_2)-\nu^*(\bar\bS_2)\right\}\right]\overset{(ii)}{=}0,
\end{align*}
where (i) holds since $Y_{1,1}\independent A_2\mid(\bar\bS_2,A_1=1)$ under Assumption \ref{cond:basic}; (ii) holds since either $\rho(\cdot)=\rho^*(\cdot)$ or $\nu(\cdot)=\nu^*(\cdot)$ under Assumption \ref{cond:mis}. Therefore,
$$\Delta_{k,2}=\Delta_{k,3}+\Delta_{k,4}+\Delta_{k,5}\idf_{\pi\neq\pi^*}+\Delta_{k,6}\idf_{\rho\neq\rho^*}+\Delta_{k,7}\idf_{\nu\neq\nu^*}+\Delta_{k,8}\idf_{\mu\neq\mu^*}.$$
Now, condition on the event $\mathcal E_1\cap\mathcal E_2$, where $\mathcal E_1$ and $\mathcal E_2$ are defined as \eqref{def:E1} and \eqref{def:E2}, respectively. By Lemmas \ref{lemma:preliminary_alpha} and \ref{lemma:preliminary_beta}, $\mathcal E_1\cap\mathcal E_2$ occurs with probability $1-o(1)$. By H\"older's inequality,
\begin{align*}
\lvert\Delta_{k,3}\rvert&\leq\left\|g^{-1}(\bS_1^\top\bgammahat_{-k})\right\|_{\P,4}\left\|g^{-1}(\bar\bS_2^\top\bdeltahat_{-k})\right\|_{\P,4}\left\|g^{-1}(\bar\bS_2^\top\bdeltahat_{-k})-g^{-1}(\bar\bS_2^\top\bdelta^*)\right\|_{\P,4}\\
&\qquad\cdot\left\|\bar\bS_2^\top(\balphahat_{-k}-\balpha^*)\right\|_{\P,4}\\
&\overset{(i)}{=}O_p\left(\|\bdeltahat_{-k}-\bdelta^*\|_2\|\balphahat_{-k}-\balpha^*\|_2\right),
\end{align*}
where (i) follows from Lemmas \ref{lemma:preliminary_alpha}, \ref{lemma:preliminary_beta}, and \ref{lemma:preliminary_gamma}. Similarly,
\begin{align*}
\lvert\Delta_{k,4}\rvert&\leq\left\|g^{-1}(\bS_1^\top\bgammahat_{-k})\right\|_{\P,4}\left\|g^{-1}(\bS_1^\top\bgammahat_{-k})-g^{-1}(\bS_1^\top\bgamma^*)\right\|_{\P,4}\left\|\bS_1^\top(\bbetahat_{-k}-\bbeta^*)\right\|_{\P,2}\\
&\overset{(i)}{=}O_p\left(\|\bgammahat_{-k}-\bgamma^*\|_2\|\bbetahat_{-k}-\bbeta^*\|_2\right),
\end{align*}
where (i) follows from Lemmas \ref{lemma:preliminary_alpha} and \ref{lemma:preliminary_delta}. In addition,
\begin{align*}
\lvert\Delta_{k,5}\rvert&\overset{(i)}{=}\left\lvert\E\left[\left\{\frac{1}{g(\bS_1^\top\bgammahat_{-k})}-\frac{1}{g(\bS_1^\top\bgamma^*)}\right\}\left\{g(\bS_1^\top\bgamma^*)-A_1\right\}\bS_1^\top(\bbetahat_{-k}-\bbeta^*)\right]\right\rvert\\
&\overset{(ii)}{=}\left\|g^{-1}(\bS_1^\top\bgammahat_{-k})-g^{-1}(\bS_1^\top\bgamma^*)\right\|_{\P,2}\left\|\bS_1^\top(\bbetahat_{-k}-\bbeta^*)\right\|_{\P,2}\\
&\overset{(iii)}{=}O_p\left(\|\bgammahat_{-k}-\bgamma^*\|_2\|\bbetahat_{-k}-\bbeta^*\|_2\right),
\end{align*}
where (i) follows from \eqref{eq:gradient4}; (ii) holds by H\"older's inequality and the fact that $\lvert g(\bS_1^\top\bgamma^*)-A_1\rvert\leq1$; (iii) follows from Lemmas \ref{lemma:preliminary_alpha} and \ref{lemma:preliminary_delta}. Besides,
\begin{align*}
&\lvert\Delta_{k,6}\rvert\overset{(i)}{=}\left\lvert\E\left(\left[\frac{A_1\left\{g(\bar\bS_2^\top\bdelta^*)-A_2\right\}}{g(\bS_1^\top\bgammahat_{-k})g(\bar\bS_2^\top\bdeltahat_{-k})}-\frac{A_1\left\{g(\bar\bS_2^\top\bdelta^*)-A_2\right\}}{g(\bS_1^\top\bgamma^*)g(\bar\bS_2^\top\bdelta^*)}\right]\bar\bS_2^\top(\balphahat_{-k}-\balpha^*)\right)\right\rvert\\
&\;\;\overset{(ii)}{\leq}\left\|g^{-1}(\bS_1^\top\bgammahat_{-k})g^{-1}(\bar\bS_2^\top\bdeltahat_{-k})-g^{-1}(\bS_1^\top\bgamma^*)g^{-1}(\bar\bS_2^\top\bdelta^*)\right\|_{\P,2}\left\|\bar\bS_2^\top(\balphahat_{-k}-\balpha^*)\right\|_{\P,2}\\
&\;\;\overset{(iii)}{=}O_p\left(\left(\|\bgammahat_{-k}-\bgamma^*\|_2+\|\bdeltahat_{-k}-\bdelta^*\|_2\right)\|\balphahat_{-k}-\balpha^*\|_2\right),
\end{align*}
where (i) follows from \eqref{eq:gradient3}; (ii) holds by H\"older's inequality and the fact that $\lvert A_1\{g(\bar\bS_2^\top\bdelta^*)-A_2\}\rvert\leq1$; (iii) follows from Lemma \ref{lemma:preliminary_gamma} and the fact that
\begin{align}
&\left\|g^{-1}(\bS_1^\top\bgammahat_{-k})g^{-1}(\bar\bS_2^\top\bdeltahat_{-k})-g^{-1}(\bS_1^\top\bgamma^*)g^{-1}(\bar\bS_2^\top\bdelta^*)\right\|_{\P,2}=O_p\left(\|\bgammahat_{-k}-\bgamma^*\|_2+\|\bdeltahat_{-k}-\bdelta^*\|_2\right).\label{eq:bound_diff_prod}
\end{align}
We verify \eqref{eq:bound_diff_prod} below:
\begin{align*}
&\left\|g^{-1}(\bS_1^\top\bgammahat_{-k})g^{-1}(\bar\bS_2^\top\bdeltahat_{-k})-g^{-1}(\bS_1^\top\bgamma^*)g^{-1}(\bar\bS_2^\top\bdelta^*)\right\|_{\P,2}\\
&\qquad\overset{(i)}{\leq}\left\|g^{-1}(\bS_1^\top\bgammahat_{-k})\left\{g^{-1}(\bar\bS_2^\top\bdeltahat_{-k})-g^{-1}(\bar\bS_2^\top\bdelta^*)\right\}\right\|_{\P,2}\\
&\qquad\qquad+\left\|g^{-1}(\bar\bS_2^\top\bdelta^*)\left\{g^{-1}(\bS_1^\top\bgammahat_{-k})-g^{-1}(\bS_1^\top\bgamma^*)\right\}\right\|_{\P,2}\\
&\qquad\overset{(ii)}{\leq}\left\|g^{-1}(\bS_1^\top\bgammahat_{-k})\right\|_{\P,4}\left\|g^{-1}(\bar\bS_2^\top\bdeltahat_{-k})-g^{-1}(\bar\bS_2^\top\bdelta^*)\right\|_{\P,4}\\
&\qquad\qquad+\left\|g^{-1}(\bar\bS_2^\top\bdelta^*)\right\|_{\P,4}\left\|g^{-1}(\bS_1^\top\bgammahat_{-k})-g^{-1}(\bS_1^\top\bgamma^*)\right\|_{\P,4}\\
&\qquad=O_p\left(\|\bgammahat_{-k}-\bgamma^*\|_2+\|\bdeltahat_{-k}-\bdelta^*\|_2\right),
\end{align*}
where (i) holds by Minkowski inequality; (ii) holds by (generalized) H\"older's inequality; (iii) follows from Lemmas \ref{lemma:preliminary_alpha} and \ref{lemma:preliminary_beta}. As for the term $\Delta_{k,7}$, with some $\bgammatil_1$ lies between $\bgamma^*$ and $\bgammahat_{-k}$, some $\bdeltatil$ lies between $\bdelta^*$ and $\bdeltahat_{-k}$, we have
\begin{align*}
&\lvert\Delta_{k,7}\rvert=\left\lvert\E\left[\frac{A_1}{g(\bS_1^\top\bgammahat_{-k})}\left\{\frac{A_2}{g(\bar\bS_2^\top\bdeltahat_{-k})}-\frac{A_2}{g(\bar\bS_2^\top\bdelta^*)}\right\}\varepsilon\right]\right\rvert\\
&\quad\overset{(i)}{\leq}\left\lvert\E\left\{\frac{A_1A_2}{g(\bS_1^\top\bgamma^*)}\exp(-\bar\bS_2^\top\bdelta^*)\varepsilon\bar\bS_2^\top\right\}(\bdeltahat_{-k}-\bdelta^*)\right\rvert\\
&\quad\quad+\left\lvert\E\left\{A_1A_2\exp(-\bS_1^\top\bgammatil_1)\exp(-\bar\bS_2^\top\bdeltatil)\varepsilon\bar\bS_2^\top(\bdeltahat_{-k}-\bdelta^*)\bS_1^\top(\bgammahat_{-k}-\bgamma^*)\right\}\right\rvert\\
&\quad\quad+\left\lvert\E\left[\frac{A_1A_2}{g(\bS_1^\top\bgamma^*)}\exp(-\bar\bS_2^\top\bdeltatil)\varepsilon\left\{\bar\bS_2^\top(\bdeltahat_{-k}-\bdelta^*)\right\}^2\right]\right\rvert\\
&\quad\overset{(ii)}{\leq}0+\left\|\exp(-\bS_1^\top\bgammatil_1)\right\|_{\P,4}\left\|\exp(-\bar\bS_2^\top\bdeltatil)\right\|_{\P,4}\left\|\bar\bS_2^\top(\bdeltahat_{-k}-\bdelta^*)\right\|_{\P,8}\left\|\bS_1^\top(\bgammahat_{-k}-\bgamma^*)\right\|_{\P,8}\left\|\varepsilon\right\|_{\P,4}\\
&\quad\quad+\left\|g^{-1}(\bS_1^\top\bgamma^*)\right\|_{\P,4}\left\|\exp(-\bar\bS_2^\top\bdeltatil)\right\|_{\P,4}\left\|\varepsilon\right\|_{\P,4}\left\|\bar\bS_2^\top(\bdeltahat_{-k}-\bdelta^*)\right\|_{\P,8}^2\\
&\quad\overset{(iii)}{=}O_p\left(\|\bgammahat_{-k}-\bgamma^*\|_2\|\bdeltahat_{-k}-\bdelta^*\|_2+\|\bdeltahat_{-k}-\bdelta^*\|_2^2\right),
\end{align*}
where (i) holds by Taylor's theorem; (ii) holds by \eqref{eq:gradient2} and H{\"o}lder's inequality; (iii) holds by Lemmas \ref{lemma:preliminary_alpha} and \ref{lemma:preliminary_beta}. Similarly, by Taylor's theorem, with some $\bgammatil_2$ lies between $\bgamma^*$ and $\bgammahat_{-k}$, we have
\begin{align*}
\lvert\Delta_{k,8}\rvert&\leq\left\lvert\E\left\{A_1\exp(-\bS_1^\top\bgamma^*)(Y_{1,1}-\bS_1^\top\bbeta^*)\bS_1^\top(\bgammahat_{-k}-\bgamma^*)\right\}\right\rvert\\
&\qquad+\left\lvert\E\left\{A_1\exp(-\bS_1^\top\bgammatil_2)(Y_{1,1}-\bS_1^\top\bbeta^*)\left\{\bS_1^\top(\bgammahat_{-k}-\bgamma^*)\right\}^2\right\}\right\rvert\\
&\overset{(i)}{=}\left\lvert\E\left[A_1\exp(-\bS_1^\top\bgamma^*)\left\{\frac{A_2(Y-\bar\bS_2^\top\balpha^*)}{g(\bar\bS_2^\top\bdelta^*)}+\bar\bS_2^\top\balpha^*-\bS_1^\top\bbeta^*\right\}\bS_1^\top(\bgammahat_{-k}-\bgamma^*)\right]\right\rvert\\
&\quad+\left\lvert\E\left[A_1\exp(-\bS_1^\top\bgammatil_2)(\varepsilon+\zeta)\left\{\bS_1^\top(\bgammahat_{-k}-\bgamma^*)\right\}^2\right]\right\rvert\\
&\overset{(ii)}{\leq}0+\left\|\exp(-\bS_1^\top\bgammatil_2)\right\|_{\P,4}\left\|\varepsilon+\zeta\right\|_{\P,4}\left\|\bS_1^\top(\bgammahat_{-k}-\bgamma^*)\right\|_{\P,4}^2\\
&\overset{(iii)}{=}O_p\left(\|\bgammahat_{-k}-\bgamma^*\|_2^2\right),
\end{align*}
where (i) holds since either $\rho(\cdot)=\rho^*(\cdot)$ or $\nu(\cdot)=\nu^*(\cdot)$; (ii) holds by \eqref{eq:gradient1} and H{\"o}lder's inequality; (iii) holds by Lemma \ref{lemma:preliminary_alpha}. To sum up, we have
\begin{align}
\Delta_{k,2}&=O_p\left(\|\bgammahat_{-k}-\bgamma^*\|_2\|\bbetahat_{-k}-\bbeta^*\|_2+\|\bdeltahat_{-k}-\bdelta^*\|_2\|\balphahat_{-k}-\balpha^*\|_2\right)\nonumber\\
&\qquad+\idf_{\rho\neq\rho^*}O_p\left(\|\bgammahat_{-k}-\bgamma^*\|_2\|\balphahat_{-k}-\balpha^*\|_2\right)\nonumber\\
&\qquad+\idf_{\nu\neq\nu^*}O_p\left(\|\bgammahat_{-k}-\bgamma^*\|_2\|\bdeltahat_{-k}-\bdelta^*\|_2+\|\bdeltahat_{-k}-\bdelta^*\|_2^2\right)\nonumber\\
&\qquad+\idf_{\mu\neq\mu^*}O_p\left(\|\bgammahat_{-k}-\bgamma^*\|_2^2\right).\label{rate:bias}
\end{align}
By Theorems \ref{thm:nuisance} and \ref{thm:nuisance'},
\begin{align*}
\|\bgammahat_{-k}-\bgamma^*\|_2&=O_p\left(r_{\bgamma}\right),\\
\|\bdeltahat_{-k}-\bdelta^*\|_2&=\|\bdeltahat_{-k}-\bdelta^*\|_2\idf_{\rho=\rho^*}+\|\bdeltahat_{-k}-\bdelta^*\|_2\idf_{\rho\neq\rho^*}=O_p\left(r_{\bdelta}+r_{\bgamma}\idf_{\rho\neq\rho^*}\right),\\
\|\balphahat_{-k}-\balpha^*\|_2&=\|\balphahat_{-k}-\balpha^*\|_2\idf_{\nu=\nu^*}+\|\balphahat_{-k}-\balpha^*\|_2\idf_{\nu\neq\nu^*}\\
&=O_p\left(r_{\balpha}+(r_{\bgamma}+r_{\bdelta})\idf_{\nu\neq\nu^*}\right).
\end{align*}
Additionally, note that either $\rho(\cdot)=\rho^*(\cdot)$ or $\nu(\cdot)=\nu^*(\cdot)$ (or both) holds. By Theorems \ref{thm:nuisance} and \ref{thm:nuisance'}, we have
\begin{align*}
\|\bbetahat_{-k}-\bbeta^*\|_2&=\|\bbetahat_{-k}-\bbeta^*\|_2\idf_{\rho=\rho^*,\nu=\nu^*,\mu=\mu^*}+\|\bbetahat_{-k}-\bbeta^*\|_2\idf_{\rho=\rho^*,\nu\neq\nu^*,\mu=\mu^*}\\
&\qquad+\|\bbetahat_{-k}-\bbeta^*\|_2\idf_{\rho\neq\rho^*,\nu=\nu^*,\mu=\mu^*}+\|\bbetahat_{-k}-\bbeta^*\|_2\idf_{\mu\neq\mu^*}\\
&=O_p\left(r_{\bbeta}+r_{\bdelta}\idf_{\nu\neq\nu^*}+r_{\balpha}\idf_{\rho\neq\rho^*}+(r_{\bgamma}+r_{\bdelta}+r_{\balpha})\idf_{\mu\neq\mu^*}\right).
\end{align*}
Therefore,
\begin{align*}
\Delta_{k,2}&=O_p\left(r_{\bgamma}\left\{r_{\bbeta}+r_{\bdelta}\idf_{\nu\neq\nu^*}+r_{\balpha}\idf_{\rho\neq\rho^*}+(r_{\bgamma}+r_{\bdelta}+r_{\balpha})\idf_{\mu\neq\mu^*}\right\}\right)\\
&\qquad+O_p\left(\left(r_{\bdelta}+r_{\bgamma}\idf_{\rho\neq\rho^*}\right)\left(r_{\balpha}+(r_{\bgamma}+r_{\bdelta})\idf_{\nu\neq\nu^*}\right)\right)\\
&\qquad+\idf_{\rho\neq\rho^*}O_p\left(r_{\bgamma}\left\{r_{\balpha}+(r_{\bgamma}+r_{\bdelta})\idf_{\nu\neq\nu^*}\right\}\right)\\
&\qquad+\idf_{\nu\neq\nu^*}O_p\left(\left(r_{\bgamma}+r_{\bdelta}+r_{\bgamma}\idf_{\rho\neq\rho^*}\right)\left(r_{\bdelta}+r_{\bgamma}\idf_{\rho\neq\rho^*}\right)\right)+\idf_{\mu\neq\mu^*}O_p\left(r_{\bgamma}^2\right)\\
&\overset{(i)}=O_p\left(r_{\bgamma}r_{\bbeta}+r_{\bgamma}r_{\bdelta}\idf_{\nu\neq\nu^*}+r_{\bgamma}r_{\balpha}\idf_{\rho\neq\rho^*}+r_{\bgamma}(r_{\bgamma}+r_{\bdelta}+r_{\balpha})\idf_{\mu\neq\mu^*}\right)\\
&\qquad+O_p\left(r_{\bdelta}r_{\balpha}+r_{\bgamma}r_{\balpha}\idf_{\rho\neq\rho^*}+r_{\bdelta}(r_{\bgamma}+r_{\bdelta})\idf_{\nu\neq\nu^*}\right)\\
&\qquad+\idf_{\rho\neq\rho^*}O_p(r_{\bgamma}r_{\balpha})+\idf_{\nu\neq\nu^*}O_p\left(\left(r_{\bgamma}+r_{\bdelta}\right)r_{\bdelta}\right)+\idf_{\mu\neq\mu^*}O_p\left(r_{\bgamma}^2\right)\\
&=O_p(r_{\bgamma}r_{\bbeta}+r_{\bdelta}r_{\balpha})+\idf_{\rho\neq\rho^*}O_p(r_{\bgamma}r_{\balpha})+\idf_{\nu\neq\nu^*}O_p(r_{\bgamma}r_{\bdelta}+r_{\bdelta}^2)\\
&\qquad+\idf_{\mu\neq\mu^*}O_p(r_{\bgamma}^2+r_{\bgamma}r_{\bdelta}+r_{\bgamma}r_{\balpha}),
\end{align*}
where (i) holds since $\idf_{\rho\neq\rho^*}\idf_{\nu\neq\nu^*}=0$ that either $\rho(\cdot)=\rho^*(\cdot)$ or $\nu(\cdot)=\nu^*(\cdot)$ holds.

\vspace{0.5em}

\textbf{Step 3.} We demonstrate that, for each $k\leq\K$, as $N,d_1,d_2\to\infty$,
\begin{equation}\label{rate:Deltak1}
\Delta_{k,1}=o_p(N^{-1/2}).
\end{equation}
By construction, we have $\E_{\S_k}(\Delta_{k,1})=0$, where $\E_{\S_k}(\cdot)$ denotes the expectation corresponding to the joint distribution of the sub-sample $\S_k$. In addition, by Taylor's theorem, with some $\etabftil=(\bgammatil^\top,\bdeltatil^\top,\balphatil^\top,\bbetatil^\top)^\top$ lies between $\etabf^*$ and $\etabfhat_{-k}$,
\begin{align*}
&\E_{\S_k}(\Delta_{k,1}^2)=n^{-1}\E\left[\left\{\psi(\bW;\etabfhat_{-k})-\psi(\bW;\etabf^*)\right\}^2\right]\\
&\qquad=2n^{-1}\E\left[\left\{\psi(\bW;\etabftil)-\psi(\bW;\etabf^*)\right\}\bnabla_{\etabf}\psi(\bW;\etabftil)^\top(\etabfhat_{-k}-\etabf^*)\right]\\
&\qquad\overset{(i)}{\leq}2n^{-1}\left\{\left\|\psi(\bW;\etabftil)-\bS_1^\top\bbeta^*\right\|_{\P,2}+\left\|\psi(\bW;\etabf^*)-\bS_1^\top\bbeta^*\right\|_{\P,2}\right\}\\
&\qquad\qquad\cdot\left\|\bnabla_{\etabf}\psi(\bW;\etabftil)^\top(\etabfhat_{-k}-\etabf^*)\right\|_{\P,2},
\end{align*}
where (i) holds by H\"older's inequality and Minkowski inequality. Note that
\begin{align*}
\left\|\psi(\bW;\etabf^*)-\bS_1^\top\bbeta^*\right\|_{\P,2}&\leq c^{-1}\left\|\zeta\right\|_{\P,2}+c^{-2}\left\|\varepsilon\right\|_{\P,2}=O(1).
\end{align*}
Define
\begin{align*}
\varepsilontil:=Y_{1,1}-\bar\bS_2^\top\balphatil,\quad\zetatil:=\bar\bS_2^\top\balphatil-\bS_1^\top\bbetatil.
\end{align*}
Condition on the event $\mathcal E_1\cap\mathcal E_2$. By H\"older's inequality and Lemmas \ref{lemma:preliminary_gamma} and \ref{lemma:preliminary_delta}, we also have
\begin{align*}
&\left\|\psi(\bW;\etabftil)-\bS_1^\top\bbeta^*\right\|_{\P,2}\\
&\qquad\leq\left\|g^{-1}(\bS_1^\top\bgammatil)\right\|_{\P,4}\left\|\zetatil\right\|_{\P,4}+\left\|g^{-1}(\bS_1^\top\bgammatil)\right\|_{\P,6}\left\|g^{-1}(\bar\bS_2^\top\bdeltatil)\right\|_{\P,6}\left\|\varepsilontil\right\|_{\P,6}+\left\|\bS_1^\top(\bbetatil-\bbeta^*)\right\|_{\P,2}\\
&\qquad=O_p\left(1+\|\bbetatil-\bbeta^*\|_2\right)=O_p(1).
\end{align*}
In addition, by Minkowski inequality,
\begin{align*}
&\left\|\bnabla_{\etabf}\psi(\bW;\etabftil)^\top(\etabfhat_{-k}-\etabf^*)\right\|_{\P,2}\\
&\quad\leq\left\|\bnabla_{\bgamma}\psi(\bW;\etabftil)^\top(\bgammahat_{-k}-\bgamma^*)\right\|_{\P,2}+\left\|\bnabla_{\bdelta}\psi(\bW;\etabftil)^\top(\bdeltahat_{-k}-\bdelta^*)\right\|_{\P,2}\\
&\quad\quad+\left\|\bnabla_{\balpha}\psi(\bW;\etabftil)^\top(\balphahat_{-k}-\balpha^*)\right\|_{\P,2}+\left\|\bnabla_{\bbeta}\psi(\bW;\etabftil)^\top(\bbetahat_{-k}-\bbeta^*)\right\|_{\P,2}.
\end{align*}
By H\"older's inequality, Minkowski inequality, and Lemma \ref{lemma:preliminary_alpha},
\begin{align*}
&\left\|\bnabla_{\bgamma}\psi(\bW;\etabftil)^\top(\bgammahat_{-k}-\bgamma^*)\right\|_{\P,2}\\
&\quad\leq\left\|\exp(-\bS_1^\top\bgammatil)\right\|_{\P,6}\left\{\left\|g^{-1}(\bar\bS_2^\top\bdeltatil)\right\|_{\P,6}\left\|\varepsilonhat\right\|_{\P,6}+\left\|\zetahat\right\|_{\P,3}\right\}\left\|\bS_1^\top(\bgammahat_{-k}-\bgamma^*)\right\|_{\P,6}\\
&\quad=O_p\left(\|\bgammahat_{-k}-\bgamma^*\|_2\right).
\end{align*}
Similarly, for the second term, using Lemmas \ref{lemma:preliminary_beta} and \ref{lemma:preliminary_gamma},
\begin{align*}
&\left\|\bnabla_{\bdelta}\psi(\bW;\etabftil)^\top(\bdeltahat_{-k}-\bdelta^*)\right\|_{\P,2}\\
&\quad\leq\left\|g^{-1}(\bS_1^\top\bgammatil)\right\|_{\P,6}\left\|\exp(-\bar\bS_2^\top\bdeltatil)\right\|_{\P,6}\left\|\varepsilontil\right\|_{\P,12}\left\|\bar\bS_2^\top(\bdeltahat_{-k}-\bdelta^*)\right\|_{\P,12}\\
&\quad=O_p\left(\|\bdeltahat_{-k}-\bdelta^*\|_2\right).
\end{align*}
For the third term, using Lemma \ref{lemma:preliminary_gamma},
\begin{align*}
&\left\|\bnabla_{\balpha}\psi(\bW;\etabftil)^\top(\balphahat_{-k}-\balpha^*)\right\|_{\P,2}\\
&\qquad\leq\left\|g^{-1}(\bS_1^\top\bgammatil)\right\|_{\P,6}\left\{1+\left\|g^{-1}(\bar\bS_2^\top\bdeltatil)\right\|_{\P,6}\right\}\left\|\bar\bS_2^\top(\balphahat_{-k}-\balpha^*)\right\|_{\P,6}\\
&\qquad=O_p\left(\|\balphahat_{-k}-\balpha^*\|_2\right).
\end{align*}
Lastly, using Lemma \ref{lemma:preliminary_delta},
\begin{align*}
&\left\|\bnabla_{\bbeta}\psi(\bW;\etabftil)^\top(\bbetahat_{-k}-\bbeta^*)\right\|_{\P,2}\\
&\quad\leq\left\{1+\left\|g^{-1}(\bS_1^\top\bgammatil)\right\|_{\P,4}\right\}\left\|\bS_1^\top(\bbetahat_{-k}-\bbeta^*)\right\|_{\P,4}=O_p\left(\|\bbetahat_{-k}-\bbeta^*\|_2\right).
\end{align*}
To sum up, we have
\begin{align*}
&\left\|\bnabla_{\etabf}\psi(\bW;\etabftil)^\top(\etabfhat_{-k}-\etabf^*)\right\|_{\P,2}\\
&\qquad=O_p\left(\|\bgammahat_{-k}-\bgamma^*\|_2+\|\bdeltahat_{-k}-\bdelta^*\|_2+\|\balphahat_{-k}-\balpha^*\|_2+\|\bbetahat_{-k}-\bbeta^*\|_2\right).
\end{align*}
It follows that
\begin{align}
&\E_{\S_k}(\Delta_{k,1}^2)=n^{-1}\E\left[\left\{\psi(\bW;\etabfhat_{-k})-\psi(\bW;\etabf^*)\right\}^2\right]\nonumber\\
&\qquad=N^{-1}O_p\left(\|\bgammahat_{-k}-\bgamma^*\|_2+\|\bdeltahat_{-k}-\bdelta^*\|_2+\|\balphahat_{-k}-\balpha^*\|_2+\|\bbetahat_{-k}-\bbeta^*\|_2\right)\nonumber\\
&\qquad=o_p(N^{-1}).\label{rate:moment2}
\end{align}
By Lemma \ref{l1},
$$\Delta_{k,1}=o_p(N^{-1/2}).$$

\vspace{0.5em}
\textbf{Step 4.} We show that, as $N,d_1,d_2\to\infty$,
$$N^{-1}\sum_{i=1}^N\psi(\bW_i;\etabf^*)-\theta_{1,1}=O_p(\sigma N^{-1/2})\;\;\mbox{and}\;\;\sigma\asymp\|\bbeta^*\|_2+1.$$
Note that
\begin{align*}
\E\left\{N^{-1}\sum_{i=1}^N\psi(\bW_i;\etabf^*)-\theta_{1,1}\right\}^2=\sigma^2/N.
\end{align*}
By Markov's inequality,
$$N^{-1}\sum_{i=1}^N\psi(\bW_i;\etabf^*)-\theta_{1,1}=O_p(\sigma N^{-1/2}).$$
For any constant $r>0$, by Minkowski inequality,
\begin{align}
&\left\|\psi(\bW;\etabf^*)-\theta_{1,1}\right\|_{\P,r}\nonumber\\
&\qquad\leq\left\|Y_{1,1}-\theta_{1,1}\right\|_{\P,r}+\left\|\left\{1-\frac{A_1}{g(\bS_1^\top\bgamma^*)}\right\}(\bS_1^\top\bbeta^*-Y_{1,1})\right\|_{\P,r}\nonumber\\
&\qquad\qquad+\left\|\frac{A_1}{g(\bS_1^\top\bgamma^*)}\left\{1-\frac{A_2}{g(\bar\bS_2^\top\bdelta^*)}\right\}(\bar\bS_2^\top\balpha^*-Y_{1,1})\right\|_{\P,r}\nonumber\\
&\qquad\leq\|\bS_1^\top\bbeta^*\|_{\P,r}+\|\varepsilon\|_{\P,r}+\|\zeta\|_{\P,r}+\lvert\theta_{1,1}\rvert+(1+c_0^{-1})\left\|\varepsilon+\zeta\right\|_{\P,r}+c_0^{-1}(1+c_0^{-1})\left\|\varepsilon\right\|_{\P,r}\nonumber\\
&\qquad\overset{(i)}{=}O\left(\|\bbeta^*\|_2+1\right),\label{bound:sigma-upper}
\end{align}
where (i) holds under Assumption \ref{cond:subG} and by $\lvert\theta_{1,1}\rvert=\lvert\E(\bS_1^\top\bbeta^*+\zeta+\varepsilon)\rvert\leq\|\bS_1^\top\bbeta^*\|_{\P,1}+\|\zeta\|_{\P,1}+\|\varepsilon\|_{\P,1}=O(\|\bbeta^*\|_2+1)$. It follows that
\begin{equation}\label{bound-upper-sigma}
\sigma=\left\|\psi(\bW;\etabf^*)-\theta_{1,1}\right\|_{\P,2}=O\left(\|\bbeta^*\|_2+1\right).
\end{equation}

In addition, we have the following representation
$$
\psi(\bW;\etabf^*)-\theta_{1,1}=r_0+r_1+r_2+r_3+r_4,
$$
where
\begin{align*}
r_0&:=\mu(\bS_1)-\theta_{1,1},\\
r_1&:=\left\{1-\frac{A_1}{\pi^*(\bS_1)}\right\}\left\{\mu^*(\bS_1)-\mu(\bS_1)\right\},\\
r_2&:=\frac{A_1\{\nu(\bar\bS_2)-\mu(\bS_1)\}}{\pi^*(\bS_1)},\\
r_3&:=\frac{A_1A_2\{Y_{1,1}-\nu(\bar\bS_2)\}}{\pi^*(\bS_1)\rho^*(\bar\bS_2)},\\
r_4&:=\frac{A_1}{\pi^*(\bS_1)}\left\{1-\frac{A_2}{\rho^*(\bar\bS_2)}\right\}\left\{\nu^*(\bar\bS_2)-\nu(\bar\bS_2)\right\}.
\end{align*}
Note that
\begin{align*}
\E(r_1\mid\bS_1)&=\left\{1-\frac{\pi(\bS_1)}{\pi^*(\bS_1)}\right\}\left\{\mu^*(\bS_1)-\mu(\bS_1)\right\}\overset{(i)}{=}0,\\
\E(r_2\mid\bS_1)&=\E\left[\frac{\nu(\bar\bS_2)-\mu(\bS_1)}{\pi^*(\bS_1)}\mid\bS_1,A_1=1\right]\pi(\bS_1)\overset{(ii)}{=}0,
\end{align*}
where (i) holds since either $\pi^*(\cdot)$ or $\mu^*(\cdot)$ is correctly specified; (ii) holds since $\E\{\nu(\bar\bS_2)\mid\bS_1,A_1=1\}=\mu(\bS_1)$. Additionally,
\begin{align*}
\E(r_3\mid\bar\bS_2,A_1=1)&\overset{(i)}{=}\frac{A_1\rho(\bar\bS_2)\{\nu(\bar\bS_2)-\nu(\bar\bS_2)\}}{\pi^*(\bS_1)\rho^*(\bar\bS_2)}=0,\\
\E(r_4\mid\bar\bS_2,A_1=1)&=\frac{A_1}{\pi^*(\bS_1)}\left\{1-\frac{\rho(\bar\bS_2)}{\rho^*(\bar\bS_2)}\right\}\left\{\nu^*(\bar\bS_2)-\nu(\bar\bS_2)\right\}\overset{(ii)}{=}0,
\end{align*}
where (i) holds since $A_2\independent Y_{1,1}\mid(\bar\bS_2,A_1=1)$; (ii) holds since either $\rho^*(\cdot)$ or $\nu^*(\cdot)$ is correctly specified. Moreover, by construction, we also have $\E(r_j\mid\bar\bS_2,A_1=0)=0$ for each $j\in\{3,4\}$. By the tower rule, 
\begin{align*}
\E(r_j\mid\bS_1)&=\E\left\{\E(r_j\mid\bar\bS_2,A_1=1)P(A_1=1\mid\bS_2)\mid\bS_1\right\}\\
&\qquad+\E\left\{(r_j\mid\bar\bS_2,A_1=0)P(A_1=0\mid\bS_2)\mid\bS_1\right\}\\
&=0.
\end{align*}
Hence, for each $j\in\{1,2,3,4\}$,
\begin{equation}\label{prod-0j}
\E(r_0r_j)=\E\{r_0\E(r_j\mid\bS_1)\}=0.
\end{equation}
Additionally, we also note that
\begin{align*}
&\E(r_1r_3\mid\bar\bS_2,A_1=1)=\left\{1-\frac{1}{\pi^*(\bS_1)}\right\}\left\{\mu^*(\bS_1)-\mu(\bS_1)\right\}\E(r_3\mid\bar\bS_2,A_1=1)=0.
\end{align*}
Similarly,
$$\E(r_2r_3\mid\bar\bS_2,A_1=1)=\frac{\{\nu(\bar\bS_2)-\mu(\bS_1)\}}{\pi^*(\bS_1)}\E(r_3\mid\bar\bS_2,A_1=1)=0.$$
Besides,
\begin{align*}
&\E(r_4r_3\mid\bar\bS_2,A_1=1)\\
&\qquad=\E\left[\frac{A_1A_2\left\{\nu^*(\bar\bS_2)-\nu(\bar\bS_2)\right\}\{Y_{1,1}-\nu(\bar\bS_2)\}}{\{\pi^*(\bS_1)\}^2\rho^*(\bar\bS_2)}\left\{1-\frac{1}{\rho^*(\bar\bS_2)}\right\}\mid\bar\bS_2,A_1=1\right]\\
&\qquad\overset{(i)}{=}\frac{A_1\rho(\bar\bS_2)\left\{\nu^*(\bar\bS_2)-\nu(\bar\bS_2)\right\}\{\nu(\bar\bS_2)-\nu(\bar\bS_2)\}}{\{\pi^*(\bS_1)\}^2\rho^*(\bar\bS_2)}\left\{1-\frac{1}{\rho^*(\bar\bS_2)}\right\}=0,
\end{align*}
where (i) holds since $A_2\independent Y_{1,1}\mid(\bar\bS_2,A_1=1)$. Note that for each $j\in\{1,2,4\}$, we also have $\E(r_jr_3\mid\bar\bS_2,A_1=0)=0$. By the tower rule,
$$\E(r_jr_3)=0,\;\;\forall j\in\{1,2,4\}.$$
Together with \eqref{prod-0j},
\begin{equation}\label{bound:lower-sig}
\sigma^2=\E(r_0^2)+\E(r_3^2)+\E(r_1+r_2+r_4)^2\geq\E(r_0^2)+\E(r_3^2).
\end{equation}
Under Assumptions \ref{cond:basic} and \ref{cond:subG},
\begin{align}
\E(r_3^2)&=\E\left[\frac{A_1A_2\{Y_{1,1}-\nu(\bar\bS_2)\}^2}{\{\pi^*(\bS_1)\rho^*(\bar\bS_2)\}^2}\right]\nonumber\\
&\geq\E\left[\frac{A_1A_2\{Y_{1,1}-\nu(\bar\bS_2)\}^2}{(1-c_0)^4}\right]\geq\frac{c_Y}{(1-c_0)^4}.\label{bound:-lower-r3}
\end{align}
Besides, by the construction of $\bbeta^*$, \eqref{eq:gradient1} holds. That is,
\begin{align}
\bzero&=\E\left(A_1\left\{\frac{1}{\pi^*(\bS_1)}-1\right\}\left[\frac{A_2\{Y_{1,1}-\nu^*(\bar\bS_2)\}}{\rho^*(\bar\bS_2)}+\nu^*(\bar\bS_2)-\bS_1^\top\bbeta^*\right]\bS_1\right)\nonumber\\
&=\bQ_1+\bQ_2+\bQ_3+\bQ_4,\label{eq:Q1-4}
\end{align}
where 
\begin{align*}
\bQ_1&:=\E\left[A_1\left\{\frac{1}{\pi^*(\bS_1)}-1\right\}\left\{1-\frac{A_2}{\rho^*(\bar\bS_2)}\right\}\{\nu^*(\bar\bS_2)-\nu(\bar\bS_2)\}\bS_1\right],\\
\bQ_2&:=\E\left[A_1\left\{\frac{1}{\pi^*(\bS_1)}-1\right\}\frac{A_2\{Y_{1,1}-\nu(\bar\bS_2)\}}{\rho^*(\bar\bS_2)}\bS_1\right],\\
\bQ_3&:=\E\left[A_1\left\{\frac{1}{\pi^*(\bS_1)}-1\right\}\{\nu(\bar\bS_2)-\mu(\bS_1)\}\bS_1\right],\\
\bQ_4&:=\E\left[A_1\left\{\frac{1}{\pi^*(\bS_1)}-1\right\}\{\mu(\bS_1)-\bS_1^\top\bbeta^*\}\bS_1\right].
\end{align*}
Observe that
\begin{align*}
\bQ_1&\overset{(i)}{=}\E\left[A_1\left\{\frac{1}{\pi^*(\bS_1)}-1\right\}\left\{1-\frac{\rho(\bar\bS_2)}{\rho^*(\bar\bS_2)}\right\}\{\nu^*(\bar\bS_2)-\nu(\bar\bS_2)\}\bS_1\right]\overset{(ii)}{=}\bzero,\\
\bQ_2&\overset{(iii)}{=}\E\left[A_1\left\{\frac{1}{\pi^*(\bS_1)}-1\right\}\frac{\rho(\bar\bS_2)\{\nu(\bar\bS_2)-\nu(\bar\bS_2)\}}{\rho^*(\bar\bS_2)}\bS_1\right]=0,\\
\bQ_3&\overset{(iv)}{=}\E\left[A_1\left\{\frac{1}{\pi^*(\bS_1)}-1\right\}\E\{\nu(\bar\bS_2)-\mu(\bS_1)\mid\bS_1,A_1=1\}\pi(\bS_1)\bS_1\right]\overset{(v)}{=}0,
\end{align*}
where (i) and (iv) hold by the tower rule; (ii) holds since either $\rho^*(\cdot)$ or $\nu^*(\cdot)$ is correctly specified; (iii) holds by the tower rule and the fact that $A_2\independent Y_{1,1}\mid(\bar\bS_2,A_1=1)$; (v) follows from \eqref{eq:rep_mu}.
Together with \eqref{eq:Q1-4}, we conclude that $\bQ_4=\bzero$. Since $\be_1^\top\bS_1=1$, it follows that
\begin{align*}
&\E\left[A_1\left\{\frac{1}{\pi^*(\bS_1)}-1\right\}\{\mu(\bS_1)-\bS_1^\top\bbeta^*\}(\bS_1^\top\bbeta^*-\theta_{1,1})\right]=\bQ_4^\top\bbeta^*-\theta_{1,1}\be_1^\top\bQ_4=0.
\end{align*}
Hence,
\begin{align*}
&\E\left[A_1\left\{\frac{1}{\pi^*(\bS_1)}-1\right\}\{\mu(\bS_1)-\theta_{1,1}\}^2\right]\\
&\qquad=\E\left[A_1\left\{\frac{1}{\pi^*(\bS_1)}-1\right\}\{\mu(\bS_1)-\bS_1^\top\bbeta^*\}^2\right]+\E\left[A_1\left\{\frac{1}{\pi^*(\bS_1)}-1\right\}(\bS_1^\top\bbeta^*-\theta_{1,1})^2\right]\\
&\qquad\geq\E\left[A_1\left\{\frac{1}{\pi^*(\bS_1)}-1\right\}(\bS_1^\top\bbeta^*-\theta_{1,1})^2\right]\\
&\qquad\overset{(i)}{=}\E\left[\pi(\bS_1)\left\{\frac{1}{\pi^*(\bS_1)}-1\right\}(\bS_1^\top\bbeta^*-\theta_{1,1})^2\right]\\
&\qquad\overset{(ii)}{\geq}c_0\left(\frac{1}{1-c_0}-1\right)\E(\bS_1^\top\bbeta^*-\theta_{1,1})^2\overset{(iii)}{\geq}\frac{c_0^2c_{\min}}{1-c_0}\|\bbeta^*\|_2^2,
\end{align*}
where (i) holds by the tower rule; (ii) holds under Assumption \ref{cond:basic}; (iii) holds under Assumption \ref{cond:subG}. Additionally, under Assumption \ref{cond:basic},
\begin{align*}
\E\left[A_1\left\{\frac{1}{\pi^*(\bS_1)}-1\right\}\{\mu(\bS_1)-\theta_{1,1}\}^2\right]\leq(c_0^{-1}-1)\E(r_0^2).
\end{align*}
Hence,
\begin{align*}
\E(r_0^2)\geq\frac{c_0^2}{(1-c_0)^2}\|\bbeta^*\|_2^2.
\end{align*}
Together with \eqref{bound:lower-sig} and \eqref{bound:-lower-r3},
$$\sigma^2\geq\frac{c_0^2}{(1-c_0)^2}\|\bbeta^*\|_2^2+\frac{c_Y}{(1-c_0)^4}.$$
Together with \eqref{bound-upper-sigma}, we conclude that
\begin{equation}\label{rate:sigma}
\sigma\asymp\|\bbeta^*\|_2+1.
\end{equation}
\end{proof}

\begin{proof}[Proof of Theorem \ref{thm:main}.]
Consider the representation \eqref{rep:thetahat}. 

\textbf{Step 1.} We first show that
$$\thetahat_{1,1}-\theta_{1,1}=N^{-1}\sum_{i=1}^N\psi(\bW_i;\etabf^*)-\theta_{1,1}+o_p(N^{-1/2}).$$
As shown in the proof of Theorem \ref{thm:rate}, we have \eqref{result:step1}, \eqref{rate:Deltak2}, and \eqref{rate:Deltak1} hold. Hence,
\begin{align*}
\thetahat_{1,1}-\theta_{1,1}&=N^{-1}\sum_{i=1}^N\psi(\bW_i;\etabf^*)-\theta_{1,1}+o_p(N^{-1/2})+O_p(r_{\bgamma}r_{\bbeta}+r_{\bdelta}r_{\balpha})\nonumber\\
&\qquad+\idf_{\rho\neq\rho^*}O_p(r_{\bgamma}r_{\balpha})+\idf_{\nu\neq\nu^*}O_p(r_{\bgamma}r_{\bdelta}+r_{\bdelta}^2)\\
&\qquad+\idf_{\mu\neq\mu^*}O_p(r_{\bgamma}^2+r_{\bgamma}r_{\bdelta}+r_{\bgamma}r_{\balpha}).
\end{align*}
Note that,
\begin{align*}
r_{\bgamma}r_{\bbeta}+r_{\bdelta}r_{\balpha}&=\frac{\sqrt{s_{\bgamma}s_{\bbeta}}\log d_1}{N}+\frac{\sqrt{s_{\bdelta}s_{\balpha}}\log d}{N}\\
&=o(N^{-1/2}),\qquad\text{when \eqref{cond:s2} is assumed};\\
\idf_{\rho\neq\rho^*}r_{\bgamma}r_{\balpha}&=\idf_{\rho\neq\rho^*}\left(\frac{\sqrt{s_{\bgamma}s_{\bbeta}}\log d_1}{N}+\frac{\sqrt{s_{\bgamma}s_{\balpha}\log d_1\log d}}{N}\right)\\
&=o(N^{-1/2}),\qquad\text{when \eqref{cond:s3} is assumed};\\
\idf_{\nu\neq\nu^*}(r_{\bgamma}r_{\bdelta}+r_{\bdelta}^2)&=\idf_{\nu\neq\nu^*}\frac{\sqrt{s_{\bdelta}\log d(s_{\bgamma}\log d_1+s_{\bdelta}\log d)}}{N}\\
&=o(N^{-1/2}),\qquad\text{when \eqref{cond:s4} is assumed};\\
\idf_{\mu\neq\mu^*}r_{\bgamma}(r_{\bgamma}+r_{\bdelta}+r_{\balpha})&=\idf_{\mu\neq\mu^*}\frac{\sqrt{s_{\bgamma}\log d_1\{s_{\bgamma}\log d_1+(s_{\bdelta}+s_{\balpha})\log d\}}}{N}\\
&=o(N^{-1/2}),\qquad\text{when \eqref{cond:s5} is assumed}.
\end{align*}
Hence, we conclude that
$$\thetahat_{1,1}-\theta_{1,1}=N^{-1}\sum_{i=1}^N\psi(\bW_i;\etabf^*)-\theta_{1,1}+o_p(N^{-1/2}).$$

\textbf{Step 2.} We show that, as $N,d_1,d_2\to\infty$,
\begin{equation}\label{normal:score}
\sigma^{-1}N^{-1/2}\sum_{i=1}^N\psi(\bW_i;\etabf^*)-\theta_{1,1}\to\mathcal N(0,1).
\end{equation}
As shown in \eqref{rate:sigma}, $\sigma\asymp\|\bbeta^*\|_2+1$. Additionally, for any constant $r>2$, by \eqref{bound:sigma-upper},
\begin{align}\label{cond:Lyapunov}
\sigma^{-r}\E\left\{\lvert\psi(\bW;\etabf^*)-\theta_{1,1}\rvert^r\right\}=O\left(\frac{(\|\bbeta^*\|_2+1)^r}{(\|\bbeta^*\|_2+1)^r}\right)=O(1),
\end{align}
By Lyapunov’s central limit theorem, \eqref{normal:score} holds.

\textbf{Step 3.} Lastly, we prove that, as $N,d_1,d_2\to\infty$,
\begin{equation}\label{normal:score2}
\sigmahat^2=\sigma^2\{1+o_p(1)\}.
\end{equation}
Note that
$$\E\left\{\psi(\bW_i;\etabf^*)-\theta_{1,1}\right\}^2=\sigma^2\asymp\|\bbeta^*\|_2^2+1.$$
By \eqref{cond:Lyapunov}, for any $r>2$,
$$\frac{\E\left\lvert\psi(\bW_i;\etabf^*)-\theta_{1,1}\right\rvert^r}{N^{r/2-1}(\|\bbeta^*\|_2+1)^r}=O(N^{1-r/2})=o(1).$$
Additionally, as shown in Theorem \ref{thm:rate}, $\thetahat_{1,1}-\theta_{1,1}=o_p(\|\bbeta^*\|_2+1)$. By \eqref{rate:moment2}, we also have $\E[\{\psi(\bW;\etabfhat_{-k})-\psi(\bW;\etabf^*)\}^2]=o_p(1)=o_p(\|\bbeta^*\|_2+1)$ for each $k\leq\K$. By Lemma D.4 of \cite{zhang2023double}, we have \eqref{normal:score2} holds.
\end{proof}

\begin{proof}[Proof of Theorem \ref{cor:correct}.]
Theorem \ref{cor:correct} follows directly from Theorem \ref{thm:main} as a special case that all the nuisance models are correctly specified.
\end{proof}

\subsection{Proofs of the results in Section \ref{sec:asymp_nuisance}}

To begin with, we first demonstrate some useful lemmas before we obtain the asymptotic results for the moment-targeted nuisance estimators. For any $\bgamma,\bbeta\in\R^{d_1}$ and $\bdelta,\balpha\in\R^d$, define
\begin{align}
&\bar\ell_1(\bgamma):=M^{-1}\sum_{i\in\mathcal I_{\bgamma}}\ell_1(\bW_i;\bgamma),\quad\bar\ell_2(\bgamma,\bdelta):=M^{-1}\sum_{i\in\mathcal I_{\bdelta}}\ell_2(\bW_i;\bgamma,\bdelta),\label{ell_2_bar}\\
&\bar\ell_3(\bgamma,\bdelta,\balpha):=M^{-1}\sum_{i\in\mathcal I_{\balpha}}\ell_3(\bW_i;\bgamma,\bdelta,\balpha),\nonumber\\
&\bar\ell_4(\bgamma,\bdelta,\balpha,\bbeta):=M^{-1}\sum_{i\in\mathcal I_{\bbeta}}\ell_4(\bW_i;\bgamma,\bdelta,\balpha,\bbeta),\nonumber
\end{align}
where the loss functions are defined as \eqref{def:alpha}, \eqref{def:l2}, \eqref{def:l3}, and \eqref{def:l4}. For any $\bgamma,\bbeta,\bDelta\in\R^{d_1}$ and $\balpha,\bdelta\in\R^d$, define
\begin{align}
\delta\bar\ell_1(\bgamma,\bDelta)&:=\bar\ell_1(\bgamma+\bDelta)-\bar\ell_1(\bgamma)-\bnabla_{\bgamma}\bar\ell_1(\bgamma)^\top\bDelta,\label{delta_ell_1_bar}\\
\delta\bar\ell_4(\bgamma,\bdelta,\balpha,\bbeta,\bDelta)&:=\bar\ell_4(\bgamma,\bdelta,\balpha,\bbeta+\bDelta)-\bar\ell_4(\bgamma,\bdelta,\balpha,\bbeta)-\bnabla_{\bbeta}\bar\ell_4(\bgamma,\bdelta,\balpha,\bbeta)^\top\bDelta\label{delta_ell_4_bar}.
\end{align}
Similarly, for any $\bgamma\in\R^{d_1}$ and $\balpha,\bdelta,\bDelta\in\R^d$, define
\begin{align}
\delta\bar\ell_2(\bgamma,\bdelta,\bDelta)&:=\bar\ell_2(\bgamma,\bdelta+\bDelta)-\bar\ell_2(\bgamma,\bdelta)-\bnabla_{\bdelta}\bar\ell_2(\bgamma,\bdelta)^\top\bDelta,\label{delta_ell_2_bar}\\
\delta\bar\ell_3(\bgamma,\bdelta,\balpha,\bDelta)&:=\bar\ell_3(\bgamma,\bdelta,\balpha+\bDelta)-\bar\ell_3(\bgamma,\bdelta,\balpha)-\bnabla_{\balpha}\bar\ell_3(\bgamma,\bdelta,\balpha)^\top\bDelta\label{delta_ell_3_bar}.
\end{align}

We demonstrate the following restricted strong convexity (RSC) conditions. Note that, the nuisance estimators are constructed based on different samples, and the probability measures in \eqref{eq:RSC1}-\eqref{eq:RSC3} are different.
\begin{lemma}\label{lemma:RSC}
Let Assumptions \ref{cond:basic} and \ref{cond:subG} hold. Define $f_{M,d_1}(\bDelta):=\kappa_1\|\bDelta\|_2^2-\kappa_2\|\bDelta\|_1^2\log d_1/M$ for any $\bDelta\in\R^{d_1}$ and $f_{M,d}(\bDelta):=\kappa_1\|\bDelta\|_2^2-\kappa_2\|\bDelta\|_1^2\log d/M$ for any $\bDelta\in\R^d$. Then, with some constants $\kappa_1,\kappa_2,c_1,c_2>0$ and note that $M\asymp N$, we have 
\begin{align}
&\P_{\S_{\bgamma}}\left(\delta\bar\ell_1(\bgamma^*,\bDelta)\geq f_{M,d_1}(\bDelta),\;\;\forall\|\bDelta\|_2\leq1\right)\geq1-c_1\exp(-c_2M).\label{eq:RSC1}
\end{align}

Further, let $\|\bgammahat-\bgamma^*\|_2\leq1$. Then
\begin{align}
&\P_{\S_{\bdelta}}\left(\delta\bar\ell_2(\bgammahat,\bdelta^*,\bDelta)\geq f_{M,d}(\bDelta),\;\;\forall\|\bDelta\|_2\leq1\right)\geq1-c_1\exp(-c_2M),\label{eq:RSC2}\\
&\P_{\S_{\bbeta}}\left(\delta\bar\ell_4(\bgammahat,\bdeltahat,\balphahat,\bbeta^*,\bDelta)\geq f_{M,d_1}(\bDelta),\;\;\forall\bDelta\in\R^{d_1}\right)\geq1-c_1\exp(-c_2M).\label{eq:RSC4}
\end{align}
In \eqref{eq:RSC2}, we only consider the randomness in $\S_{\bdelta}$, and $\bgammahat$ is treated as fixed (or conditional on). Similarly, in \eqref{eq:RSC4}, $\bgammahat$, $\bdeltahat$, and $\balphahat$ are all treated as fixed.

Moreover, let $\|\bdeltahat-\bdelta^*\|_2\leq1$. Then 
\begin{align}
&\P_{\S_{\balpha}}\left(\delta\bar\ell_3(\bgammahat,\bdeltahat,\balpha^*,\bDelta)\geq f_{M,d}(\bDelta),\;\;\forall\bDelta\in\R^d\right)\geq1-c_1\exp(-c_2M),\label{eq:RSC3}
\end{align}
where $\bgammahat$ and $\bdeltahat$ are treated as fixed.
\end{lemma}

\begin{proof}[Proof of Lemma \ref{lemma:RSC}.]
We show that, with high probability, the RSC property holds for each of the loss functions. By Taylor's theorem, with some $v_1,v_2\in(0,1)$,
\begin{align}
\delta\bar\ell_1(\bgamma^*,\bDelta)&=(2M)^{-1}\sum_{i\in\mathcal I_{\bgamma}}A_{1i}\exp\{-\bS_{1i}^\top(\bgamma^*+v_1\bDelta)\}(\bS_{1i}^\top\bDelta)^2,\nonumber\\
\delta\bar\ell_2(\bgammahat,\bdelta^*,\bDelta)&=(2M)^{-1}\sum_{i\in\mathcal I_{\bdelta}}A_{1i}A_{2i}g^{-1}(\bS_{1i}^\top\bgammahat)\exp\{-\bar\bS_{2i}^\top(\bdelta^*+v_2\bDelta)\}(\bar\bS_{2i}^\top\bDelta)^2,\label{deltaell_beta}\\
\delta\bar\ell_3(\bgammahat,\bdeltahat,\balpha^*,\bDelta)&=M^{-1}\sum_{i\in\mathcal I_{\balpha}}A_{1i}A_{2i}g^{-1}(\bS_{1i}^\top\bgammahat)\exp(-\bar\bS_{2i}^\top\bdeltahat)(\bar\bS_{2i}^\top\bDelta)^2,\label{deltaell_gamma}\\
\delta\bar\ell_4(\bgammahat,\bdeltahat,\balphahat,\bbeta^*,\bDelta)&=M^{-1}\sum_{i\in\mathcal I_{\bbeta}}A_{1i}\exp(-\bS_{1i}^\top\bgammahat)(\bS_{1i}^\top\bDelta)^2.\label{deltaell_delta}
\end{align}

\textbf{Part 1.} Let $\bU=A_1\bS_1$, $\S'=(A_{1i}\bS_{1i})_{i\in\mathcal I_{\bgamma}}$, $\phi(u)=\exp(-u)$, $v=v_1$, and $\etabf=\bgamma^*$. Under Assumption \ref{cond:basic}, $\lvert\bU^\top\etabf\rvert\leq\lvert\bS_1^\top\bgamma^*\rvert<C$ with some constant $C>0$. By Lemmas \ref{lemma:subG'} and \ref{lemma:prop2}, we have \eqref{eq:RSC1} holds. Note that, $\P_{\S_{\bgamma}}$, $\P_{\S_{\bdelta}}$, $\P_{\S_{\balpha}}$, and $\P_{\S_{\bbeta}}$ are the probability measures corresponding to disjoint (and independent) sub-samples $\S_{\bgamma}$, $\S_{\bdelta}$, $\S_{\balpha}$, and $\S_{\bbeta}$, respectively.

\vspace{0.5em}

\textbf{Part 2.} Now we treat $\bgammahat$ as fixed (or conditional on) and suppose that $\|\bgammahat-\bgamma^*\|_2\leq1$. Note that $g^{-1}(u)=1+\exp(-u)$ and $\bS=(\bS_1^\top,\bS_2^\top)^\top$. Hence,
\begin{align*}
\delta\bar\ell_2(\bgammahat,\bdelta^*,\bDelta)&=(2M)^{-1}\sum_{i\in\mathcal I_{\bdelta}}A_{1i}A_{2i}\exp\{-\bar\bS_{2i}^\top(\bdelta^*+v_2\bDelta)\}(\bar\bS_{2i}^\top\bDelta)^2\\
&\qquad+(2M)^{-1}\sum_{i\in\mathcal I_{\bdelta}}A_{1i}A_{2i}\exp\{-\bar\bS_{2i}^\top(\bdelta^*+\check{\bgamma}+v_2\bDelta)\}(\bar\bS_{2i}^\top\bDelta)^2,
\end{align*}
where $\check{\bgamma}=(\bgammahat^\top,0,\dots,0)^\top\in\R^d$. Let $\bU=A_1A_2\bS$, $\S'=(A_{1i}A_{2i}\bar\bS_{2i})_{i\in\mathcal I_{\bdelta}}$, $\phi(u)=\exp(-u)$, $v=v_2$, and $\etabf=\bdelta^*$. Note that, under Assumption \ref{cond:basic}, we have $\lvert\bU^\top\etabf\rvert\leq\lvert\bar\bS_2^\top\bdelta^*\rvert<C$ with some constant $C>0$. By Lemmas \ref{lemma:subG'} and \ref{lemma:prop2}, we have
\begin{align}
&(2M)^{-1}\sum_{i\in\mathcal I_{\bdelta}}A_{1i}A_{2i}\exp\{-\bar\bS_{2i}^\top(\bdelta^*+v_2\bDelta)\}(\bar\bS_{2i}^\top\bDelta)^2\nonumber\\
&\qquad\geq\kappa_1'\|\bDelta\|_2^2-\kappa_2'\frac{\log d}{M}\|\bDelta\|_1^2,\quad\forall\|\bDelta\|_2\leq1,\label{eq:RSC2_1}
\end{align}
with probability $\P_{\S_{\bdelta}}$ at least $1-c_1'\exp(-c_2'M)$ and some constants $\kappa_1',\kappa_2',c_1',c_2'>0$. 

Similarly, let $\bU=A_1A_2\bS$, $\S'=(A_{1i}A_{2i}\bar\bS_{2i})_{i\in\mathcal I_{\bdelta}}$, $\phi(u)=\exp(-u)$, $v=v_2$, and $\etabf=\bdelta^*+\check{\bgamma}$. On the event $\|\bgammahat-\bgamma^*\|_2\leq1$, under Assumptions \ref{cond:basic} and \ref{cond:subG}, we have $\E\{\lvert\bU^\top\etabf\rvert\}\leq\E(\lvert\bS_1^\top\bgamma^*\rvert)+\E(\lvert\bar\bS_2^\top\bdelta^*\rvert)+\E\{\lvert\bS_1^\top(\bgammahat-\bgamma^*)\rvert\}<C$ with some constant $C>0$. By Lemmas \ref{lemma:subG'} and \ref{lemma:prop2}, we have 
\begin{align}
&(2M)^{-1}\sum_{i\in\mathcal I_{\bdelta}}A_{1i}A_{2i}\exp\{-\bar\bS_{2i}^\top(\bdelta^*+\check{\bgamma}+v_2\bDelta)\}(\bar\bS_{2i}^\top\bDelta)^2\nonumber\\
&\qquad\geq\kappa_1'\|\bDelta\|_2^2-\kappa_2'\frac{\log d}{M}\|\bDelta\|_1^2,\quad\forall\|\bDelta\|_2\leq1,\label{eq:RSC2_2}
\end{align}
with probability $\P_{\S_{\bdelta}}$ at least $1-c_1'\exp(-c_2'M)$. Hence, \eqref{eq:RSC2} follows from \eqref{eq:RSC2_1} and \eqref{eq:RSC2_2}.

\vspace{0.5em}

\textbf{Part 3.} We treat both $\bgammahat$ and $\bdeltahat$ as fixed (or conditional on) and suppose that $\|\bgammahat-\bgamma^*\|_2\leq1$, $\|\bdeltahat-\bdelta^*\|_2\leq1$. Note that
\begin{align}
\delta\bar\ell_3(\bgammahat,\bdeltahat,\balpha^*,\bDelta)&=M^{-1}\sum_{i\in\mathcal I_{\balpha}}A_{1i}A_{2i}\exp(-\bar\bS_{2i}^\top\bdeltahat)(\bar\bS_{2i}^\top\bDelta)^2\nonumber\\
&\qquad+M^{-1}\sum_{i\in\mathcal I_{\balpha}}A_{1i}A_{2i}\exp\{-\bar\bS_{2i}^\top(\bdeltahat+\check{\bgamma})\}(\bar\bS_{2i}^\top\bDelta)^2.\label{eq:split}
\end{align}
Let $\bU=A_1A_2\bS$, $\S'=(A_{1i}A_{2i}\bar\bS_{2i})_{i\in\mathcal I_{\balpha}}$, $\phi(u)=\exp(-u)$, $v=0$, and $\etabf=\bdeltahat$. Here, $\E(\lvert\bU^\top\etabf\rvert)\leq\E(\lvert\bar\bS_2^\top\bdelta^*\rvert)+\E\{\lvert\bar\bS_2^\top(\bdeltahat-\bdelta^*)\rvert\}<C$ with some constant $C>0$. By Lemmas \ref{lemma:subG'} and \ref{lemma:prop2}, we have
\begin{align}
&M^{-1}\sum_{i\in\mathcal I_{\balpha}}A_{1i}A_{2i}\exp(-\bar\bS_{2i}^\top\bdeltahat)(\bar\bS_{2i}^\top\bDelta)^2\geq\kappa_1'\|\bDelta\|_2^2-\kappa_2'\frac{\log d}{M}\|\bDelta\|_1^2,\quad\forall\|\bDelta\|_2\leq1,\label{eq:RSC3_1}
\end{align}
with probability $\P_{\S_{\balpha}}$ at least $1-c_1'\exp(-c_2'M)$. 

Similarly, let $\bU=A_1A_2\bS$, $\S'=(A_{1i}A_{2i}\bar\bS_{2i})_{i\in\mathcal I_{\balpha}}$, $\phi(u)=\exp(-u)$, $v=0$, and $\etabf=\bdeltahat+\check{\bgamma}$. Then $\E(\lvert\bU^\top\etabf\rvert)\leq\E(\lvert\bS_1^\top\bgamma^*\rvert)+\E(\lvert\bar\bS_2^\top\bdelta^*\rvert)+\E\{\lvert\bS_1^\top(\bgammahat-\bgamma^*)\rvert\}+\E\{\lvert\bar\bS_2^\top(\bdeltahat-\bdelta^*)\rvert\}<C$ with some constant $C>0$. By Lemmas \ref{lemma:subG'} and \ref{lemma:prop2}, we have
\begin{align}
M^{-1}\sum_{i\in\mathcal I_{\balpha}}A_{1i}A_{2i}\exp\{-\bar\bS_{2i}^\top(\bdeltahat+\check{\bgamma})\}(\bar\bS_{2i}^\top\bDelta)^2\geq\kappa_1'\|\bDelta\|_2^2-\kappa_2'\frac{\log d}{M}\|\bDelta\|_1^2,\;\;\forall\|\bDelta\|_2\leq1,\label{eq:RSC3_2}
\end{align}
with probability $\P_{\S_{\balpha}}$ at least $1-c_1'\exp(-c_2'M)$. Note that, the function $\delta\ell_N(\bgammahat,\bdeltahat,\balpha^*,\bDelta)$ is based on a weighted squared loss, and hence the lower bounds in \eqref{eq:RSC3_1} and \eqref{eq:RSC3_2} can be extended to any $\bDelta\in\R^d$. For any $\bDelta'\in\R^d$, we let $\bDelta=\bDelta'/\|\bDelta'\|_2$. Then $\|\bDelta\|_2=1$. The lower bounds in \eqref{eq:RSC3_1} and \eqref{eq:RSC3_2} hold if we multiply the LHS and RHS by a factor $\|\bDelta'\|_2^2$. Therefore, \eqref{eq:RSC3} holds by combining the lower bounds with \eqref{eq:split} . 

\vspace{0.5em}

\textbf{Part 4.} Lastly, treat $\bgammahat$ as fixed (or conditional on) and suppose that $\|\bgammahat-\bgamma^*\|_2\leq1$. Let $\bU=A_1\bS_1$, $\S'=(A_{1i}\bS_{1i})_{i\in\mathcal I_{\bbeta}}$, $\phi(u)=\exp(-u)$, $v=0$, and $\etabf=\bgammahat$. Here, $\E\{\lvert\bU^\top\etabf\rvert\}\leq\E(\lvert\bS_1^\top\bgamma^*\rvert)+\E\{\lvert\bS_1^\top(\bgammahat-\bgamma^*)\rvert\}<C$ with some constant $C>0$. Then \eqref{eq:RSC4} holds by Lemmas \ref{lemma:subG'} and \ref{lemma:prop2}. Here, similarly as in part 3, the lower bound can be extended to any $\bDelta\in\R^d$, since $\delta\ell_N(\bgammahat,\bdeltahat,\balphahat,\bbeta^*,\bDelta)$ is also constructed based on a weighted squared loss.
\end{proof}

Additionally, we control the gradients of the loss functions evaluated at the target population parameter values. Note that the nuisance parameters \(\bgamma^*\), \(\bdelta^*\), \(\balpha^*\), and \(\bbeta^*\) are defined as the minimizers of the corresponding loss functions, \eqref{def:alpha}, \eqref{def:l2}, \eqref{def:l3}, and \eqref{def:l4}. By the KKT conditions, the gradients of the expectations of the loss functions are zero vectors, as shown in \eqref{eq:gradient_zero1}, \eqref{eq:gradient_zero2}, \eqref{eq:gradient_zero3}, and \eqref{eq:gradient_zero4}. Therefore, the gradients of the empirical averages of the loss functions, \(\bnabla_{\bgamma}\bar\ell_1(\bgamma^*)\), \(\bnabla_{\bdelta}\bar\ell_2(\bgamma^*,\bdelta^*)\), \(\bnabla_{\balpha}\bar\ell_3(\bgamma^*,\bdelta^*,\balpha^*)\), and \(\bnabla_{\bbeta}\bar\ell_4(\bgamma^*,\bdelta^*,\balpha^*,\bbeta^*)\), are averages of i.i.d. random vectors with zero means, even under model misspecification. Hence, we can apply standard union bound techniques to control the infinity norms, with the usual rates \(O_p(\sqrt{\log d/M})\) or \(O_p(\sqrt{\log d_1/M})\).

\begin{lemma}\label{lemma:gradient}
Let Assumption \ref{cond:subG} holds. Let $\sigma_{\bgamma},\sigma_{\bdelta},\sigma_{\balpha},\sigma_{\bbeta}>0$ be some constants and note that $M\asymp N$. Then, for any $t>0$,
\begin{align*}
&\P_{\S_{\bgamma}}\left(\left\|\bnabla_{\bgamma}\bar\ell_1(\bgamma^*)\right\|_\infty\leq\sigma_{\bgamma}\sqrt\frac{t+\log d_1}{M}\right)\geq1-2\exp(-t).
\end{align*}
Further, let Assumption \ref{cond:basic} holds. Then,  for any $t>0$,
\begin{align*}
\P_{\S_{\bdelta}}\left(\left\|\bnabla_{\bdelta}\bar\ell_2(\bgamma^*,\bdelta^*)\right\|_\infty\leq\sigma_{\bdelta}\sqrt\frac{t+\log d}{M}\right)&\geq1-2e^{-t},\\
\P_{\S_{\balpha}}\left(\left\|\bnabla_{\balpha}\bar\ell_3(\bgamma^*,\bdelta^*,\balpha^*)\right\|_\infty\leq\sigma_{\balpha}\left(2\sqrt\frac{t+\log d}{M}+\frac{t+\log d}{M}\right)\right)&\geq1-2e^{-t},\\
\P_{\S_{\bbeta}}\left(\left\|\bnabla_{\bbeta}\bar\ell_4(\bgamma^*,\bdelta^*,\balpha^*,\bbeta^*)\right\|_\infty\leq\sigma_{\bbeta}\left(2\sqrt\frac{t+\log d_1}{M}+\frac{t+\log d_1}{M}\right)\right)&\geq1-2e^{-t}.
\end{align*}
\end{lemma}

\begin{proof}[Proof of Lemma \ref{lemma:gradient}.]
Now, we control the gradients of the loss functions.

\textbf{Part 1.} Note that
$$\bnabla_{\bgamma}\bar\ell_1(\bgamma^*)=M^{-1}\sum_{i\in\mathcal I_{\bgamma}}\{1-A_{1i}g^{-1}(\bS_{1i}^\top\bgamma^*)\}\bS_{1i}.$$
By the construction of $\bgamma^*$, we have
\begin{align}\label{eq:gradient_zero1}
\E\left[\{1-A_1g^{-1}(\bS_1^\top\bgamma^*)\}\bS_1\right]=\bzero\in\R^{d_1}.
\end{align}
Also, for each $1\leq j\leq d_1$, $\lvert\{1-A_1g^{-1}(\bS_1^\top\bgamma^*)\}\bS_1^\top\be_j\rvert\leq(1+c_0^{-1})\lvert\bS_1^\top\be_j\rvert$ and hence, by Lemma \ref{lemma:psi2norm}, 
$$\|\{1-A_1g^{-1}(\bS_1^\top\bgamma^*)\}\bS_1^\top\be_j\|_{\psi_2}\leq(1+c_0^{-1})\|\bS_1^\top\be_j\|_{\psi_2}\leq(1+c_0^{-1})\sigma_{\bS}.$$
Let $\sigma_{\bgamma}:=\sqrt8(1+c_0^{-1})\sigma_{\bS}$. By Lemma D.2 of \cite{chakrabortty2019high}, for each $1\leq j\leq d_1$ and any $t>0$,
$$\P_{\S_{\bgamma}}\left(\left\lvert\bnabla_{\bgamma}\bar\ell_1(\bgamma^*)^\top\be_j\right\rvert>\sigma_{\bgamma}\sqrt\frac{t+\log d_1}{M}\right)\leq2\exp(-t-\log d_1).$$
It follows that,
\begin{align*}
&\P_{\S_{\bgamma}}\left(\left\|\bnabla_{\bgamma}\bar\ell_1(\bgamma^*)\right\|_\infty>\sigma_{\bgamma}\sqrt\frac{t+\log d_1}{M}\right)\\
&\qquad\leq\sum_{j=1}^{d_1}\P_{\S_{\bgamma}}\left(\left\lvert\bnabla_{\bgamma}\bar\ell_1(\bgamma^*)^\top\be_j\right\rvert>\sigma_{\bgamma}\sqrt\frac{t+\log d_1}{M}\right)\\
&\qquad\leq 2d_1\exp(-t-\log d_1)=2\exp(-t).
\end{align*}

\vspace{0.5em}

\textbf{Part 2.} Note that
$$\bnabla_{\bdelta}\bar\ell_2(\bgamma^*,\bdelta^*)=M^{-1}\sum_{i\in\mathcal I_{\bdelta}}A_{1i}g^{-1}(\bS_{1i}^\top\bgamma^*)\{1-A_{2i}g^{-1}(\bar\bS_{2i}^\top\bdelta^*)\}\bar\bS_{2i}.$$
By the construction of $\bdelta^*$, we have
\begin{align}\label{eq:gradient_zero2}
\E\left[A_1g^{-1}(\bS_1^\top\bgamma^*)\{1-A_2g^{-1}(\bar\bS_2^\top\bdelta^*)\}\bar\bS_2\right]=\bzero\in\R^d.
\end{align}
Under Assumption \ref{cond:basic}, we have $\lvert A_1g^{-1}(\bS_1^\top\bgamma^*)\{1-A_2g^{-1}(\bar\bS_2^\top\bdelta^*)\}\bar\bS_2^\top\be_j\rvert\leq c_0^{-1}(1+c_0^{-1})\lvert\bar\bS_2^\top\be_j\rvert$ for each $1\leq j\leq d$. By Lemma \ref{lemma:psi2norm}, 
$$\|A_1g^{-1}(\bS_1^\top\bgamma^*)\{1-A_2g^{-1}(\bar\bS_2^\top\bdelta^*)\}\bar\bS_2^\top\be_j\|_{\psi_2}\leq (c_0^{-2}+c_0^{-1})\|\bar\bS_2^\top\be_j\|_{\psi_2}\leq (c_0^{-2}+c_0^{-1})\sigma_{\bS}.$$
Let $\sigma_{\bdelta}:=\sqrt8(c_0^{-2}+c_0^{-1})\sigma_{\bS}$. By Lemma D.2 of \cite{chakrabortty2019high}, for each $1\leq j\leq d$ and any $t>0$,
$$\P_{\S_{\bdelta}}\left(\left\lvert\bnabla_{\bdelta}\bar\ell_2(\bgamma^*,\bdelta^*)^\top\be_j\right\rvert>\sigma_{\bdelta}\sqrt\frac{t+\log d}{M}\right)\leq2\exp(-t-\log d).$$
It follows that,
\begin{align*}
&\P_{\S_{\bdelta}}\left(\left\|\bnabla_{\bdelta}\bar\ell_2(\bgamma^*,\bdelta^*)\right\|_\infty>\sigma_{\bdelta}\sqrt\frac{t+\log d}{M}\right)\\
&\qquad\leq\sum_{j=1}^{d}\P_{\S_{\bdelta}}\left(\left\lvert\bnabla_{\bdelta}\bar\ell_2(\bgamma^*,\bdelta^*)^\top\be_j\right\rvert>\sigma_{\bdelta}\sqrt\frac{t+\log d}{M}\right)\\
&\qquad\leq 2d\exp(-t-\log d)=2\exp(-t).
\end{align*}

\vspace{0.5em}

\textbf{Part 3.} Note that
$$\bnabla_{\balpha}\bar\ell_3(\bgamma^*,\bdelta^*,\balpha^*)=-2M^{-1}\sum_{i\in\mathcal I_{\balpha}}A_{1i}A_{2i}g^{-1}(\bS_{1i}^\top\bgamma^*)\exp(-\bar\bS_{2i}^\top\bdelta^*)\varepsilon_i\bar\bS_{2i}.$$
By the construction of $\balpha^*$, we have
\begin{align}\label{eq:gradient_zero3}
\E\left\{-2A_1A_2g^{-1}(\bS_1^\top\bgamma^*)\exp(-\bar\bS_2^\top\bdelta^*)\varepsilon\bar\bS_2\right\}=\bzero\in\R^d.
\end{align}
Under Assumption \ref{cond:basic}, we have $\lvert-2A_1A_2g^{-1}(\bS_1^\top\bgamma^*)\exp(-\bar\bS_2^\top\bdelta^*)\varepsilon\bar\bS_2^\top\be_j\rvert\leq 2c_0^{-1}(c_0^{-1}-1)\lvert\varepsilon\bar\bS_2^\top\be_j\rvert$ for each $1\leq j\leq d$. By \ref{lemma:psi2norm}, 
\begin{align*}
&\|-2A_1A_2g^{-1}(\bS_1^\top\bgamma^*)\exp(-\bar\bS_2^\top\bdelta^*)\varepsilon\bar\bS_2^\top\be_j\|_{\psi_1}\\
&\qquad\leq 2c_0^{-1}(c_0^{-1}-1)\|\varepsilon\|_{\psi_2}\|\bar\bS_2^\top\be_j\|_{\psi_2}\leq 2c_0^{-1}(c_0^{-1}-1)\sigma_\varepsilon\sigma_{\bS}.
\end{align*}
Let $\sigma_{\balpha}:=2c_0^{-1}(c_0^{-1}-1)\sigma_\varepsilon\sigma_{\bS}$. By Lemma \ref{lemma:psi2norm} and Lemma D.4 of \cite{chakrabortty2019high}, for each $1\leq j\leq d$ and any $t>0$,
\begin{align*}
&\P_{\S_{\balpha}}\left(\left\lvert\bnabla_{\balpha}\bar\ell_3(\bgamma^*,\bdelta^*,\balpha^*)^\top\be_j\right\rvert>\sigma_{\balpha}\left(2\sqrt\frac{t+\log d}{M}+\frac{t+\log d}{M}\right)\right)\leq2\exp(-t-\log d).
\end{align*}
It follows that,
\begin{align*}
&\P_{\S_{\balpha}}\left(\left\|\bnabla_{\balpha}\bar\ell_3(\bgamma^*,\bdelta^*,\balpha^*)\right\|_\infty>\sigma_{\balpha}\left(2\sqrt\frac{t+\log d}{M}+\frac{t+\log d}{M}\right)\right)\\
&\qquad\leq\sum_{j=1}^{d}\P_{\S_{\balpha}}\left(\left\lvert\bnabla_{\balpha}\bar\ell_3(\bgamma^*,\bdelta^*,\balpha^*)^\top\be_j\right\rvert>\sigma_{\balpha}\left(2\sqrt\frac{t+\log d}{M}+\frac{t+\log d}{M}\right)\right)\\
&\qquad\leq 2d\exp(-t-\log d)=2\exp(-t).
\end{align*}

\vspace{0.5em}

\textbf{Part 4.} Note that
$$\bnabla_{\bbeta}\bar\ell_4(\bgamma^*,\bdelta^*,\balpha^*,\bbeta^*)=-2M^{-1}\sum_{i\in\mathcal I_{\bbeta}}A_{1i}\exp(-\bS_{1i}^\top\bgamma^*)\left\{\zeta_{i}+A_{2i}g^{-1}(\bar\bS_{2i}^\top\bdelta^*)\varepsilon_i\right\}\bS_{1i}.$$
By the construction of $\bbeta^*$, we have
\begin{align}\label{eq:gradient_zero4}
\E\left[-2A_1\exp(-\bS_1^\top\bgamma^*)\left\{\zeta+A_2g^{-1}(\bar\bS_2^\top\bdelta^*)\varepsilon\right\}\bS_1\right]=\bzero\in\R^{d_1}.\end{align}
Under Assumption \ref{cond:basic}, we have $\lvert-2A_1\exp(-\bS_1^\top\bgamma^*)\left\{\zeta+A_2g^{-1}(\bar\bS_2^\top\bdelta^*)\varepsilon\right\}\bS_1^\top\be_j\rvert\leq2(c_0^{-1}-1)(\lvert\zeta\rvert+c_0^{-1}\lvert\varepsilon\rvert)\lvert\bS_1^\top\be_j\rvert$ for each $1\leq j\leq d$. By Lemma \ref{lemma:psi2norm}, 
\begin{align*}
&\|-2A_1\exp(-\bS_1^\top\bgamma^*)\left\{\zeta+A_2g^{-1}(\bar\bS_2^\top\bdelta^*)\varepsilon\right\}\bS_1^\top\be_j\|_{\psi_1}\\
&\qquad\leq2(c_0^{-1}-1)(\|\zeta\|_{\psi_2}+c_0^{-1}\|\varepsilon\|_{\psi_2})\|\bS_1^\top\be_j\|_{\psi_2}\leq2(c_0^{-1}-1)(\sigma_\zeta+c_0^{-1}\sigma_\varepsilon)\sigma_{\bS}.
\end{align*}
Let $\sigma_{\bbeta}:=2(c_0^{-1}-1)(\sigma_\zeta+c_0^{-1}\sigma_\varepsilon)\sigma_{\bS}$. By Lemma \ref{lemma:psi2norm} and Lemma D.4 of \cite{chakrabortty2019high}, for each $1\leq j\leq d_1$ and any $t>0$,
\begin{align*}
&\P_{\S_{\bbeta}}\left(\left\lvert\bnabla_{\bbeta}\bar\ell_4(\bgamma^*,\bdelta^*,\balpha^*,\bbeta^*)^\top\be_j\right\rvert>\sigma_{\bbeta}\left(2\sqrt\frac{t+\log d_1}{M}+\frac{t+\log d_1}{M}\right)\right)\\
&\qquad\leq2\exp(-t-\log d_1).
\end{align*}
It follows that,
\begin{align*}
&\P_{\S_{\bbeta}}\left(\left\|\bnabla_{\bbeta}\bar\ell_4(\bgamma^*,\bdelta^*,\balpha^*,\bbeta^*)\right\|_\infty>\sigma_{\bbeta}\left(2\sqrt\frac{t+\log d_1}{M}+\frac{t+\log d_1}{M}\right)\right)\\
&\qquad\leq\sum_{j=1}^{d_1}\P_{\S_{\bbeta}}\left(\left\lvert\bnabla_{\bbeta}\bar\ell_4(\bgamma^*,\bdelta^*,\balpha^*,\bbeta^*)^\top\be_j\right\rvert>\sigma_{\bbeta}\left(2\sqrt\frac{t+\log d_1}{M}+\frac{t+\log d_1}{M}\right)\right)\\
&\qquad\leq 2d_1\exp(-t-\log d_1)=2\exp(-t).
\end{align*}
\end{proof}

Notably, the nuisance estimates \(\bgammahat\), \(\bdeltahat\), \(\balphahat\), and \(\bbetahat\) are constructed sequentially, with each later estimate being defined based on the previous ones. As a result, the \(\ell_\infty\) bounds developed in Lemma \ref{lemma:gradient} above cannot be directly applied, since the controlled gradients are evaluated at the target values \(\bgamma^*\), \(\bdelta^*\), \(\balpha^*\), and \(\bbeta^*\), which do not incorporate the imputation errors from earlier steps. While the construction of \(\bgammahat\) does not rely on other estimates and \(\balphahat\) and \(\bbetahat\) are both constructed using (weighted) square loss, the analysis of \(\bdeltahat\) presents the greatest challenge due to its dependence on \(\bgammahat\) and the associated non-square loss \eqref{def:l2}. Therefore, we first present several additional lemmas to characterize the behavior of the nuisance estimate \(\bdeltahat\) and demonstrate how it is influenced by the initial estimate \(\bgammahat\).

For any \(\bDelta \in \R^d\), define 
$$\mathcal F(\bDelta) := \delta\bar\ell_2(\bgammahat, \bdelta^*, \bDelta) + \lambda_{\bdelta} \|\bdelta^* + \bDelta\|_1 + \bnabla_{\bdelta} \bar\ell_2(\bgammahat, \bdelta^*)^\top \bDelta - \lambda_{\bdelta} \|\bdelta^*\|_1,$$ 
where \(\bar\ell_2(\bgamma, \bdelta)\) and \(\delta\bar\ell_2(\bgamma, \bdelta, \bDelta)\) are defined in \eqref{ell_2_bar} and \eqref{delta_ell_2_bar}, respectively. Let \(\bDelta_{\bdelta} = \bdeltahat - \bdelta^*\). By construction, we ensure that \(\mathcal F(\bDelta_{\bdelta}) \leq 0\). Additionally, the RSC property established in Lemma \ref{lemma:RSC} provides a lower bound for \(\delta\bar\ell_2(\bgammahat, \bdelta^*, \bDelta_{\bdelta})\) with high probability, as long as \(\|\bDelta_{\bdelta}\|_2 \leq 1\), a condition that will be guaranteed in Lemma \ref{lemma:beta4} below.

The primary challenge in the remaining analysis lies in controlling the term \(\bnabla_{\bdelta} \bar\ell_2(\bgammahat, \bdelta^*)^\top \bDelta_{\bdelta}\), where the gradient \(\bnabla_{\bdelta} \bar\ell_2(\bgammahat, \bdelta^*)\) is evaluated at the target value \(\bdelta^*\) and the estimated value \(\bgammahat\). Unlike standard techniques used for \(\ell_1\)-regularized estimators, model misspecification with \(\rho(\cdot) \neq \rho^*(\cdot)\) prevents us from guaranteeing a sufficiently small upper bound for the infinity norm \(\|\bnabla_{\bdelta} \bar\ell_2(\bgammahat, \bdelta^*)\|_\infty\), as the mean \(\E_{\S_{\bdelta}} \{\bnabla_{\bdelta} \bar\ell_2(\bgammahat, \bdelta^*)\}\) is not necessarily zero. Therefore, instead of directly applying the inequality
\[\lvert\bnabla_{\bdelta} \bar\ell_2(\bgammahat, \bdelta^*)^\top \bDelta_{\bdelta}\rvert \leq \|\bnabla_{\bdelta} \bar\ell_2(\bgammahat, \bdelta^*)\|_\infty \|\bDelta_{\bdelta}\|_1,\]
we consider a different decomposition:
\begin{align*}
&\lvert\bnabla_{\bdelta} \bar\ell_2(\bgammahat, \bdelta^*)^\top \bDelta_{\bdelta}\rvert = \lvert\bnabla_{\bdelta} \bar\ell_2(\bgamma^*, \bdelta^*)^\top \bDelta_{\bdelta} + R_1(\bDelta_{\bdelta})\rvert \\
&\qquad \leq \|\bnabla_{\bdelta} \bar\ell_2(\bgamma^*, \bdelta^*)\|_\infty \|\bDelta_{\bdelta}\|_1 + \lvert R_1(\bDelta_{\bdelta})\rvert,
\end{align*}
where it is ensured that \(\E \{\bnabla_{\bdelta} \bar\ell_2(\bgamma^*, \bdelta^*)\} = \bzero\), even under model misspecification, and we have \(\|\bnabla_{\bdelta} \bar\ell_2(\bgamma^*, \bdelta^*)\|_\infty = O_p(\sqrt{\log d / N})\) as shown in Lemma \ref{lemma:gradient}. 

In the following lemma, we show that the remainder term, defined as 
$$R_1(\bDelta) := \{\bnabla_{\bdelta} \bar\ell_2(\bgammahat, \bdelta^*) - \bnabla_{\bdelta} \bar\ell_2(\bgamma^*, \bdelta^*)\}^\top \bDelta,$$
can be controlled through the imputation error from estimating \(\bgamma^*\), as well as the \(\ell_1\)- and \(\ell_2\)-norms of \(\bDelta_{\bdelta}\).

\begin{lemma}\label{lemma:beta3}
Let Assumptions \ref{cond:basic} and \ref{cond:subG} hold, $s_{\bgamma}=o(N/\log d_1)$, $s_{\bdelta}=o(N/\log d)$, and consider some $\lambda_{\bgamma}\asymp\sqrt{\log d_1/N}$. 
For any $0<t<\kappa_1^2M/(16^2\sigma_{\bdelta}^2s_{\bdelta})$, let $\lambda_{\bdelta}=2\sigma_{\bdelta}\sqrt{(t+\log d)/M}$.
Define
\begin{align}
\mathcal A_1:=&\{\|\bnabla_{\bdelta}\bar\ell_2(\bgamma^*,\bdelta^*)\|_\infty\leq\lambda_{\bdelta}/2\},\label{def:A1}\\
\mathcal A_2:=&\left\{\lvert R_1(\bDelta)\rvert\leq c\sqrt\frac{s_{\bgamma}\log d_1}{N}\left(\frac{\|\bDelta\|_1}{\sqrt{N/\log d}}+\|\bDelta\|_2\right),\;\;\forall\bDelta\in\R^d\right\},\label{def:A2}\\
\mathcal A_3:=&\left\{\delta\bar\ell_2(\bgammahat,\bdelta^*,\bDelta)\geq\kappa_1\|\bDelta\|_2^2-\kappa_2\frac{\log d}{M}\|\bDelta\|_1^2,\;\;\forall\bDelta\in\R^d:\|\bDelta\|_2\leq1\right\}\label{def:A3},
\end{align}
where
$R_1(\bDelta):=\left\{\bnabla_{\bdelta}\bar\ell_2(\bgammahat,\bdelta^*)-\bnabla_{\bdelta}\bar\ell_2(\bgamma^*,\bdelta^*)\right\}^\top\bDelta$ and $c>0$ is some constant. Then, $\P_{\S_{\bgamma}\cup\S_{\bdelta}}(\mathcal A_1\cap\mathcal A_2)\geq1-t-2\exp(-t)$, as long as $N>N_1$ with some large enough constant $N_1>0$. 

Define $\bar s_{\bdelta}:=s_{\bgamma}\log d_1/\log d+s_{\bdelta}$, $\Ctil(s,k):=\{\bDelta\in\R^d:\|\bDelta\|_1\leq k\sqrt{s}\|\bDelta\|_2\}$, and $\Ktil(s,k,1):=\Ctil(s,k)\cap\{\bDelta\in\R^d:\|\bDelta\|_2=1\}$ for any $s,k>0$. Then, on the event $\mathcal A_1\cap\mathcal A_2\cap\mathcal A_3$, for all $\bDelta\in\Ktil(\bar s_{\bdelta},k_0,1)$, we have $\mathcal F(\bDelta)>0$.
\end{lemma}

\begin{proof}[Proof of Lemma \ref{lemma:beta3}.]
Note that 
\begin{align}
\mathcal F(\bDelta):=&\delta\bar\ell_2(\bgammahat,\bdelta^*,\bDelta)+\lambda_{\bdelta}\|\bdelta^*+\bDelta\|_1+\bnabla_{\bdelta}\bar\ell_2(\bgammahat,\bdelta^*)^\top\bDelta-\lambda_{\bdelta}\|\bdelta^*\|_1\nonumber\\
=&\delta\bar\ell_2(\bgammahat,\bdelta^*,\bDelta)+\lambda_{\bdelta}\|\bdelta^*+\bDelta\|_1+\bnabla_{\bdelta}\bar\ell_2(\bgamma^*,\bdelta^*)^\top\bDelta+	R_1(\bDelta)-\lambda_{\bdelta}\|\bdelta^*\|_1,\label{eq:beta1}
\end{align}
where
\begin{align*}
	R_1(\bDelta):&=\left\{\bnabla_{\bdelta}\bar\ell_2(\bgammahat,\bdelta^*)-\bnabla_{\bdelta}\bar\ell_2(\bgamma^*,\bdelta^*)\right\}^\top\bDelta\\
	&=M^{-1}\sum_{i\in\mathcal I_{\bdelta}}A_{1i}\left\{g^{-1}(\bS_{1i}^\top\bgammahat)-g^{-1}(\bS_{1i}^\top\bgamma^*)\right\}\left\{1-A_{2i}g^{-1}(\bar\bS_{2i}^\top\bdelta^*)\right\}\bar\bS_{2i}^\top\bDelta.
\end{align*}
Let $\lambda_{\bdelta}=2\sigma_{\bdelta}\sqrt{(t+\log d)/M}$ with some $t>0$. By Lemma \ref{lemma:gradient}, we have $\P_{\S_{\bdelta}}(\mathcal A_1)\geq1-2\exp(-t)$.		
On the event $\mathcal A_1$, we have $\lvert\bnabla_{\bdelta}\bar\ell_2(\bgamma^*,\bdelta^*)^\top\bDelta\rvert\leq\lambda_{\bdelta}\|\bDelta\|_1/2$. Note that $\|\bdelta^*\|_1=\|\bdelta_{S_{\bdelta}}^*\|_1\leq\|\bdelta_{S_{\bdelta}}^*+\bDelta_{S_{\bdelta}}\|_1+\|\bDelta_{S_{\bdelta}}\|_1$, $\|\bDelta\|_1=\|\bDelta_{S_{\bdelta}}\|_1+\|\bDelta_{S_{\bdelta}^c}\|_1$, and $\|\bdelta^*+\bDelta\|_1=\|\bdelta_{S_{\bdelta}}^*+\bDelta_{S_{\bdelta}}\|_1+\|\bDelta_{S_{\bdelta}^c}\|_1$. Recall the equation \eqref{eq:beta1}. It follows that
\begin{align*}
2\mathcal F(\bDelta)\geq 2\delta\bar\ell_2(\bgammahat,\bdelta^*,\bDelta)+\lambda_{\bdelta}\|\bDelta_{S_{\bdelta}^c}\|_1-3\lambda_{\bdelta}\|\bDelta_{S_{\bdelta}}\|_1-2\lvert R_1(\bDelta)\rvert.
\end{align*}
Hence,
\begin{align}\label{eq:beta5}
2\mathcal F(\bDelta)\geq 2\delta\bar\ell_2(\bgammahat,\bdelta^*,\bDelta)+\lambda_{\bdelta}\|\bDelta\|_1-4\lambda_{\bdelta}\|\bDelta_{S_{\bdelta}}\|_1-2\lvert R_1(\bDelta)\lvert.
\end{align}
Under the overlap condition in Assumption \ref{cond:basic} and since $\lvert A_1\rvert\leq 1$,
\begin{align*}
&\lvert R_1(\bDelta)\rvert\leq(1+c_0^{-1})M^{-1}\sum_{i\in\mathcal I_{\bdelta}}A_{1i}\left\{g^{-1}(\bS_{1i}^\top\bgammahat)-g^{-1}(\bS_{1i}^\top\bgamma^*)\right\}\bar\bS_{2i}^\top\bDelta\\
&\;\;\overset{(i)}{\leq}(1+c_0^{-1})\sqrt{M^{-1}\sum_{i\in\mathcal I_{\bdelta}}\left\{g^{-1}(\bS_{1i}^\top\bgammahat)-g^{-1}(\bS_{1i}^\top\bgamma^*)\right\}^2}\sqrt{M^{-1}\sum_{i\in\mathcal I_{\bdelta}}(\bar\bS_{2i}^\top\bDelta)^2},
\end{align*}
where (i) holds by the Cauchy–Schwarz inequality.	
It follows that
\begin{align*}
&\sup_{\bDelta\in\R^{d}/\{\bzero\}}\frac{\lvert R_1(\bDelta)\rvert}{\|\bDelta\|_1/\sqrt{N/\log d}+\|\bDelta\|_2}\\
&\qquad\leq(1+c_0^{-1})\sqrt{M^{-1}\sum_{i\in\mathcal I_{\bdelta}}\left\{g^{-1}(\bS_{1i}^\top\bgammahat)-g^{-1}(\bS_{1i}^\top\bgamma^*)\right\}^2}\\
&\qquad\qquad\cdot\sqrt{\sup_{\bDelta\in\R^{d}/\{\bzero\}}\frac{M^{-1}\sum_{i\in\mathcal I_{\bdelta}}(\bar\bS_{2i}^\top\bDelta)^2}{\|\bDelta\|_1^2N^{-1}\log d+\|\bDelta\|_2^2}},
\end{align*}
since $(\|\bDelta\|_1/\sqrt{N/\log d}+\|\bDelta\|_2)^2\geq\|\bDelta\|_1^2N^{-1}\log d+\|\bDelta\|_2^2$.
Note that
\begin{align*}
&\E_{\S_{\bdelta}}\left[M^{-1}\sum_{i\in\mathcal I_{\bdelta}}\left\{g^{-1}(\bS_{1i}^\top\bgammahat)-g^{-1}(\bS_{1i}^\top\bgamma^*)\right\}^2\right]\\
&\qquad=\E\left[\left\{g^{-1}(\bS_1^\top\bgammahat)-g^{-1}(\bS_1^\top\bgamma^*)\right\}^2\right]\overset{(i)}{=}O_p\left(\frac{s_{\bgamma}\log d_1}{N}\right),
\end{align*}
where (i) holds by Lemma \ref{lemma:preliminary_alpha}. By Lemma \ref{l1},
$$M^{-1}\sum_{i\in\mathcal I_{\bdelta}}\left\{g^{-1}(\bS_{1i}^\top\bgammahat)-g^{-1}(\bS_{1i}^\top\bgamma^*)\right\}^2=O_p\left(\frac{s_{\bgamma}\log d_1}{N}\right).$$
Besides, by Lemma \ref{lemma:beta1}, we also have
$$\sup_{\bDelta\in\R^{d}/\{\bzero\}}\frac{M^{-1}\sum_{i\in\mathcal I_{\bdelta}}(\bar\bS_{2i}^\top\bDelta)^2}{\|\bDelta\|_1^2N^{-1}\log d+\|\bDelta\|_2^2}=O_p(1).$$
Hence,
\begin{align*}
	\sup_{\bDelta\in\R^{d}/\{\bzero\}}\frac{\lvert R_1(\bDelta)\rvert}{\|\bDelta\|_1/\sqrt{N/\log d}+\|\bDelta\|_2}
	&= O_p\left(\sqrt\frac{s_{\bgamma}\log d_1}{N}\right).
\end{align*}
That is, with any $t>0$, when $N$ is large enough, there exists some constant $c>0$ such that $\P_{\S_{\bgamma}\cup\S_{\bdelta}}(\mathcal A_2)\geq1-t$.
Hence, 
$$\P_{\S_{\bgamma}\cup\S_{\bdelta}}(\mathcal A_1\cap\mathcal A_2)\geq1-t-2\exp(-t). $$
Recall the definitions \eqref{def:A1} and \eqref{def:A2}. Now, conditional on $\mathcal A_1\cap\mathcal A_2$, we have
\begin{align*}
2\mathcal F(\bDelta)&\geq 2\delta\bar\ell_2(\bgammahat,\bdelta^*,\bDelta)+\lambda_{\bdelta}\|\bDelta\|_1-4\lambda_{\bdelta}\|\bDelta_{S_{\bdelta}}\|_1\\
&\qquad-2c\sqrt\frac{s_{\bgamma}\log d_1}{N}\left(\frac{\|\bDelta\|_1}{\sqrt{N/\log d}}+\|\bDelta\|_2\right).
\end{align*}
Since $\lambda_{\bdelta}=2\sigma_{\bdelta}\sqrt{(t+\log d)/M}\geq2\sigma_{\bdelta}\sqrt{\log d/M}$, $M\asymp N$, and $s_{\bgamma}=o(N/\log d_1)$, we have $\sqrt{s_{\bgamma}\log d_1\log d/N^2}=o(\lambda_{\bdelta})$. Hence, with some $N_0>0$, when $N>N_0$, we have $4c\sqrt{s_{\bgamma}\log d_1\log d/N^2}\leq\lambda_{\bdelta}$. 
It follows that
\begin{align*}
4\mathcal F(\bDelta)\geq 4\delta\bar\ell_2(\bgammahat,\bdelta^*,\bDelta)+\lambda_{\bdelta}\|\bDelta\|_1-8\lambda_{\bdelta}\|\bDelta_{S_{\bdelta}}\|_1-4c\sqrt\frac{s_{\bgamma}\log d_1}{N}\|\bDelta\|_2.
\end{align*}
Note that $\|\bDelta_{S_{\bdelta}}\|_1\leq\sqrt{s_{\bdelta}}\|\bDelta_{S_{\bdelta}}\|_2\leq\sqrt{s_{\bdelta}}\|\bDelta\|_2$. 
Hence,
\begin{align}\label{eq:beta3}
4\mathcal F(\bDelta)&\geq4\delta\bar\ell_2(\bgammahat,\bdelta^*,\bDelta)+\lambda_{\bdelta}\|\bDelta\|_1-\left(8\lambda_{\bdelta}\sqrt{s_{\bdelta}}+4c\sqrt\frac{s_{\bgamma}\log d_1}{N}\right)\|\bDelta\|_2.
\end{align}
For any $\bDelta\in\Ktil(\bar s_{\bdelta},k_0,1)$, we have
\begin{align*}
4\mathcal F(\bDelta)\geq 4\delta\bar\ell_2(\bgammahat,\bdelta^*,\bDelta)+\lambda_{\bdelta}\|\bDelta\|_1-\left(8\lambda_{\bdelta}\sqrt{s_{\bdelta}}+4c\sqrt\frac{s_{\bgamma}\log d_1}{N}\right),
\end{align*}
on the event $\mathcal A_1\cap\mathcal A_2$ and when $N>N_0$. Here, on the event $\mathcal A_3$, we have
\begin{align*}
&\delta\bar\ell_2(\bgammahat,\bdelta^*,\bDelta)\geq\kappa_1\|\bDelta\|_2^2-\kappa_2\frac{\log d}{M}\|\bDelta\|_1^2\overset{(i)}{\geq}\kappa_1-\kappa_2k_0^2\frac{\bar s_{\bdelta}\log d}{M},
\end{align*}
where (i) holds since $\bDelta\in\Ktil(\bar s_{\bdelta},k_0,1)$. Therefore, conditional on the event $\mathcal A_1\cap\mathcal A_2\cap\mathcal A_3$, when $N>N_1$ with some constant $N_1>0$,
\begin{align*}
\mathcal F(\bDelta)&\geq\kappa_1-\kappa_2k_0^2\frac{\bar s_{\bdelta}\log d}{M}-2\lambda_{\bdelta}\sqrt{s_{\bdelta}}-\frac{c}{2}\sqrt\frac{s_{\bgamma}\log d_1}{N}\geq\kappa_1/2,
\end{align*}
since as $N\to\infty$, we have $\bar s_{\bdelta}\log d/M=s_{\bgamma}\log d_1/M+s_{\bdelta}\log d/M=o(1)$, $\sqrt{s_{\bgamma}\log d_1/N}=o(1)$, and $2\lambda_{\bdelta}\sqrt{s_{\bdelta}}=4\sigma_{\bdelta}\sqrt{s_{\bdelta}(t+\log d)/M}\leq4\sigma_{\bdelta}\sqrt{s_{\bdelta}t/M}+4\sigma_{\bdelta}\sqrt{s_{\bdelta}\log d/M}\leq\kappa_1/4+o(1)$ when $t<\kappa_1^2M/(16^2\sigma_{\bdelta}^2s_{\bdelta})$.
\end{proof}

Due to the presence of the \(\ell_2\)-norm when controlling the remainder term \(R_1(\bDelta)\), as seen in \eqref{def:A2}, it is no longer suitable to confine our analysis to the usual cone set, typically defined as \(\C(S,k)=\{\|\bDelta\in\R^d : \|\bDelta_{S^c}\|_1 \leq k \|\bDelta_S\|_1\}\). Instead, we focus on a different cone set \(\Ctil(s,k) = \{\bDelta \in \R^d : \|\bDelta\|_1 \leq k \sqrt{s} \|\bDelta\|_2\}\), which incorporates both \(\ell_1\)- and \(\ell_2\)-norms. The following lemma demonstrates that the error term \(\bDelta_{\bdelta} := \bdeltahat - \bdelta^*\) lies within the cone set \(\Ctil(s,k)\) for some appropriately chosen values of \(s\) and \(k\).

\begin{lemma}\label{lemma:beta2}
Let Assumptions \ref{cond:basic} and \ref{cond:subG} hold. Let $s_{\bgamma}=o(N/\log d_1)$ and consider some $\lambda_{\bgamma}\asymp\sqrt{\log d_1/N}$. For any $t>0$, let $\lambda_{\bdelta}=2\sigma_{\bdelta}\sqrt{(t+\log d)/M}$. Events $\mathcal A_1$ and $\mathcal A_2$ are defined in \eqref{def:A1} and \eqref{def:A2}.
Then, on the event $\mathcal A_1\cap\mathcal A_2$, when $N>N_0$,
\begin{align*}
4\delta\bar\ell_2(\bgammahat,\bdelta^*,\bDelta_{\bdelta})+\lambda_{\bdelta}\|\bDelta_{\bdelta}\|_1/2\leq&\left(8\lambda_{\bdelta}\sqrt{s_{\bdelta}}+4c\sqrt\frac{s_{\bgamma}\log d_1}{N}\right)\|\bDelta_{\bdelta}\|_2,\\
\|\bDelta_{\bdelta}\|_1\leq&k_0\sqrt{\bar s_{\bdelta}}\|\bDelta_{\bdelta}\|_2,
\end{align*}
where $N_0,k_0,c>0$ are some constants and $\bar s_{\bdelta}:=s_{\bgamma}\log d_1/\log d+s_{\bdelta}$. That is, $\bDelta_{\bdelta}\in\Ctil(\bar s_{\bdelta},k_0)$.
\end{lemma}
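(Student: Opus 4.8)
The plan is to run a ``basic inequality'' argument for the $\ell_1$-penalized estimator \eqref{def:betahat}, working deterministically on $\mathcal A_1\cap\mathcal A_2$ and adapting the usual Lasso bookkeeping to two nonstandard features: $\bar\ell_2$ is not a squared loss, and $\bdeltahat$ is built from the plug-in $\bgammahat$ rather than from $\bgamma^*$. First I would observe that $\bdelta\mapsto\bar\ell_2(\bgammahat,\bdelta)$ is convex -- the weights $A_1/g(\bS_1^T\bgammahat)$ are nonnegative and $\bdelta\mapsto(1-A_2)\bar\bS_2^T\bdelta+A_2\exp(-\bar\bS_2^T\bdelta)$ is convex -- so the Bregman-type remainder satisfies $\delta\bar\ell_2(\bgammahat,\bdelta^*,\bDelta_\bdelta)\ge0$. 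Optimality of $\bdeltahat$ then gives
\[
\delta\bar\ell_2(\bgammahat,\bdelta^*,\bDelta_\bdelta)+\lambda_\bdelta\|\bdelta^*+\bDelta_\bdelta\|_1\ \le\ -\nabla_\bdelta\bar\ell_2(\bgammahat,\bdelta^*)^T\bDelta_\bdelta+\lambda_\bdelta\|\bdelta^*\|_1 .
\]

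Next I would write $\nabla_\bdelta\bar\ell_2(\bgammahat,\bdelta^*)=\nabla_\bdelta\bar\ell_2(\bgamma^*,\bdelta^*)+\{\nabla_\bdelta\bar\ell_2(\bgammahat,\bdelta^*)-\nabla_\bdelta\bar\ell_2(\bgamma^*,\bdelta^*)\}$ and bound the two terms separately: by H\"older and \eqref{def:A1}, $|\nabla_\bdelta\bar\ell_2(\bgamma^*,\bdelta^*)^T\bDelta_\bdelta|\le\tfrac12\lambda_\bdelta\|\bDelta_\bdelta\|_1$ on $\mathcal A_1$, while the second term is exactly $R_1(\bDelta_\bdelta)$ and is at most $c\sqrt{s_\bgamma\log d_1/N}\,(\|\bDelta_\bdelta\|_1/\sqrt N+\|\bDelta_\bdelta\|_2)$ on $\mathcal A_2$. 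Using the standard bound $\|\bdelta^*\|_1-\|\bdelta^*+\bDelta_\bdelta\|_1\le\|(\bDelta_\bdelta)_S\|_1-\|(\bDelta_\bdelta)_{S^c}\|_1$ for $S=\mathrm{supp}(\bdelta^*)$, $|S|=s_\bdelta$, together with $\delta\bar\ell_2\ge0$, I would collect terms. Since $s_\bgamma=o(N/\log d_1)$ and $\lambda_\bdelta\asymp\sqrt{\log d/N}\ge cN^{-1/2}$, the companion term $\tfrac{c}{\sqrt N}\sqrt{s_\bgamma\log d_1/N}\,\|\bDelta_\bdelta\|_1$ is $\le\tfrac14\lambda_\bdelta\|\bDelta_\bdelta\|_1$ for $N>N_0$; after absorbing it, multiplying the resulting bound by $4$, and adding $\lambda_\bdelta\|\bDelta_\bdelta\|_1$ to both sides I would obtain
\[
4\,\delta\bar\ell_2(\bgammahat,\bdelta^*,\bDelta_\bdelta)+\lambda_\bdelta\|\bDelta_\bdelta\|_1\ \le\ 8\lambda_\bdelta\|(\bDelta_\bdelta)_S\|_1+4c\sqrt{\tfrac{s_\bgamma\log d_1}{N}}\,\|\bDelta_\bdelta\|_2 ,
\]
and then $\|(\bDelta_\bdelta)_S\|_1\le\sqrt{s_\bdelta}\,\|\bDelta_\bdelta\|_2$ yields the first displayed inequality.

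For the cone statement I would keep instead the favourable term $-\lambda_\bdelta\|(\bDelta_\bdelta)_{S^c}\|_1$ in the same collection: combined with $\delta\bar\ell_2\ge0$ it rearranges to $\tfrac14\lambda_\bdelta\|(\bDelta_\bdelta)_{S^c}\|_1\le\tfrac74\lambda_\bdelta\|(\bDelta_\bdelta)_S\|_1+c\sqrt{s_\bgamma\log d_1/N}\,\|\bDelta_\bdelta\|_2$, i.e., using $\lambda_\bdelta\asymp\sqrt{\log d/N}$, to $\|(\bDelta_\bdelta)_{S^c}\|_1\le7\|(\bDelta_\bdelta)_S\|_1+c'\sqrt{s_\bgamma\log d_1/\log d}\,\|\bDelta_\bdelta\|_2$ for some constant $c'$. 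Adding $\|(\bDelta_\bdelta)_S\|_1\le\sqrt{s_\bdelta}\,\|\bDelta_\bdelta\|_2$ then gives $\|\bDelta_\bdelta\|_1\le(8\sqrt{s_\bdelta}+c'\sqrt{s_\bgamma\log d_1/\log d})\,\|\bDelta_\bdelta\|_2\le k_0\sqrt{\bar s_\bdelta}\,\|\bDelta_\bdelta\|_2$, since $\bar s_\bdelta=s_\bgamma\log d_1/\log d+s_\bdelta$ dominates both $s_\bdelta$ and $s_\bgamma\log d_1/\log d$; this is precisely $\bDelta_\bdelta\in\Ctil(\bar s_\bdelta,k_0)$.

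The main obstacle is not any single estimate but the organization forced by the plug-in error. Because $\bar\ell_2$ is not quadratic, the error from replacing $\bgamma^*$ by $\bgammahat$ enters through the gradient as $R_1$, and its $\|\bDelta_\bdelta\|_2$-component cannot be moved to the $\ell_1$-side; the argument must carry it as a $\|\bDelta_\bdelta\|_2$-multiple and widen the effective sparsity level from $s_\bdelta$ to $\bar s_\bdelta$, so that $\bDelta_\bdelta$ lies in the enlarged cone $\Ctil(\bar s_\bdelta,k_0)$ rather than the usual $\C(S,3)$. Making the constants and the threshold $N_0$ line up so that $\mathcal A_1\cap\mathcal A_2$ alone suffices -- with the restricted strong convexity of Lemma \ref{lemma:RSC} entering only afterwards, in Lemma \ref{lemma:beta3} and the proof of Theorem \ref{thm:nuisance}(b) -- is the delicate part; the probabilistic statement that $\P_{\S_\bgamma\cup\S_\bdelta}(\mathcal A_1\cap\mathcal A_2)$ is large is established separately in Lemma \ref{lemma:beta3}.
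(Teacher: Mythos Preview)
Your proposal is correct and follows essentially the same route as the paper: both start from the optimality of $\bdeltahat$, split the gradient at $\bdelta^*$ into the centered piece controlled by $\mathcal A_1$ and the plug-in remainder $R_1$ controlled by $\mathcal A_2$, absorb the $\|\bDelta_\bdelta\|_1/\sqrt N$ part of $R_1$ into the $\lambda_\bdelta\|\bDelta_\bdelta\|_1$ term for $N>N_0$, and then use $\|(\bDelta_\bdelta)_{S_\bdelta}\|_1\le\sqrt{s_\bdelta}\|\bDelta_\bdelta\|_2$ together with $\delta\bar\ell_2\ge0$ to obtain the displayed bound and the cone relation $\|\bDelta_\bdelta\|_1\le k_0\sqrt{\bar s_\bdelta}\|\bDelta_\bdelta\|_2$. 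The only cosmetic difference is that the paper derives the cone inequality directly from the first displayed bound (dropping $4\delta\bar\ell_2\ge0$ there), whereas you retain the $-\lambda_\bdelta\|(\bDelta_\bdelta)_{S^c}\|_1$ term one step earlier; the arithmetic and the constants line up either way.
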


\begin{proof}[Proof of Lemma \ref{lemma:beta2}.]
Based on the construction of $\bdeltahat$, we have
\begin{align*}
\bar\ell_2(\bgammahat,\bdeltahat)+\lambda_{\bdelta}\|\bdeltahat\|_1\leq\bar\ell_2(\bgammahat,\bdelta^*)+\lambda_{\bdelta}\|\bdelta^*\|_1.
\end{align*}
By definition \eqref{delta_ell_2_bar}, we have $\delta\bar\ell_2(\bgammahat,\bdelta^*,\bDelta_{\bdelta})=\bar\ell_2(\bgammahat,\bdeltahat)-\bar\ell_2(\bgammahat,\bdelta^*)-\bnabla_{\bdelta}\bar\ell_2(\bgammahat,\bdelta^*)^\top\bDelta_{\bdelta}$. 
It follows that
\begin{align}
&\mathcal F(\bDelta_{\bdelta})
=\delta\bar\ell_2(\bgammahat,\bdelta^*,\bDelta_{\bdelta})+\lambda_{\bdelta}\|\bdeltahat\|_1+\bnabla_{\bdelta}\bar\ell_2(\bgammahat,\bdelta^*)^\top\bDelta_{\bdelta}-\lambda_{\bdelta}\|\bdelta^*\|_1\label{eq:basic-delta}\\
&\qquad=\delta\bar\ell_2(\bgammahat,\bdelta^*,\bDelta_{\bdelta})+\lambda_{\bdelta}\|\bdelta^*+\bDelta_{\bdelta}\|_1+\bnabla_{\bdelta}\bar\ell_2(\bgamma^*,\bdelta^*)^\top\bDelta_{\bdelta}+R_1(\bDelta_{\bdelta})-\lambda_{\bdelta}\|\bdelta^*\|_1\nonumber\\
&\qquad\leq0,\nonumber
\end{align}
where
\begin{align*}
&R_1(\bDelta_{\bdelta})=\left\{\bnabla_{\bdelta}\bar\ell_2(\bgammahat,\bdelta^*)-\bnabla_{\bdelta}\bar\ell_2(\bgamma^*,\bdelta^*)\right\}^\top\bDelta_{\bdelta}\\
&\qquad=M^{-1}\sum_{i\in\mathcal I_{\bdelta}}A_{1i}\left\{g^{-1}(\bS_{1i}^\top\bgammahat)-g^{-1}(\bS_{1i}^\top\bgamma^*)\right\}\left\{1-A_{2i}g^{-1}(\bar\bS_{2i}^\top\bdelta^*)\right\}\bar\bS_{2i}^\top\bDelta_{\bdelta}.
\end{align*}
Repeat the same procedure in the proof of Lemma \ref{lemma:beta3} for obtaining \eqref{eq:beta5} and \eqref{eq:beta3}. Then, conditional on $\mathcal A_1$, we have
\begin{align}
0\geq2\mathcal F(\bDelta_{\bdelta})\geq 2\delta\bar\ell_2(\bgammahat,\bdelta^*,\bDelta_{\bdelta})+\lambda_{\bdelta}\|\bDelta_{\bdelta}\|_1-4\lambda_{\bdelta}\|\bDelta_{\bdelta,S_{\bdelta}}\|_1-2\lvert R_1(\bDelta_{\bdelta})\rvert\label{eq:beta2}.
\end{align}
Conditional on $\mathcal A_1\cap\mathcal A_2$, we further have
\begin{align*}
0\geq 4\mathcal F(\bDelta_{\bdelta})&\geq4\delta\bar\ell_2(\bgammahat,\bdelta^*,\bDelta_{\bdelta})+\left(\lambda_{\bdelta}-4c\sqrt{\frac{s_{\bgamma}\log d_1}{N}\frac{\log d}{N}}\right)\|\bDelta_{\bdelta}\|_1\\
&\qquad-\left(8\lambda_{\bdelta}\sqrt{s_{\bdelta}}+4c\sqrt\frac{s_{\bgamma}\log d_1}{N}\right)\|\bDelta_{\bdelta}\|_2.
\end{align*}
Since $s_{\bgamma}=o(N/\log d_1)$ and $\lambda_{\bdelta}=2\sigma_{\bdelta}\sqrt{(t+\log d)/M}\geq2\sigma_{\bdelta}\sqrt{\log d/M}\asymp\sqrt{\log d/N}$, we have $4c\sqrt{\frac{s_{\bgamma}\log d_1}{N}\frac{\log d}{N}}\leq\lambda_{\bdelta}/2$ as long as $N>N_0$ with some large enough $N_0>0$. Therefore,
\begin{align*}
4\delta\bar\ell_2(\bgammahat,\bdelta^*,\bDelta_{\bdelta})+\lambda_{\bdelta}\|\bDelta_{\bdelta}\|_1/2\leq&\left(8\lambda_{\bdelta}\sqrt{s_{\bdelta}}+4c\sqrt\frac{s_{\bgamma}\log d_1}{N}\right)\|\bDelta_{\bdelta}\|_2.
\end{align*}
Recall the equation \eqref{deltaell_beta}. We have $\delta\bar\ell_2(\bgammahat,\bdelta^*,\bDelta_{\bdelta})\geq0$. Since $\lambda_{\bdelta}\geq2\sigma_{\bdelta}\sqrt{\log d/M}\asymp\sqrt{\log d/N}$, there exists some constant $k_0>0$, such that
$$\|\bDelta_{\bdelta}\|_1\leq k_0\sqrt{\frac{s_{\bgamma}\log d_1}{\log d}+s_{\bdelta}}\|\bDelta_{\bdelta}\|_2=k_0\sqrt{\bar s_{\bdelta}}\|\bDelta_{\bdelta}\|_2,$$
on $\mathcal A_1\cap\mathcal A_2$ and when $N>N_0$ with some $N_0>0$.
\end{proof}

While the RSC property \eqref{eq:RSC4} developed in Lemma \ref{lemma:RSC} applies only to the set where \(\|\bDelta\|_2 \leq 1\), the following lemma demonstrates that \(\bDelta_{\bdelta}\) satisfies this condition, based on the results from Lemma \ref{lemma:beta3}.

\begin{lemma}\label{lemma:beta4}
Let the assumptions in Lemma \ref{lemma:beta3} hold and also that $\bDelta_{\bdelta}\in\Ctil(\bar s_{\bdelta},k_0)$. Then, on the event $\mathcal A_1\cap\mathcal A_2\cap\mathcal A_3$, we have $\|\bDelta_{\bdelta}\|_2\leq1$.
\end{lemma}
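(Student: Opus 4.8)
The plan is to run the standard convexity-based ``rescaling'' localization argument (as in the proof of Theorem 9.19 of \cite{wainwright2019high}). The first step is to recognize that $\mathcal F(\bDelta)$ is exactly the excess penalized objective centered at $\bdelta^*$: since $\delta\bar\ell_2(\bgammahat,\bdelta^*,\bDelta)=\bar\ell_2(\bgammahat,\bdelta^*+\bDelta)-\bar\ell_2(\bgammahat,\bdelta^*)-\nabla_{\bdelta}\bar\ell_2(\bgammahat,\bdelta^*)^T\bDelta$, the two linear terms $\nabla_{\bdelta}\bar\ell_2(\bgammahat,\bdelta^*)^T\bDelta$ in the definition of $\mathcal F$ cancel, leaving
$$\mathcal F(\bDelta)=\Big\{\bar\ell_2(\bgammahat,\bdelta^*+\bDelta)+\lambda_{\bdelta}\|\bdelta^*+\bDelta\|_1\Big\}-\Big\{\bar\ell_2(\bgammahat,\bdelta^*)+\lambda_{\bdelta}\|\bdelta^*\|_1\Big\}.$$
Two facts follow at once. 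First, $\mathcal F(\bzero)=0$, and since $\bdeltahat$ minimizes the penalized objective $\bar\ell_2(\bgammahat,\cdot)+\lambda_{\bdelta}\|\cdot\|_1$ over all of $\R^d$, we get $\mathcal F(\bDelta_{\bdelta})\le0$. Second, $\mathcal F$ is convex in $\bDelta$: $\bar\ell_2(\bgammahat,\cdot)$ is an average over $i\in\mathcal I_{\bdelta}$ of the maps $\bdelta\mapsto A_{1i}g^{-1}(\bS_{1i}^T\bgammahat)\{(1-A_{2i})\bar\bS_{2i}^T\bdelta+A_{2i}\exp(-\bar\bS_{2i}^T\bdelta)\}$, each of which is a nonnegative weight times the sum of an affine function and the convex function $\exp(-\bar\bS_{2i}^T\bdelta)$; composing with the affine shift $\bDelta\mapsto\bdelta^*+\bDelta$ and adding the convex term $\lambda_{\bdelta}\|\bdelta^*+\bDelta\|_1$ preserves convexity.

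The second step is a proof by contradiction. Suppose $\|\bDelta_{\bdelta}\|_2>1$ (the case $\bDelta_{\bdelta}=\bzero$ is trivial, so assume $\bDelta_{\bdelta}\ne\bzero$). Set $t:=1/\|\bDelta_{\bdelta}\|_2\in(0,1)$ and $\tilde\bDelta:=t\,\bDelta_{\bdelta}$, so that $\|\tilde\bDelta\|_2=1$. The inequality defining $\Ctil(\bar s_{\bdelta},k_0)$ is positively homogeneous, so from the hypothesis $\bDelta_{\bdelta}\in\Ctil(\bar s_{\bdelta},k_0)$ (which Lemma \ref{lemma:beta2} supplies on this event) we obtain $\tilde\bDelta\in\Ctil(\bar s_{\bdelta},k_0)$, hence $\tilde\bDelta\in\Ktil(\bar s_{\bdelta},k_0,1)$. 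On the event $\mathcal A_1\cap\mathcal A_2\cap\mathcal A_3$ and for $N>N_1$, Lemma \ref{lemma:beta3} then gives $\mathcal F(\tilde\bDelta)>0$. On the other hand, writing $\tilde\bDelta=t\,\bDelta_{\bdelta}+(1-t)\,\bzero$ and using convexity of $\mathcal F$ together with $\mathcal F(\bDelta_{\bdelta})\le0$ and $\mathcal F(\bzero)=0$, we get $\mathcal F(\tilde\bDelta)\le t\,\mathcal F(\bDelta_{\bdelta})+(1-t)\,\mathcal F(\bzero)\le0$ — a contradiction. Therefore $\|\bDelta_{\bdelta}\|_2\le1$.

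The substantive content of this argument lives entirely upstream, in Lemmas \ref{lemma:beta2} and \ref{lemma:beta3}; what remains here is bookkeeping, so the steps requiring care are: (i) the cancellation identifying $\mathcal F$ with the centered penalized objective; (ii) the convexity of $\bar\ell_2(\bgammahat,\cdot)$ in its second argument, which rests on the weights $A_{1i}g^{-1}(\bS_{1i}^T\bgammahat)$ being nonnegative and on $\exp(-\cdot)$ being convex; and (iii) the scale-invariance of $\Ctil$, ensuring the unit-length rescaling $\tilde\bDelta$ remains in $\Ktil(\bar s_{\bdelta},k_0,1)$. I do not anticipate any genuine obstacle beyond these routine checks; the only point deserving slight attention is that Lemma \ref{lemma:beta3} delivers strict positivity $\mathcal F>0$ on $\Ktil(\bar s_{\bdelta},k_0,1)$ only once $N>N_1$, so the conclusion $\|\bDelta_{\bdelta}\|_2\le1$ is likewise only asserted for $N$ large enough (which is the relevant regime since we send $N\to\infty$).
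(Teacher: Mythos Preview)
Your proof is correct and follows essentially the same approach as the paper: both argue by contradiction, rescale $\bDelta_{\bdelta}$ to unit $\ell_2$-norm, note that the rescaled vector remains in $\Ktil(\bar s_{\bdelta},k_0,1)$ by positive homogeneity, and then contrast the convexity-based upper bound $\mathcal F(\tilde\bDelta)\le0$ with the strict lower bound $\mathcal F(\tilde\bDelta)>0$ from Lemma \ref{lemma:beta3}. Your write-up is in fact a bit more explicit than the paper's in justifying the convexity of $\mathcal F$ and the identity $\mathcal F(\bDelta)=\{\bar\ell_2(\bgammahat,\bdelta^*+\bDelta)+\lambda_{\bdelta}\|\bdelta^*+\bDelta\|_1\}-\{\bar\ell_2(\bgammahat,\bdelta^*)+\lambda_{\bdelta}\|\bdelta^*\|_1\}$, which is good.
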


\begin{proof}[Proof of Lemma \ref{lemma:beta4}.]
We prove by contradiction. Suppose that $\|\bDelta_{\bdelta}\|_2>1$. Let $\bDeltatil=\bDelta_{\bdelta}/\|\bDelta_{\bdelta}\|_2$. Then $\|\bDeltatil\|_2=1$. When $\bDelta_{\bdelta}\in\Ctil(\bar s_{\bdelta},k_0)$, we have 
$$\|\bDeltatil\|_1=\|\bDelta_{\bdelta}\|_1/\|\bDelta_{\bdelta}\|_2\leq k_0\sqrt{\bar s_{\bdelta}}=k_0\sqrt{\bar s_{\bdelta}}\|\bDeltatil\|_2.$$
That is, $\bDeltatil\in\Ctil(\bar s_{\bdelta},k_0)$, and hence $\bDeltatil\in\Ktil(\bar s_{\bdelta},k_0,1)$. Let $u=\|\bDelta_{\bdelta}\|_2^{-1}$. Then $0<u<1$. Note that $\mathcal F(\cdot)$ is a convex function. Hence, when $N>N_1$,
\begin{align*}
\mathcal F(\bDeltatil)=\mathcal F(u\bDelta_{\bdelta}+(1-u)\bzero)\leq u\mathcal F(\bDelta_{\bdelta})+(1-u)\mathcal F(\bzero)\overset{(i)}{=}u\mathcal F(\bDelta_{\bdelta})\overset{(ii)}\leq0,
\end{align*}
where (i) holds since $\mathcal F(\bzero)=0$ by construction of $\mathcal F(\cdot)$; (ii) holds by the construction of $\bdeltahat$. However, by Lemma \ref{lemma:beta3}, $\mathcal F(\bDeltatil)>0$. Thus, we conclude that $\|\bDelta_{\bdelta}\|_2\leq1$.
\end{proof}

We are now ready to present the proofs for the asymptotic results of the nuisance estimates under potential model misspecification.

\begin{proof}[Proof of Theorem \ref{thm:nuisance}.]
We prove the consistency rates of the nuisance parameter estimators when the models are possibly misspecified.

(a) By Lemmas \ref{lemma:RSC} and \ref{lemma:gradient}, as well as Corollary 9.20 of \cite{wainwright2019high}, we have
$$\|\bgammahat-\bgamma^*\|_2=O_p\left(\sqrt\frac{s_{\bgamma}\log d_1}{M}\right),\quad\|\bgammahat-\bgamma^*\|_1=O_p\left(s_{\bgamma}\sqrt\frac{\log d_1}{M}\right).$$

\vspace{0.5em}

(b) By Lemma \ref{lemma:beta3}, $\P_{\S_{\bgamma}\cup\S_{\bdelta}}(\mathcal A_1\cap\mathcal A_2)\geq1-t-2\exp(-t)$, where $\mathcal A_1$ and $\mathcal A_2$ are defined in \eqref{def:A1} and \eqref{def:A2}, respectively. By Lemma \ref{lemma:beta2}, conditional on $\mathcal A_1\cap\mathcal A_2$, we have $\bDelta_{\bdelta}=\bdeltahat-\bdelta^*\in\Ctil(\bar s_{\bdelta},k_0)=\{\bDelta\in\R^d:\|\bDelta\|_1\leq k_0\sqrt{\bar s_{\bdelta}}\|\bDelta\|_2\}$, where $\bar s_{\bdelta}=s_{\bgamma}\log d_1/\log d+s_{\bdelta}$ and $k_0>0$ is a constant. Additionally, by Lemma \ref{lemma:beta4}, we also have $\|\bDelta_{\bdelta}\|_2\leq1$. By (a), we have $\P_{\S_{\bgamma}}(\{\|\bgammahat-\bgamma^*\|_2\leq1\})=1-o(1)$. Then, by \eqref{eq:RSC2} in Lemma \ref{lemma:RSC}, $\P_{\S_{\bgamma}\cup\S_{\bdelta}}(\mathcal A_3)\geq1-o(1)-c_1\exp(-c_2M)=1-o(1)$, where $\mathcal A_3$ is defined in \eqref{def:A3}. Now, also condition on $\mathcal A_3$. Then we have, for large enough $N$,
\begin{align*}
&\left(2\lambda_{\bdelta}\sqrt{s_{\bdelta}}+c\sqrt\frac{s_{\bgamma}\log d_1}{N}\right)\|\bDelta_{\bdelta}\|_2\overset{(i)}{\geq}\delta\bar\ell_2(\bgammahat,\bdelta^*,\bDelta_{\bdelta})+\frac{\lambda_{\bdelta}}{8}\|\bDelta_{\bdelta}\|_1\\
&\qquad\overset{(ii)}{\geq}\kappa_1\|\bDelta_{\bdelta}\|_2^2-\kappa_2\frac{\log d}{M}\|\bDelta_{\bdelta}\|_1^2+\frac{\lambda_{\bdelta}}{8}\|\bDelta_{\bdelta}\|_1\\
&\qquad\overset{(iii)}{\geq}\left(\kappa_1-\kappa_2k_0^2\frac{\bar s_{\bdelta}\log d}{M}\right)\|\bDelta_{\bdelta}\|_2^2\overset{(iv)}{\geq}\frac{\kappa_1}{2}\|\bDelta_{\bdelta}\|_2^2,
\end{align*}
where (i) holds by Lemma \ref{lemma:beta2}; (ii) holds by the construction of $\mathcal A_3$ and also that $\|\bDelta_{\bdelta}\|_2\leq1$; (iii) holds since $\bDelta_{\bdelta}\in\Ctil(\bar s_{\bdelta},k_0)$ and $\lambda_{\bdelta}\|\bDelta_{\bdelta}\|_1/8\geq0$; (iv) holds for large enough $N$, since $\bar s_{\bdelta}\log d/M=s_{\bgamma}\log d_1/M+s_{\bdelta}\log d/M=o(1)$. Therefore, conditional on $\mathcal A_1\cap\mathcal A_2\cap\mathcal A_3$,
$$\|\bDelta_{\bdelta}\|_2\leq\frac{4\lambda_{\bdelta}\sqrt{s_{\bdelta}}}{\kappa_1}+\frac{2c}{\kappa_1}\sqrt\frac{s_{\bgamma}\log d_1}{N}=O\left(\sqrt\frac{s_{\bgamma}\log d_1+s_{\bdelta}\log d}{N}\right),$$
with some $\lambda_{\bdelta}=2\sigma_{\bdelta}\sqrt{(t+\log d)/M}\asymp\sqrt{\log d/N}$. Since $\bDelta_{\bdelta}\in\Ctil(\bar s_{\bdelta},k_0)$, it follows that
$$\|\bDelta_{\bdelta}\|_1\leq k_0\sqrt{\bar s_{\bdelta}}\|\bDelta_{\bdelta}\|_2=O\left(s_{\bgamma}\sqrt\frac{(\log d_1)^2}{N\log d}+s_{\bdelta}\sqrt\frac{\log d}{N}\right).$$
Therefore, we conclude that
\begin{align*}
\|\bdeltahat-\bdelta^*\|_2=&O_p\left(\sqrt\frac{s_{\bgamma}\log d_1+s_{\bdelta}\log d}{N}\right),\\
\|\bdeltahat-\bdelta^*\|_1=&O_p\left(s_{\bgamma}\sqrt\frac{(\log d_1)^2}{N\log d}+s_{\bdelta}\sqrt\frac{\log d}{N}\right).
\end{align*}

\vspace{0.5em}

(c) For any $t>0$, let $\lambda_{\balpha}=2\sigma_{\balpha}\{2\sqrt{(t+\log d)/M}+(t+\log d)/M\}$. Choose some $\lambda_{\bgamma}\asymp\sqrt{\log d_1/N}$ and $\lambda_{\bdelta}\asymp\sqrt{\log d/N}$. Define
\begin{align}
\mathcal A_4:=&\{\|\bnabla_{\balpha}\bar\ell_3(\bgamma^*,\bdelta^*,\balpha^*)\|_\infty \leq\lambda_{\balpha}/2\},\label{def:A4}\\
\mathcal A_5:=&\left\{\delta\bar\ell_3(\bgammahat,\bdeltahat,\balpha^*,\bDelta)\geq\kappa_1\|\bDelta\|_2^2-\kappa_2\frac{\log d}{M}\|\bDelta\|_1^2,\;\;\forall\bDelta\in\R^d\right\}.\label{def:A5}
\end{align}
By Lemma \ref{lemma:gradient}, we have $\P_{\S_{\balpha}}(\mathcal A_4)\geq1-2\exp(-t)$. Let $\bDelta=\balphahat-\balpha^*$.
Similar to the proof of Lemma \ref{lemma:beta2} that leads to \eqref{eq:beta2}, we have on the event $\mathcal A_4$,
\begin{align}\label{eq:gamma1}
2\delta\bar\ell_3(\bgammahat,\bdeltahat,\balpha^*,\bDelta)+\lambda_{\balpha}\|\bDelta\|_1\leq4\lambda_{\balpha}\|\bDelta_{S_{\balpha}}\|_1+2\lvert R_2\rvert,
\end{align}
where
\begin{align*}
R_2&=\left\{\bnabla_{\balpha}\bar\ell_3(\bgammahat,\bdeltahat, \balpha^*)-\bnabla_{\balpha}\bar\ell_3(\bgamma^*,\bdelta^*,\balpha^*)\right\}^\top\bDelta\\
&=2M^{-1}\sum_{i\in\mathcal I_{\balpha}}A_{1i}A_{2i}\left\{\frac{\exp(-\bar\bS_{2i}^\top\bdeltahat)}{g(\bS_{1i}^\top\bgammahat)}-\frac{\exp(-\bar\bS_{2i}^\top\bdelta^*)}{g(\bS_{1i}^\top\bgamma^*)}\right\}\varepsilon_i\bar\bS_{2i}^\top\bDelta.
\end{align*}
By the fact that $2ab\leq a^2/2+2b^2$, 
\begin{align*}	
\lvert R_2\rvert&\leq \frac{1}{2} \delta\bar\ell_3(\bgammahat,\bdeltahat,\balpha^*,\bDelta)+2R_3,
\end{align*}
where 
\begin{align*}
R_3=M^{-1}\sum_{i\in\mathcal I_{\balpha}}\left(\frac{\exp(-\bar\bS_{2i}^\top\bdeltahat)}{g(\bS_{1i}^\top\bgammahat)}-\frac{\exp(-\bar\bS_{2i}^\top\bdelta^*)}{g(\bS_{1i}^\top\bgamma^*)}\right)^2\frac{g(\bS_{1i}^\top\bgammahat)}{\exp(-\bar\bS_{2i}^\top\bdeltahat)}\varepsilon_i^2.
\end{align*}
By (a) and (b) of Theorem \ref{thm:nuisance}, we have $\P_{\S_{\bgamma}\cup\S_{\bdelta}}(\{\|\bgammahat-\bgamma^*\|_2\leq1,\|\bdeltahat-\bdelta^*\|_2\leq1\})=1-o(1)$.
Note that
\begin{align*}
\E_{\S_{\balpha}}[R_3]
&=\E\left[\left(\frac{\exp(-\bar\bS_2^\top\bdeltahat)}{g(\bS_1^\top\bgammahat)}-\frac{\exp(-\bar\bS_2^\top\bdelta^*)}{g(\bS_1^\top\bgamma^*)}\right)^2\frac{g(\bS_1^\top\bgammahat)}{\exp(-\bar\bS_2^\top\bdeltahat)}\varepsilon^2\right]\\
&\leq\left\|\frac{\exp(-\bar\bS_2^\top\bdeltahat)}{g(\bS_1^\top\bgammahat)}-\frac{\exp(-\bar\bS_2^\top\bdelta^*)}{g(\bS_1^\top\bgamma^*)}\right\|_{\P,6}^2\left\|\frac{g(\bS_{1i}^\top\bgammahat)}{\exp(-\bar\bS_{2i}^\top\bdeltahat)}\right\|_{\P,3}\|\varepsilon\|_{\P,6}^2\\
&\overset{(i)}{=}O_p\left(\frac{s_{\bgamma}\log d_1+s_{\bdelta}\log d}{N}\right).
\end{align*}
where (i) holds by Lemma \ref{lemma:psi2norm}, as well as the fact that
\begin{align}
&\left\|\frac{\exp(-\bar\bS_2^\top\bdeltahat)}{g(\bS_1^\top\bgammahat)}-\frac{\exp(-\bar\bS_2^\top\bdelta^*)}{g(\bS_1^\top\bgamma^*)}\right\|_{\P,6}\nonumber\\
&\qquad\leq\left\|g^{-1}(\bS_1^\top\bgamma^*)\left\{\exp(-\bar\bS_2^\top\bdeltahat)-\exp(-\bar\bS_2^\top\bdelta^*)\right\}\right\|_{\P,6}\nonumber\\
&\qquad\qquad+\left\|\exp(-\bar\bS_2^\top\bdeltahat)\left\{g^{-1}(-\bS_1^\top\bgammahat)-g^{-1}(-\bS_1^\top\bgamma^*)\right\}\right\|_{\P,6}\nonumber\\
&\qquad=O_p\left(\sqrt\frac{s_{\bgamma}\log d_1+s_{\bdelta}\log d}{N}\right)\label{rate:alpha+beta}
\end{align}
using Minkowski inequality, (generalized) H\"older's inequality, and Lemmas \ref{lemma:preliminary_alpha} and \ref{lemma:preliminary_beta}.
Hence,
\begin{align}\label{eq:gamma2}
R_3=O_p\left(\frac{s_{\bgamma}\log d_1+s_{\bdelta}\log d}{N}\right).
\end{align}
Because of the inequality \eqref{eq:gamma1}, we have
\begin{align*}
\delta\bar\ell_3(\bgammahat,\bdeltahat,\balpha^*,\bDelta)+\lambda_{\balpha}\|\bDelta\|_1\leq4\lambda_{\balpha}\|\bDelta_{S_{\balpha}}\|_1+4R_3.
\end{align*}
Note that $\|\bDelta_{S_{\balpha}}\|_1\leq\sqrt{s_{\balpha}}\|\bDelta_{S_{\balpha}}\|_2\leq\sqrt{s_{\balpha}}\|\bDelta\|_2$. Hence,
\begin{align*}
\delta\bar\ell_3(\bgammahat,\bdeltahat,\balpha^*,\bDelta)+\lambda_{\balpha}\|\bDelta\|_1\leq4\lambda_{\balpha}\sqrt{s_{\balpha}}\|\bDelta\|_2+4R_3.
\end{align*}
Recall the equation \eqref{deltaell_gamma}. We have $\delta\bar\ell_3(\bgammahat,\bdeltahat,\balpha^*,\bDelta)\geq0$. Then
\begin{align}\label{eq:gamma3}
\|\bDelta\|_1\leq4\sqrt{s_{\balpha}}\|\bDelta\|_2+\frac{4R_3}{\lambda_{\balpha}}.
\end{align}
Then, by Lemma \ref{lemma:RSC}, $\P_{\S_{\bgamma}\cup\S_{\bdelta}\cup\S_{\balpha}}(\mathcal A_5)\geq1-o(1)-c_1\exp(-c_2M)=1-o(1)$, where $\mathcal A_5$ is defined in \eqref{def:A5}. 
Now, conditional on $\mathcal A_4\cap\mathcal A_5$, for large enough $N$,
\begin{align*} 
4\lambda_{\balpha}\sqrt{s_{\balpha}}\|\bDelta\|_2+4R_3
&\overset{(i)}{\geq}\delta\bar\ell_3(\bgammahat,\bdeltahat,\balpha^*,\bDelta)
\overset{(ii)}{\geq} \kappa_1\|\bDelta\|_2^2-\kappa_2\frac{\log d}{M}\|\bDelta\|_1^2\\
&\overset{(iii)}{\geq} \kappa_1\|\bDelta\|_2^2-2\kappa_2\frac{\log d}{M}\left(16s_{\balpha}\|\bDelta\|_2^2+\frac{16R_3^2}{\lambda_{\balpha}^2}\right)\\
&\overset{(iv)}{\geq} \frac{\kappa_1}{2}\|\bDelta\|_2^2-32\kappa_2R_3^2\frac{ \log d }{M\lambda_{\balpha}^2},
\end{align*}
where (i) holds by $\|\bDelta\|_1 \geq0$; (ii) holds by the construction of $\mathcal A_5$; (iii) holds by \eqref{eq:gamma3} and the fact that $(a+b)^2\leq2a^2+2b^2$; (iv) holds for large enough $N$, since $s_{\balpha}\log d/M =o(1)$. Hence, on the event $\mathcal A_4\cap\mathcal A_5$, for large enough $N$,
\begin{align*} 
\kappa_1\|\bDelta\|_2^2-8\lambda_{\balpha}\sqrt{s_{\balpha}}\|\bDelta\|_2-64\kappa_2R_3^2\frac{ \log d }{M\lambda_{\balpha}^2}-8R_3\leq0.
\end{align*}
Choose some $\lambda_{\balpha}=2\sigma_{\balpha}\{2\sqrt{(t+\log d)/M}+(t+\log d)/M\}\asymp\sqrt{\log d/N}$. It follows from Lemma \ref{lemma:sol} that
\begin{align*} 
\|\bDelta\|_2
&\leq \frac{8\lambda_{\balpha}\sqrt{s_{\balpha}}}{\kappa_1}+\sqrt{64R_3^2\frac{ \kappa_2\log d }{\kappa_1 M\lambda_{\balpha}^2}+\frac{8R_3}{\kappa_1}}\\
&\overset{(i)}{=}O_p\left(\sqrt{\frac{s_{\balpha}\log d}{N}}+\frac{s_{\bgamma}\log d_1+s_{\bdelta}\log d}{N}+\sqrt\frac{s_{\bgamma}\log d_1+s_{\bdelta}\log d}{N} \right) \\
&=O_p\left(\sqrt\frac{s_{\bgamma}\log d_1+s_{\bdelta}\log d+s_{\balpha}\log d}{N}\right),
\end{align*}
where (i) holds by $\lambda_{\balpha}\sqrt{s_{\balpha}}\asymp\sqrt{s_{\balpha}\log d/N}$ and \eqref{eq:gamma2}.
Recall the inequality \eqref{eq:gamma3}. We have 
\begin{align*}
\|\bDelta\|_1=O_p\left(s_{\bgamma}\sqrt\frac{(\log d_1)^2}{N\log d}+s_{\bdelta}\sqrt\frac{\log d}{N}+
s_{\balpha}\sqrt\frac{\log d}{N}\right).
\end{align*}

\vspace{0.5em}

(d) For any $t>0$, let $\lambda_{\bbeta}=2\sigma_{\bbeta}\{2\sqrt{(t+\log d_1)/M}+(t+\log d_1)/M\}$. Choose some $\lambda_{\bgamma}\asymp\sqrt{\log d_1/N}$, $\lambda_{\bdelta}\asymp\sqrt{\log d/N}$, and $\lambda_{\balpha}\asymp\sqrt{\log d/N}$. Define
\begin{align}
\mathcal A_6:=&\left\{\left\|\bnabla_{\bbeta}\bar\ell_4(\bgamma^*,\bdelta^*,\balpha^*,\bbeta^*)\right\|_\infty \leq\lambda_{\bbeta}/2\right\},\label{def:A6}\\
\mathcal A_7:=&\left\{\delta\bar\ell_4(\bgammahat,\bdeltahat,\balphahat,\bbeta^*,\bDelta)\geq \kappa_1\|\bDelta\|_2^2-\kappa_2\frac{\log d_1}{M}\|\bDelta\|_1^2,\;\;\forall\bDelta\in\R^{d_1}\right\}.\label{def:A7}
\end{align}

By Lemma \ref{lemma:gradient}, we have $\P_{\S_{\balpha}}(\mathcal A_6)\geq1-2\exp(-t)$. Let $\bDelta=\bbetahat-\bbeta^*$.
Similar to the proof of Lemma \ref{lemma:beta2} for obtaining \eqref{eq:beta2}, we have, on the event $\mathcal A_6$,
\begin{align}\label{eq:delta1}
2\delta\bar\ell_4(\bgammahat,\bdeltahat,\balphahat,\bbeta^*,\bDelta)+\lambda_{\bbeta}\|\bDelta\|_1\leq4\lambda_{\bbeta}\|\bDelta_{S_{\bbeta}}\|_1+2\lvert R_4\rvert.
\end{align}
where
\begin{align*}
R_4&=\left\{\bnabla_{\bbeta}\bar\ell_4(\bgammahat,\bdeltahat, \balphahat,\bbeta^*)-\bnabla_{\bbeta}\bar\ell_4(\bgamma^*,\bdelta^*,\balpha^*,\bbeta^*)\right\}^\top\bDelta\\
&=2M^{-1}\sum_{i\in\mathcal I_{\bbeta}}A_{1i}\biggl\{\exp(-\bS_{1i}^\top\bgammahat)\left(\bar\bS_{2i}^\top\balphahat-\bS_{1i}^\top\bbeta^*+\frac{A_{2i}(Y_i-\bar\bS_{2i}^\top\balphahat)}{g(\bar\bS_{2i}^\top\bdeltahat)}\right)\\
&\qquad-\exp(-\bS_{1i}^\top\bgamma^*)\left(\bar\bS_{2i}^\top\balpha^*-\bS_{1i}^\top\bbeta^*+\frac{A_{2i}(Y_i-\bar\bS_{2i}^\top\balpha^*)}{g(\bar\bS_{2i}^\top\bdelta^*)}\right)\biggl\}\bS_{1i}^\top\bDelta.
\end{align*}
By the fact that $2ab\leq a^2/2+2b^2$, 
\begin{align*}
\lvert R_4\rvert&\leq \frac{1}{2} \delta\bar\ell_4(\bgammahat,\bdeltahat,\balphahat,\bbeta^*,\bDelta)+2R_5,
\end{align*}
where 
\begin{align*}
R_5&=M^{-1}\sum_{i\in\mathcal I_{\bbeta}}\frac{1}{\exp(-\bS_{1i}^\top\bgammahat)}\biggl\{\exp(-\bS_{1i}^\top\bgammahat)\left(\bar\bS_{2i}^\top\balphahat-\bS_{1i}^\top\bbeta^*+\frac{A_{2i}(Y_i-\bar\bS_{2i}^\top\balphahat)}{g(\bar\bS_{2i}^\top\bdeltahat)}\right)\\
&\qquad-\exp(-\bS_{1i}^\top\bgamma^*)\left(\bar\bS_{2i}^\top\balpha^*-\bS_{1i}^\top\bbeta^*+\frac{A_{2i}(Y_i-\bar\bS_{2i}^\top\balpha^*)}{g(\bar\bS_{2i}^\top\bdelta^*)}\right)\biggl\}^2.
\end{align*}
Note that
\begin{align*}
\E_{\S_{\bbeta}}[R_5]
=\E\left[\frac{1}{\exp(-\bS_1^\top\bgammahat)}(Q_1+Q_2+Q_3)^2 \right],
\end{align*}
where 
\begin{align*}
Q_1&=\exp(-\bS_1^\top\bgammahat)\left( 1-\frac{A_2}{g(\bar\bS_2^\top\bdeltahat)}\right)\bar\bS_2^\top(\balphahat-\balpha^*),\\
Q_2&=\left\{\exp(-\bS_1^\top\bgammahat)-\exp(-\bS_1^\top\bgamma^*)\right\}\zeta,\\
Q_3&=B\left\{\frac{\exp(-\bar\bS_2^\top\bgammahat)}{g(\bS_1^\top\bdeltahat)}-\frac{\exp(-\bar\bS_2^\top\bgamma^*)}{g(\bS_1^\top\bdelta^*)}\right\}\varepsilon.
\end{align*}
By (a) and (b) of Theorem \ref{thm:nuisance}, we have $\P_{\S_{\bgamma}\cup\S_{\bdelta}}(\{\|\bgammahat-\bgamma^*\|_2\leq1,\|\bdeltahat-\bdelta^*\|_2\leq1\})=1-o(1)$. Then by Hölder's inequality,
\begin{align*}
\E_{\S_{\bbeta}}[R_5]
&\leq\left\|\frac{1}{\exp(-\bS_1^\top\bgammahat)}\right\|_{\P,2}\left(\|Q_1\|_{\P,4}+\|Q_2\|_{\P,4}+\|Q_3\|_{\P,4}\right)^2
\end{align*}
and
\begin{align*}
\|Q_1\|_{\P,4}&\leq\left\|\exp(-\bS_1^\top\bgammahat)\right\|_{\P,12}\left\|\left( 1-\frac{A_2}{g(\bar\bS_2^\top\bdeltahat)}\right)\right\|_{\P,12}\left\|\bar\bS_2^\top(\balphahat-\balpha^*)\right\|_{\P,12}\\
&\overset{(i)}{=}O_p\left(\sqrt\frac{s_{\bgamma}\log d_1+s_{\bdelta}\log d+s_{\balpha}\log d}{N}\right),\\
\|Q_2\|_{\P,4}&\leq\left\|\left\{\exp(-\bS_1^\top\bgammahat)-\exp(-\bS_1^\top\bgamma^*)\right\}\right\|_{\P,8}\|\zeta\|_{\P,8}\overset{(ii)}{=}O_p\left(\sqrt\frac{s_{\bgamma}\log d_1}{N}\right),\\
\|Q_3\|_{\P,4}&\leq\left\|\frac{\exp(-\bar\bS_2^\top\bgammahat)}{g(\bS_1^\top\bdeltahat)}-\frac{\exp(-\bar\bS_2^\top\bgamma^*)}{g(\bS_1^\top\bdelta^*)}\right\|_{\P,8}\|\varepsilon\|_{\P,8}\\
&\overset{(iii)}{=}O_p\left(\sqrt\frac{s_{\bgamma}\log d_1+s_{\bdelta}\log d}{N}\right),
\end{align*}
where (i) and (ii) hold by Lemmas \ref{lemma:preliminary_alpha}, \ref{lemma:preliminary_beta}, \ref{lemma:preliminary_gamma} and Lemma \ref{lemma:psi2norm}; (iii) holds analogously as in \eqref{rate:alpha+beta}.
Hence,
\begin{align}
R_5=O_p\left(\frac{s_{\bgamma}\log d_1+s_{\bdelta}\log d+s_{\balpha}\log d}{N}\right).
\end{align}
Recall the inequality \eqref{eq:delta1}. We have
\begin{align*}
\delta\bar\ell_4(\bgammahat,\bdeltahat,\balphahat,\bbeta^*,\bDelta)+\lambda_{\bbeta}\|\bDelta\|_1\leq4\lambda_{\bbeta}\|\bDelta_{S_{\bbeta}}\|_1+4R_5.
\end{align*}
Note that $\|\bDelta_{S_{\bbeta}}\|_1\leq\sqrt{s_{\bbeta}}\|\bDelta_{S_{\bbeta}}\|_2\leq\sqrt{s_{\bbeta}}\|\bDelta\|_2$. Hence,
\begin{align*}
\delta\bar\ell_4(\bgammahat,\bdeltahat,\balphahat,\bbeta^*,\bDelta)+\lambda_{\bbeta}\|\bDelta\|_1\leq4\lambda_{\bbeta}\sqrt{s_{\bbeta}}\|\bDelta_{S_{\bbeta}}\|_1+4\lvert R_5\rvert.
\end{align*}
Recall the equation \eqref{deltaell_delta}. We have $\delta\bar\ell_4(\bgammahat,\bdeltahat,\balphahat,\bbeta^*,\bDelta)\geq0$. Then
\begin{align}\label{eq:delta2}
\|\bDelta\|_1\leq4\sqrt{s_{\bbeta}}\|\bDelta\|_2+\frac{4R_5}{\lambda_{\bbeta}}.
\end{align}
By Lemma \ref{lemma:RSC}, $\P_{\S_{\bgamma}\cup\S_{\bdelta}\cup\S_{\bbeta}}(\mathcal A_7)\geq1-o(1)-c_1\exp(-c_2M)=1-o(1)$, where $\mathcal A_7$ is defined in \eqref{def:A7}.
The remaining parts of the proof can be shown analogously as (c) of Theorem \ref{thm:nuisance}.
Now, conditional on $\mathcal A_6\cap\mathcal A_7$, for large enough $N$,
\begin{align*} 
\|\bDelta\|_2
&\leq \frac{8\lambda_{\bbeta}\sqrt{s_{\bbeta}}}{\kappa_1}+\sqrt{64R_3^2\frac{ \kappa_2\log d_1 }{\kappa_1 M\lambda_{\bbeta}^2}+\frac{8R_3}{\kappa_1}}\\
&=O_p\left(\sqrt\frac{s_{\bgamma}\log d_1+s_{\bdelta}\log d+s_{\balpha}\log d+s_{\bbeta}\log d_1}{N}\right).
\end{align*}
with some $\lambda_{\bbeta}=2\sigma_{\bbeta}\{2\sqrt{(t+\log d_1)/M}+(t+\log d_1)/M\}\asymp\sqrt{\log d_1/M}$. Recall the inequality \eqref{eq:delta2}. We have 
\begin{align*}
\|\bDelta\|_1=O_p\left(s_{\bgamma}\sqrt\frac{\log d_1}{N}+s_{\bdelta}\sqrt\frac{(\log d)^2}{N\log d_1}+
s_{\balpha}\sqrt\frac{(\log d)^2}{N\log d_1}+s_{\bbeta}\sqrt\frac{\log d_1}{N}\right).
\end{align*}
\end{proof}

\subsection{Proofs of the results in Section \ref{sec:asymp_nuisance'}}\label{sec:proof_asymp_nuisance'}

Assuming correctly specified models, we control the gradients in Lemma \ref{lemma:gradient'} below (approximately) by the usual rate \(O_p(\sqrt{\log d/N})\) or \(O_p(\sqrt{\log d_1/N})\). Unlike Lemma \ref{lemma:gradient}, we now control the gradients involving the estimated nuisance parameters. For instance, in part (a) of Lemma \ref{lemma:gradient'} below, we can control \(\|\bnabla_{\bdelta}\bar\ell_2(\bgammahat,\bdelta^*)\|_\infty\), and the estimation error of \(\bgammahat\) is negligible as long as \(s_{\bgamma} = O_p(N/(\log d_1 \log d))\).

\begin{lemma}\label{lemma:gradient'}

\textbf{(a)} Let $\rho(\cdot)=\rho^*(\cdot)$. Let the assumptions in part (a) of Theorem \ref{thm:nuisance} hold. Then, as $N,d_1,d_2\to\infty$,
$$\left\|\bnabla_{\bdelta}\bar\ell_2(\bgammahat,\bdelta^*)\right\|_\infty=O_p\left(\left(1+\sqrt\frac{s_{\bgamma}\log d_1\log d}{N}\right)\sqrt\frac{\log d}{N}\right).$$

\textbf{(b)} Let $\nu(\cdot)=\nu^*(\cdot)$. Let the assumptions in part (b) of Theorem \ref{thm:nuisance} hold. Then, as $N,d_1,d_2\to\infty$,
$$\left\|\bnabla_{\balpha}\bar\ell_3(\bgammahat,\bdeltahat,\balpha^*)\right\|_\infty=O_p\left(\left(1+\sqrt\frac{(s_{\bgamma}\log d_1+s_{\bdelta}\log d)\log d}{N}\right)\sqrt\frac{\log d}{N}\right).$$

\textbf{(c)} Let $\nu(\cdot)=\nu^*(\cdot)$ and $\mu(\cdot)=\mu^*(\cdot)$. Let the assumptions in part (c) of Theorem \ref{thm:nuisance} hold. Then, as $N,d_1,d_2\to\infty$,
$$\left\|\bnabla_{\bbeta}\bar\ell_4(\bgammahat,\bdeltahat,\balpha^*,\bbeta^*)\right\|_\infty=O_p\left(\left(1+\sqrt\frac{(s_{\bgamma}\log d_1+s_{\bdelta}\log d)\log d_1}{N}\right)\sqrt\frac{\log d_1}{N}\right).$$

\textbf{(d)} Let $\rho(\cdot)=\rho^*(\cdot)$ and $\mu(\cdot)=\mu^*(\cdot)$. Let the assumptions in part (c) of Theorem \ref{thm:nuisance} hold. Then, as $N,d_1,d_2\to\infty$,
\begin{align*}
&\left\|\bnabla_{\bbeta}\bar\ell_4(\bgammahat,\bdelta^*,\balphahat,\bbeta^*)\right\|_\infty\\
&\qquad=O_p\left(\left(1+\sqrt\frac{(s_{\bgamma}\log d_1+s_{\bdelta}\log d+s_{\balpha}\log d)\log d_1}{N}\right)\sqrt\frac{\log d_1}{N}\right).
\end{align*}

\textbf{(e)} Let $\rho(\cdot)=\rho^*(\cdot)$, $\nu(\cdot)=\nu^*(\cdot)$, and $\mu(\cdot)=\mu^*(\cdot)$. Let the assumptions in part (c) of Theorem \ref{thm:nuisance} hold. Then, as $N,d_1,d_2\to\infty$,
\begin{align*}
\left\|\bnabla_{\bbeta}\bar\ell_4(\bgammahat,\bdelta^*,\balpha^*,\bbeta^*)\right\|_\infty=O_p\left(\left(1+\sqrt\frac{s_{\bgamma}(\log d_1)^2}{N}\right)\sqrt\frac{\log d_1}{N}\right).
\end{align*}
\end{lemma}

\begin{proof}[Proof of Lemma \ref{lemma:gradient'}.]
By Lemma \ref{lemma:gradient}, we have
\begin{align}
\left\|\bnabla_{\bdelta}\bar\ell_2(\bgamma^*,\bdelta^*)\right\|_\infty=&O_p\left(\sqrt\frac{\log d}{N}\right),\label{gradient_beta}\\
\left\|\bnabla_{\balpha}\bar\ell_3(\bgamma^*,\bdelta^*,\balpha^*)\right\|_\infty=&O_p\left(\sqrt\frac{\log d}{N}\right),\label{gradient_gamma}\\
\left\|\bnabla_{\bbeta}\bar\ell_4(\bgamma^*,\bdelta^*,\balpha^*,\bbeta^*)\right\|_\infty=&O_p\left(\sqrt\frac{\log d_1}{N}\right)\label{gradient_delta}.
\end{align}

(a) Let $\rho(\cdot)=\rho^*(\cdot)$. Note that
\begin{align*}
\bnabla_{\bdelta}\bar\ell_2(\bgammahat,\bdelta^*)-\bnabla_{\bdelta}\bar\ell_2(\bgamma^*,\bdelta^*)=M^{-1}\sum_{i\in\mathcal I_{\bdelta}}\bW_{\bdelta,i},
\end{align*}
where
\begin{align*}
\bW_{\bdelta,i}:=A_{1i}\{g^{-1}(\bS_{1i}^\top\bgammahat)-g^{-1}(\bS_{1i}^\top\bgamma^*)\}\{1-A_{2i}g^{-1}(\bar\bS_{2i}^\top\bdelta^*)\}\bar\bS_{2i}.
\end{align*}
Let $\bW_{\bdelta}$ be an indepenent copy of $\bW_{\bdelta,i}$. Then, by the tower rule,
$$\E(\bW_{\bdelta})=\bzero\in\R^d.$$
By Lemma \ref{Nemirovski}, we have
\begin{align*}
&\E_{\S_{\bdelta}}\left(\left\|M^{-1}\sum_{i\in\mathcal I_{\bdelta}}\bW_{\bdelta,i}\right\|_\infty^2\right)\leq M^{-1}(2e\log d-e)\E(\|\bW_{\bdelta}\|_\infty^2)\\
&\qquad\overset{(i)}{\leq}(1+c_0^{-1})M^{-1}(2e\log d-e)\E\left\{\left\lvert g^{-1}(\bS_1^\top\bgammahat)-g^{-1}(\bS_1^\top\bgamma^*)\right\rvert^2\|\bar\bS_2\|_\infty^2\right\}\\
&\qquad\overset{(ii)}{\leq}(1+c_0^{-1})M^{-1}(2e\log d-e)\left\|g^{-1}(\bS_1^\top\bgammahat)-g^{-1}(\bS_1^\top\bgamma^*)\right\|_{\P,4}^2\left\|\|\bar\bS_2\|_\infty\right\|_{\P,4}^2\\
&\qquad\overset{(iii)}{=}O_p\left(\frac{s_{\bgamma}\log d_1(\log d)^2}{N^2}\right),
\end{align*}
where (i) holds since $\lvert A_1\{1-A_2g^{-1}(\bar\bS_2\bdelta^*)\}\rvert\leq1+c_0^{-1}$ almost surely under Assumption \ref{cond:basic}; (ii) holds by H\"older's inequality; (iii) holds by Lemma \ref{lemma:preliminary_alpha} and Lemma \ref{lemma:psi2norm}. By Lemma \ref{l1},
$$\left\|\bnabla_{\bdelta}\bar\ell_2(\bgammahat,\bdelta^*)-\bnabla_{\bdelta}\bar\ell_2(\bgamma^*,\bdelta^*)\right\|_\infty=\left\|M^{-1}\sum_{i\in\mathcal I_{\bdelta}}\bW_{\bdelta,i}\right\|_\infty=O_p\left(\frac{\sqrt{s_{\bgamma}\log d_1}\log d}{N}\right).$$
Together with \eqref{gradient_beta}, we have
$$\left\|\bnabla_{\bdelta}\bar\ell_2(\bgammahat,\bdelta^*)\right\|_\infty=O_p\left(\left(1+\sqrt\frac{s_{\bgamma}\log d_1\log d}{N}\right)\sqrt\frac{\log d}{N}\right).$$
The remaining parts of the proof can be shown analogously as in (a).

\vspace{0.5em}

(b) Let $\nu(\cdot)=\nu^*(\cdot)$. Note that
\begin{align*}
\bnabla_{\balpha}\bar\ell_3(\bgammahat,\bdeltahat,\balpha^*)-\bnabla_{\balpha}\bar\ell_3(\bgamma^*,\bdelta^*,\balpha^*)=M^{-1}\sum_{i\in\mathcal I_{\balpha}}\bW_{\balpha,i},
\end{align*}
where
\begin{align*}
\bW_{\balpha,i}:=-2A_{1i}A_{2i}\{g^{-1}(\bS_{1i}^\top\bgammahat)\exp(-\bar\bS_{2i}^\top\bdeltahat)-g^{-1}(\bS_{1i}^\top\bgamma^*)\exp(-\bar\bS_{2i}^\top\bdelta^*)\}\varepsilon_i\bar\bS_{2i}.
\end{align*}
Let $\bW_{\balpha}$ be an indepenent copy of $\bW_{\balpha,i}$. Then, by the tower rule,
$$\E(\bW_{\balpha})=\bzero\in\R^d.$$
By Lemma \ref{Nemirovski}, we have
\begin{align*}
&\E_{\S_{\balpha}}\left(\left\|M^{-1}\sum_{i\in\mathcal I_{\balpha}}\bW_{\balpha,i}\right\|_\infty^2\right)\leq M^{-1}(2e\log d-e)\E(\|\bW_{\balpha}\|_\infty^2)\\
&\qquad\overset{(i)}{\leq}2M^{-1}(2e\log d-e)\E\left\{\left\lvert\frac{\exp(-\bar\bS_2^\top\bdeltahat)}{g(\bS_1^\top\bgammahat)}-\frac{\exp(-\bar\bS_2^\top\bdelta^*)}{g(\bS_1^\top\bgamma^*)}\right\rvert^2\varepsilon^2\|\bar\bS_2\|_\infty^2\right\}\\
&\qquad\overset{(ii)}{\leq}2M^{-1}(2e\log d-e)\left\|\frac{\exp(-\bar\bS_2^\top\bdeltahat)}{g(\bS_1^\top\bgammahat)}-\frac{\exp(-\bar\bS_2^\top\bdelta^*)}{g(\bS_1^\top\bgamma^*)}\right\|_{\P,6}^2\left\|\varepsilon\|_{\P,6}^2\|\|\bar\bS_2\|_\infty\right\|_{\P,6}^2\\
&\qquad\overset{(iii)}{=}O_p\left(\frac{(s_{\bgamma}\log d_1+s_{\bdelta}\log d)(\log d)^2}{N^2}\right),
\end{align*}
where (i) holds since $\lvert A_1A_2\rvert\leq1$; (ii) holds by H\"older's inequality; (iii) holds by Lemma \ref{lemma:psi2norm} and \eqref{rate:alpha+beta}. By Lemma \ref{l1},
$$\left\|\bnabla_{\balpha}\bar\ell_3(\bgammahat,\bdeltahat,\balpha^*)-\bnabla_{\bdelta}\bar\ell_3(\bgamma^*,\bdelta^*,\balpha^*)\right\|_\infty=O_p\left(\frac{\sqrt{s_{\bgamma}\log d_1}\log d}{N}\right).$$
Together with \eqref{gradient_gamma}, we have
$$\left\|\bnabla_{\balpha}\bar\ell_3(\bgammahat,\bdeltahat,\balpha^*)\right\|_\infty=O_p\left(\left(1+\sqrt\frac{(s_{\bgamma}\log d_1+s_{\bdelta}\log d)\log d}{N}\right)\sqrt\frac{\log d}{N}\right).$$

\vspace{0.5em}

(c) Let $\nu(\cdot)=\nu^*(\cdot)$ and $\mu(\cdot)=\mu^*(\cdot)$. Note that
\begin{align*}
\bnabla_{\bbeta}\bar\ell_4(\bgammahat,\bdeltahat,\balpha^*,\bbeta^*)-\bnabla_{\bbeta}\bar\ell_4(\bgamma^*,\bdelta^*,\balpha^*,\bbeta^*)=M^{-1}\sum_{i\in\mathcal I_{\bbeta}}(\bW_{\bbeta,1,i}+\bW_{\bbeta,2,i}),
\end{align*}
where
\begin{align*}
\bW_{\bbeta,1,i}:=&-2A_{1i}\{\exp(-\bS_{1i}^\top\bgammahat)-\exp(-\bS_{1i}^\top\bgamma^*)\}\zeta_{i}\bS_{1i},\\
\bW_{\bbeta,2,i}:=&-2A_{1i}A_{2i}\{\exp(-\bS_{1i}^\top\bgammahat)g^{-1}(\bar\bS_{2i}^\top\bdeltahat)-\exp(-\bS_{1i}^\top\bgamma^*)g^{-1}(\bar\bS_{2i}^\top\bdelta^*)\}\varepsilon_i\bS_{1i}.
\end{align*}
Let $\bW_{\bbeta,1}$ and $\bW_{\bbeta,2}$ be indepenent copies of $\bW_{\bbeta,1,i}$ and $\bW_{\bbeta,1,i}$, respectively. Then, by the tower rule,
$$\E(\bW_{\bbeta,1})=\E(\bW_{\bbeta,2})=\bzero\in\R^{d_1}.$$
By Lemma \ref{Nemirovski}, we have
\begin{align*}
&\E_{\S_{\bbeta}}\left(\left\|M^{-1}\sum_{i\in\mathcal I_{\bbeta}}\bW_{\bbeta,1,i}\right\|_\infty^2\right)\leq M^{-1}(2e\log d_1-e)\E(\|\bW_{\bbeta,1}\|_\infty^2)\\
&\qquad\overset{(i)}{\leq}4M^{-1}(2e\log d_1-e)\E\left\{\left\lvert\exp(-\bS_1^\top\bgammahat)-\exp(\bS_1^\top\bgamma^*)\right\rvert^2\zeta^2\|\bS_1\|_\infty^2\right\}\\
&\qquad\overset{(ii)}{\leq}^{-1}(2e\log d_1-e)\left\|\exp(-\bS_1^\top\bgammahat)-\exp(\bS_1^\top\bgamma^*)\right\|_{\P,6}^2\left\|\zeta\|_{\P,6}^2\|\|\bS_1\|_\infty\right\|_{\P,6}^2\\
&\qquad\overset{(iii)}{=}O_p\left(\frac{s_{\bgamma}\log d_1(\log d_1)^2}{N^2}\right),
\end{align*}
where (i) holds since $|A_1|\leq1$; (ii) holds by H\"older's inequality; (iii) holds by Lemma \ref{lemma:preliminary_alpha} and Lemma \ref{lemma:psi2norm}. Similarly, by Lemma \ref{Nemirovski}, we also have
\begin{align*}
&\E_{\S_{\bbeta}}\left(\left\|M^{-1}\sum_{i\in\mathcal I_{\bbeta}}\bW_{\bbeta,2,i}\right\|_\infty^2\right)\leq M^{-1}(2e\log d_1-e)\E(\|\bW_{\bbeta,2}\|_\infty^2)\\
&\qquad\overset{(i)}{\leq}4M^{-1}(2e\log d_1-e)\E\left\{\left\lvert\frac{\exp(-\bS_1^\top\bgammahat)}{g(\bar\bS_2^\top\bdeltahat)}-\frac{\exp(-\bS_1^\top\bgamma^*)}{g(\bar\bS_2^\top\bdelta^*)}\right\rvert^2\varepsilon^2\|\bS_1\|_\infty^2\right\}\\
&\qquad\overset{(ii)}{\leq}4M^{-1}(2e\log d_1-e)\left\|\frac{\exp(-\bS_1^\top\bgammahat)}{g(\bar\bS_2^\top\bdeltahat)}-\frac{\exp(-\bS_1^\top\bgamma^*)}{g(\bar\bS_2^\top\bdelta^*)}\right\|_{\P,6}^2\left\|\varepsilon\|_{\P,6}^2\|\|\bS_1\|_\infty\right\|_{\P,6}^2\\
&\qquad\overset{(iii)}{=}O_p\left(\frac{(s_{\bgamma}\log d_1+s_{\bdelta}\log d)(\log d_1)^2}{N^2}\right),
\end{align*}
where (i) holds since $\lvert A_1A_2\rvert\leq1$; (ii) holds by H\"older's inequality; (iii) holds by Lemma \ref{lemma:psi2norm} and, analogously as in \eqref{rate:alpha+beta},
\begin{align*}
\left\|\frac{\exp(-\bS_1^\top\bgammahat)}{g(\bar\bS_2^\top\bdeltahat)}-\frac{\exp(-\bS_1^\top\bgamma^*)}{g(\bar\bS_2^\top\bdelta^*)}\right\|_{\P,6}=O_p\left(\sqrt\frac{s_{\bgamma}\log d_1+s_{\bdelta}\log d}{N}\right).
\end{align*}
Hence, it follows that
\begin{align*}
&\E_{\S_{\bbeta}}\left\{\left\|\bnabla_{\bbeta}\bar\ell_4(\bgammahat,\bdeltahat,\balpha^*,\bbeta^*)-\bnabla_{\balpha}\bar\ell_4(\bgamma^*,\bdelta^*,\balpha^*,\bbeta^*)\right\|_\infty^2\right\}\\
&\qquad=O_p\left(\frac{(s_{\bgamma}\log d_1+s_{\bdelta}\log d)(\log d_1)^2}{N^2}\right).
\end{align*}
By Lemma \ref{l1},
$$\left\|\bnabla_{\bbeta}\bar\ell_4(\bgammahat,\bdeltahat,\balpha^*,\bbeta^*)-\bnabla_{\bbeta}\bar\ell_4(\bgamma^*,\bdelta^*,\balpha^*,\bbeta^*)\right\|_\infty=O_p\left(\frac{\sqrt{s_{\bgamma}\log d_1+s_{\bdelta}\log d}\log d_1}{N}\right).$$
Together with \eqref{gradient_delta}, we have
$$\left\|\bnabla_{\bbeta}\bar\ell_4(\bgammahat,\bdeltahat,\balpha^*,\bbeta^*)\right\|_\infty=O_p\left(\left(1+\sqrt\frac{(s_{\bgamma}\log d_1+s_{\bdelta}\log d)\log d_1}{N}\right)\sqrt\frac{\log d_1}{N}\right).$$

\vspace{0.5em}

(d) Let $\rho(\cdot)=\rho^*(\cdot)$ and $\mu(\cdot)=\mu^*(\cdot)$. Note that
\begin{align*}
\bnabla_{\bbeta}\bar\ell_4(\bgammahat,\bdelta^*,\balphahat,\bbeta^*)-\bnabla_{\bbeta}\bar\ell_4(\bgamma^*,\bdelta^*,\balpha^*,\bbeta^*)=M^{-1}\sum_{i\in\mathcal I_{\bbeta}}(\bW_{\bbeta,3,i}+\bW_{\bbeta,4,i}),
\end{align*}
where
\begin{align}
\bW_{\bbeta,3,i}&:=-2A_{1i}\exp(-\bS_{1i}^\top\bgammahat)\left\{1-\frac{A_{2i}}{g(\bar\bS_{2i}^\top\bdelta^*)}\right\}\bar\bS_{2i}^\top(\balphahat-\balpha^*)\bS_{1i},\nonumber\\
\bW_{\bbeta,4,i}&:=-2A_{1i}\{\exp(-\bS_{1i}^\top\bgammahat)-\exp(-\bS_{1i}^\top\bgamma^*)\}\left\{\frac{A_{2i}}{g(\bar\bS_{2i}^\top\bdelta^*)}\varepsilon_i+\zeta_{i}\right\}\bS_{1i}.\label{def:Wdelta4}
\end{align}

Let $\bW_{\bbeta,3}$ and $\bW_{\bbeta,4}$ be independent copies of $\bW_{\bbeta,3,i}$ and $\bW_{\bbeta,4,i}$, respectively. Then, by the tower rule,
$$\E(\bW_{\bbeta,3})=\E(\bW_{\bbeta,4})=\bzero\in\R^{d_1}.$$
By Lemma \ref{Nemirovski}, we have
\begin{align*}
&\E_{\S_{\bbeta}}\left(\left\|M^{-1}\sum_{i\in\mathcal I_{\bbeta}}\bW_{\bbeta,3,i}\right\|_\infty^2\right)\leq M^{-1}(2e\log d_1-e)\E(\|\bW_{\bbeta,3}\|_\infty^2)\\
&\qquad\overset{(i)}{\leq}(2e\log d_1-e)(1+c_0^{-1})^2\E\left\{\exp(\bS_1^\top\bgammahat)\left\{\bar\bS_2^\top(\balphahat-\balpha^*)\right\}^2\|\bS_1\|_\infty^2\right\}\\
&\qquad\overset{(ii)}{\leq}(2e\log d_1-e)(1+c_0^{-1})^2\left\|\exp(\bS_1^\top\bgammahat)\right\|_{\P,3}^2\left\|\bar\bS_2^\top(\balphahat-\balpha^*)\right\|_{\P,6}^2\left\|\|\bS_1\|_\infty\right\|_{\P,6}^2\\
&\qquad\overset{(iii)}{=}O_p\left(\frac{(s_{\bgamma}\log d_1+s_{\bdelta}\log d+s_{\balpha}\log d)(\log d_1)^2}{N^2}\right),
\end{align*}
where (i) holds since $\lvert A_1\{1-A_2/g(\bar\bS_2^\top\bdelta^*)\}\rvert\leq(1+c_0^{-1})$ under Assumption \ref{cond:basic}; (ii) holds by H\"older's inequality; (iii) holds by Lemmas \ref{lemma:preliminary_alpha}, \ref{lemma:preliminary_gamma}, and Lemma \ref{lemma:psi2norm}. Similarly, we also have
\begin{align}
&\E_{\S_{\bbeta}}\left(\left\|M^{-1}\sum_{i\in\mathcal I_{\bbeta}}\bW_{\bbeta,4,i}\right\|_\infty^2\right)\leq M^{-1}(2e\log d_1-e)\E(\|\bW_{\bbeta,4}\|_\infty^2)\nonumber\\
&\qquad\leq4M^{-1}(2e\log d_1-e)\E\left[\left\{\exp(\bS_1^\top\bgammahat)-\exp(\bS_1^\top\bgamma^*)\right\}^2\left(c_0^{-1}\lvert\varepsilon\rvert+\lvert\zeta\rvert\right)^2\|\bS_1\|_\infty^2\right]\nonumber\\
&\qquad\leq8M^{-1}(2e\log d_1-e)\left\|\exp(\bS_1^\top\bgammahat)-\exp(\bS_1^\top\bgamma^*)\right\|_{\P,6}^2\left(c_0^{-2}\|\varepsilon\|_{\P,6}^2+\|\zeta\|_{\P,6}^2\right)\left\|\|\bS_1\|_\infty\right\|_{\P,6}^2\nonumber\\
&\qquad\overset{(i)}{=}O_p\left(\frac{s_{\bgamma}\log d_1(\log d_1)^2}{N^2}\right),\label{rate:Wdelta4}
\end{align}
where (i) holds by Lemmas \ref{lemma:preliminary_alpha} and Lemma \ref{lemma:psi2norm}. Hence, it follows that
\begin{align*}
&\E_{\S_{\bbeta}}\left\{\left\|\bnabla_{\bbeta}\bar\ell_4(\bgammahat,\bdelta^*,\balphahat,\bbeta^*)-\bnabla_{\bbeta}\bar\ell_4(\bgamma^*,\bdelta^*,\balpha^*,\bbeta^*)\right\|_\infty^2\right\}\\
&\qquad=O_p\left(\frac{(s_{\bgamma}\log d_1+s_{\bdelta}\log d+s_{\balpha}\log d)(\log d_1)^2}{N^2}\right).
\end{align*}
By Lemma \ref{l1},
\begin{align*}
&\left\|\bnabla_{\bbeta}\bar\ell_4(\bgammahat,\bdelta^*,\balphahat,\bbeta^*)-\bnabla_{\bbeta}\bar\ell_4(\bgamma^*,\bdelta^*,\balpha^*,\bbeta^*)\right\|_\infty\\
&\qquad=O_p\left(\frac{\sqrt{s_{\bgamma}\log d_1+s_{\bdelta}\log d+s_{\balpha}\log d}\log d_1}{N}\right).
\end{align*}
Together with \eqref{gradient_delta}, we have
\begin{align*}
&\left\|\bnabla_{\bbeta}\bar\ell_4(\bgammahat,\bdelta^*,\balphahat,\bbeta^*)\right\|_\infty\\
&\qquad=O_p\left(\left(1+\sqrt\frac{(s_{\bgamma}\log d_1+s_{\bdelta}\log d+s_{\balpha}\log d)\log d_1}{N}\right)\sqrt\frac{\log d_1}{N}\right).
\end{align*}

\vspace{0.5em}

(e) Let $\rho(\cdot)=\rho^*(\cdot)$, $\nu(\cdot)=\nu^*(\cdot)$, and $\mu(\cdot)=\mu^*(\cdot)$. Note that
\begin{align*}
\bnabla_{\bbeta}\bar\ell_4(\bgammahat,\bdelta^*,\balpha^*,\bbeta^*)-\bnabla_{\bbeta}\bar\ell_4(\bgamma^*,\bdelta^*,\balpha^*,\bbeta^*)=M^{-1}\sum_{i\in\mathcal I_{\bbeta}}\bW_{\bbeta,4,i},
\end{align*}
where $\bW_{\bbeta,4,i}$ is defined in \eqref{def:Wdelta4}. By \eqref{rate:Wdelta4} and Lemma \ref{l1},
$$\left\|\bnabla_{\bbeta}\bar\ell_4(\bgammahat,\bdelta^*,\balpha^*,\bbeta^*)-\bnabla_{\bbeta}\bar\ell_4(\bgamma^*,\bdelta^*,\balpha^*,\bbeta^*)\right\|_\infty=O_p\left(\frac{\sqrt{s_{\bgamma}\log d_1}\log d_1}{N}\right).$$
Together with \eqref{gradient_delta}, we have
\begin{align*}
\left\|\bnabla_{\bbeta}\bar\ell_4(\bgammahat,\bdelta^*,\balphahat,\bbeta^*)\right\|_\infty=O_p\left(\left(1+\sqrt\frac{s_{\bgamma}(\log d_1)^2}{N}\right)\sqrt\frac{\log d_1}{N}\right).
\end{align*}

\end{proof}

\begin{proof}[Proof of Theorem \ref{thm:nuisance'}.]
We show the consistency rate of the nuisance estimators under correctly specified models.

(a) Let $\rho(\cdot)=\rho^*(\cdot)$. Then, by Lemma \ref{lemma:gradient'}, when $s_{\bgamma}=O(N/(\log d_1\log d))$,
$$\left\|\bnabla_{\bdelta}\bar\ell_2(\bgammahat,\bdelta^*)\right\|_\infty=O_p\left(\sqrt\frac{\log d}{N}\right).$$
By Lemma \ref{lemma:RSC}, we have \eqref{eq:RSC2} when $\|\bgammahat-\bgamma^*\|_2\leq1$. In addition, by Lemma \ref{lemma:preliminary_alpha}, we also have $\P_{\S_{\bgamma}}(\|\bgammahat-\bgamma^*\|_2\leq1)=1-o(1)$. By Corollary 9.20 of \cite{wainwright2019high}, we have
\begin{align*}
\|\bdeltahat-\bdelta^*\|_2=O_p\left(\sqrt\frac{s_{\bdelta}\log d}{N}\right),\quad\|\bdeltahat-\bdelta^*\|_1=O_p\left(s_{\bdelta}\sqrt\frac{\log d}{N}\right).
\end{align*}

\vspace{0.5em}

(b) Let $\nu(\cdot)=\nu^*(\cdot)$. Then, by Lemma \ref{lemma:gradient'}, when $s_{\bgamma}=O(N/(\log d_1\log d))$ and $s_{\bdelta}=O(N/(\log d)^2)$,
$$\left\|\bnabla_{\balpha}\bar\ell_3(\bgammahat,\bdeltahat,\balpha^*)\right\|_\infty=O_p\left(\sqrt\frac{\log d}{N}\right).$$
By Lemma \ref{lemma:RSC}, we have \eqref{eq:RSC3} when $\|\bgammahat-\bgamma^*\|_2\leq1$ and $\|\bdeltahat-\bdelta^*\|_2\leq1$. In addition, by Lemmas \ref{lemma:preliminary_alpha} and \ref{lemma:preliminary_beta}, we also have $\P_{\S_{\bgamma}\cup\S_{\bdelta}}(\|\bgammahat-\bgamma^*\|_2\leq1\cap\|\bdeltahat-\bdelta^*\|_2\leq1)=1-o(1)$. By Corollary 9.20 of \cite{wainwright2019high}, we have
\begin{align*}
\|\balphahat-\balpha^*\|_2=O_p\left(\sqrt\frac{s_{\balpha}\log d}{N}\right),\quad\|\balphahat-\balpha^*\|_1=O_p\left(s_{\balpha}\sqrt\frac{\log d}{N}\right).
\end{align*}

\vspace{0.5em}

(c) Let $\nu(\cdot)=\nu^*(\cdot)$ and $\mu(\bS_1)=\bS_1^\top\bbeta$. Then, by Lemma \ref{lemma:gradient'}, when $s_{\bgamma}=O(N/(\log d_1)^2)$ and $s_{\bdelta}=O(N/(\log d_1\log d))$,
$$\left\|\bnabla_{\bbeta}\bar\ell_4(\bgammahat,\bdeltahat,\balpha^*,\bbeta^*)\right\|_\infty=O_p\left(\sqrt\frac{\log d_1}{N}\right).$$
That is, for any $t>0$, there exists some $\lambda_3\asymp\sqrt{\log d_1/N}$ such that 
$$\mathcal E_3:=\{\|\bnabla_{\bbeta}\bar\ell_4(\bgammahat,\bdeltahat,\balpha^*,\bbeta^*)\|_\infty\leq\lambda_3\}$$
holds with probability at least $1-t$. Condition on the event $\mathcal E_3$, and choose some $\lambda_{\bbeta}>2\lambda_3$. By the construction of $\bbeta$, we have
$$\bar\ell_4(\bgammahat,\bdeltahat,\balphahat,\bbetahat)+\lambda_{\bbeta}\|\bbetahat\|_1\leq\bar\ell_4(\bgammahat,\bdeltahat,\balphahat,\bbeta^*)+\lambda_{\bbeta}\|\bbeta^*\|_1.$$
Let $\bDelta=\bbetahat-\bbeta^*$ and $S=\{j\in\{1,\dots,d_1\}:\bbeta_j^*\neq0\}$. Note that, 
$$\delta\bar\ell_4(\bgammahat,\bdeltahat,\balphahat,\bbeta^*,\bDelta)=\bar\ell_4(\bgammahat,\bdeltahat,\balphahat,\bbetahat)-\bar\ell_4(\bgammahat,\bdeltahat,\balphahat,\bbeta^*)-\bnabla_{\bbeta}\bar\ell_4(\bgammahat,\bdeltahat,\balphahat,\bbeta^*)^\top\bDelta.$$
Hence,
\begin{align*}
&\delta\bar\ell_4(\bgammahat,\bdeltahat,\balphahat,\bbeta^*,\bDelta)+\lambda_{\bbeta}\|\bbetahat\|_1\leq-\bnabla_{\bbeta}\bar\ell_4(\bgammahat,\bdeltahat,\balphahat,\bbeta^*)^\top\bDelta+\lambda_{\bbeta}\|\bbeta^*\|_1\\
&\quad\leq\left\|\bnabla_{\bbeta}\bar\ell_4(\bgammahat,\bdeltahat,\balpha^*,\bbeta^*)\right\|_\infty\|\bDelta\|_1+\lambda_{\bbeta}\|\bbeta^*\|_1+\lvert R_6\rvert\leq\lambda_{\bbeta}\|\bDelta\|_1/2+\lambda_{\bbeta}\|\bbeta^*\|_1+\lvert R_6\rvert,
\end{align*}
where
$$R_6:=\left\{\bnabla_{\bbeta}\bar\ell_4(\bgammahat,\bdeltahat,\balphahat,\bbeta^*)-\bnabla_{\bbeta}\bar\ell_4(\bgammahat,\bdeltahat,\balpha^*,\bbeta^*)\right\}^\top\bDelta.$$
Note that $\|\bbeta^*\|_1=\|\bbeta_S^*\|_1\leq\|\bbetahat_S\|_1+\|\bDelta_S\|_1$, $\|\bbetahat\|_1=\|\bbetahat_S\|_1+\|\bbetahat_{S^c}\|_1=\|\bbetahat_S\|_1+\|\bDelta_{S^c}\|_1$, and $\|\bDelta\|_1=\|\bDelta_S\|_1+\|\bDelta_{S^c}\|_1$. Hence, we have
$$2\delta\bar\ell_4(\bgammahat,\bdeltahat,\balphahat,\bbeta^*,\bDelta)+\lambda_{\bbeta}\|\bDelta_{S^c}\|_1\leq3\lambda_{\bbeta}\|\bDelta_S\|_1+2\lvert R_6\rvert.$$
Observe that
\begin{align*}
\lvert R_6\rvert&=\left\lvert2M^{-1}\sum_{i\in\mathcal I_{\bbeta}}A_{1i}\exp(-\bS_{1i}^\top\bgammahat)\left\{1-\frac{A_{2i}}{g(\bar\bS_{2i}^\top\bdeltahat)}\right\}\bar\bS_{2i}^\top(\balphahat-\balpha^*)\bS_{1i}^\top\bDelta\right\rvert\\
&\leq\frac{\delta\bar\ell_4(\bgammahat,\bdeltahat,\balphahat,\bbeta^*,\bDelta)}{2}+R_7,
\end{align*}
where $\delta\bar\ell_4(\bgammahat,\bdeltahat,\balphahat,\bbeta^*,\bDelta)=M^{-1}\sum_{i\in\mathcal I_{\bbeta}}A_{1i}\exp(-\bS_{1i}^\top\bgammahat)(\bS_{1i}^\top\bDelta)^2$ and
$$R_7:=2M^{-1}\sum_{i\in\mathcal I_{\bbeta}}\exp(-\bS_{1i}^\top\bgammahat)\left\{1-\frac{A_{2i}}{g(\bar\bS_{2i}^\top\bdeltahat)}\right\}^2\left\{\bar\bS_{2i}^\top(\balphahat-\balpha^*)\right\}^2.$$
It follows that
\begin{align}\label{eq:basic_delta}
\delta\bar\ell_4(\bgammahat,\bdeltahat,\balphahat,\bbeta^*,\bDelta)+\lambda_{\bbeta}\|\bDelta_{S^c}\|_1\leq3\lambda_{\bbeta}\|\bDelta_S\|_1+2R_7.
\end{align}
Condition on $\|\bgammahat-\bgamma^*\|_2\leq1$, where by Lemma \ref{lemma:preliminary_alpha}, $\|\bgammahat-\bgamma^*\|_2\leq1$ holds with probability $1-o(1)$. Also, condition on the event that 
\begin{align}\label{eq:basic_delta2}
\delta\bar\ell_4(\bgammahat,\bdeltahat,\balphahat,\bbeta^*,\bDelta)\geq\kappa_1\|\bDelta\|_2^2-\kappa_2\frac{\log d_1}{M}\|\bDelta\|_1^2,
\end{align}
which, by Lemma \ref{lemma:RSC}, holds with probability $1-o(1)$. Since $\delta\bar\ell_4(\bgammahat,\bdeltahat,\balphahat,\bbeta^*,\bDelta)\geq0$, by \eqref{eq:basic_delta}, we have
\begin{align}\label{eq:basic_delta4}
\|\bDelta\|_1\leq4\|\bDelta_S\|_1+2\lambda_{\bbeta}^{-1}R_7.
\end{align}
Note that $\|\bDelta_S\|_1\leq\sqrt{s_{\bbeta}}\|\bDelta_S\|_2\leq\sqrt{s_{\bbeta}}\|\bDelta\|_2$. Hence,
\begin{align*}
&3\lambda_{\bbeta}\sqrt{s_{\bbeta}}\|\bDelta\|_2+2R_7\geq3\lambda_{\bbeta}\|\bDelta_S\|_1+2R_7\overset{(i)}{\geq}\delta\bar\ell_4(\bgammahat,\bdeltahat,\balphahat,\bbeta^*,\bDelta)\\
&\qquad\overset{(ii)}{\geq}\kappa_1\|\bDelta\|_2^2-\kappa_2\frac{\log d_1}{M}\|\bDelta\|_1^2\overset{(iii)}{\geq}\kappa_1\|\bDelta\|_2^2-\kappa_2\frac{\log d_1}{M}\left(4\|\bDelta_S\|_1+2\lambda_{\bbeta}^{-1}R_7\right)^2\\
&\qquad\geq\kappa_1\|\bDelta\|_2^2-4\kappa_2\frac{\log d_1}{M}\left(4\|\bDelta_S\|_1^2+\lambda_{\bbeta}^{-2}R_7^2\right)\\
&\qquad\geq\kappa_1\|\bDelta\|_2^2-4\kappa_2\frac{\log d_1}{M}\left(4s_{\bbeta}\|\bDelta\|_2^2+\lambda_{\bbeta}^{-2}R_7^2\right)\geq\frac{\kappa_1}{2}\|\bDelta\|_2^2-\frac{4\kappa_2\log d_1}{\lambda_{\bbeta}^2M}R_7^2,
\end{align*}
when $M>32\kappa_2s_{\bbeta}\log d_1/\kappa_1$. Here, (i) follows from \eqref{eq:basic_delta} and the fact that $\lambda_{\bbeta}\|\bDelta_{S^c}\|_1\geq0$; (ii) holds under the event that \eqref{eq:basic_delta2} occurs; (iii) follows from \eqref{eq:basic_delta4}. By Lemma \ref{lemma:sol},
\begin{align}\label{eq:basic_delta3}
\|\bDelta\|_2\leq\frac{6\lambda_{\bbeta}\sqrt{s_{\bbeta}}}{\kappa_1}+\sqrt{\frac{8\kappa_2R_7^2\log d_1}{\kappa_1\lambda_{\bbeta}^2M}+\frac{4R_7}{\kappa_1}}.
\end{align}
Now, we upper bound the term $R_7$. Observe that
\begin{align*}
\E_{\S_{\bbeta}}(R_7)&=2\E\left[\exp(-\bS_1^\top\bgammahat)\left\{1-\frac{A_2}{g(\bar\bS_2^\top\bdeltahat)}\right\}^2\left\{\bar\bS_2^\top(\balphahat-\balpha^*)\right\}^2\right]\\
&\overset{(i)}{\leq}2\left\|\exp(-\bS_1^\top\bgammahat)\right\|_{\P,3}\left\|1-\frac{A_2}{g(\bar\bS_2^\top\bdeltahat)}\right\|_{\P,6}^2\left\|\bar\bS_2^\top(\balphahat-\balpha^*)\right\|_{\P,6}^2\\
&\overset{(ii)}{\leq}2\left\|\exp(-\bS_1^\top\bgammahat)\right\|_{\P,3}\left\{1+\left\|g^{-1}(\bar\bS_2^\top\bdeltahat)\right\|_{\P,6}\right\}^2\left\|\bar\bS_2^\top(\balphahat-\balpha^*)\right\|_{\P,6}^2
\\
&\overset{(iii)}{=}O_p\left(\frac{s_{\balpha}\log d}{N}\right),
\end{align*}
where (i) holds by H\"older's inequality; (ii) holds by Minkowski inequality; (iii) holds by Lemmas \ref{lemma:preliminary_alpha}, \ref{lemma:preliminary_beta}, and \ref{lemma:preliminary_gamma}. By Lemma \ref{l1},
$$R_7=O_p\left(\frac{s_{\balpha}\log d}{N}\right).$$
By \eqref{eq:basic_delta3} and since $\lambda_{\bbeta}\asymp\sqrt{\log d_1/N}$, we have
$$\|\bDelta\|_2=O_p\left(\sqrt\frac{s_{\bbeta}\log d_1}{N}+\frac{s_{\balpha}\log d}{N}+\sqrt\frac{s_{\balpha}\log d}{N}\right)=O_p\left(\sqrt\frac{s_{\balpha}\log d+s_{\bbeta}\log d_1}{N}\right).$$
By \eqref{eq:basic_delta4},
$$\|\bDelta\|_1\leq4\sqrt{s_{\bbeta}}\|\bDelta\|_2+2\lambda_{\bbeta}^{-1}R_7=O_p\left(s_{\bbeta}\sqrt\frac{\log d_1}{N}+s_{\gamma}\sqrt\frac{(\log d)^2}{N\log d_1}\right).$$

\vspace{0.5em}

(d) Let $\rho(\cdot)=\rho^*(\cdot)$ and $\mu(\cdot)=\mu^*(\cdot)$. Then, by Lemma \ref{lemma:gradient'}, when $s_{\bgamma}=o(N/(\log d_1)^2)$, $s_{\bdelta}=o(N/(\log d_1\log d))$, and $s_{\balpha}=o(N/(\log d_1\log d))$,
$$\left\|\bnabla_{\bbeta}\bar\ell_4(\bgammahat,\bdelta^*,\balphahat,\bbeta^*)\right\|_\infty=O_p\left(\sqrt\frac{\log d_1}{N}\right).$$
That is, for any $t>0$, there exists some $\lambda_4\asymp\sqrt{\log d_1/N}$ such that 
$$\mathcal E_4:=\{\|\bnabla_{\bbeta}\bar\ell_4(\bgammahat,\bdelta^*,\balphahat,\bbeta^*)\|_\infty\leq\lambda_4\}$$
holds with probability at least $1-t$. Condition on the event $\mathcal E_4$, and choose some $\lambda_{\bbeta}>2\lambda_4$. Similarly as in part (c), we obtain
$$2\delta\bar\ell_4(\bgammahat,\bdeltahat,\balphahat,\bbeta^*,\bDelta)+\lambda_{\bbeta}\|\bDelta_{S^c}\|_1\leq3\lambda_{\bbeta}\|\bDelta_S\|_1+2\lvert R_8\rvert,$$
where
\begin{align*}
\lvert R_8\rvert&=\left\lvert\left\{\bnabla_{\bbeta}\bar\ell_4(\bgammahat,\bdeltahat,\balphahat,\bbeta^*)-\bnabla_{\bbeta}\bar\ell_4(\bgammahat,\bdelta^*,\balphahat,\bbeta^*)\right\}^\top\bDelta\right\rvert\\
&=\left\lvert2M^{-1}\sum_{i\in\mathcal I_{\bbeta}}A_{1i}A_{2i}\exp(-\bS_{1i}^\top\bgammahat)\left\{g^{-1}(\bar\bS_{2i}^\top\bdeltahat)-g^{-1}(\bar\bS_{2i}^\top\bdelta^*)\right\}\varepsilonhat_i\bS_{1i}^\top\bDelta\right\rvert\\
&\leq\frac{\delta\bar\ell_4(\bgammahat,\bdeltahat,\balphahat,\bbeta^*,\bDelta)}{2}+R_9.
\end{align*}
Here, $\varepsilonhat_i:=Y_i(1,1)-\bar\bS_{2i}^\top\balphahat$,
\begin{align*}
\delta\bar\ell_4(\bgammahat,\bdeltahat,\balphahat,\bbeta^*,\bDelta)=&M^{-1}\sum_{i\in\mathcal I_{\bbeta}}A_{1i}\exp(-\bS_{1i}^\top\bgammahat)(\bS_{1i}^\top\bDelta)^2,\\
R_9:=&2M^{-1}\sum_{i\in\mathcal I_{\bbeta}}A_{2i}\exp(-\bS_{1i}^\top\bgammahat)\left\{g^{-1}(\bar\bS_{2i}^\top\bdeltahat)-g^{-1}(\bar\bS_{2i}^\top\bdelta^*)\right\}^2\varepsilonhat_i^2.
\end{align*}
Observe that
\begin{align*}
\E_{\S_{\bbeta}}(R_9)&=2\E\left[A_2\exp(-\bS_1^\top\bgammahat)\left\{g^{-1}(\bar\bS_2^\top\bdeltahat)-g^{-1}(\bar\bS_2^\top\bdelta^*)\right\}^2\varepsilonhat^2\right]\\
&\leq2\left\|\exp(-\bS_1^\top\bgammahat)\right\|_{\P,3}\left\|g^{-1}(\bar\bS_2^\top\bdeltahat)-g^{-1}(\bar\bS_2^\top\bdelta^*)\right\|_{\P,6}^2\left\|\varepsilonhat\right\|_{\P,6}^2\\
&\overset{(i)}{=}O_p\left(\frac{s_{\bdelta}\log d}{N}\right),
\end{align*}
where (i) holds by Lemmas \ref{lemma:preliminary_alpha}, \ref{lemma:preliminary_beta}, and \ref{lemma:preliminary_gamma}. By Lemma \ref{l1},
$$R_9=O_p\left(\frac{s_{\bdelta}\log d}{N}\right).$$
Repeat the same procedure as in part (c), we have
\begin{align*}
\|\bDelta\|_2&\leq\frac{6\lambda_{\bbeta}\sqrt{s_{\bbeta}}}{\kappa_1}+\sqrt{\frac{8\kappa_2R_9^2\log d_1}{\kappa_1\lambda_{\bbeta}^2M}+\frac{4R_9}{\kappa_1}},\\
\|\bDelta\|_1&\leq4\sqrt{s_{\bbeta}}\|\bDelta\|_2+2\lambda_{\bbeta}^{-1}R_9,
\end{align*}
with probability at least $1-t-o(1)$. Hence,
\begin{align*}
\|\bDelta\|_2&=O_p\left(\sqrt\frac{s_{\bdelta}\log d+s_{\bbeta}\log d_1}{N}\right),\\
\|\bDelta\|_1&=O_p\left(s_{\bdelta}\sqrt\frac{(\log d)^2}{N\log d_1}+s_{\bbeta}\sqrt\frac{\log d_1}{N}\right).
\end{align*}

\vspace{0.5em}

(e) Let $\rho(\cdot)=\rho^*(\cdot)$, $\nu(\cdot)=\nu^*(\cdot)$, and $\mu(\cdot)=\mu^*(\cdot)$. Then, by Lemma \ref{lemma:gradient'}, when $s_{\bgamma}=o(N/(\log d_1)^2)$, $s_{\bdelta}=o(N/(\log d_1\log d))$, and $s_{\balpha}=o(N/(\log d_1\log d))$,
\begin{align*}
\left\|\bnabla_{\bbeta}\bar\ell_4(\bgammahat,\bdeltahat,\balpha^*,\bbeta^*)\right\|_\infty&=O_p\left(\sqrt\frac{\log d_1}{N}\right),\\
\left\|\bnabla_{\bbeta}\bar\ell_4(\bgammahat,\bdelta^*,\balphahat,\bbeta^*)\right\|_\infty&=O_p\left(\sqrt\frac{\log d_1}{N}\right),\\
\left\|\bnabla_{\bbeta}\bar\ell_4(\bgammahat,\bdelta^*,\balpha^*,\bbeta^*)\right\|_\infty&=O_p\left(\sqrt\frac{\log d_1}{N}\right).
\end{align*}
Define 
$$\ba:=\bnabla_{\bbeta}\bar\ell_4(\bgammahat,\bdeltahat,\balpha^*,\bbeta^*)+\bnabla_{\bbeta}\bar\ell_4(\bgammahat,\bdelta^*,\balphahat,\bbeta^*)-\bnabla_{\bbeta}\bar\ell_4(\bgammahat,\bdelta^*,\balpha^*,\bbeta^*).$$ 
Then $\|\ba\|_\infty=O_p(\sqrt\frac{\log d_1}{N})$. Hence, for any $t>0$, there exists some $\lambda_5\asymp\sqrt{\log d_1/N}$ such that $\mathcal E_5:=\{\|\ba\|_\infty\leq\lambda_5\}$ holds with probability at least $1-t$. Condition on the event $\mathcal E_5$, and choose some $\lambda_{\bbeta}>2\lambda_5$. Similarly as in parts (c) and (d), we obtain
$$2\delta\bar\ell_4(\bgammahat,\bdeltahat,\balphahat,\bbeta^*,\bDelta)+\lambda_{\bbeta}\|\bDelta_{S^c}\|_1\leq3\lambda_{\bbeta}\|\bDelta_S\|_1+2\lvert R_{10}\rvert,$$
where
\begin{align*}
R_{10}&=\left\{\bnabla_{\bbeta}\bar\ell_4(\bgammahat,\bdeltahat,\balphahat,\bbeta^*)-\ba\right\}^\top\bDelta\\
&=\left\{\bnabla_{\bbeta}\bar\ell_4(\bgammahat,\bdeltahat,\balphahat,\bbeta^*)-\bnabla_{\bbeta}\bar\ell_4(\bgammahat,\bdeltahat,\balpha^*,\bbeta^*)\right\}^\top\bDelta\\
&\qquad-\left\{\bnabla_{\bbeta}\bar\ell_4(\bgammahat,\bdelta^*,\balphahat,\bbeta^*)-\bnabla_{\bbeta}\bar\ell_4(\bgammahat,\bdelta^*,\balpha^*,\bbeta^*)\right\}^\top\bDelta\\
&=2M^{-1}\sum_{i\in\mathcal I_{\bbeta}}A_{1i}A_{2i}\exp(-\bS_{1i}^\top\bgammahat)\left\{g^{-1}(\bar\bS_{2i}^\top\bdeltahat)-g^{-1}(\bar\bS_{2i}^\top\bdelta^*)\right\}\cdot\bar\bS_{2i}^\top(\balphahat-\balpha^*)\bS_{1i}^\top\bDelta.
\end{align*}
By Young's inequality for products,
\begin{align*}
\lvert R_{10}\rvert\leq\frac{\delta\bar\ell_4(\bgammahat,\bdeltahat,\balphahat,\bbeta^*,\bDelta)}{2}+R_{11},
\end{align*}
where $\delta\bar\ell_4(\bgammahat,\bdeltahat,\balphahat,\bbeta^*,\bDelta)=M^{-1}\sum_{i\in\mathcal I_{\bbeta}}A_{1i}\exp(-\bS_{1i}^\top\bgammahat)(\bS_{1i}^\top\bDelta)^2$ and
\begin{align*}
R_{11}:=2M^{-1}\sum_{i\in\mathcal I_{\bbeta}}A_{2i}\exp(-\bS_{1i}^\top\bgammahat)\left\{g^{-1}(\bar\bS_{2i}^\top\bdeltahat)-g^{-1}(\bar\bS_{2i}^\top\bdelta^*)\right\}^2\left\{\bar\bS_{2i}^\top(\balphahat-\balpha^*)\right\}^2.
\end{align*}
Observe that
\begin{align*}
\E_{\S_{\bbeta}}(R_{11})&=2\E\left[A_2\exp(-\bS_1^\top\bgammahat)\left\{g^{-1}(\bar\bS_2^\top\bdeltahat)-g^{-1}(\bar\bS_2^\top\bdelta^*)\right\}^2\left\{\bar\bS_2^\top(\balphahat-\balpha^*)\right\}^2\right]\\
&\leq2\left\|\exp(-\bS_1^\top\bgammahat)\right\|_{\P,3}\left\|g^{-1}(\bar\bS_2^\top\bdeltahat)-g^{-1}(\bar\bS_2^\top\bdelta^*)\right\|_{\P,6}^2\left\|\bar\bS_2^\top(\balphahat-\balpha^*)\right\|_{\P,6}^2\\
&\overset{(i)}{=}O_p\left(\frac{s_{\bdelta}s_{\balpha}(\log d)^2}{N^2}\right),
\end{align*}
where (i) holds by Lemmas \ref{lemma:preliminary_alpha}, \ref{lemma:preliminary_beta}, and \ref{lemma:preliminary_gamma}. By Lemma \ref{l1},
$$R_{11}=O_p\left(\frac{s_{\bdelta}s_{\balpha}(\log d)^2}{N^2}\right).$$
Repeat the same procedure as in parts (c) and (d), we have
\begin{align*}
\|\bDelta\|_2&\leq\frac{6\lambda_{\bbeta}\sqrt{s_{\bbeta}}}{\kappa_1}+\sqrt{\frac{8\kappa_2R_{11}^2\log d_1}{\kappa_1\lambda_{\bbeta}^2M}+\frac{4R_{11}}{\kappa_1}},\\
\|\bDelta\|_1&\leq4\sqrt{s_{\bbeta}}\|\bDelta\|_2+2\lambda_{\bbeta}^{-1}R_{11},
\end{align*}
with probability at least $1-t-o(1)$. Hence,
\begin{align*}
\|\bDelta\|_2&=O_p\left(\frac{\sqrt{s_{\bdelta}s_{\balpha}}\log d}{N}+\sqrt\frac{s_{\bbeta}\log d_1}{N}\right),\\
\|\bDelta\|_1&=O_p\left(\frac{s_{\bdelta}s_{\balpha}\log d}{N}\sqrt\frac{(\log d)^2}{N\log d_1}+s_{\bbeta}\sqrt\frac{\log d_1}{N}\right).
\end{align*}

\end{proof}

\section{Proofs of the auxiliary Lemmas}\label{sec:proof_lemmas}
\begin{proof}[Proof of Lemma \ref{lemma:beta1}.]
We prove the lemma by considering two cases separately.

 (a) If $d\leq m$. Choose $S=\{1,\dots,d\}$. Since $\bX$ is a sub-Gaussian vector, we have
\begin{align}\label{B.7_1}
\sup_{\|\bbeta\|_2=1}\E\{(\bX^\top\bbeta)^2\}=O(1).
\end{align}
For any $\bDelta\in\R^d$, by triangle inequality, we have
\begin{align*}
&m^{-1}\sum_{i=1}^m(\bX_i^\top\bDelta)^2\leq\|\bDelta\|_2^2\sup_{\|\bbeta\|_2=1}m^{-1}\sum_{i=1}^m(\bX_i^\top\bbeta)^2\\
&\qquad\leq\|\bDelta\|_2^2\left[\sup_{\|\bbeta\|_2=1}\E\{(\bX^\top\bbeta)^2\}+\sup_{\|\bbeta\|_2=1}\left\lvert m^{-1}\sum_{i=1}^m(\bX_i^\top\bbeta)^2-\E\{(\bX^\top\bbeta)^2\}\right\rvert\right]
\end{align*}
It follows that
\begin{align*}
&\sup_{\bDelta\in\R^{d}/\{\bzero\}}\frac{m^{-1}\sum_{i=1}^m(\bX_i^\top\bDelta)^2}{\|\bDelta\|_2^2}\\
&\qquad\leq\left[\sup_{\|\bbeta\|_2=1}\E\{(\bX^\top\bbeta)^2\}+\sup_{\|\bbeta\|_2=1}\left\lvert m^{-1}\sum_{i=1}^m(\bX_i^\top\bbeta)^2-\E\{(\bX^\top\bbeta)^2\}\right\rvert\right]
\end{align*}
By Theorem 6.5 of \cite{wainwright2019high} and \eqref{B.7_1}, we have, as $m,d\to\infty$, 
\begin{align*}
\sup_{\bDelta\in\R^{d}/\{\bzero\}}\frac{m^{-1}\sum_{i=1}^m(\bX_i^\top\bDelta)^2}{\|\bDelta\|_2^2}
=O_p\left(1+\sqrt\frac{d}{m}+\frac{d}{m}\right)
\overset{(i)}{=}O_p(1),
\end{align*}
where (i) holds since $d\leq m$. Hence,
\begin{align*} 
\sup_{\bDelta\in\R^{d}/\{\bzero\}}\frac{m^{-1}\sum_{i=1}^m(\bX_i^\top\bDelta)^2}{\|\bDelta\|_1^2m^{-1}\log d+\|\bDelta\|_2^2}
\leq\sup_{\bDelta\in\R^{d}/\{\bzero\}}\frac{m^{-1}\sum_{i=1}^m(\bX_i^\top\bDelta)^2}{\|\bDelta\|_2^2}
=O_p(1).
\end{align*}
\vspace{0.5em}

 (b) If $d>m$. Choose a set $S\subseteq\{1,\dots,d\}$ satisfying $s:=|S|\asymp m/\log d<d$. For any $\bDelta\in\R^d$, define $\bDeltatil=(\bDeltatil_S^\top,\bDeltatil_{S^c}^\top)^\top\in\R^d$ such that 
$$\bDeltatil_S=s^{-1}\|\bDelta\|_1(1,\dots,1)^\top\in\R^s,\quad\bDeltatil_{S^c}=\bDelta_{S^c}\in\R^{d-s}.$$
Then 
$$\|\bDeltatil_{S^c}\|_1=\|\bDelta_{S^c}\|_1\leq\|\bDelta\|_1=\|\bDeltatil_S\|_1.$$
Hence, $\bDeltatil\in\C(S,3):=\{\bDelta\in\R^d:\|\bDelta_{S^c}\|_1\leq3\|\bDelta_S\|_1\}$. In addition, since $(\bDeltatil-\bDelta)_{S^c}=\bzero\in\R^{d-s}$, we also have $\bDeltatil-\bDelta\in\C(S,3)$. Therefore, by the fact that $(a+b)^2\leq2a^2+2b^2$, we have
\begin{align*}
&m^{-1}\sum_{i=1}^m(\bX_i^\top\bDelta)^2\leq2m^{-1}\sum_{i=1}^m(\bX_i^\top\bDeltatil)^2+2m^{-1}\sum_{i=1}^m\left\{\bX_i^\top(\bDeltatil-\bDelta)\right\}^2\\
&\qquad\leq2\left(\|\bDeltatil\|_2^2+\|\bDeltatil-\bDelta\|_2^2\right)\sup_{\bbeta\in\C(S,3)\cap\|\bbeta\|_2=1}m^{-1}\sum_{i=1}^m(\bX_i^\top\bbeta)^2.
\end{align*}
Now, we observe that
\begin{align*}
\|\bDeltatil\|_2^2&=\|\bDeltatil_S\|_2^2+\|\bDeltatil_{S^c}\|_2^2=s^{-1}\|\bDelta\|_1^2+\|\bDelta_{S^c}\|_2^2,\\
\|\bDeltatil-\bDelta\|_2^2&=\|\bDeltatil_S-\bDelta_S\|_2^2\leq2\|\bDeltatil_S\|_2^2+2\|\bDelta_S\|_2^2=2s^{-1}\|\bDelta\|_1^2+2\|\bDelta_S\|_2^2.
\end{align*}
Hence, we have for all $\bDelta\in\R^d$,
\begin{align*}
m^{-1}\sum_{i=1}^m(\bX_i^\top\bDelta)^2
\leq2\left(3s^{-1}\|\bDelta\|_1^2+2\|\bDelta\|_2^2\right)\sup_{\bbeta\in\C(S,3)\cap\|\bbeta\|_2=1}m^{-1}\sum_{i=1}^m(\bX_i^\top\bbeta)^2,
\end{align*}
since
$\|\bDeltatil\|_2^2+\|\bDeltatil-\bDelta\|_2^2\leq3s^{-1}\|\bDelta\|_1^2+2\|\bDelta\|_2^2.$
It follows that
\begin{align*}
\sup_{\bDelta\in\R^{d}/\{\bzero\}}\frac{m^{-1}\sum_{i=1}^m(\bX_i^\top\bDelta)^2}{6s^{-1}\|\bDelta\|_1^2+4\|\bDelta\|_2^2}
\leq\sup_{\bbeta\in\C(S,3)\cap\|\bbeta\|_2=1}m^{-1}\sum_{i=1}^m(\bX_i^\top\bbeta)^2,
\end{align*}
By Lemma \ref{lemma:beta0} and \eqref{B.7_1}, as $m,d\to\infty$, 
\begin{align*}
\sup_{\bDelta\in\R^{d}/\{\bzero\}}\frac{m^{-1}\sum_{i=1}^m(\bX_i^\top\bDelta)^2}{s^{-1}\|\bDelta\|_1^2+\|\bDelta\|_2^2}
=O_p\left(1+\sqrt\frac{s\log d}{m}+\frac{s\log d}{m}\right)=O_p(1),
\end{align*}
since $s\asymp m/\log d$.
\end{proof}

\begin{proof}[Proof of Lemma \ref{lemma:sol}.]
\begin{align*}
x\leq\frac{b+\sqrt{b^2+4ac}}{2a}\leq\frac{b+\sqrt{b^2}+\sqrt{4ac}}{2a}=\frac{b}{a}+\sqrt\frac{c}{a}.
\end{align*}
\end{proof}

\begin{proof}[Proof of Lemma \ref{lemma:preliminary_alpha}.]
Let $\mathcal X$ the support of $\bS_1$. Under Assumption \ref{cond:basic}, for all $\bS_1\in\mathcal X$, there exists some constant $c>0$ such that
$$\exp(\bS_1^\top\bgamma^*)\leq c,\quad\exp(-\bS_1^\top\bgamma^*)<g^{-1}(\bS_1^\top\bgamma^*)\leq c.$$
By Theorem \ref{thm:nuisance},
\begin{align*}
\|\bgammahat-\bgamma^*\|_2=O_p\left(\sqrt\frac{s_{\bgamma}\log d_1}{N}\right).
\end{align*}
Since $\bS_1$ is a sub-Gaussian random vector under Assumption \ref{cond:subG}, by Theorem 2.6 of \cite{wainwright2019high},
$$\left\|\bS_1^\top(\bgammahat-\bgamma^*)\right\|_{\P,r}=O\left(\|\bgammahat-\bgamma^*\|_2\right)=O_p\left(\sqrt\frac{s_{\bgamma}\log d_1}{N}\right).$$
Additionally, note that $s_{\bgamma}=o(N/\log d_1)$. It follows that
$$\P_{\S_{\bgamma}}(\|\bgammahat-\bgamma^*\|_2\leq1)=1-o(1).$$
For any $\bgamma\in\{w\bgamma^*+(1-w)\bgammahat:w\in[0,1]\}$, we have
\begin{align}
&\left\|g^{-1}(\bS_1^\top\bgamma)-g^{-1}(\bS_1^\top\bgamma^*)\right\|_{\P,r}=\left\|\exp(-\bS_1^\top\bgamma^*)\left[\exp\left\{-\bS_1^\top(\bgamma-\bgamma^*)\right\}-1\right]\right\|_{\P,r}\nonumber\\
&\qquad\leq c\left\|\exp\left\{-\bS_1^\top(\bgamma-\bgamma^*)\right\}-1\right\|_{\P,r}\label{eq:diff1}
\end{align}
By Taylor's Thorem, for any $\bS_1\in\mathcal X$, with some $v\in(0,1)$,
\begin{align}
&\left\lvert\exp\left\{-\bS_1^\top(\bgamma-\bgamma^*)\right\}-1\right\rvert=\exp\left\{-v\bS_1^\top(\bgamma-\bgamma^*)\right\}\left\lvert\bS_1^\top(\bgamma-\bgamma^*)\right\rvert\nonumber\\
&\qquad\leq\left[1+\exp\left\{-\bS_1^\top(\bgamma-\bgamma^*)\right\}\right]\left\lvert\bS_1^\top(\bgamma-\bgamma^*)\right\rvert.\label{eq:diff1'}
\end{align}
Condition on the event $\|\bgammahat-\bgamma^*\|_2\leq1$. Note that $\bgamma-\bgamma^*=(1-w)(\bgammahat-\bgamma^*)$ and $1-w\in[0,1]$, we have
\begin{align}
&\left\|g^{-1}(\bS_1^\top\bgamma)-g^{-1}(\bS_1^\top\bgamma^*)\right\|_{\P,r}\overset{(i)}{\leq}c\left\|\left[1+\exp\left\{-\bS_1^\top(\bgamma-\bgamma^*)\right\}\right]\bS_1^\top(\bgamma-\bgamma^*)\right\|_{\P,r}\nonumber\\
&\qquad\overset{(ii)}{\leq}c\left\|\bS_1^\top(\bgamma-\bgamma^*)\right\|_{\P,r}+c\left\|\exp\left\{-\bS_1^\top(\bgamma-\bgamma^*)\right\}\right\|_{\P,2r}\left\|\bS_1^\top(\bgamma-\bgamma^*)\right\|_{\P,2r}\nonumber\\
&\qquad\overset{(iii)}{=}O\left(\|\bgammahat-\bgamma^*\|_2\right),\label{rate:invps_alpha'}
\end{align}
where (i) holds by \eqref{eq:diff1} and \eqref{eq:diff1'}; (ii) holds by Minkowski inequality and H\"older's inequality; (iii) holds by Theorem 2.6 of \cite{wainwright2019high} under Assumption \ref{cond:subG}. It follows that,
$$\left\|g^{-1}(\bS_1^\top\bgamma)\right\|_{\P,r}\leq\left\|g^{-1}(\bS_1^\top\bgamma^*)\right\|_{\P,r}+O\left(\|\bgammahat-\bgamma^*\|_2\right)\leq C,$$
with some constant $C>0$, since $\|\bgammahat-\bgamma^*\|_2\leq1$. Therefore, we conclude that $\P_{\S_{\bgamma}}(\mathcal E_1)=1-o(1)$. Moreover, by the fact that $\exp(-u)=g^{-1}(u)-1<g^{-1}(u)$ and $\|X\|_{\P,r'}\leq\|X\|_{\P,12}$ for any $X\in\R$ and $1\leq r'\leq12$, we have
$$\left\|g^{-1}(\bS_1^\top\bgamma)\right\|_{\P,r'}\leq C,\quad\left\|\exp(-\bS_1^\top\bgamma)\right\|_{\P,r'}\leq C.$$
Moreoever, we have \eqref{rate:invps_alpha}, since $\bgammahat\in\{w\bgamma^*+(1-w)\bgammahat:w\in[0,1]\}$, $\P_{\S_{\bgamma}}(\mathcal E_1)=1-o(1)$, and \eqref{rate:invps_alpha'} holds. Besides, note that
\begin{align*}
&\left\|\exp(\bS_1^\top\bgamma)-\exp(\bS_1^\top\bgamma^*)\right\|_{\P,r'}\leq c\left\|\exp\{\bS_1^\top(\bgamma-\bgamma^*)\}-1\right\|_{\P,r'}\\
&\qquad\leq c\left[\left\|\exp\{\bS_1^\top(\bgamma-\bgamma^*)\}\right\|_{\P,r'}+1\right]=O(1),
\end{align*}
since $\bS_1$ is sub-Gaussian and $\|\bgamma-\bgamma^*\|_2\leq1$. Therefore,
\begin{align*}
&\left\|\exp(\bS_1^\top\bgamma)\right\|_{\P,r'}\leq\left\|\exp(\bS_1^\top\bgamma^*)\right\|_{\P,r'}+\left\|\exp(\bS_1^\top\bgamma)-\exp(\bS_1^\top\bgamma^*)\right\|_{\P,r'}\\
&\qquad\leq c+O(1)=O(1).
\end{align*}
\end{proof}

\begin{proof}[Proof of Lemma \ref{lemma:preliminary_beta}.]
Let $\mathcal S$ the support of $\bar\bS_2$. Under Assumption \ref{cond:basic}, there exists some constant $c>0$, such that, for all $\bar\bS_2\in\mathcal S$,
$$\exp(\bar\bS_2^\top\bdelta^*)\leq c,\quad\exp(-\bar\bS_2^\top\bdelta^*)<g^{-1}(\bar\bS_2^\top\bdelta^*)\leq c.$$

 (a) By Theorem \ref{thm:nuisance},
$$\|\bdeltahat-\bdelta^*\|_2=O_p\left(\sqrt\frac{s_{\bgamma}\log d_1+s_{\bdelta}\log d}{N}\right)=o_p(1).$$
By Assumption \ref{cond:subG} and Theorem 2.6 of \cite{wainwright2019high},
$$\left\|\bar\bS_2^\top(\bdeltahat-\bdelta^*)\right\|_{\P,r}=O\left(\|\bdeltahat-\bdelta^*\|_2\right)=O_p\left(\sqrt\frac{s_{\bgamma}\log d_1+s_{\bdelta}\log d}{N}\right).$$

\vspace{0.5em}

 (b) By Theorem \ref{thm:nuisance'},
\begin{align*}
\|\bdeltahat-\bdelta^*\|_2=O_p\left(\sqrt\frac{s_{\bdelta}\log d}{N}\right)=o_p(1).
\end{align*}
Similarly, by Assumption \ref{cond:subG} and Theorem 2.6 of \cite{wainwright2019high},
$$\left\|\bar\bS_2^\top(\bdeltahat-\bdelta^*)\right\|_{\P,r}=O\left(\|\bdeltahat-\bdelta^*\|_2\right)=O_p\left(\sqrt\frac{s_{\bdelta}\log d}{N}\right).$$
The remaining proof is an analog of the proof of Lemma \ref{lemma:preliminary_alpha}. 
\end{proof}

\begin{proof}[Proof of Lemma \ref{lemma:preliminary_gamma}.]
The upper bounds for $\|\bar\bS_2^\top(\balphahat-\balpha^*)\|_{\P,r}$ follow directly from Theorems \ref{thm:nuisance}, \ref{thm:nuisance'}, Theorem 2.6 of \cite{wainwright2019high}, and the sub-Gaussianity of $\bar\bS_2$ assumed in Assumption \ref{cond:subG}. Let either (a) or (b) holds. Then we have $\|\bar\bS_2^\top(\balphahat-\balpha^*)\|_{\P,r}=o_p(1)$. Note that, $\balphatil-\balpha^*=(1-v_1)(\balphahat-\balpha^*)$. Therefore,
\begin{align*}
\|\varepsilontil\|_{\P,r}&\leq\|\varepsilon\|_{\P,r}+\|\bar\bS_2^\top(\balphatil-\balpha^*)\|_{\P,r}=\|\varepsilon\|_{\P,r}+(1-v_1)\|\bar\bS_2^\top(\balphahat-\balpha^*)\|_{\P,r}\\
&=O(1)+o_p(1)=O_p(1).
\end{align*}
\end{proof}

\begin{proof}[Proof of Lemma \ref{lemma:preliminary_delta}.]
The upper bounds for $\left\|\bS_1^\top(\bbetahat-\bbeta^*)\right\|_{\P,r}$ follow directly by Theorems \ref{thm:nuisance}, \ref{thm:nuisance'}, Theorem 2.6 of \cite{wainwright2019high}, and the sub-Gaussianity of $\bS_1$ assumed in Assumption \ref{cond:subG}. Let either (a) or (b) of Lemma \ref{lemma:preliminary_delta} holds, and let either (a) or (b) of \ref{lemma:preliminary_gamma} holds. Then we have $\|\bS_1^\top(\bbetahat-\bbeta^*)\|_{\P,r}=o_p(1)$ and $\|\bar\bS_2^\top(\balphahat-\balpha^*)\|_{\P,r}=o_p(1)$. Note that, $\balphatil-\balpha^*=(1-v_1)(\balphahat-\balpha^*)$ and $\bbetatil-\bbeta^*=(1-v_2)(\bbetahat-\bbeta^*)$. Therefore,
\begin{align*}
\|\zetatil\|_{\P,r}&\leq\|\zeta\|_{\P,r}+\|\bS_1^\top(\bbetatil-\bbeta^*)\|_{\P,r}+\|\bar\bS_2^\top(\balphatil-\balpha^*)\|_{\P,r}\\
&=\|\zeta\|_{\P,r}+(1-v_1)\|\bS_1^\top(\bbetahat-\bbeta^*)\|_{\P,r}+(1-v_2)\|\bar\bS_2^\top(\balphahat-\balpha^*)\|_{\P,r}\\
&=O(1)+o_p(1)=O_p(1).
\end{align*}
\end{proof}

\end{document}